\documentclass[12pt]{article}

\usepackage{newtxtext,newtxmath}
\usepackage{graphicx}

\usepackage[letterpaper,margin=1in]{geometry}

\renewenvironment{abstract}
	{\quotation}
	{\endquotation}

\date{}

\makeatletter
\renewcommand{\fnum@figure}{\textbf{Figure \thefigure}}
\renewcommand{\fnum@table}{\textbf{Table \thetable}}
\makeatother

\usepackage{scicite}

\usepackage{url}

\usepackage{mycommands}
\usepackage{anyfontsize}

\def\scititle{A Statistical Test for the Benefits of Personalizing Interventions
}
\title{\bfseries \boldmath \scititle}

\author{
	Zhaoqi Li$^{1\ast\dagger}$, 
	Emma Brunskill$^{1\ast\dagger}$
	\and
	\small$^{1}$Computer Science Dept., Stanford University, Stanford, CA, U.S.A.%\and
\and
	\small$^\ast$Corresponding authors. Email: zli9@stanford.edu, ebrun@stanford.edu\and
	\small$^\dagger$These authors contributed equally to this work.
}

\begin{document} 

% Insert the title and author list
\maketitle

% Abstract, in bold
% There are strict length limits, and not all formats have abstracts.
% Consult the journal instructions to authors for details.
% Do not cite any references in the abstract.
\begin{abstract} \bfseries \boldmath
From medicine to marketing to social sciences, the promise of tailoring interventions to individuals is undeniable. However, practical applications force weighing personalization's potential benefits with its possible increased cost and fragility. We introduce a statistical hypothesis test that evaluates, given historical data, evidence that a personalized intervention policy’s performance will surpass deploying the best single intervention. The test maintains strict type-I error control while achieving asymptotic normality with the minimal possible variance under specified conditions. Results on diverse datasets from job training, depression treatment, education and recommendation systems demonstrate the test’s versatility and its superior performance over alternatives. This test can support decision-makers throughout the intervention sciences by providing a simple and powerful quantification of the potential benefits of personalization. % 121 words
\end{abstract}

% The first paragraph of any Science paper does NOT have a heading
% Nor is it indented
\noindent
An essential component of human inquiry is how to choose interventions to maximize expected outcomes. The applications are vast— from job training to online advertising to medical treatment — and span diverse disciplines — from social sciences to artificial intelligence to statistics.  Because the same intervention can vary in its impact on individuals with different characteristics, a reasonable presumption is that, when possible, intervention decision policies should be personalized, for example by providing different supports to low- and high-performing students. This is a central tenet of precision medicine, automated customer marketing, and research areas such as contextual multi-armed bandits and reinforcement learning. Note that for a personalized decision policy to increase expected outcomes over providing a single intervention to all individuals, there must be different subgroups that benefit most from differing interventions: the presence of heterogeneous treatment effects is necessary but not sufficient  (see Fig.  \ref{fig:personalization_vs_hte}). 

\begin{figure} % Do NOT use \begin{figure*}
	\centering
\includegraphics[width=\textwidth]{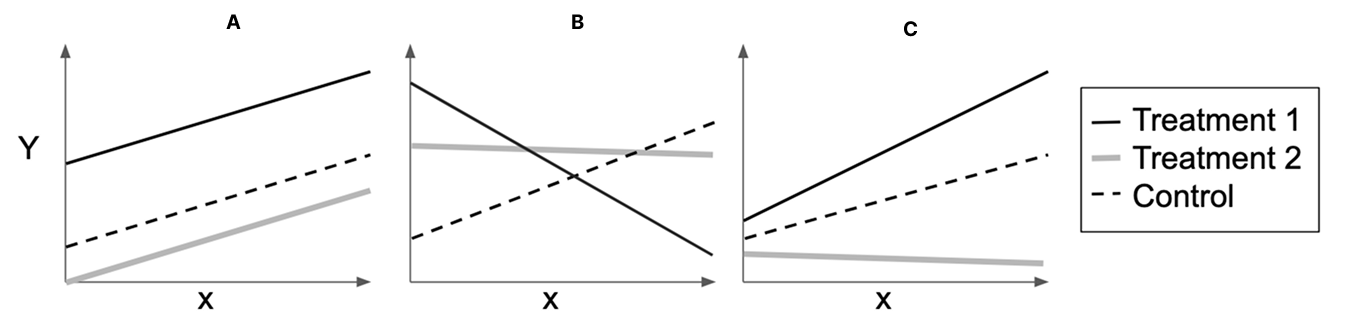} \caption{\textbf{Heterogeneous Treatment Effects are Necessary, But Not Sufficient, For Personalization to Improve Expected Outcomes.} Consider a setting where there are three interventions, $X$ is a single feature describing an individual (e.g., age or blood pressure), and $Y$ is the outcome. In (A) treatment 1 is best for all individuals, and it is equally better than the two alternatives for any individual. In (C) there are heterogeneous treatment effects-- treatment 1 provides a larger benefit over treatment 2 and control for individuals with large $X$ covariate values, but treatment 1 is still the best intervention for all individuals. In (B) there is a personalization effect: treatments 1, 2 and control each maximize outcomes for different subgroups of individuals. }
\label{fig:personalization_vs_hte} 
\end{figure}
Personalization, however, can be both more costly and brittle compared with providing a single intervention for all individuals. It often involves additional logistical challenges, such as resource considerations (different medications can require different cooling infrastructure) or can have lower implementation fidelity, due to more complex software infrastructure or more complicated training (for example, training novice tutors to use different pedagogical strategies with different student groups). In addition, whereas in theory personalized decision policies are never worse than providing all individuals with the same intervention, in practice personalized policies may have worse outcomes than assigning all individuals to a single intervention. This is because, assuming that personalized policies and a single best intervention are both obtained by  analyzing evidence from finite datasets, the amount of data needed to accurately learn the best intervention for different subsets of individuals is generally much more than the sample size needed to identify the best single intervention across all individuals.  

It therefore would often be highly useful, to both researchers and stakeholders, to be able to take an input dataset and statistically test whether there is evidence of an expected outcome benefit to personalizing interventions over providing a single (best) intervention to all individuals. Related questions have been studied in specific disciplines, largely disjointly  -- from health statisticians studying qualitative interaction effects~\cite{peto1982statistical,gail1985testing,silvapulle2001tests,li2006detecting,gunter2011variable,dusseldorp2014qualitative,shi2020sparse} to econometricians advancing hypothesis testing for conditional average treatment effects~\cite{delgado2013conditional,chang2015nonparametric,hsu2017consistent,lee2013testing}. However, such work has been limited because they were (i) restricted to binary decisions, (ii) primarily focused on when a single covariate was available for personalization, and (iii) lacked  guarantees on statistical eﬃciency. To fill these gaps in the literature, here we present a new, simple statistical test for the benefit of personalization called the K-fold Personalization Test (KPT). The KPT is statistically sound and can handle multiple potential interventions (beyond binary treatment and control) and multiple covariates that describe individuals. As we illustrate through experiments on datasets from four domains, education, medicine, job training, and joke recommendations, our test is applicable to a broad range of disciplines and could help to inform practitioners and researchers as they seek to create and deploy intervention policies to maximize desired outcomes.

\begin{figure} % Do NOT use \begin{figure*}
	\centering
	\includegraphics[width=\textwidth]{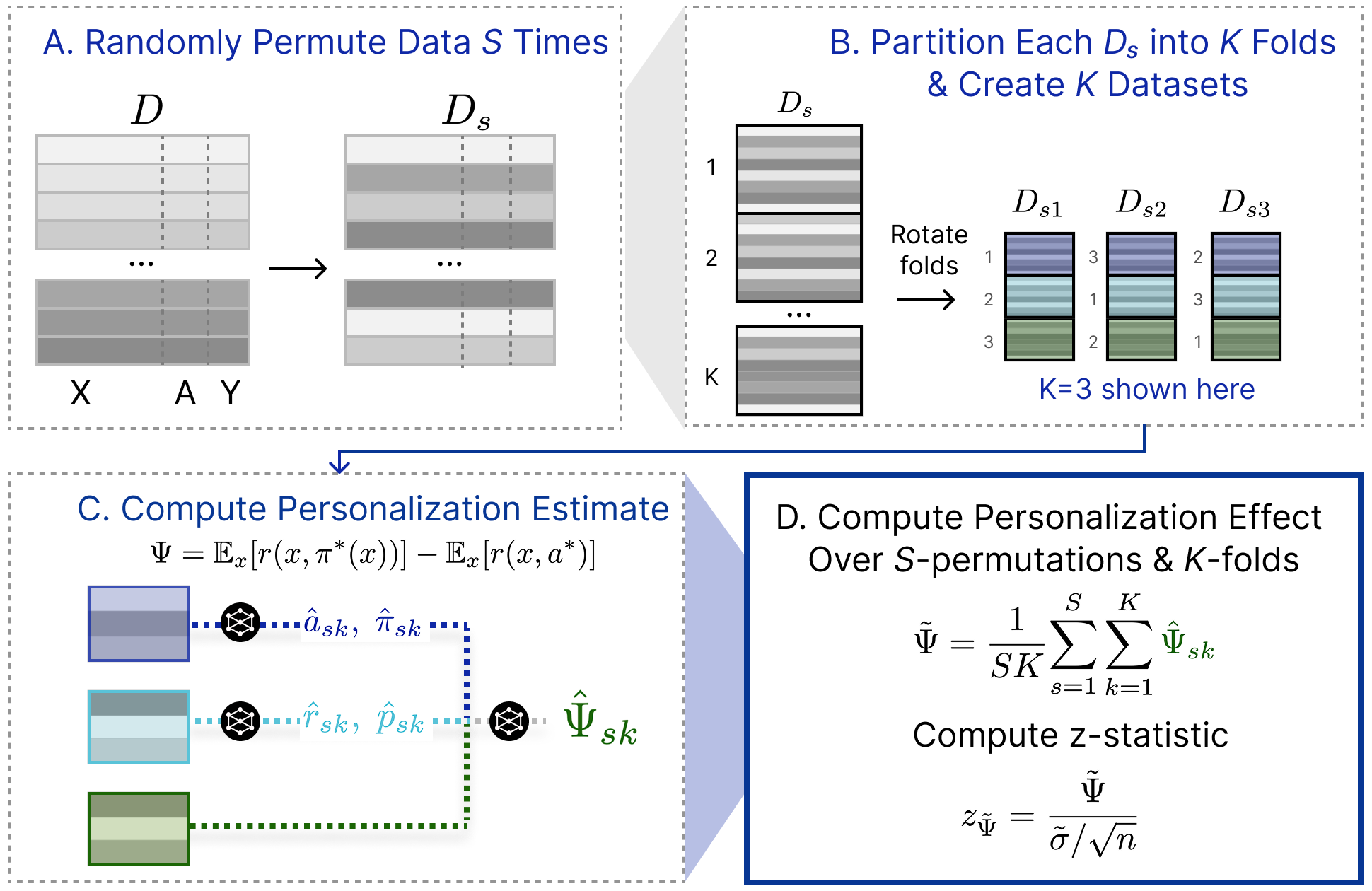} 
	% Captions go below figures
	\caption{\textbf{K-fold Personalization Test.} (A) The input dataset consists of $X$, features; $A$, intervention; and $Y$, outcome and is is randomly permuted $S$ times to create $S$ different $D_s$ datasets. (B) Each permuted dataset $D_s$ is split into $K$-folds. These folds are then used to create $K$ additional datasets per $D_s$: $D_{s1},\ldots,D_{sK}$ (for clarity, $K=3$ is shown).  (C) For a particular $D_{si}$, one or more folds are used to learn a  personalized policy $\hat{\pi}$ and a single best intervention $\hat{a}$, and some of the other folds are used to learn an outcome model $\hat{r}(x,a)$ and a propensity model $\hat{p}(a|s)$. These are then used to compute a personalization effect on another data fold. Since each data point is used for each subtask, this approach allows for data efficiency and simplifies the formal analysis.  (D) The final personalization effect is computed by aggregating across all folds and all data permutations, and the statistical test is computed using this final estimate.}%For visual clarity we show three folds in this figure but our KPT uses six-folds.}
	\label{fig:alg_pk} 
\end{figure}

\section*{Method}
We assume there exists a set of $N$ data points $(X_i,A_i,Y_i)$, where individual $i$ is described by a vector of features $X_i \in \mathcal{X}^d$, who received an intervention\cite{note:propensity}
$A_i \in \mathcal{A}$ and experienced an outcome that is an unknown function of their features, intervention and noise: $Y_i \sim r(X_i,A_i) + \epsilon$ (e.g. a 20-year-old male received job training and later earned \$40k). 
A personalized policy $\pi: \mathcal{X} \to \mathcal{A}$ maps features of an individual to an intervention $a$. For example, it might provide people with high school education job training, while others receive no intervention. We consider policies that lie in a statistical model class, such as a deep neural network, $\pi \in \Pi$. We assume a known distribution over individuals $p(x)$. 

Our objective is to develop a statistical test that takes in the dataset $(X,A,Y)$ and evaluates whether there is evidence that the expected personalization benefit of using a personalized policy over a single intervention for all individuals, $\psi$, is greater than 0: 
\begin{equation}
\psi := \max_{\pi \in \Pi} \sum_x p(x) r(x,\pi(x)) - \max_{a\in\mathcal{A}} \sum_x p(x) r(x,a) > 0. 
\label{eqn:psi}
\end{equation}
Note that the standard equivalence between statistical hypothesis testing and confidence interval construction implies that designing a strong hypothesis test is equivalent to finding an estimator of the mean outcome difference with the tightest confidence interval, which was  our focus. 

Framed in this way (Eqn~\ref{eqn:psi}), our setting appears similar to the ubiquitous task of testing whether an experimental treatment yields the same average outcome as a control condition. However, a key challenge for creating a statistically efficient test for a personalization benefit is that the "conditions" are unknown in advance -- the first term in Equation~\ref{eqn:psi} involves a maximum over personalized policies, and the second involves a maximum over all interventions (see section \ref{sec:notations} of the supplementary materials for further discussion). 

Our core insight is that a simple efficient personalization estimator can be achieved through careful data partitioning and reuse, and a focus on the core estimand of interest-- the personalization effect. In treatment effect estimation~\cite{luedtke2016statistical,chernozhukov2018double}, the treatment effect is the estimand and statistical models of outcomes and the probability of different individuals receiving different interventions, are treated as nuisance parameters that are useful only in enabling better estimates of the desired treatment effect. Our work leverages and advances this seminal idea by additionally treating the personalized policy and best intervention as nuisance parameters.

Our KPT is shown in Figure~\ref{fig:alg_pk} and details of our methods are provided in Section~\ref{sec:kfold_full} of the SM. Our personalization effect estimator per random permutation-fold, $\Psi_{sk}$, is a doubly-robust-style estimator in whichthe personalized policy $\hat{\pi}$ and best single intervention $\hat{a}$ are learned on a subset of folds not including $k$; the outcome models $\hat{r}$ and propensity models $p$ are learned on the remaining folds (also excluding $k$); and the final fold $k$ data $(x_i,a_i,y_i)$ are also used:
\begin{eqnarray}
\hat{\psi}_{sk} &=&\frac{1}{N_{sk}}\sum_{i=1}^{N_{sk}} \hat{r}_{sk}(x_i,\hat{\pi}_{sk}(x_i)) - \hat{r}_{sk}(x_i,\hat{a}_{sk}) + \frac{\mathbf{1}\{\hat{\pi}_{sk}(x_i)=a_i\}}{\hat{p}_{sk}(a_i|x_i)}(y_i-\hat{r}_{sk}(x_i,\hat{\pi}_{sk}(x_i))) \nonumber\\
& & - \frac{\mathbf{1}\{\hat{a}_{sk}=a_i\}}{\hat{p}_{sk}(a_i|x_i)}(y_i-\hat{r}_{sk}(x_i,\hat{a}_{sk})). \label{eqn:aipw}
\end{eqnarray}

We now briefly comment on several aspects of our KPT. First, our method carefully uses data to perform both policy learning and evaluation without sacrificing asymptotic efficiency by extending cross-fitting~\cite{chernozhukov2018double}, a key idea for enabling statistical efficiency when using data partitioning with complex statistical models. Second, KPT  can immediately benefit (Fig.~\ref{fig:alg_pk} step C) from advances in the active research areas of best intervention~\cite{manski2004statistical} and contextual policy learning~\cite{qian2011performance,swaminathan2015batch,athey2021policy,zhou2023offline,kitagawa2018should,luedtke2020performance}. Third, unlike some prior personalization effect estimation methods that extract decision policies by thresholding conditional average treatment effects\cite{shi2020sparse}, our procedure easily allows for different covariates to be used for modeling outcomes and intervention decision policies. For example, it might be of interest to personalize cancer treatment based on someone's initial tumor size but not their address, even if both covariates might help predict their future health outcomes. 
Fourth, though cross fitting helps our method achieve statistical efficiency, it is known that data splitting methods that compute $p$-values or confidence intervals can be sensitive to the particular random partition, which has prompted methods that leverage repeated splits\cite{meinshausen2009p,chernozhukov2018double}. We will show that this substantially improves the stability of our KPT over alternative baselines (including~\cite{imai2023experimental}), while preserving theoretical properties of the test (see supplementary materials Section 4.4 for an additional discussion on split aggregation methods). Finally,  experts in policy learning and evaluation may reasonably be concerned about the potential tension between having sufficient data to learn a good personalized policy and enough additional data to evaluate its impact. Although our theoretical results do not rely on how many folds are used for policy learning (versus outcome modeling learning, etc), when data are limited this is likely to be an important choice. In our experimental results, we use $K=6$ and use half the folds to do personalized policy and best intervention learning in each split-fold estimate. 

\section*{Results}
\subsection*{Properties of the Proposed Personalization Test}
The quality of a proposed statistical test is typically evaluated by its Type I error (how often it rejects the null hypothesis when the null hypothesis is true), its power (how often it correctly rejects a false null hypothesis), its reliability, and its stability.  We will demonstrate empirically the stability of our approach relative to alternatives, and we will use simulations to evaluate its reliability. These results and all formal statements and proofs are in section~\ref{sec:proofs} of the SMsupplementary materials.

We mathematically prove that our proposed test has asymptotically zero Type I error, under standard assumptions and the fairly weak additional assumption that the best single intervention is unique\cite{note:nonzeroutility}. 
This result ensures that, under these assumptions, our approach will not falsely suggest that personalization is beneficial when no personalization effect is present. 

In parametric settings, uniformly most powerful tests are used as the gold standard for statistical tests that guarantee bounded Type I error while maximizing power. However, in semi-parametric statistics (commonly used to estimate treatment effects) the standard alternative objective is to provide statistical estimators-- and tests derived from those estimators--  that obtain the semiparametric efficiency bound. Such  estimators have the minimum asymptotic variance among all regular estimators. An estimator with a minimum variance will maximize the chance of detecting a positive personalization benefit when present, thus  maximizing power. Under stronger assumptions than those that we required to guarantee our Type I results, we proved that our estimator is asymptotically normal and semiparametrically efficient. Specifically, we assumed that personalized policy learning could converge at a fast rate, which has been considered under margin conditions. To our knowledge, our KPT is the first semiparametrically efficient test for personalization. 

\begin{figure} % Do NOT use \begin{figure*}
	\centering
    \includegraphics[width=0.88\textwidth]{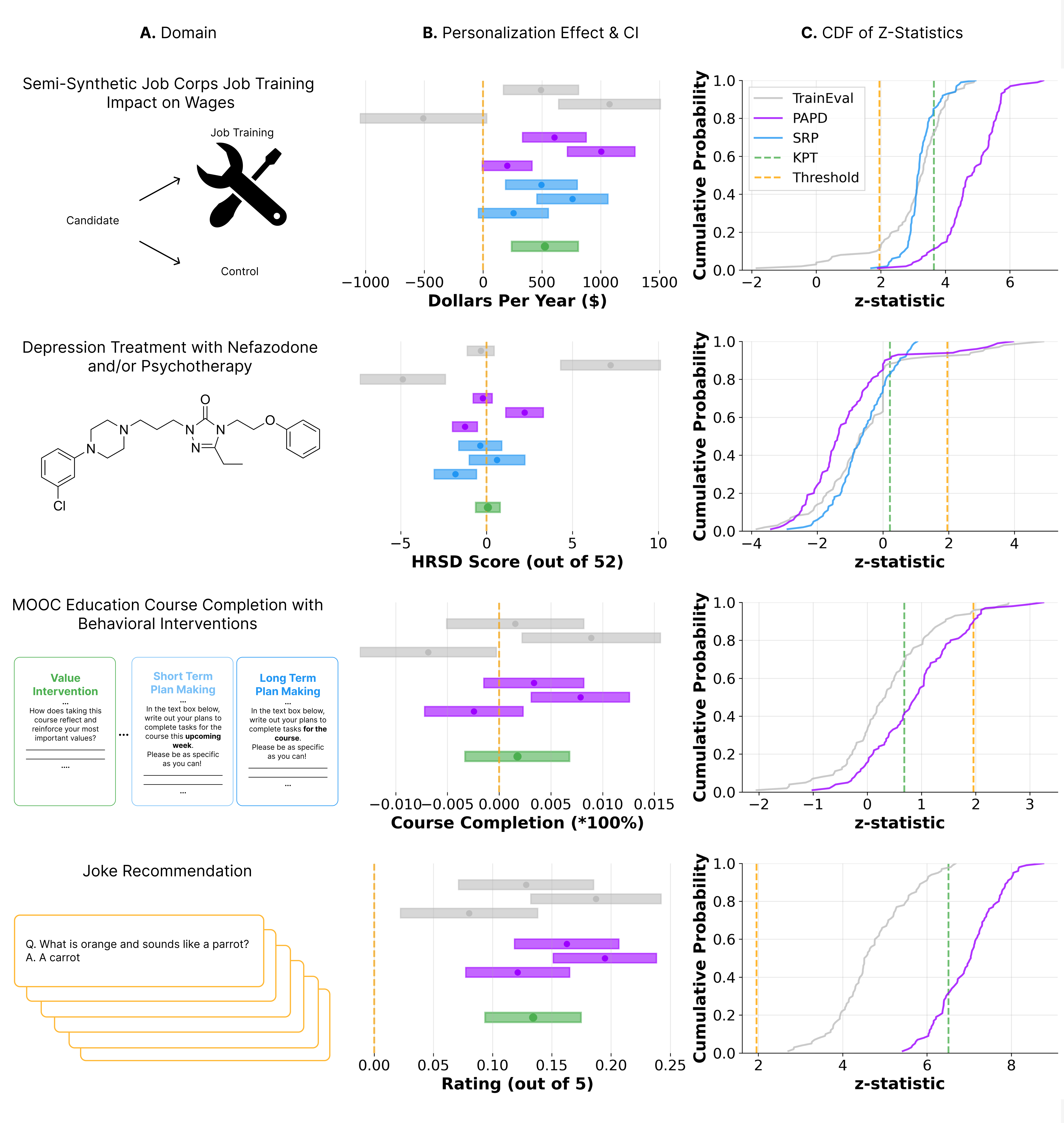} % for an image file named example_figure.*
	% Pick an appropriate width - in print, figures are usually one or two columns wide, which can
	% be approximated by 0.3\textwidth or 0.6\textwidth respectively. Use appropriate label sizes.

	% Captions go below figures
	\caption{Each row shows a dataset setting. The second column shows  
    the confidence interval returned by our K-fold personalization estimate, and three sample confidence intervals output by 
    TrainEval, PAPD and SRP across the 100 random partitions of the dataset: the CI with the lowest lower bound, the CI with the highest upper bound, and one randomly selected CI. The third column shows the 
    cumulative distribution function of $z$-statistics across the 100 random partitions for the baseline methods, as well as the critical value (in orange) for significance  for the $z$-statistic  (1.96, based on $\alpha$ = 0.05) and our KPT $z$-statistic. In contrast to our KPT, prior methods are often highly sensitive to the particular random partition, frequently return wider confidence intervals, and are not always suitable for settings with many interventions (such as the Joke Recommendation and MOOC Education domains).}
	\label{fig:ci_cdf} 
\end{figure}

\subsection*{Estimating the Personalization Benefit in Diverse Datasets}
To illustrate the broad range of potential applications in which assessing the impact of personalization is relevant, we applied our KPT to several real datasets and one semi-synthetic dataset. In three of the four settings the original data came from a randomized controlled trial (RCT), and for the joke recommendation, we mimicked an RCT given the full user ratings. In each case, our KPT produced a confidence interval for the personalization effect and a $p$-value given the computed $z$-statistic  (which we compared with  the critical value of a normal distribution
using $\alpha$ = 0.05: 1.96). We rejected the null if our test statistic was larger than 1.96. We compared with the strongest prior method developed specifically for testing the average positive outcome benefit of personalization~\cite{shi2020sparse}. This is designed for binary treatments and uses a sparse random projection (SRP) to handle high-dimensional covariates. We also compared our method with an influential method for estimating the population average prescriptive effect difference (PAPD) between two different learned personalized policies using an   empirical average and centered outcome, and 
was analyzed under Neyman’s repeated sampling framework~\cite{imai2023experimental}. PAPD can be specialized to our setting, in which one policy does not personalize. We additionally compared our method with a simple natural baseline we denote as TrainEval, in which the data are randomly partitioned into two equal parts, a best single intervention and a personalized policy are fit on half the data, and the other half is used to estimate the expected utility of that personalization policy over the single intervention. In contrast to our KPT approach, these alternative methods use a single random partition of the dataset, so we plot sample confidence intervals for their estimated personalization effects for different random splits and/or partitions, and plotted the cumulative distribution functions (CDFs) of their $z$-statistics across random partitions. This allowed us to compare the widths of the resulting confidence intervals, and the stability of these methods. Note that our KPT method is designed to operate over many random partitions, as shown in Figure 2 (and see Algorithms 1 and 2 in section~\ref{sec:kfold_full} of the supplementary materials).). In all cases, KPT has confidence intervals comparable to or smaller than those of TrainEval, SRP, and PAPD. KPT is also more broadly applicable than SRP which assumes binary treatment and control. Our KPT estimates are also more stable than all baselines, the confidence intervals of which sometimes do not overlap across splits. In addition, we conducted  experiments in fully synthetic environments which demonstrate that: (i) our approach is reliable across random samplings from the same data distribution, and (ii) although our theoretical efficiency results require fast policy learning, KPT can still work well even when data are limited and policy learning is imperfect. See section~\ref{sec:simulation_reliability} in the supplementary materials 
for full details. Code to reproduce the results is available online. 
\\
\\
\noindent{\textbf{Semi-Synthetic Job Corps Job Training.}}\\ 
Job training programs can increase subsequent wages and  education, and reduce criminal activity~\cite{schochet2008does} but they are expensive, and may not benefit all subgroups. Prior research on Job Corps, a national US training program for low-income early adults, found benefits for most participants but participants aged 18 to 19 had slightly lower (but similar) earnings compared with the control group~\cite{schochet2008does}. Here we created a semi-synthetic setting to test and compare personalization estimators\cite{note:semisynth}. 
We used 12 covariates including gender, race, age group, prior arrests, parental status, and education level, and the outcome was monthly self-reported wages 4 years after program enrollment. All covariates and outcome values for the 10214 trial participants were unmodified except for the outcomes for participants aged 18 to 19 at enrollment, for whom we added a synthetic positive wage offset of \$30/week in the control group, and a negative wage offset of \$5/week for those in the Job Corps group. We considered a personalized intervention policy class that can make different decisions for the three age brackets analyzed in prior work~\cite{schochet2008does}, showcasing how KPT can use different features for the personalized policy and outcome modeling. 

Our KPT method successfully identified a significant benefit of personalizing program allocation based on age over providing all participants with the same (program or control) intervention, with 
$p< 10^{-3}$, a $z$-statistic of 3.645, and a confidence interval for the estimated personalization utility of \$ 10.097 $\pm$ 2.77  
on weekly self-reported wages 4 years after program enrollment, or about \$525 additional wages per year, per individual. The confidence interval of the K-fold Personalization estimate was generally similar or narrower than alternate methods SRP, PAPD, and TrainEval, which were also notably sensitive to the specific random partition of the data (Fig.~\ref{fig:ci_cdf} column B, row: "Semi-Synthetic Job Corps Job Training Impact on Wages"). In addition, across 100 random splits, TrainEval failed to report significance in 11\% of runs (see Fig.~\ref{fig:ci_cdf} column C).   
\\
\\
\noindent{\textbf{Depression Treatment with Nefazodone and/or Psychotherapy}}.
\\
Millions of people suffer from depression, so it is of significant interest whether different interventions may benefit different individuals even with the same clinical diagnosis. Prior work ~\cite{zhao2012estimating,shi2020sparse} has considered learning individualized policy interventions given data from a clinical trial in which patients with nonpsychotic chronic major depression were randomized to receive the medication nefazodone, the Cognitive Behavioral-Analysis System of Psychotherapy, or both, for 12 weeks~\cite{keller2000comparison}. The outcome of interest was the Hamilton Rating Scale for Depression at the final follow-up. Prior policy learning work learned an individualized policy (based on 50 baseline covariates, using data from 647 patients) that provided the combination treatment to all individuals~\cite{zhao2012estimating}, suggesting that personalization might not yield additional benefit. Subsequent work~\cite{shi2020sparse} found no personalization benefit using their SRP personalization test; that study assumed the combination was the best single intervention and compared it with a policy that either allocated patients to the combination or Nefazodone, or another policy which allocated patients to the combination or CBASP. We used this same dataset.

We applied our KPT method, which does not require as input the best population-level intervention, and can automatically handle more than two treatments. Results are presented in Figure~\ref{fig:ci_cdf}, row: "Depression Treatment with nefazodone and/or Psychotherapy." Our KPT method returned a non-significant $p$-value ($p > 0.05$), consistent with prior results showing that there is no evidence of personalization benefit \cite{shi2020sparse,zhao2012estimating}.  TrainEval again shows substantial variability, and would (likely falsely, given prior results) return a significant positive estimate of personalization in 8\% of the random partitions (Fig.~\ref{fig:ci_cdf}, Column C).  PAPD is also quite sensitive to the random parition, and in 7\% of random paritions it would estimate a significant benefit of personalization.
\\
\\
\noindent{\textbf{Massive Open Online Classes (MOOCs) Education Course Completion with Behavioral Interventions}.}\\
%Behavioral Science Interventions for MOOC Education Student Retention}.}\\ 
Student completion rates of massive open online classes (MOOCs) are low on average, approximately 10\% ~\cite{kizilcec2020scaling}, and behavioral scientists have sought to design interventions to increase completion~\cite{kizilcec2017eight}. Such interventions could potentially be more effective if personalized.  We use data from a large randomized controlled trial consisting of 247 online courses over 2.5 years  which found limited impact of five behavioral science interventions-- plan-making and value-relevance-- over a control condition on course completion\cite{kizilcec2020scaling} . The authors created a few personalized policies but did not observe significant gains over a single intervention, nor did they explicitly quantify uncertainty or statistical significance. A statistical test can formally assess whether personalized assignment of these interventions increases course completion relative to providing all individuals with a single intervention. The dataset consists of 199517 participants from wave 1 and 2 and we used six covariates. 

Given that there are five potential interventions, we restricted our analysis to our method KPT and the benchmarks PAPD and  TrainEval (see Figure~\ref{fig:ci_cdf} column B, row: ``MOOC Education Course Completion with Behavioral Interventions"). TrainEval is again highly variable, returning a significant impact of personalization in $5\%$ of the random paritions, and its confidence intervals did not always overlap. PAPD had tighter intervals and was less variable, but still showed notable sensitivity to the random partitions, estimating a significant benefit in  11\% of the random partitions. Our KPT method produces a confidence interval overlapping 0 estimated personalization effect, with a $p$-value $>0.1$. Our confidence interval on the personalization effect on course completion was $0.18\%\pm 0.26\%$, suggesting that, for the decision policy class considered, any personalization effect, if present, would be very small. 
\\
\\
\noindent{\textbf{Joke Recommendation}.}\\  To explore personalization benefits in the high-dimensional contexts common to recommendation systems and online marketing, we studied joke recommendations with the widely studied Jester dataset \cite{goldberg2001eigentaste}. Following the preprocessing methods in Kong et al. \cite{kong2020sublinear}, we kept the 10 highest-rated jokes as potential interventions, and transformed $(N=48447)$ user ratings for the remaining 90 jokes into a 100-dimensional covariate feature vector, to mimic a typical recommendation system that leverage prior user ratings to inform new content recommendations. 

Our results, shown in Figure~\ref{fig:ci_cdf} (columb B, row: "Joke Recommendation"), confirm prior work that estimated a higher utility for a personalized policy~\cite{kong2020sublinear}. Our KPT test identifies a significant benefit of personalization in this setting relative to providing all individuals with the estimated highest overall rated joke, with a $p$-value of $<10^{-3}$, a $z$-statistic of $6.51$ and an estimated personalization benefit of $0.134 \pm 0.0206$ out of 5. SRP~\cite{shi2020sparse} is not directly applicable to this setting because their test assumes a binary intervention. TrainEval and PAPD also found a significant effect of personalization across all random partitions, although their specific estimates still varied notably across partitions, especially for TrainEval.

\subsection*{Discussion}
From medicine to marketing to social science, there is significant interest in devising personalized intervention policies that allocate different interventions to different individuals. Understanding whether such personalized policies yield any benefit in expected outcomes over a single intervention provided to all individuals is important, especially because personalization can involve significant additional costs and require larger datasets. %This paper introduced the K-fold Personalization Test, a simple statistical test that, given unconfounded historical data, provides asymptotic guarantees on type-I error, and, under some assumptions, is asymptotically normal with minimum variance. 
Our KPT builds on ideas spanning econometrics, biostatistics and machine learning, integrating easily with existing and future intervention policy-learning algorithms and is the first, to our knowledge, that extends to scenarios beyond binary interventions and limited covariates. As we prove formally, and demonstrate empirically across a diverse range of problem settings, the KPT demonstrates strong performance over alternatives. 

One limitation of our test is that it estimates the benefit of personalization with respect to a particular (stakeholder-specified) intervention policy class (or set of classes). 
A failure to reject the null hypothesis indicates limited evidence in the dataset for personalization within the chosen policy class, rather than proof against any realizable form of personalization. We consider sensitivity to the policy class selection in section~\ref{sec:policy_class_choice} of the supplementary materials. Another limitation is that our test identifies the utility of personalization but does not return a single personalized policy: if personalization is significant, then one approach is to perform policy learning over the full dataset to extract a single policy for future use. 

There are also numerous additional directions for future work. These include formal guarantees for settings in which multiple interventions appear equivalent at the population level, due to personalization effects canceling out at the subgroup level (see discussion in the supplementary materials section 4.8). Such settings are often of substantial scientific interest. For example, although at the population level, a treatment and control intervention could have  equivalent average outcomes, a treatment might provide benefits for women while causing equally large harms for men. Another important question is whether the stability and efficiency of our KPT method can be replicated with a method that requires substantially less computation.  In addition, KPT is suitable for discrete action spaces, and it is of interest to create personalization tests suitable for continuous intervention spaces (e.g. robotics, medication dosage) or extremely large intervention spaces (e.g. generative AI-based individualized ads). Nevertheless, given the ubiquity of policy design decisions, the KPT offers a principled and needed tool for evaluating when and where personalization can deliver value. 		

%There are also numerous additional directions for future work. These include formal guarantees for settings where multiple interventions appear equivalent at the population level, due to  ersonalization effects canceling out at the subgroup level. Still, given the ubiquity of policy design decisions, the K-fold Personalization Test offers a principled and needed tool for evaluating when and where personalization delivers value. 

%[what other research questions remain unanswered. Did your results raise new questions?]

% [Add 2-3 more paragraphs here to develop the discussion further. Compare your results with those reported by other investigators. Report conditions that might limit the extent of legitimate generalizations or otherwise qualify your inferences.]

% [Did the sample characteristics across those 4 datasets differ from other populations to which you might want to generalize. Did any characteristics of your method designing the KPT influence outcomes? Any factors that might have operated to produce atypical results?]

%%%%%%%%%%%%%%%% REFERENCES %%%%%%%%%%%%%%%

\clearpage % Clear all remaining figures and tables then start a new page

% The list of references goes after the main text and before the acknowledgements
% When preparing an initial submission, we recommend you use BibTeX, like this:
%
\bibliography{science_submission/refs_contextualization} % for a file named science_template.bib

\bibliographystyle{sciencemag}

\section*{Acknowledgments}
We wish to thank Daniel Klein and Bruce Arnow for sharing the Nefazodone dataset\cite{keller2000comparison}. 
\paragraph*{Funding:}
This research was supported by unrestricted funds to E.B.
\paragraph*{Author contributions:}
E.B. and Z.L. designed and conducted research (conceptualized, designed estimator, conducted theoretical analysis, wrote code, performed data analysis) and wrote and revised the paper. 

% Conceptualization: 
% Experiment Design: 
% Model Training: 
% Model Hosting: 
% Data Analysis: 
% Visualization: 
% Project Support: 
% Writing - Original Draft: 
% Writing - Review and Editing: 

\paragraph*{Competing interests:}
There are no competing interests to declare.
\paragraph*{Data and materials availability:}

Code to reproduce our analysis is available on Dryad \cite{note:Dryad}. The other three public datasets used in this study are: (i) the Job Corps dataset, available from openICPSR \cite{note:job_data}; (ii) the Massive Open Online Classes (MOOC) education  behavioral interventions dataset, available through the Open Science Framework repository \cite{note:education_data}; (iii) the Joke recommendation dataset, available from the UC Berkeley Jester repository \cite{note:joke_data}. We have documented how to access the source of these datasets, and we have also included on Dryad a snapshot of the data used for the Job corps, MOOC behavioral interventions and Joke recommendation results, to aid in reproducibility.

The Nefazodone-CBASP dataset is not public, and it was obtained through direct request from the authors of the original Nefazodone-CBASP clinical trial~\cite{keller2000comparison}. We have made this accessible to the editor and potential reviewers through an OSF private link to enable the editor and reviewers to reproduce our results. In the future, the Nefazodone-CBASP dataset (only) would need to be requested directly by readers from the dataset owners Keller, Klein, Arnow et al.~\cite{keller2000comparison}: the authors of the current paper contacted the Nefazodone-CBASP dataset owners through daniel.klein@stonybrook.edu and baarnow@stanford.edu.

\subsection*{Supplementary materials}
Materials and Methods\\
Supplementary Text\\
Figs. S1 to S11\\
Table S1 \\
References \textit{(37-\arabic{enumiv})}\\ % automatically fills out the last reference number
% (filling out the other numbers automatically is possible but fiddly and liable to break)
%Movie S1\\
%Data S1

%%%%%%%%%%%%%%%% END OF MAIN TEXT %%%%%%%%%%%%%%%

\newpage
% Figures, tables, equations and pages in the supplement are numbered S1, S2 etc.
\renewcommand{\thefigure}{S\arabic{figure}}
\renewcommand{\thetable}{S\arabic{table}}
\renewcommand{\theequation}{S\arabic{equation}}
\renewcommand{\thepage}{S\arabic{page}}

\setcounter{figure}{0}
\setcounter{table}{0}
\setcounter{equation}{0}
\setcounter{page}{1} % not 0 as \newpage already started a supplementary page
% References continue the numbering from the main text.

%%%%%%%%%%%%%%%% SUPPLEMENT TITLE PAGE %%%%%%%%%%%%%%%

\begin{center}
\section*{Supplementary Materials for\\ \scititle}

% Author list for the supplement
% Indicate the corresponding authors, but do NOT include institutions here
% It would be nice if the template auto-generated this, but doing so is complicated...
Zhaoqi Li$^{\ast\dagger}$,
Emma Brunskill$^{\ast\dagger}$\\
% we're not in a \author{} environment this time, so use \\ for a new line
\small$^\ast$Corresponding authors. Email: zli9@stanford.edu, ebrun@cs.stanford.edu\\
\small$^\dagger$These authors contributed equally to this work.
\end{center}

% Fill out the numbers for each type of supplementary material,
% and delete any lines that aren't applicable.
% These are just example numbers that don't match the rest of this template.
%\section{Full theoretical details}
\section{Notations and objective}\label{sec:notations}

We aim to construct a statistical test to determine whether there is a benefit in using personalized policy from the overall best policy. Let $X\in\X$ be a feature vector with distribution $p$, $A\in\A$ some action, and $Y\in\Y$ an outcome that is observed after the action. We focus on the offline policy learning setting where the sample consists of $n$ independent and identically distributed draws $(x_i,a_i,y_i)_{i=1}^n$ from some distribution $\nu$, where $\E[y_i|x_i,a_i]=r(x_i,a_i)$ with some reward function $r$. 
Let $\Pi$ be a set of policies $\X\to\A$ that map a feature vector to some action.  

For some action $a\in\A$, define its expected utility  as $U(a)=\sum_x p(x) r(x, a)$ and for some policy $\pi\in\Pi$, define its value as $V(\pi)=\sum_{x} p(x)r(x,\pi(x))$. Let the true overall best arm $a^*=\arg\max_a U(a)$ and the true optimal policy $\pi^*=\arg\max_{\pi\in\Pi} V(\pi)$. We define the \textit{personalization effect} $\psi$ as the difference between the optimal policy and the overall best performing arm, i.e. 
\begin{equation}\label{eqn:objective}
    \psi:=\sum_x p(x) r(x, \pi^*(x))- \sum_x p(x) r(x, a^*).
\end{equation}
We are interested in testing whether the personalized best policy gives a larger reward than the overall best arm, that is, constructing a statistical test for $H_0: \psi=0$ against $H_a: \psi>0$. We would like to develop an efficient test that is the most powerful among all tests. 

At a high level, this is similar in spirit to the  standard procedure of testing if an experimental treatment yields the same average outcome as a control condition -- a task that arises ubiquitously across science and industry, including as A/B testing. Here the two (experimental $a_e$ and control $a_c$) interventions are fixed apriori, and, if in a standard randomized control experimental setting, a $t$-test can be used to test if there is evidence given the available data to reject the null hypothesis that the mean outcome of the two interventions is the same: 
\begin{equation}
\sum_x p(x) r(x,a_e) -  \sum_x p(x) r(x,a_c) = 0. 
\end{equation}
There exists a deep literature on estimators and hypothesis testing for such settings, including approaches that can be used with observational data. However, there are a number of additional challenges in estimating the personalization effect, particularly because the "conditions" themselves (the personalized policy and single best intervention) are assumed to also be unknown. 

Before we introduce our approach for estimating the personalization effect, we first consider a few alternative approaches.  

One idea could be to enumerate all possible personalized policies and intervention pairs ($\pi',a'$), and use extensions of policy evaluation techniques (such as doubly robust estimation~\cite{dudik2014doubly}) to compute estimates and confidence intervals over the relative benefit of each $\pi'$ over $a'$, rejecting the null hypothesis if one policy had a personalization effect with a confidence interval greater than zero for all single interventions. A challenge to doing this is that the number of personalized policies is generally intractably large (or infinite), and the confidence intervals would need to hold simultaneously over all pairs to guard against the multiple testing problem, likely leading to vacuously large confidence intervals that would never reject the null hypothesis. 

An alternative idea could be to first learn a personalized policy $\pi''$, and the best single intervention $a''$, using the entire dataset $(X,A,Y)$ and then evaluate the personalization effect and the confidence interval using policy evaluation techniques for $(\pi'',a'')$. Unfortunately, such a method will generally suffer from maximization bias (also known as the optimizer's curse) that arises by first using data to select a decision expected to achieve a maximum, and then using the same data to estimate the value of that decision\cite{van2004rational,smith2006optimizer,hasselt2010double,andrews2024inference}. The maximization bias of using the same data to learn and estimate will apply both to the estimate of the learned personalized policy performance, and the estimate of the learned best-overall action performance. It is unclear precisely how the difference of two such biased estimators will relate to the true personalization effect, but we expect in general that the resulting personalization difference will also be biased upwards-- intuitively there is less data available to estimate the decision policy per context, and so if a policy optimizer seeks to maximize performance on the available data, in general we expect evaluating on the same data will be more optimistically biased than only estimating a single best arm/intervention. Although there do exist bounds on the performance of the output of some algorithms' learned policy relative to the optimal (unknown) policy\cite{swaminathan2015batch,zhou2023offline}, such bounds are typically much too large to be used for estimating if the personalization effect is non-zero.

There also exist personalized policy learning methods (e.g.~\cite{swaminathan2015batch}) which learn a policy that, rather than maximize expected performance, instead maximize a lower bound of the resulting policy performance that accounts for finite sample error (and potentially policy model class complexity). One might wonder if using such techniques would lead to selecting a single arm (or decision policy) whose value, when evaluated on the same dataset on which the arm/policy was learned, would be unbiased. Unfortunately we can show that this is also not true. For simplicity, consider the case where we select the best single action ($a^*$) based on a lower bound on its potential value, using a Hoeffding inequality-style lower bound where $n(a)$ is the number of samples of action $a$ and $\hat{\mu}(a)$ is the empirical average of the outcome $y$ of arm $a$ in the available dataset. 
\begin{eqnarray}
\label{eqn:upperlcb}
\hat{a}^* = \arg\max_a \hat{\mu}(a) - \sqrt{\frac{c}{n(a)}}
\end{eqnarray}
Then Lemma~\ref{lem:lcb_arm_biased} proves that the estimate of its performance will also biased upwards. We expect a similar result can be shown when the contextual bandit policy is selected based on a Hoeffding-style lower bound. As we mentioned above, it is unclear how the difference of two biased estimators will relate to the true personalization effect. However, we discuss this instance to highlight that using lower confidence bounds to select a best arm or contextual policy, will not, in general, mean that we can use the same dataset (as used for learning) for estimating the performance of said policies.

\begin{lemma}\label{lem:lcb_arm_biased}
The estimate of the performance of an arm chosen to maximize its empirical lower confidence bound will be biased upwards: 
\begin{equation}
E[\hat{\mu}(\hat{a}^*)] \geq E[\mu(\hat{a}^*)].
\end{equation}
\end{lemma}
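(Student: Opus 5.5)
The plan is to compare $\hat{a}^*$ against every fixed arm and use that the selection rule maximizes an empirical criterion whose ``noise part'' is exactly $\hat{\mu}(a)-\mu(a)$. Treat the sample counts $n(a)$ as fixed, or condition on them, so that $\sqrt{c/n(a)}$ is deterministic. Then $E[\hat{\mu}(a)]=\mu(a)$ for each fixed $a$. Write $\hat{\mu}(a)=\mu(a)+\epsilon(a)$ with $E[\epsilon(a)]=0$. The claim $E[\hat{\mu}(\hat{a}^*)]\ge E[\mu(\hat{a}^*)]$ is then equivalent to $E[\epsilon(\hat{a}^*)]\ge 0$, and this is what I would prove.

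\textbf{Step 1: Reduce to a selection on noise.} Define the deterministic penalized means $m(a)=\mu(a)-\sqrt{c/n(a)}$. The rule is $\hat{a}^*=\arg\max_a\, m(a)+\epsilon(a)$, so $m(\hat{a}^*)+\epsilon(\hat{a}^*)\ge m(a)+\epsilon(a)$ for every fixed $a$. Taking $a=a_0:=\arg\max_a m(a)$, a deterministic arm, gives
\begin{equation*}
\epsilon(\hat{a}^*)\ \ge\ \epsilon(a_0)+\bigl(m(a_0)-m(\hat{a}^*)\bigr)\ \ge\ \epsilon(a_0).
\end{equation*}
The second inequality holds because $m(a_0)\ge m(\hat{a}^*)$ by the definition of $a_0$. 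Taking expectations yields $E[\epsilon(\hat{a}^*)]\ge E[\epsilon(a_0)]=0$, which is the lemma.

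\textbf{Step 2: Remove conditioning.} If $n(a)$ is random, for example because actions were drawn from a logging policy, apply Step 1 conditionally on $(n(a))_{a}$. Given the counts, $a_0$ is fixed and each $\hat{\mu}(a)$ is an unbiased average of $n(a)$ i.i.d.\ outcomes, so the conditional expectation of $\epsilon(a_0)$ is zero. The tower property then finishes the argument. We need $n(a)\ge 1$ for the estimator to be defined; I would assume this or restrict to arms that have data.

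\textbf{Main obstacle.} The only delicate point is making sure the comparison arm $a_0$ does not depend on the noise, since otherwise $E[\epsilon(a_0)]=0$ can fail. That is why I use the penalized population means $m$, rather than the raw means $\mu$, to define $a_0$. The key observation is that the penalty is noise-independent given the counts, so the standard ``winner's curse'' argument carries through unchanged.
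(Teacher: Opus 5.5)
Your proof is correct, and it takes a different route from the paper's.

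The paper argues arm by arm. It holds the other arms' empirical means fixed and observes that the event $\{\hat{a}^*=a_i\}$ is then increasing in $\hat{\mu}(a_i)$. The covariance inequality $\mathrm{Cov}(X,\mathbf{1}_A)\ge 0$ for an increasing event (its Lemma~\ref{lem:expectX}) then gives $E[\hat{\mu}(a_i)\mid \hat{a}^*=a_i]\ge \mu(a_i)$ for every $i$. The stated inequality follows by averaging over $i$ with weights $P(\hat{a}^*=a_i)$, a step the paper leaves implicit.

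Your proof instead absorbs the noise-free penalty into $m(a)$. It then makes one pathwise comparison, $\epsilon(\hat{a}^*)\ge \epsilon(a_0)$, against the deterministic maximizer $a_0$. This is the classical optimizer's-curse argument.

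What the paper's route buys is a finer per-arm conditional statement. The price is that it implicitly needs arm $a_i$'s estimate to be independent of the other arms' estimates, so that fixing them leaves its conditional mean equal to $\mu(a_i)$. It also needs the monotonicity of the selection event and an un-conditioning step.

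Your route needs only marginal unbiasedness of $\hat{\mu}(a_0)$. It does not care about ties or dependence across arms, and it applies to any index $\hat{\mu}(a)-b(a)$ whose penalty $b$ does not depend on the noise. It also quantifies the bias: $E[\hat{\mu}(\hat{a}^*)]-E[\mu(\hat{a}^*)]\ge E[m(a_0)-m(\hat{a}^*)]\ge 0$.

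Conditioning on the counts in Step 2, and requiring $n(a)\ge 1$, is the right way to handle a random logging design.
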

\begin{proof}
First define the event $A_i$ to be the event that arm $a_i$ has the maximum lower confidence bound (as in Equation~\ref{eqn:upperlcb}). Assume that $\hat{\mu}_j$ is fixed for all $j \neq i$. 

Then $A_i$ is increasing in $\hat{\mu}_i$: if $A_i$ is true at a particular $\hat{\mu}_i$, it will continue to be true if $\hat{\mu}_i$ increases, and if $A_i$ is not true, then increasing $\hat{\mu}_i$ may potentially make arm $a_i$ have the highest lower confidence bound and therefore make $A_i$ true. 

Then applying Lemma~\ref{lem:expectX} with $X=\hat{\mu}(a_i)$ and $A=A_i$ we get 
$E[\hat{\mu}(a_i)|\hat{a}^*=a_i] \geq E[\hat{\mu}(a_i)] = \mu(a_i).$
\end{proof}

\begin{lemma}
\label{lem:expectX}
Let $A$ be an event that is increasing in $X$. Then $E[X|A] \geq E[A].$
\end{lemma}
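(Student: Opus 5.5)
The statement as written compares $E[X\mid A]$ with $E[A]$, which mixes a random variable with an event. Given how Lemma~\ref{lem:expectX} is invoked in the proof of Lemma~\ref{lem:lcb_arm_biased} (to conclude $E[\hat{\mu}(a_i)\mid \hat{a}^*=a_i]\geq E[\hat{\mu}(a_i)]$), I read the intended claim as $E[X\mid A]\geq E[X]$ whenever $P(A)>0$. I read ``$A$ is increasing in $X$'' as: the indicator $\mathbf{1}_A$ can be written as $g(X,Z)$, where $g$ is nondecreasing in its first argument for every fixed $Z$, and $Z$ (the other arms' statistics) is independent of $X$. My plan is to reduce the claim to a covariance inequality, $\mathrm{Cov}(X,\mathbf{1}_A)\geq 0$. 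Since $E[X\mid A]-E[X] = \mathrm{Cov}(X,\mathbf{1}_A)/P(A)$, this inequality is exactly what is needed.

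First I treat the case with no auxiliary variable, where $\mathbf{1}_A=g(X)$ with $g$ nondecreasing. I use Chebyshev's association inequality via an independent copy. Let $X'$ be an independent copy of $X$. Because $x\mapsto x$ and $g$ are both nondecreasing, $(X-X')(g(X)-g(X'))\geq 0$ pointwise. Taking expectations and expanding gives $2E[Xg(X)]-2E[X]E[g(X)]\geq 0$, that is, $E[X\mathbf{1}_A]\geq E[X]P(A)$. Dividing by $P(A)>0$ gives $E[X\mid A]\geq E[X]$. I assume $E|X|<\infty$ so that all the expectations are finite.

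Next I handle the auxiliary variable $Z$. Conditional on $Z=z$, independence means $X$ still has its marginal law, and $g(\cdot,z)$ is nondecreasing. The previous step therefore gives $E[X g(X,z)]\geq E[X]\,E[g(X,z)]$ for each $z$. Integrating over the law of $Z$ yields $E[X\mathbf{1}_A]\geq E[X]P(A)$, and the conclusion follows as before.

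The main obstacle is the independence requirement rather than the inequality itself. In Lemma~\ref{lem:lcb_arm_biased}, the other arms' empirical means $\hat{\mu}_j$ must be independent of $\hat{\mu}_i$, for instance by conditioning on the sample counts $n(a)$ in a randomized design. If that fails, conditioning on $Z$ changes the distribution of $X$, and the right-hand side becomes $E[X\mid Z]$ instead of $E[X]$. So I would state the independence (or the conditioning on the counts $n(a)$) explicitly, and note that the thresholds $\sqrt{c/n(a)}$ are then fixed constants that do not affect monotonicity.
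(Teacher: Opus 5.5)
Your proposal is correct and follows essentially the same route as the paper: the paper also concludes $E[X\mid A]\geq E[X]$, obtains $E[X\mathbf{1}_A]\geq E[X]P(A)$ from $\mathrm{Cov}(X,\mathbf{1}_A)\geq 0$ for an indicator that is nondecreasing in $X$, and divides by $P(A)$. The paper asserts the covariance inequality without proof and handles the other arms only by ``assume $\hat{\mu}_j$ is fixed'' in the application. Your independent-copy argument and your explicit independence condition on $Z$ therefore add rigor to the same argument rather than changing it.
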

\begin{proof}
Since $A$ is increasing in X, the indicator function $I_A$ is also an increasing function of X. 
Therefore
\begin{eqnarray}
Cov(X,I_A) &\geq& 0 \\
E[X I_A] &\geq& E[X] E[I_A] \\
& = & E[X] P(A), \label{eqn:ex1}
\end{eqnarray}
where the second line follows from the definition of covariance, and the third equality follows since $E[I_A] = P(A).$

We now consider an alternate expression for $E[X I_A]$:
\begin{eqnarray}
E[X I_A] &= & \int_{w} I(w \in A) X(w) dP(w)\\
&=& \int_{w \in A} X(w) dP(w)\\
&=& \int_{w \in A} X(w) dP(w) \frac{P(A)}{P(A)}\\
&=& P(A) \int_{w \in A} X(w) \frac{dP(w)}{P(A)}\\
&=& P(A) \int_{w \in A} X(w) \frac{dP(w \cap A)}{P(A)}\\
&=& P(A) \int_{w \in A} X(w) dP(w | A)\\
&=& P(A) E[X|A] \label{eqn:ex2}
\end{eqnarray}
where line (5) holds because $dP(w) = dP(w \cap A)$ on $w \in A$. 

We now combine Equations~\ref{eqn:ex1} and~\ref{eqn:ex2} to yield 
\begin{eqnarray}
P(A) E[X|A]  & \geq & E[X] P(A)   \\
E[X|A]  & \geq & E[X]
\end{eqnarray}
\end{proof}

We also highlight that heterogeneous treatment effects are necessary (though not sufficient) for a personalization effect to exist. Therefore, in analyzing if a personalization effect exists, one could execute a HTE test (see recent methods such as ~\cite{imai2025statistical} and~\cite{yadlowsky2025evaluating}) and if there is no presence of HTE, then no evidence of personalization is present. However, there do exist settings where HTE is present but no personalization effect is present (see Figure 1). We discuss these issues further in Supplement Section~\ref{sec:hte_and_personalization}. 

\subsection{A K-Fold Personalization Estimator}\label{sec:kfold_full}
We now present details of our K-fold Personalization Estimator. Pseudocode is shown in Algorithms~\ref{alg:kpt} and~\ref{alg:single_kpt}. A  core insight is that we can treat the specific personalized policy and best single intervention as  nuisance parameters for the main estimand of interest, which is the personalization effect. This is analogous to how the outcome model and propensity  model are treated as nuisance parameters for treatment effect estimation. Due to its strong properties in treatment effect estimation, we use an Augmented Inverse Propensity-Weighted (AIPW) style estimator for the personalization effect, defined as 
\begin{eqnarray}
\hat{\psi}^{k} &=&\frac{1}{n_k}\sum_{i=1}^{n_k} \hat{r}^{k}(x_i,\hat{\pi}^k(x_i)) - \hat{r}^{k}(x_i,\hat{a}^k) + \frac{\mathbf{1}\{\hat{\pi}^k(x_i)=a_i\}}{\hat{p}^k(a_i|x_i)}(y_i-\hat{r}^{k}(x_i,\hat{\pi}^k(x_i))) \nonumber\\
& & - \frac{\mathbf{1}\{\hat{a}^{k}=a_i\}}{\hat{p}^k(a_i|x_i)}(y_i-\hat{r}^{k}(x_i,\hat{a}^{k})) \label{eqn:aipw_},
\end{eqnarray}
using a personalized policy $\hat{\pi}^k$, single best intervention $\hat{a}^k$ learned on part of the data, an outcome model $\hat{r}^k$ and propensity model $\hat{p}^k$ learned on another part, and $(x,a,y)$ tuples from a final subpart of the data, fold $k$  (Alg.~\ref{alg:single_kpt}). 

This is repeated through cross-fitting~\cite{chernozhukov2018double}, permuting the data used to do policy and best intervention modeling, outcome and propensity modeling and the remaining held out data used in the estimator, ensuring that each data tuple $(x,a,y)$ is used once in the final role. Cross-fitting has become a popular tool used in deriving efficient estimators, likely in part because it often allows less restrictive assumptions on the model classes used for representing nuisance parameters. 

The particular estimate may depend on the specific data included in each fold, and so we repeatedly partition the data randomly and compute a personalization effect for each random partition. The final estimate and variance is computed across all random partitions and folds, as shown in Algorithm~\ref{alg:kpt}.

\begin{algorithm}[!htb]
\caption{K-Fold Personalization Estimate}
\begin{algorithmic}[1]\label{alg:kpt}
\REQUIRE number of folds $K$, Dataset $D$, number of repeated splits $S$
\FOR{$s = 1,...,S$}
    \STATE Re-ordering the data randomly to create $D_s$
    \STATE $\hat{\psi}_s,\;\hat{\sigma}_s^2 \leftarrow$ SingleSplitKPT($D_s$)
\ENDFOR
\STATE Compute $\tilde{\psi}=\frac{1}{S}\sum_{s=1}^S\hat{\psi}_s$, $\; \tilde{\sigma}^{2}=\frac{1}{S}\sum_{s=1}^S\left(\hat{\sigma}_s^2+(\hat{\psi}_s-\tilde{\psi})^2\right)$
\STATE Compute $z$-statistic $z_{\tilde{\psi}}=\frac{\tilde{\psi}}{\tilde{\sigma}/\sqrt{n}}$ and $p$-value $p$
\ENSURE $\tilde{\psi},\tilde{\sigma},z_{\tilde{\psi}},p$
\end{algorithmic}
\end{algorithm}
\begin{algorithm}[!htb]
\caption{SingleSplitKPT}
\begin{algorithmic}[1]\label{alg:single_kpt}
\REQUIRE number of folds $K$, Dataset $D$, and optional proportion of data to learn policy/best intervention $p_{\pi}$
\STATE Partition $D$ into $D_1,D_2,\cdots,D_K$ folds.
\FOR{$k = 1,2,\cdots,K$}
    \STATE $\hat{\pi}^k \leftarrow$ PersonalizedPolicyLearning$(D_{mod(k+1,K)},\ldots, D_{mod(k+p_{\pi}K,K})$ 
    \STATE $\hat{a}^k \leftarrow $ BestInterventionLearning$(D_{mod(k+1,K)},\ldots, D_{mod(k+p_{\pi}K,K})$    
%    $\hat{\pi}^k$ and single best intervention $\hat{a}^k$ using $D_{mod(k+1,K)},\ldots, D_{mod(k+p_{\pi}K,K}$
    \STATE $\hat{r}^{k} \leftarrow $OutcomeModelLearning$(D_{mod(k+p_{\pi}K,K},\ldots,D_{mod(k+K-1,K)})$
    \STATE $\hat{p}^{k} \leftarrow$ PropensityModelLearning$(D_{mod(k+p_{\pi}K,K},\ldots,D_{mod(k+K-1,K)})$
%    \STATE Compute $\hat{r}^{k}$ and $\hat{p}^k$ from $D_{mod(k+p_{\pi}K,K},\ldots,D_{mod(k+K-1,K)}$
    \STATE Compute $\hat{\psi}^{k}(\hat{\pi}^k,\hat{a}^k,\hat{r}^{k},\hat{p}^{k})$ using equation~\eqref{eqn:aipw_} with  $(x_i,a_i,y_i) \in D_{k}$
\ENDFOR
\STATE Compute $\hat{\psi}=\frac{1}{K}\sum_{k=1}^K \hat{\psi}^{k}$, variance $\hat{\sigma}^2:=\Var(\hat{\psi})$
\ENSURE $\hat{\psi}$, $\hat{\sigma}^2$
\end{algorithmic}
\end{algorithm}

Note that unlike standard treatment effect estimation, in our setting, $\pi^*(x)$ may be equal to $a^*$. 

We require at least $K=3$ but many choices are possible. Note that the proportion of data used to learn the best personalized policy/intervention does not need to be the same proportion as that used to fit outcome and propensity models, and the data held out for use in the final AIPW-style personalization estimation. In general policy learning will benefit from more data.  One possibility is to use a large proportion of the data for policy learning -- such as 80\%, 19\% for outcome modeling and 1\% as the held out set. In general as $K$ gets close to the number of data points $N$, this method will look more like a leave-one-out style cross validation method historically popular in predictive machine learning. The challenge to this is that the computational cost is linear in the number of folds $K$ and so this can start to become prohibitively expensive. In our main experiments we use $K=6$ and use half the folds for policy and best intervention learning, two for outcome modeling and one for the held out evaluation. 

We note that a wide variety of machine learning and statistical models can be used to perform the personalized policy, best intervention, outcome model and propensity model learning. The final two modeling tasks are prediction problems, and there exist an enormous number of predictive modeling tools (linear regression, supervised machine learning, etc) for this task. Personalized policy and best intervention learning are optimization tasks: the goal is to learn from the relevant historical data, a personalized decision policy or single best intervention that is expected to maximize the expected utility if used on individuals with a similar distribution as that sampled in the available dataset. There is also a rich literature on this task of policy learning and best intervention (often called "best arm" in the multi-armed bandit community) learning, in econometrics, machine learning and other fields (e.g.\cite{kitagawa2018should,dudik2014doubly}). Our algorithm is agnostic to the particular approaches used to perform this learning, though we do require some assumptions on the setting and methods used for our theoretical results, as we will shortly describe. 

An alternate approach to our repeated partitions as done in Algorithm \ref{alg:kpt} could be to do repeated bootstraps, where a set of data is sampled with replacement from the original dataset, and then we run Algorithm \ref{alg:single_kpt} on that sampled with replacement dataset, and repeat the whole procedure many times with different bootstraps. While bootstrapping is a popular method for computing sample statistics with minimal parametric assumptions, our focus here was on deriving an efficient estimator of the personalization effect with strong type I error guarantees. Our theoretical analyses leverage  the fact that each observation appears exactly once in the test portion of each split in our approach, which allows us to control the asymptotic distribution of the estimator. Under bootstrap resampling, observations may appear multiple times or not at all in a given sample, which would change the dependence structure of the estimator and require a separate theoretical treatment.  Our Algorithm \ref{alg:kpt} repeated re-ordering and random partitioning over the full dataset is analogous to the approach proposed and used in \cite{chernozhukov2018double}. Studying bootstrap-based variants of estimating personalization effects is an interesting direction for future work.

\section{Related Work}
Though understanding if customizing interventions to different covariates leads to improved outcomes is relevant across a wide range of disciplines, the literature tackling this is relatively sparse and largely disjoint across fields.  In medicine such an effect is termed a qualitative interaction\cite{peto1982statistical,li2006detecting,silvapulle2001tests}. Prior work includes a likelihood ratio test that assumes there is a small  prespecified set of patient subgroups~\cite{gail1985testing},  patient covariate selection methods to maximize qualitative interactions\cite{gunter2011variable}, and decision tree-like methods to identify interpretable qualitative treatment-subgroup interactions\cite{dusseldorp2014qualitative}. To our knowledge this prior work is restricted primarily to binary interventions (treatment or control) and and does not provide tests which are provably statistically efficient, nor focus on settings with a number of covariates,  where machine learning methods are commonly employed. 
% Most closely related to our approach is more recent work~\cite{shi2020sparse} which developed a related method for testing for personalization with binary interventions with high-dimensional input features. 
% It performs personalized policy learning by estimating a conditional average treatment effect (CATE), and then thresholding to provide the intervention if the estimated CATE is greater than zero, else the control intervention is provided. 
% Like our method, it uses a cross-fitting approach~\cite{chernozhukov2018double}.  
% This method has type-I guarantees but does not have efficiency guarantees. We compare to this method in our experiments. 

A related approach to estimating the benefit of personalization is the SRP method~\cite{shi2020sparse}, which aims to detect overall qualitative treatment effects (OQTE) under binary interventions with potentially high dimensional input covariates. They show that testing for OQTE is equivalent to testing the benefit of personalization under binary interventions, so our objective is the same as theirs in the binary intervention case. Nevertheless, there are several important methodological differences. First, SRP uses a Bonferroni correction in constructing the test, whereas our test obtains the test statistic by averaging over $K$ folds. Bonferroni provides valid Type-I error control without assumptions, but it can be conservative, especially when extending their test to multiple actions, which results in a much larger multiple testing problem. In contrast, taking the average yields a higher power. The tradeoff is that controlling Type I error for the averaged statistic requires a mild uniqueness assumption that the optimal action can be learned quickly enough. This assumption is not needed for the Bonferroni correction but is common, especially in bandit literature \cite{lattimore2020bandit}, and can be satisfied when, for example, there is a minimum gap between the best and the second best intervention. Second, SRP is tailored for high-dimensional covariate spaces and explicitly incorporates random projections to mitigate the curse of dimensionality. This makes SRP particularly suitable when $d\gg n$ or when sparsity plays a central role in identifying the personalized effect. Our proposed test targets a more generic semiparametric setting without requiring random projections, sparsity, or high-dimensional regularization. As a result, KPT applies naturally in moderate-dimensional, non-sparse, or structured-covariate regimes whereas SRP is better suited when covariates are high-dimensional. It remains an interesting direction whether one can incorporate the sparse projection approach to KPT and get a better performance in high-dimensional settings. Third, SRP is designed specifically for binary treatment spaces. In such settings policy learning can be used to extracting a threshold policy class given an estimate of the conditional average treatment effect (CATE). It is not possible to directly extend this to when there are more than 2 interventions without further assumptions (as done in the SRP authors' own work, by assuming that the best overall action was known and comparing separately to two different other actions in pairwise fashion~\cite{shi2020sparse}), or without incurring potentially a combinatorial number of pairwise comparisons, which will generally inflate the Bonferroni connection required and reduce power. Fourth and finally, our test aggregates over repeated random permutations of the splits, whereas SRP only performs a single split. This reduces variability of a single split in finite samples. In practice, we find that aggregation substantially stabilizes the test statistic and the $p$-value, while leaving the asymptotic analysis unchanged. We have showed in various real data scenarios that our test produces a more stable estimate than SRP in Figure 3.

Another paper included an estimator for the difference in value of two different learned personalized policies, given data from a randomized controlled trial~\cite{imai2023experimental}. This estimator would include our personalization effect if one policy is restricted to learn a single best intervention for all individuals. The proposed estimator leverages cross-folds, and in each cross-fold estimates the population average prescriptive effect difference (PAPD) between the two policies as
\begin{equation}
\Delta(\pi_1,\pi_2,\mathcal{D}_k) = \frac{1}{n_1} \sum_{i=1}^{N_k} Y_i T_i (I(\pi_1(x_i) - \pi_2(x_i))) - 
\frac{1}{n_0} \sum_{i=1}^{N_k} Y_i (1-T_i) (I(\pi_1(x_i) - \pi_2(x_i))), 
\end{equation}
where $\mathcal{D}_k$ is the $k$-th fold of the data, the policies $\pi_1$ and $\pi_2$ are learned on the data in the other folds, the $I(\cdot)$ is the indicator function, and $x_i,T_i,Y_i$ are the covariates, binary decision (treatment or not), and outcome for the $i$-th datapoint in the $\mathcal{D}_k$ fold. 
The authors evaluate this and their other proposed estimators under Neyman’s repeated sampling framework~\cite{Neyman1923}, and provide a bound on the bias of the estimator and its exact finite-sample variance. In their empirical simulations, they center the outcome variable $Y$ to reduce variance. While there are number of similarities to our proposed approach, including cross fitting, there are several important distinctions: (1) our proposed estimator is based on AIPW, which is semi-parametrically efficient for average treatment effect estimation, whereas PAPD can be viewed as an IPW estimator, which is not; (2) our analysis focused on providing asymptotic normality and semiparametric efficiency, rather than finite sample bounds on bias and variance; and (3) our estimator explicitly accounts and addresses the variability due to data splitting, which PAPD does not. While their original results are presented for binary treatments, we can easily extend their PAPD estimator to finite intervention spaces. We do so and compare to the PAPD estimator with centered outcome variable $Y$, and find that while their confidence intervals are often of similar width to our KPT, their estimates are sensitive to the data split (see Figure 3). We also find that using a centered outcome variable $Y$ was important. In early experiments with PAPD without centering, the PAPD confidence interval width was sometimes double our KPT confidence interval width. In general, we would expect that if the learned outcome model is reasonable, an AIPW-style estimator should match or improve over an IPW-style estimator.

In a separate line of work, there exist more general techniques developed in economics-- such as techniques for inference based on conditional moment inequalities\cite{andrews2013inference} or on intersection bounds\cite{chernozhukov2013intersection}-- which could be specialized to enabling tests of conditional stochastic dominance of one intervention over another, and applied to our setting. However, the aim of such work is quite different, and they do not consider developing efficient statistical tests for the utility of personalized decision policies. Several econometrics papers\cite{delgado2013conditional,lee2013testing,chang2015nonparametric,hsu2017consistent} have provided hypothesis testing on quantities related to conditional average treatment effects, including work that uses structural properties (e.g. concavity) of the joint difference in treatment outcomes\cite{delgado2013conditional}, or reduces the problem to unconditional moment inequalities\cite{hsu2017consistent}. Two of these papers\cite{delgado2013conditional,hsu2017consistent} tested for personalization effects using the National Supported Work Demonstration program dataset\cite{lalonde1986evaluating} (with only~\cite{delgado2013conditional} rejecting the null that the intervention is beneficial at all ages) and one examined if there were differential impacts of 
single-sex schooling on academic achievements\cite{chang2015nonparametric}. 
These papers assume binary interventions, focus on when a single feature is available for personalization, and typically use important sampling based estimators. 
While they provide consistency and asymptotic type 1 error guarantees under some  assumptions\cite{delgado2013conditional,chang2015nonparametric,hsu2017consistent,lee2013testing}, they do not provide efficiency results. 

More broadly, testing if there is a utility benefit to personalization is related to conditional average treatment effect estimation (CATE)\cite{athey2016recursive,wager2018estimation,kunzel2019metalearners,kennedy2023towards}. The presence of conditional average treatment effects that vary (aka heterogeneous treatment effects) is necessary but not sufficient for a personalization effect to exist (as illustrated in Figure 1).  In addition, research on CATE and heterogeneous treatment effect estimation focus on binary interventions. There is also an extensive literature on selecting a best single intervention from a set of interventions given data from a randomized control trial or observational data~\cite{manski2004statistical}, performing inference on the selected intervention~\cite{andrews2024inference}, as well as related work on personalized policy learning~\cite{swaminathan2015batch,athey2021policy,zhou2023offline}. Some of this work is complimentary: for example, our K-Fold Personalization estimate can utilize many different algorithms for policy learning or extracting a single intervention to maximize expected population outcomes, and our theory will directly apply to such cases, assuming the used procedures still satisfy our required assumptions.

In addition, off-policy evaluation for contextual bandits and Markov decision processes is a rich area of interest. There are a number of OPE estimators constructed for the same setting we consider, of evaluating the performance of a contextual bandit policy where data are drawn from a static, unconfounded, behavior policy (e.g.~\cite{dudik2014doubly,wang2017optimal,su2019cab}). However  such estimators often are focused on different objectives (like finite sample mean squared error, or bounds on the minimax error), are not typically proved to be statistically efficient, and, while often offering good empirical performance, sometimes rely on an additional hyperparameter which is non-trivial to select (e.g.~\cite{su2019cab}).

Even pragmatically, if we were to use two different OPE estimates, we ultimately aim to produce a single confidence interval over the personalization effect. Some of the alternate OPE contextual bandit off-policy estimators~\cite{wang2017optimal,su2019cab} have low mean squared error but can be biased, and rely on an estimate of the variance and an upper bound on the bias term, without a central limit theorem on the resulting estimator distribution. In such cases, one might leverage Empirical Bernstein finite sample bounds to aim to compute a confidence interval in the difference between the personalized policy performance and best arm, but then one would also need to widen the confidence interval to account for potential bias. We expect such an approach to yield wide confidence intervals and to have limited power to detect personalization effects.

In contrast, the augmented inverse propensity weighted (AIPW) estimator is known to be statistically efficient for both evaluating the performance of a single intervention (as is commonly studied in causal inference), and for evaluating the performance of a particular contextual bandit policy, given both the propensity and outcome models are consistent, and sufficient conditions on the rate of convergence of the propensity and outcome model nuisance parameter estimates (equivalent to our Assumptions 5 and 7) (see review article~\cite{uehara2025review}). 

Therefore for our TrainEval baseline, we estimate the policy and best overall intervention on one part of the data, and then, on the other part of the data, we estimate the personalization effect using an AIPW estimator and confidence interval. %The Shi et al. SRP estimator~\cite{shi2020sparse} is designed specifically to estimate a personalization effect, and is also expected to perform strongly,  as we observe in our empirical results.

%In this work we do not consider policy estimation from adaptive online datasets, such as data gathered from an adaptive contextual bandit algorithms. The considerations for such a setting are quite different, since the decisions made for later contexts are not independent of the decisions and outcomes observed from prior observed contexts. Performing personalization effect estimation from  online adaptive data is an interesting direction for future work. 

Finally, we note that research in the AI community on personalized policy learning, within the framework of contextual multi-armed bandit learning, typically focuses on online settings, where the measure of algorithm performance is often cumulative regret (which is the difference between the optimal expected outcome of the best intervention versus the intervention executed by the algorithm) or sample complexity bounds. In contrast, in this paper we are focused in the offline setting, where data are already available from a static behavior policy, such as a randomized controlled trial, or unconfounded observational historical data. The offline policy learning and evaluation setting has also received substantial interest in the econometrics, biostatistics and AI research community, in part because there are sometimes safety or logistic concerns that may restrict performing adaptive exploration online. For example, if the outcome of an intervention is delayed (such as a job training program, and its resulting impact on employment 1 year later), it may be more practical to randomize a batch of individuals to different conditions, rather than perform a sequential experiment in which the outcome from the current individual and intervention must be observed before making a decision of how to intervene on the next individual. As noted by several papers (e.g.~\cite{hadad2021confidence,zhang2020inference,bibaut2021post}) leveraging data from online settings offers additional technical complexities, because later interventions selected for individuals depend on the outcome of the interventions for earlier individuals. Recent work~\cite{jia2025crammingcontextualbanditsonpolicy} proposed a new method that provides, under certain assumptions, a strong estimate of the value of the final policy produced at the end of an online contextual bandit algorithm. Their method is designed specifically for evaluating the performance of the final policy computed by the online contextual bandit algorithm-- if the data were gathered using a randomized controlled trial, as we consider in our experiments-- such an algorithm would satisfy their stability requirements but the resulting policy they estimate would simply be the randomized policy. We leave leveraging online data from adaptive data for estimating personalization effects as an interesting area for future work.

\section{The K-Fold Personalization Test Has Valid Type-I Error}\label{sec:proofs}
We first prove our K-Fold Personalization Test has valid Type-I error under fairly weak assumptions.  

We now lay out our assumptions. 
\begin{assumption}[SUTVA]\label{assump:SUTVA}
$Y_i=Y_i(A_i)$ for all $i$. 
\end{assumption}
\begin{assumption}[Unconfoundedness]\label{assump:unconfoundedness}
    $Y_i \perp A_i | X_i$ for all $i$. 
\end{assumption}
\begin{assumption}[Strong overlap]\label{assump:strong_overlap}
$\exists 0<\eta<1$ such that for each $a$ and each $x$, $p(a|x)\in [\eta, 1-\eta]$.
\end{assumption}
\begin{assumption}[Bounded reward]\label{assump:bounded_r}
$\exists C_1<\infty$ such that $\max_{x,a}r(x,a)\leq C_1$.
\end{assumption}
\begin{assumption}[Consistency of estimators]\label{assump:consistent}
For each $k$, we can learn consistent estimators
% $\hat{\pi}^k$ of $\pi^*$ in the sense that \[P(\hat{\pi}^k(x)\neq \pi^*(x))=o(1).\] We can learn a consistent estimator   
$\hat{r}^k$ of $r$, in the sense that for any $a$, 
\[\frac{1}{n_k} \sum_{i=1}^{n_k}\left(\hat{r}^{k}\left(x_i,a\right)-r\left(x_i,a\right)\right)^2 \rightarrow_p 0.\] We can also learn $\hat{p}^k$ of the propensity score $p$ such that for each $a$, \[\frac{1}{n_k} \sum_{i=1}^{n_k}\bigsmile{\frac{1}{\hat{p}^k(a|x_i)}-\frac{1}{p(a|x_i)}}^2\rightarrow_p 0.\]
\end{assumption}
\begin{assumption}[Fast best-arm learner]\label{assump:fast_best_arm_learner} For each $k$, $\P(\hat{a}^k\neq a^*)=o(n_k^{-1/2})$. 
\end{assumption}
\begin{assumption}[Convergence of nuisance parameters]\label{assump:conv_nuisance_params} For each $k$, 
\begin{align}
    \max_{a}\frac{1}{|\II_k|}\sum_{i\in\II_k}\bigsmile{\frac{1}{\hat{p}^k(a|x_i)}-\frac{1}{p(a|x_i)}}\bigsmile{\hat{r}^k(x_i,a)-r(x_i,a)}=o_P(n^{-1/2}).
\end{align}
\end{assumption}
\begin{assumption}[Finite variance]\label{assump:finite_var}
$\exists C<\infty$ such that $\Var(\epsilon_i)\leq C$ for each $i$.
\end{assumption}
\begin{assumption}[Imperfect observation] There exists $c_0>0$ such that $\Var(\epsilon)\geq c_0$. \label{assump:imp_obs}
\end{assumption}
\begin{assumption}[Consistency of learned personalized policy]
\label{ass:consistency}
The oracle optimal policy $\pi^*(x)$ is unique for almost every $x$. The learned policies satisfy, for each fold $k$:
\begin{eqnarray}
P\bigl(\hat{\pi}_k(X) \neq \pi^*(X)\bigr) &\to& 0 \quad \text{as } n \to \infty.
%\label{eq:classification_consistency}
\end{eqnarray}
\end{assumption}

Assumptions 1-5 are standard assumptions in causal inference. Assumption \ref{assump:conv_nuisance_params} is a common assumption in treatment-effect style estimators, where a good rate is only required of the product of errors in the model and propensity nuisance parameters, instead of good rates on both. Assumptions~\ref{assump:conv_nuisance_params} and  \ref{assump:finite_var} are mild assumptions on the outcome noise. The key assumption that is different than treatment effect estimation (or doubly-robust estimators of a personalized policy's outcomes) is Assumption~\ref{assump:fast_best_arm_learner}, which has implications on the hardness of the problem setting in terms of learning an optimal arm. However, this assumption is satisfied if we assume a minimum gap (MG) in average outcomes across the interventions/actions: 
\begin{enumerate}
    \item[(MG)] There exists $\delta>0$ such that for all $a\ne a_{all}^*$, $\E_{x}[r(x,a_{all}^*)-r(x,a)]\geq \delta$. 
\end{enumerate}
Such a gap assumption is common in the multi-armed bandit literature and implies that there is a single intervention that has a better outcome on average than other interventions-- in other words, there are not two or more best interventions with identical average outcomes. We view this as quite reasonable in many settings.

Assumption~\ref{ass:consistency} ensures the optimal personalized policy in the policy class is unique and the learned policies will asymptotically converge to this optimal policy. This is a quite mild assumption, though there are settings that violate this: for example, if there are many individuals with covariates where multiple actions have the some outcome (e.g. consider in the binary intervention space, when many individuals have zero treatment effect).

We first consider the type I error if $S$=1, before extending our analysis to repeated splits/shuffles (for the full Algorithm 1). Recall 
we observe $N$ i.i.d.\ copies $w_i = (X_i, A_i, Y_i)$ drawn from a distribution $p(x)p(a|x)p(y|a,x)$.
The data are randomly partitioned into $K$ folds $\mathcal{I}_1, \dots, \mathcal{I}_K$ of equal size $n_k = N/K$.
Let $D_k = \{w_i : i \in \mathcal{I}_k\}$ denote the data in fold $k$, and $D_{-k} = \{w_i : i \notin \mathcal{I}_k\}$ the data outside fold $k$.

For each fold $k$, note that the decision policy $\hat{\pi}_k : \mathcal{X} \to \mathcal{A}$ is trained on (a subset) of the data in the other folds $D_{-k}$.

% define h
For $w=(x,a,y)$ and for some $\pi\in\Pi$, $b\in\A$, $r'$ and $p'$, define
\begin{equation}\label{eqn:h}
    h_{\pi,b,r',p'}(w):=r'(x,\pi(x))+\frac{\1\{\pi(x)=a\}}{p'(a|x)}(y-r'(x,\pi(x)))-r'(x,b)-\frac{\1\{b=a\}}{p'(a|x)}(y-r'(x,b)).
\end{equation}

Note Algorithm 2 computes 
\begin{equation}
\hat{\psi}  = \frac{1}{K} \sum_{k=1}^K \hat{\psi}_k,
\end{equation}
where
\begin{eqnarray}
\hat{\psi}_k &=& \frac{1}{n_k}\sum_{i \in \mathcal{I}_k} h_{\hat{\pi}_k,\hat{a}_k,\hat{r}_k,\hat{p}_k}(w_i)
\end{eqnarray}

The AIPW score evaluated at policy $\pi$ is:
\begin{eqnarray}
h_{\pi,a^*,r,p}(w) &=& \frac{\1(A = \pi(X))}{e(X, A)}\bigl(Y - r(X, A)\bigr) + \mu\bigl(X, \pi(X)\bigr),
\label{eq:score}
\end{eqnarray}
where $p(a|x)$ is the (oracle-known) propensity score and $r(x,a)$ is the (oracle-known) outcome model.
For observation $i \in \mathcal{I}_k$, write $h_i = h(w_i, \hat{\pi}_k)$.

\begin{theorem}
Under Assumptions \ref{assump:SUTVA}--\ref{ass:consistency}, under $H_0$, for any fixed $\alpha>0$, the $z$-statistic $z_n \;=\;
\frac{\sqrt{n}\,\hat\psi}{\hat\sigma}$ from one run of Algorithm~\ref{alg:kpt} satisfies \(\Pr(z_n>z_{1-\alpha/2})\to 0\) as $n\to\infty$. 
\label{thm:valid_type1}
\end{theorem}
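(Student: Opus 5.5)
Under $H_0$ we have $\psi=0$, i.e. $V(\pi^*)=U(a^*)$. Since the constant policy $x\mapsto a^*$ lies in $\Pi$ (it is implicitly assumed that $\Pi$ contains constant policies), the uniqueness part of Assumption~\ref{ass:consistency} forces $\pi^*(x)=a^*$ for almost every $x$. The plan is therefore to show that, under $H_0$, $\sqrt{n}\,\hat\psi$ is asymptotically negligible, either $o_P(1)$ or $\sqrt{n}\,\hat\psi \le o_P(1)$. We will also show that $\hat\sigma$ stays bounded away from zero in probability. Together these give $z_n\to_p 0$ (or $\limsup z_n\le 0$ in probability), and hence $\Pr(z_n>z_{1-\alpha/2})\to 0$ because $z_{1-\alpha/2}>0$.

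\textbf{Step 1: oracle decomposition of $\hat\psi_k$.} For each fold write
\[
\hat\psi_k=\frac{1}{n_k}\sum_{i\in\II_k} h_{\hat\pi_k,\hat a_k,\hat r_k,\hat p_k}(w_i).
\]
Replace $\hat a_k$ by $a^*$; by Assumption~\ref{assump:fast_best_arm_learner}, the event $\hat a_k\neq a^*$ has probability $o(n_k^{-1/2})$. Its contribution to $\sqrt n\hat\psi_k$ is therefore $o_P(1)$, since on the complement it vanishes identically, and a term vanishing outside an event of probability $o(1)$ is $o_P(1)$.

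Next, condition on $D_{-k}$, so that $\hat\pi_k,\hat r_k,\hat p_k$ are fixed. The standard AIPW/cross-fitting argument then splits $h_{\hat\pi_k,a^*,\hat r_k,\hat p_k}(w_i)$ into three pieces:
\begin{itemize}
\item the oracle score $h_{\hat\pi_k,a^*,r,p}(w_i)$;
\item a mean-zero empirical-process term, linear in $\hat r_k-r$ and $1/\hat p_k-1/p$;
\item the product remainder $\frac{1}{n_k}\sum(1/\hat p_k-1/p)(\hat r_k-r)$.
\end{itemize}
Conditional on $D_{-k}$, the second piece has variance bounded by the mean-squared errors in Assumption~\ref{assump:consistent}, times constants from Assumptions~\ref{assump:strong_overlap}, \ref{assump:bounded_r} and \ref{assump:finite_var}. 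By conditional Chebyshev it is therefore $o_P(n^{-1/2})$. The third piece is $o_P(n^{-1/2})$ by Assumption~\ref{assump:conv_nuisance_params}, applied with a max over the finitely many actions $\hat\pi_k(x)$ can take.

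\textbf{Step 2: the oracle term.} The oracle score is
\[
h_{\hat\pi_k,a^*,r,p}(w)=h_{\pi^*,a^*,r,p}(w)+\bigl[h_{\hat\pi_k,a^*,r,p}(w)-h_{\pi^*,a^*,r,p}(w)\bigr].
\]
The first term is identically zero under $H_0$, since $\pi^*\equiv a^*$ almost surely. The bracket is nonzero only when $\hat\pi_k(x)\neq\pi^*(x)$. Its conditional mean given $D_{-k}$ is $V(\hat\pi_k)-V(\pi^*)\le 0$ by optimality of $\pi^*$, which is what makes the test conservative. Its conditional variance is at most $C\,P(\hat\pi_k(X)\ne\pi^*(X)\mid D_{-k})\to_p 0$ by Assumption~\ref{ass:consistency}. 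Hence $\sqrt n$ times the centered average is $o_P(1)$, and the mean part contributes $\sqrt n\,(V(\hat\pi_k)-V(\pi^*))\le 0$. Averaging over the $K$ folds gives $\sqrt n\,\hat\psi\le o_P(1)$.

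\textbf{Step 3: the denominator.} We need $\hat\sigma\ge c>0$ with probability tending to one. Use Assumption~\ref{assump:imp_obs}: whenever $\hat\pi_k(x_i)\ne\hat a_k$, the score contains $\epsilon_i/\hat p$ with nondegenerate variance. The difficulty is that under $H_0$ the scores may become degenerate exactly because $\hat\pi_k\to a^*$. I expect this to be the main obstacle.

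The first thing to try is to show that $\hat\sigma$ does not collapse faster than $\sqrt n\,\hat\psi$. Concretely, bound the ratio directly: the whole score vector is supported on the disagreement set $\{\hat\pi_k(x_i)\ne\hat a_k\}$. On that set, $\hat\sigma^2$ dominates $c_0\eta^{-2}$ times the disagreement fraction by Assumption~\ref{assump:imp_obs}. The positive part of $\sqrt n\hat\psi$ is a sum of mean-nonpositive, conditionally centered terms over the same set. A self-normalized Chebyshev bound then gives
\[
\Pr(z_n>t)\lesssim \frac{\text{(nuisance and best-arm error terms)}}{t^2}+o(1).
\]
If this is too delicate, fall back on the convention that $\hat\sigma$ includes the nondegenerate noise contribution, so that it is bounded below, and conclude via Slutsky.
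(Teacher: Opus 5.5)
\paragraph{Verdict.} Your Step~3 is a genuine gap. Steps~1--2 are fine, but Step~3 cannot be completed under the stated assumptions, and the paper's own proof does not supply the missing step either.

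\paragraph{Why Step~3 fails.} Under $H_0$, on the event $\{\hat a_k=a^*\}$, every score $h_{\hat\pi_k,a^*,\cdot,\cdot}(w)$, oracle or estimated, vanishes off the set $S_k=\{x:\hat\pi_k(x)\neq a^*\}$. By Assumption~\ref{ass:consistency}, $P(X\in S_k\mid D_{-k})\to 0$, so $\hat\sigma\to 0$ in probability.
\begin{itemize}
\item The lower bound $\hat\sigma\ge c>0$ you aim for is therefore false. Your conclusion $\sqrt n\,\hat\psi\le o_P(1)$ then says nothing about $z_n$: each remainder would have to be $o_P(P(S_k)^{1/2})$, the scale of $\hat\sigma_k$.
\item Redefining $\hat\sigma$ ``by convention'' changes the test, so that fallback is not available.
\item The self-normalized Chebyshev bound cannot give a vanishing probability. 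The centered oracle sum over $S_k$, divided by its own standard deviation, is $O_P(1)$, and it is asymptotically standard normal once $n_kP(S_k)\to\infty$.
\end{itemize}
Only the drift $\sqrt{n_k}\,(V(\hat\pi_k)-V(a^*))/\sigma_k\le 0$ could push $z_n$ down, and nothing in Assumptions~\ref{assump:SUTVA}--\ref{ass:consistency} sends it to $-\infty$.

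\paragraph{A counterexample.} Take two actions with $p(a|x)=1/2$, $X\sim U(0,1)$, $r(x,a^*)-r(x,a')=x$, Gaussian noise, and $\Pi$ the threshold policies. Use a linear plug-in learner: fit $\hat\beta_0+\hat\beta_1x$ to the contrast and choose $a'$ where the fit is negative. It errs exactly on $\{x<\delta_k\}$ with $\delta_k=-\hat\beta_0/\hat\beta_1$ of order $n^{-1/2}$. Every assumption holds, including the fast best-arm rate, since the mean gap is $1/2$. On $\{\delta_k>0\}$, an event of probability about $1/2$:
\begin{itemize}
\item the drift is of order $\sqrt{n}\,\delta_k^{3/2}\asymp n^{-1/4}\to 0$;
\item roughly $\sqrt n$ fold points fall in $S_k$.
\end{itemize}
So $z_n$ is approximately $N(0,1)$ on that event, and $\Pr(z_n>z_{1-\alpha/2})$ stays bounded away from $0$.

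\paragraph{Comparison with the paper's proof.} The paper's argument runs as follows:
\begin{itemize}
\item It bounds $z_n\le\sqrt K\max_k T_{n,k}$ via $\hat\sigma\ge\frac1K\sum_k\hat\sigma_k$ (Proposition~\ref{prop:bound_hat_psi}).
\item It lower-bounds the score variance by a multiple of $c_0P(\hat\pi_k(X)\neq\hat a_k)$, using Assumption~\ref{assump:imp_obs}.
\item It matches this with a Cauchy--Schwarz bound on the numerator carrying the same factor $P(\hat\pi_k(X)\neq\hat a_k)^{1/2}$. This leaves $T_{n,k}$ at most a constant times $\sqrt{n_k}\,\mathbf{1}\{\hat a_k\neq a^*\}$.
\end{itemize}
That scale-matching is used only on $\{\hat a_k\neq a^*\}$. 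Your Step~1 indicator argument already handles that event more simply, and needs only $\Pr(\hat a_k\neq a^*)\to 0$.

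On $\{\hat a_k=a^*\}$, the paper shows through Lemma~\ref{lem:h_hat_pi_oracle_neq_astart_oh_p_1} only that the \emph{unnormalized} Term 332 and the nuisance remainder are $o_P(1)$. It then drops both from the bound on $T_{n,k}$ in Lemma~\ref{lem:Tnk_conv}. That is exactly the step your Step~3 rightly refuses to take.

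\paragraph{What would close the argument.} You need an additional hypothesis, for example either of these:
\begin{itemize}
\item $\Pr(\hat\pi_k(x_i)=a^*\ \forall i\in\mathcal{I}_k)\to 1$. Then $z_n=0$ with probability tending to one, by the $0/0=0$ convention.
\item A drift condition forcing $\sqrt{n_k}\,(V(\hat\pi_k)-V(a^*))/P(S_k)^{1/2}\to-\infty$, with the nuisance remainders controlled at the same scale.
\end{itemize}
Without such a hypothesis the conclusion fails, as the example shows. At most one can hope for a conservative level-type guarantee, and even that needs a self-normalized CLT on $S_k$.
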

\begin{proof}
Recall that $\hat{\psi}=\frac{1}{K}\sum_{k=1}^K \hat{\psi}^{AIPW,k}$, where 
\begin{align}
    \hat{\psi}^{AIPW,k} &= \frac{1}{n_k}\sum_{i=1}^{n_k} \hat{r}^{k}(x_i,\hat{\pi}^k(x_i)) + \frac{\mathbf{1}\{\hat{\pi}^k(x_i)=a_i , \hat{a}^k \neq \hat{\pi}^k(x_i)\}}{\hat{p}^k(a_i|x_i)}(y_i-\hat{r}^{k}(x_i,\hat{\pi}^k(x_i)))\\
    &\qquad- \hat{r}^{k}(x_i,\hat{a}^k)- \frac{\mathbf{1}\{\hat{a}^k=a_i , \hat{a}^k \neq \hat{\pi}^k(x_i)\}}{\hat{p}^k(a_i|x_i)}(y_i-\hat{r}^{k}(x_i,\hat{a}^k))\\
    &= \frac{1}{n_k}\sum_{i=1}^{n_k} \hat{r}^{k}(x_i,\hat{\pi}^k(x_i)) + \frac{\mathbf{1}\{\hat{\pi}^k(x_i)=a_i\}}{\hat{p}^k(a_i|x_i)}(y_i-\hat{r}^{k}(x_i,\hat{\pi}^k(x_i)))\\
    &\qquad- \hat{r}^{k}(x_i,\hat{a}^k)- \frac{\mathbf{1}\{\hat{a}^k=a_i \}}{\hat{p}^k(a_i|x_i)}(y_i-\hat{r}^{k}(x_i,\hat{a}^k)).
\end{align}
By Lemma~\ref{lem:Tnk_conv}, we have for each $k$,  $T_{n,k}:=\frac{\sqrt{n_k} \hat{\psi}^{AIPW,k}}{\hat{\sigma}_k}$ has valid Type-I error in the sense that for any fixed $\alpha$ and any $k$, \(\Pr(T_{n,k}>z_{1-\alpha/2})\to 0\) as $n\to\infty$. Then, note that $n_k=\frac{n}{K}$ for each $k$, we have
\begin{align}
    z_n &= \frac{\sqrt{n}\hat{\psi}}{\hat{\sigma}} = \frac{\sqrt{n}\frac{1}{K}\sum_{k=1}^K \hat{\psi}^{AIPW,k}}{\hat{\sigma}} = \frac{\frac{1}{\sqrt{K}}\sum_{k=1}^K \sqrt{n_k} \hat{\psi}^{AIPW,k}}{\hat{\sigma}}.
\end{align}
By Proposition~\ref{prop:bound_hat_psi}, we can lower bound the variance of the full dataset by the average of the variance of folds, so 
\begin{align}
    z_n &\leq \frac{\frac{1}{\sqrt{K}}\sum_{k=1}^K \sqrt{n_k} \hat{\psi}^{AIPW,k}}{\frac{1}{K}\sum_{k=1}^K\hat{\sigma}_k} \label{eqn:var}\\
    &= \sqrt{K}\sum_{i=1}^K \frac{\hat{\sigma}_i}{\sum_{j=1}^K \hat{\sigma}_j} \frac{\sqrt{n_i}\hat{\psi}^{AIPW,i}}{\hat{\sigma}_i}\\
    &= \sqrt{K}\sum_{i=1}^K \frac{\hat{\sigma}_i}{\sum_{j=1}^K \hat{\sigma}_j} T_{n,i}\\
    &\leq \sqrt{K}\max_{i\in [K]} T_{n,i}\tag{linear combination}
\end{align}
Since for each $i$, $T_{n,i}=o_P(1)$ and $K$ is finite, $\max_{i\in [K]} T_{n,i}=o_P(1)$ and $\sqrt{K}\max_{i\in [K]} T_{n,i}=o_P(1)$. Therefore, we have the statement above. 
\end{proof}

\begin{corollary}
Under Assumptions~\ref{assump:SUTVA}--\ref{ass:consistency}, under $H_0$, for any fixed $S$, the $z$-statistic $z_{\tilde{\psi}}$ in Algorithm~\ref{alg:kpt} satisfies for any $\alpha$, \(\Pr(z_{\tilde{\psi}}>z_{1-\alpha/2})\to 0\) as $n\to\infty$. 
\end{corollary}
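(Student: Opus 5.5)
The plan is to reduce the corollary to Theorem~\ref{thm:valid_type1} applied split by split, in the same way the theorem reduced to the per-fold statistics $T_{n,k}$ of Lemma~\ref{lem:Tnk_conv}.

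\textbf{Step 1 (per-split control).} For each fixed $s\in\{1,\dots,S\}$, the reordered dataset $D_s$ is again an i.i.d.\ sample of size $n$ from the same distribution. This holds because the permutation is independent of the data. Running Algorithm~\ref{alg:single_kpt} on $D_s$ is therefore exactly one run of the procedure analysed in Theorem~\ref{thm:valid_type1}. Its proof gives more than the stated tail bound: it bounds the split statistic $z_{n,s}=\sqrt{n}\hat\psi_s/\hat\sigma_s$ above by $\sqrt{K}\max_k T^{(s)}_{n,k}$. Since each $T^{(s)}_{n,k}=o_P(1)$ from above under $H_0$, we get $(z_{n,s})_+=o_P(1)$ for every $s$. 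Because $S$ is fixed, a union bound gives $\max_{s\le S}(z_{n,s})_+=o_P(1)$.

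\textbf{Step 2 (handling the aggregated variance).} The aggregated variance is $\tilde\sigma^2=\frac1S\sum_s(\hat\sigma_s^2+(\hat\psi_s-\tilde\psi)^2)$. It satisfies $\tilde\sigma^2\ge \frac1S\sum_s\hat\sigma_s^2$, and by Cauchy--Schwarz $\sqrt{\frac1S\sum_s\hat\sigma_s^2}\ge\frac1S\sum_s\hat\sigma_s$. On the event $\tilde\psi>0$ we therefore have
\begin{equation*}
z_{\tilde\psi}=\frac{\sqrt n\,\frac1S\sum_s\hat\psi_s}{\tilde\sigma}\le\frac{\frac1S\sum_s\sqrt n\,\hat\psi_s}{\frac1S\sum_s\hat\sigma_s}=\sum_{s=1}^S\frac{\hat\sigma_s}{\sum_j\hat\sigma_j}\,z_{n,s}\le\max_s (z_{n,s})_+ .
\end{equation*}
This mirrors the convex-combination step in the proof of Theorem~\ref{thm:valid_type1}, where the $\hat\sigma_s$-weights are nonnegative and sum to one. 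On the event $\tilde\psi\le 0$ we have $z_{\tilde\psi}\le 0<z_{1-\alpha/2}$ trivially. Combining the two cases with Step 1 gives $\Pr(z_{\tilde\psi}>z_{1-\alpha/2})\le\Pr(\max_s(z_{n,s})_+>z_{1-\alpha/2})\to0$.

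\textbf{Main obstacle.} The delicate point is that the splits share the same data, so the $z_{n,s}$ are highly dependent. The plan sidesteps this entirely: the only probabilistic step is the union bound over finitely many $s$, and it needs no independence. What must be checked carefully is that $\hat\sigma_s>0$ with probability tending to one. Otherwise the weights are undefined, and one would need to take the denominators to be bounded away from zero. For this I would invoke Assumption~\ref{assump:imp_obs} together with strong overlap (Assumption~\ref{assump:strong_overlap}): these make the variance of the score bounded below, so $\hat\sigma_s\ge c>0$ with probability tending to one. A second point to confirm is that Lemma~\ref{lem:Tnk_conv} is uniform over the random split. This should hold because the split is independent of the data, so the conditional law given the split is the same i.i.d.\ law.
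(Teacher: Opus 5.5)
Your chain of inequalities is the paper's own. You drop the across-split term in $\tilde\sigma^2$, apply Cauchy--Schwarz to get $\tilde\sigma\ge\frac1S\sum_s\hat\sigma_s$, dominate $z_{\tilde\psi}$ by a $\hat\sigma_s$-weighted average of per-split statistics, and apply Theorem~\ref{thm:valid_type1} split by split. Two of your additions are more careful than the paper. You treat $\{\tilde\psi\le 0\}$ separately, whereas the paper's chain silently needs $\tilde\psi\ge 0$. You use only $(z_{n,s})_+$ with a union bound, whereas the paper asserts $z_i\to_p 0$ although only upper-tail control is available.

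The step that fails is your claim that Assumptions~\ref{assump:strong_overlap} and~\ref{assump:imp_obs} give $\hat\sigma_s\ge c>0$ with probability tending to one. The score vanishes wherever $\hat\pi^k(x)=\hat a^k$. The variance lower bound those assumptions yield (proof of Lemma~\ref{lem:Tnk_conv}) is proportional to $\Pr(\hat\pi^k(X)\neq\hat a^k)$. Under $H_0$ this tends to zero by Assumptions~\ref{assump:fast_best_arm_learner} and~\ref{ass:consistency}, since $\pi^*\equiv a^*$ almost everywhere under the null. In the most natural null case (finitely many contexts, a plug-in policy learner), $\hat\pi^k\equiv\hat a^k$ in every fold with probability tending to one, so $\hat\sigma_s=0$ with probability tending to one. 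This is exactly the case the paper absorbs into its convention $0/0=0$. Your weights $\hat\sigma_s/\sum_j\hat\sigma_j$ are then undefined on a high-probability event, so Step~2 as written does not cover it.

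The repair is to drop the weights and argue pointwise.
\begin{itemize}
\item[(i)] Let $c=z_{1-\alpha/2}>0$. Extend the convention $0/0=0$ by reading $t/0$ as $+\infty$ or $-\infty$ according to the sign of $t$, so every $z_{n,s}$ is defined.
\item[(ii)] On $\bigcap_{s\le S}\{z_{n,s}\le c\}$ you have $\sqrt{n}\,\hat\psi_s\le c\,\hat\sigma_s$ for each $s$. For $\hat\sigma_s>0$ this is the definition of $z_{n,s}$. For $\hat\sigma_s=0$ the event forces $\hat\psi_s\le 0$; in the paper's framework $\hat\sigma_s=0$ only when all scores of the split vanish, so in fact $\hat\psi_s=0$.
\item[(iii)] Averaging over $s$ and using your two variance inequalities gives $\sqrt{n}\,\tilde\psi\le c\,\frac1S\sum_s\hat\sigma_s\le c\,\tilde\sigma$, hence $z_{\tilde\psi}\le c$. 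If $\tilde\sigma=0$, then all $\hat\psi_s$ equal $\tilde\psi\le 0$, and again $z_{\tilde\psi}\le 0$.
\item[(iv)] Therefore $\Pr(z_{\tilde\psi}>c)\le\sum_{s\le S}\Pr(z_{n,s}>c)\to 0$ by your Step~1.
\end{itemize}
No lower bound on $\hat\sigma_s$ is needed.
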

\begin{proof}
Recall that in Algorithm~\ref{alg:kpt}, the $z$-statistic is defined as
\begin{align*}
    z_{\tilde{\psi}} &= \frac{\tilde{\psi}}{\tilde{\sigma}/\sqrt{n}} = \frac{\frac{1}{S}\sum_{i=1}^S \hat{\psi}_i}{\frac{1}{\sqrt{n}} \sqrt{\frac{1}{S}\sum_{i=1}^S\left(\hat{\sigma}_i^2+(\hat{\psi}_i-\hat{\psi})^2\right)}}.
\end{align*}
Define the $z$-statistic for one repeat as $z_i:=\sqrt{n}\frac{\hat{\psi}_i}{\hat{\sigma}_i}$. Note that
\[
\hat\psi_i = \frac{\hat\sigma_i}{\sqrt{n}}\,z_i,
\]
so
\[
\tilde\psi
= \frac{1}{S}\sum_{i=1}^S \frac{\hat\sigma_i}{\sqrt{n}}\,z_i
= \frac{1}{S\sqrt{n}}\sum_{i=1}^S \hat\sigma_i\,z_i.
\]
Therefore
\begin{align*}
z_{\tilde{\psi}}
&= \frac{\sqrt{n}\,\tilde\psi}{\tilde\sigma}
\leq \frac{\displaystyle \sqrt{n}\,\frac{1}{S\sqrt{n}}\sum_i \hat\sigma_i\,z_i}
       {\displaystyle \sqrt{\frac{1}{S}\sum_j \hat\sigma_j^2}} \\[6pt]
&= \frac{\displaystyle \frac{1}{S}\sum_{i=1}^S \hat\sigma_i\,z_i}
       {\displaystyle \frac{1}{\sqrt{S}}\sqrt{\sum_{j=1}^S\hat\sigma_j^2}}
= \frac{\displaystyle \sum_{i=1}^S \hat\sigma_i\,z_i}
       {\displaystyle  \sqrt{S \sum_{j=1}^S\hat\sigma_j^2}}
       \leq\frac{\displaystyle \sum_{i=1}^S \hat\sigma_i\,z_i}
       { \displaystyle \sum_{j=1}^S\hat\sigma_j}.
\end{align*}
where the last inequality follows from Cauchy–Schwarz. Therefore, 
\begin{align}
    z_{\tilde{\psi}} &\leq  \sum_{i=1}^S \frac{\hat{\sigma}_i}{\sum_{j=1}^S \hat{\sigma}_j} z_i. 
\end{align}
Since each $z_i \xrightarrow{p}0$,  $z_{\tilde{\psi}}\xrightarrow{p}0$. 

\end{proof}

\begin{lemma}\label{lem:Tnk_conv}
Under Assumptions~\ref{assump:SUTVA}--\ref{ass:consistency}, for each $k$, we have for any $\epsilon>0$, \(\Pr(T_{n,k}>\epsilon)\to 0\) as $n\to\infty$. 
\end{lemma}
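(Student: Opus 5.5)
The plan is to split $\sqrt{n_k}\,\hat{\psi}^{AIPW,k}$ into a part that is non-positive under $H_0$, a fluctuation part that is $o_P(1)$, and remainder terms that are $o_P(1)$. I then divide by $\hat{\sigma}_k$, which I will need to be bounded away from zero in probability.

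\textbf{Step 1 (structure under $H_0$).} Under $H_0$ we have $\psi=0$. Assuming, as is implicit, that $\Pi$ contains the constant policies, $\pi^*\equiv a^*$ is an optimal policy, and by the uniqueness part of Assumption~\ref{ass:consistency} we get $\pi^*(x)=a^*$ for almost every $x$. By Assumption~\ref{assump:fast_best_arm_learner}, the event $E_k=\{\hat{a}^k=a^*\}$ has $\P(E_k^c)=o(n_k^{-1/2})$, so it suffices to work on $E_k$.

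On $E_k$, every summand with $\hat{\pi}^k(x_i)=a^*$ is exactly zero, because the two AIPW scores coincide. Hence
\[
\hat{\psi}^{AIPW,k}=\frac{1}{n_k}\sum_{i\in\II_k}\1\{\hat{\pi}^k(x_i)\neq a^*\}\,g_i ,
\]
where $g_i$ is the difference of the AIPW scores (with $\hat r^k,\hat p^k$) for the actions $\hat{\pi}^k(x_i)$ and $a^*$.

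\textbf{Step 2 (conditional mean).} I condition on $D_{-k}$, so that $\hat{\pi}^k,\hat r^k,\hat p^k$ are fixed and the fold-$k$ observations are i.i.d. Write $g_i=g_i^{\rm or}+R_i$, where $g_i^{\rm or}$ uses the oracle $r,p$. By Assumptions~\ref{assump:SUTVA}--\ref{assump:unconfoundedness}, the conditional mean of $g_i^{\rm or}$ is $V(\hat{\pi}^k)-U(a^*)\le V(\pi^*)-U(a^*)=0$ under $H_0$.

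The remainder $R_i$ has mean equal to the usual doubly robust product term. Its empirical version is bounded, uniformly over the finitely many actions, by Assumption~\ref{assump:conv_nuisance_params}, so it is $o_P(n^{-1/2})$. The cross terms are controlled by Assumption~\ref{assump:consistent} together with Cauchy--Schwarz and the indicator factor. Consequently $\sqrt{n_k}\,\hat{\psi}^{AIPW,k}\le \sqrt{n_k}\,(\text{centered average})+o_P(1)$.

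\textbf{Step 3 (fluctuation).} The centered term $\frac{1}{\sqrt{n_k}}\sum_{i}\bigl(\1\{\hat{\pi}^k(x_i)\ne a^*\}g_i-\E[\cdot\mid D_{-k}]\bigr)$ has conditional variance at most $C'\,\P(\hat{\pi}^k(X)\neq\pi^*(X)\mid D_{-k})$. The constant $C'$ comes from Assumption~\ref{assump:strong_overlap} (weights at most $1/\eta$), Assumption~\ref{assump:bounded_r}, and Assumption~\ref{assump:finite_var}, plus consistency of $\hat p^k$ to keep the estimated weights controlled. The variance bound tends to zero by Assumption~\ref{ass:consistency}, so conditional Chebyshev followed by dominated convergence makes this term $o_P(1)$. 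Combining with Step~2 gives $\sqrt{n_k}\,\hat{\psi}^{AIPW,k}\le o_P(1)$.

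\textbf{Step 4 (denominator) and the main obstacle.} It remains to show $\hat{\sigma}_k\ge c>0$ with probability tending to one. Then $T_{n,k}\le o_P(1)/c$, which gives $\Pr(T_{n,k}>\epsilon)\to0$.

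This is the delicate step. The same observation as in Step~1 shows that the empirical variance of the $h_i$ on fold $k$ itself shrinks as $\hat{\pi}^k\to a^*$. So a lower bound cannot come from the $h_i$ alone. I would first try to obtain it from Assumption~\ref{assump:imp_obs} ($\Var(\epsilon)\ge c_0$), combined with the variance lower bound of Proposition~\ref{prop:bound_hat_psi} as it is used for $\hat\sigma$. The hope is to show that the variance estimator used in the algorithm keeps a contribution of order $c_0\eta$ from the single-arm AIPW components.

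If that fails, the fallback is a ratio argument on $E_k$. Both the centered numerator and $\hat{\sigma}_k$ scale with $\P(\hat{\pi}^k\ne a^*)^{1/2}$, so one would bound the self-normalized centered sum in probability and rely on the non-positive drift from Step~2. This route only yields one-sided control, $\limsup\Pr(T_{n,k}>z_{1-\alpha/2})\le\alpha$, rather than convergence to zero. I expect this denominator issue to be where the real work, and possibly an extra assumption, lies.
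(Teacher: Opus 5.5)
Your Steps 1--3 only give $\sqrt{n_k}\,\hat\psi^{AIPW,k}\le o_P(1)$. The lemma is about the ratio $T_{n,k}$, and Step~4, where the content lies, is left open.

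Your first route for Step~4 cannot work. On $E_k$, whenever $\hat\pi^k(x_i)=a^*$, the score $h_i$ is a difference of two identical single-arm AIPW scores and vanishes exactly, so nothing of order $c_0\eta$ survives. Consequently $\hat\sigma_k^2\le\frac{1}{n_k}\sum_{i\in\mathcal{I}_k}\mathbf{1}\{\hat\pi^k(x_i)\neq a^*\}h_i^2$, which is $O_P\bigl(\Pr(\hat\pi^k(X)\neq a^*\mid D_{-k})\bigr)\to 0$.

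Neither tool you mention rescues this. Assumption~\ref{assump:imp_obs} yields only a lower bound $\mathrm{Var}(h)\ge c\,\Pr(\hat\pi^k(X)\neq\hat a^k)$ with $c$ proportional to $c_0$; this is exactly what the paper derives. Proposition~\ref{prop:bound_hat_psi} only bounds the pooled $\hat\sigma^2$ below by the fold variances, and those are the very quantities that vanish.

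So numerator and denominator are both of order $\Pr(\hat\pi^k(X)\neq a^*)^{1/2}$. The absolute $o_P(1)$ bounds of Steps 2--3 therefore say nothing about $T_{n,k}$. This includes your nuisance remainders, which Assumptions~\ref{assump:consistent} and~\ref{assump:conv_nuisance_params} control only on average over $x$, not relative to the shrinking disagreement set. Your fallback gives at best level-$\alpha$ control, which is strictly weaker than the claim.

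The paper's proof does not close this gap either. On $\{\hat a^k\neq a^*\}$ its argument is sound: Assumption~\ref{ass:consistency} forces $\Pr(\hat\pi^k(X)\neq\hat a^k)\to 1$, so the variance floor stays bounded away from zero, and Assumption~\ref{assump:fast_best_arm_learner} plus Markov finishes. Your reduction to $E_k$ via $\Pr(E_k^c)\to 0$ achieves the same thing more simply. On $\{\hat a^k=a^*\}$, however, the paper calls the numerator $o_p(1)$ via Lemma~\ref{lem:h_hat_pi_oracle_neq_astart_oh_p_1}. It then omits that term, and the remainder from Proposition~\ref{prop:avg_h_conv}, from the numerator of~\eqref{eqn:tnk_2}. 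It still divides by $\sqrt{\mathrm{Var}(h)}+o_P(1)$, which is precisely the vanishing quantity you flagged.

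Under the listed assumptions alone, the conclusion can in fact fail. Take:
\begin{itemize}
\item $X$ uniform on $[0,1]$, a uniform gap $r(x,a^*)-r(x,a)\ge\delta$, and Gaussian noise;
\item oracle $\hat r^k,\hat p^k$ and $\hat a^k\equiv a^*$;
\item $\hat\pi^k$ disagreeing with $a^*$ on a set of mass $1/n_k$, which is allowed because Assumption~\ref{ass:consistency} carries no rate.
\end{itemize}
With probability tending to $e^{-1}$, exactly one fold-$k$ point has $h_i\neq 0$. Then $T_{n,k}=\mathrm{sign}(h_i)\sqrt{n_k/(n_k-1)}$, so $\Pr(T_{n,k}>1/2)\not\to 0$.

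What does close the argument is bypassing the ratio. Show that, with probability tending to one, \emph{no} fold-$k$ point has $\hat\pi^k(x_i)\neq\hat a^k$; then every $h_i=0$ and $T_{n,k}=0$ by the $0/0$ convention. There are two ways to get this:
\begin{itemize}
\item Add the rate $n_k\Pr(\hat\pi^k(X)\neq a^*)\to 0$ under $H_0$ and apply a union bound over $\mathcal{I}_k$.
\item If $\mathcal{X}$ is finite, as the paper's $\sum_x$ notation suggests, it follows directly from Assumption~\ref{ass:consistency}. Then $\Pr(\hat\pi^k(X)\neq a^*\mid D_{-k})\ge\min_x p(x)\,\mathbf{1}\{\hat\pi^k\not\equiv a^*\}$, and Markov gives $\Pr(\hat\pi^k\not\equiv a^*)\to 0$.
\end{itemize}
Your suspicion that an extra ingredient is needed is correct.
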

\begin{proof}
Let $w=(x,a,y)$ and for some $\pi\in\Pi$, $b\in\A$, $r'$ and $p'$, define \[h_{\pi,b,r',p'}(w)=r'(x,\pi(x))+\frac{\1\{\pi(x)=a\}}{p'(a|x)}(y-r'(x,\pi(x)))-r'(x,b)-\frac{\1\{b=a\}}{p'(a|x)}(y-r'(x,b)).\] Note that 
\begin{align*}
    \frac{\sqrt{n_k} \hat{\psi}^{AIPW,k}}{\hat{\sigma}_k} &= \frac{\frac{1}{\sqrt{n_k}}\sum_{i=1}^{n_k}h_{\hat{\pi}^k,\hat{a}^k,\hat{r}^k,\hat{p}^k}(w_i)}{\sqrt{\frac{1}{n_k}\sum_{i=1}^{n_k}(h_{\hat{\pi}^k,\hat{a}^k,\hat{r}^k,\hat{p}^k}(w_i)-\bar{h}_{\hat{\pi}^k,\hat{a}^k,\hat{r}^k,\hat{p}^k})^2}}.
\end{align*}
where $\bar{h}_{\hat{\pi}^k,\hat{a}^k,\hat{r}^k,\hat{p}^k}:=\frac{1}{n_k}\sum_{i=1}^{n_k}h_{\hat{\pi}^k,\hat{a}^k,\hat{r}^k,\hat{p}^k}(w_i)$ is the sample average. Since $\hat{\pi}^k$, $\hat{a}^k$, $\hat{r}^k$, and $\hat{p}^k$ are fixed for fold $k$ as they are learned using data from other folds, by Law of Large Numbers we have $\hat{\sigma}_k\overset{p}{\to} \sigma_k$, where $\sigma_k^2:=\Var(h_{\hat{\pi}^k,\hat{a}^k,\hat{r}^k,\hat{p}^k}(w))$. If for all $i\in[n_k]$, $\hat{\pi}^k(w_i)=\hat{a}^k$, then by our definition that $0/0=0$, the ratio is zero so it satisfies the statement. Then we focus on the case that there exists some $i\in[n_k]$ such that $\hat{\pi}^k(w_i)\neq \hat{a}^k$. Then, $\hat{\sigma}_k > 0$. We also focus on $p'$ such that $\max_{a,x}\frac{1}{p'(a|x)}\leq c$ for some $c<\infty$. Now, by definition of $y$, 
\begin{align*}
    &h_{\pi,b,r',p'}(w)\\
    &=r'(x,\pi(x))+\frac{\1\{\pi(x)=a\}}{p'(a|x)}(r(x,a)-r'(x,a)+\epsilon)-r'(x,b)-\frac{\1\{b=a\}}{p'(a|x)}(r(x,a)-r'(x,a)+\epsilon)\\
    &= r'(x,\pi(x))-r'(x,b)+\left(\frac{\1\{\pi(x)=a\}}{p'(a|x)}-\frac{\1\{b=a\}}{p'(a|x)}\right)(r(x,a)-r'(x,a))+\left(\frac{\1\{\pi(x)=a\}}{p'(a|x)}-\frac{\1\{b=a\}}{p'(a|x)}\right)\epsilon
\end{align*}
Then 
\begin{align*}
    \E[h_{\pi,b,r',p'}(w)^2]&= \E\left[\left(r'(x,\pi(x))-r'(x,b)+\left(\frac{\1\{\pi(x)=a\}}{p'(a|x)}-\frac{\1\{b=a\}}{p'(a|x)}\right)(r(x,a)-r'(x,a))\right)^2\right]\\
    &\quad+\E\left[\left(\frac{\1\{\pi(x)=a\}}{p'(a|x)}-\frac{\1\{b=a\}}{p'(a|x)}\right)^2\epsilon^2\right]\tag{the cross-term has expectation zero}
\end{align*}
whereas 
\begin{align*}
    \E[h_{\pi,b,r',p'}(w)]^2 &= \E\left[\left(r'(x,\pi(x))-r'(x,b)+\left(\frac{\1\{\pi(x)=a\}}{p'(a|x)}-\frac{\1\{b=a\}}{p'(a|x)}\right)(r(x,a)-r'(x,a))\right)\right]^2.
\end{align*}
Therefore, 
\begin{align*}
    \Var(h_{\pi,b,r',p'}(w)) &= \E\left[\left(r'(x,\pi(x))-r'(x,b)+\left(\frac{\1\{\pi(x)=a\}}{p'(a|x)}-\frac{\1\{b=a\}}{p'(a|x)}\right)(r(x,a)-r'(x,a))\right)^2\right]\\
    &\quad-\E\left[\left(r'(x,\pi(x))-r'(x,b)+\left(\frac{\1\{\pi(x)=a\}}{p'(a|x)}-\frac{\1\{b=a\}}{p'(a|x)}\right)(r(x,a)-r'(x,a))\right)\right]^2\\
    &\quad+\E\left[\left(\frac{\1\{\pi(x)=a\}}{p'(a|x)}-\frac{\1\{b=a\}}{p'(a|x)}\right)^2\epsilon^2\right]\\
    &= \Var\bigsmile{r'(x,\pi(x))-r'(x,b)+\left(\frac{\1\{\pi(x)=a\}}{p'(a|x)}-\frac{\1\{b=a\}}{p'(a|x)}\right)(r(x,a)-r'(x,a))}\\
    &\quad+ \E\left[\left(\frac{\1\{\pi(x)=a\}}{p'(a|x)}-\frac{\1\{b=a\}}{p'(a|x)}\right)^2\epsilon^2\right]\\
    &\geq \E\left[\left(\frac{\1\{\pi(x)=a\}}{p'(a|x)}-\frac{\1\{b=a\}}{p'(a|x)}\right)^2\epsilon^2\right]\tag{variance is nonnegative}\\
    &\geq c\E\left[\left(\1\{\pi(x)=a\}-\1\{b=a\}\right)^2\epsilon^2\right]\tag{by our assumption on $p'$}\\
    &= c\E_x \left[\E_a \left[\left(\1\{\pi(x)=a\}-\1\{b=a\}\right)^2\epsilon^2\,\big| x\right]\right]\\
    &= c\E_x \left[ \sum_a p(a|x) \left(\1\{\pi(x)=a\}-\1\{b=a\}\right)^2\E[\epsilon^2]\,\right]\\
    &= c\E_x \left[ \left(p(\pi(x)|x)+p(b|x)\right)\1\{\pi(x)\neq b\}\E[\epsilon^2]\,\right]\\
    &\geq c\E_x \left[ 2c_0\1\{\pi(x)\neq b\}\right]\tag{by Assumption~\ref{assump:imp_obs}}\\
    &= 2c'\P(\pi(x)\neq b),
\end{align*}
for $c'=c_0$. 

We next focus on the numerator. Note that
\begin{eqnarray}
    \frac{1}{\sqrt{n_k}}\sum_{i=1}^{n_k}h_{\hat{\pi}^k,\hat{a}^k,\hat{r}^k, \hat{p}^k}(w_i)&=& \sqrt{n_k}\frac{1}{n_k}\sum_{i=1}^{n_k}h_{\hat{\pi}^k,\hat{a}^k,\hat{r}^k, \hat{p}^k}(w_i) \\
    &=& \sqrt{n_k} o_p(n^{-1/2}) + \sqrt{n_k}\frac{1}{n_k}\sum_{i=1}^{n_k}h_{\hat{\pi}^k,\hat{a}^k,r,p}(w_i), 
\end{eqnarray}
by Proposition~\ref{prop:avg_h_conv}.

Note that for any $w$, and any $\tilde{r},\tilde{p}$ if $\hat{\pi}^k(w)=\hat{a}^k$, $h_{\hat{\pi}^k,\hat{a}^k,\tilde{r},\tilde{p}}(w)=0$, so 
\begin{eqnarray}
 \frac{1}{n_k} \sum_i [h_{\hat{\pi}^k,\hat{a}^k,\tilde{r},\tilde{p}}(w)]=
 \frac{1}{n_k} \sum_i h_{\hat{\pi}^k,\hat{a}^k,\tilde{r},\tilde{p}}(w)\1\{\hat{\pi}^k(w)\neq \hat{a}^k\}.
 \end{eqnarray}
 Therefore
 \begin{eqnarray}
\sqrt{n_k}\frac{1}{n_k}\sum_{i=1}^{n_k}h_{\hat{\pi}^k,\hat{a}^k,r,p}(w_i) &=& \underbrace{\sqrt{n_k}\frac{1}{n_k}\sum_{i=1}^{n_k}h_{\hat{\pi}^k,\hat{a}^k,r,p}(w_i) \1\{\hat{\pi}^k(w)\neq \hat{a}^k\}]  \1\{\hat{a}^k \neq a^*\}}_{Term \; 331} + \nonumber \\ & & 
\underbrace{\sqrt{n_k}\frac{1}{n_k}\sum_{i=1}^{n_k}h_{\hat{\pi}^k,\hat{a}^k,r,p}(w_i) \1\{\hat{\pi}^k(w)\neq \hat{a}^k\}]  \1\{\hat{a}^k = a^*\}}_{Term \; 332}  \nonumber
\end{eqnarray}
By Lemma~\ref{lem:h_hat_pi_oracle_neq_astart_oh_p_1}, Term 332 is $o_p(1)$. We now focus on Term 331.

\begin{eqnarray}
 \sqrt{n_k}\frac{1}{n_k}\sum_{i=1}^{n_k}h_{\hat{\pi}^k,\hat{a}^k,r,p}(w_i) \1\{\hat{\pi}^k(w)\neq \hat{a}^k\}]  \1\{\hat{a}^k \neq a^*\} \\
= \sqrt{n_k} \1\{\hat{a}^k \neq a^*\}  \frac{1}{n_k}\sum_{i=1}^{n_k}h_{\hat{\pi}^k,\hat{a}^k,r,p}(w_i) \1\{\hat{\pi}^k(w)\neq \hat{a}^k\}]  \\
\leq  \sqrt{n_k} \1\{\hat{a}^k \neq a^*\}  
 \sqrt{
 \frac{1}{n_k}\sum_{i=1}^{n_k}h_{\hat{\pi}^k,\hat{a}^k,r,p}(w_i)^2}\sqrt{\frac{1}{n_k}\sum_{i=1}^{n_k} \1\{\hat{\pi}^k(w)\neq \hat{a}^k\}} \\
 \leq  \sqrt{n_k} \1\{\hat{a}^k \neq a^*\}  
  \sqrt{\frac{1}{n_k}\sum_{i=1}^{n_k} \1\{\hat{\pi}^k(w)\neq \hat{a}^k\}} O_p(1),
 \end{eqnarray}
where the first inequality uses Cauchy-Schwarz, and the second inequality follows by Lemma~\ref{lem:avg_h_and_avg_h2_with_oracle_arp_Op1}, $\frac{1}{n_k}\sum_{i=1}^{n_k}h_{\hat{\pi}^k,\hat{a}^k,r,p}(w_i)^2 = O_p(1)$.  

Note that $\1\{\hat{\pi}^k(w_i)\neq \hat{a}^k\}$ are i.i.d. because $\hat{\pi}^k$ and $\hat{a}$ are fixed (and learned from the other fold). By the central limit theorem $\frac{1}{n_k}\sum_{i=1}^{n_k} \1\{\hat{\pi}^k(w)\neq \hat{a}^k\} =  P(\hat{\pi}^k(W) \neq \hat{a}) + O_p(n_k^{-1/2})$. Therefore

%{\color{blue} 

%\textbf{ZL attempt for fix:} 
\begin{eqnarray}
\sqrt{n_k} \1\{\hat{a}^k \neq a^*\}\sqrt{\frac{1}{n_k}\sum_{i=1}^{n_k} \1\{\hat{\pi}^k(w)\neq \hat{a}^k\}} O_p(1) \\
= \sqrt{n_k} \1\{\hat{a}^k \neq a^*\}\sqrt{P(\hat{\pi}^k(W)\neq \hat{a}^k)+O_P(n_k^{-1/2})} O_p(1)\\
= \sqrt{n_k} \1\{\hat{a}^k \neq a^*\}\sqrt{P(\hat{\pi}^k(W)\neq \hat{a}^k)} O_p(1) + \1\{\hat{a}^k \neq a^*\} O_P(n_k^{1/4}).
\end{eqnarray}
Then, putting it together, conditional on $\mathcal{D}_{-k}$, 
\begin{eqnarray}
    T_{n,k}&\leq& \frac{\sqrt{n_k} \1\{\hat{a}^k \neq a^*\}\sqrt{P(\hat{\pi}^k(W)\neq \hat{a}^k)} O_p(1) + \1\{\hat{a}^k \neq a^*\} O_P(n_k^{1/4})}{\sqrt{\Var(h_{\hat{\pi}^k,\hat{a}^k,\hat{r}^k,\hat{p}^k}(W))}+o_P(1)}\\
    &=& \frac{\sqrt{n_k} \1\{\hat{a}^k \neq a^*\}\sqrt{P(\hat{\pi}^k(W)\neq \hat{a}^k)} O_p(1) }{\sqrt{2c_0 \P(\hat{\pi}^k(W)\neq \hat{a}^k)}+o_P(1)} +\frac{\1\{\hat{a}^k \neq a^*\} O_P(n^{1/4})}{\sqrt{2c_0 \P(\hat{\pi}^k(W)\neq \hat{a}^k) }+o_P(1)} \label{eqn:tnk_2}
\end{eqnarray}
Consider the second term. First, if $\hat{a} = a^*$ this term is 0. 
Now consider if $\hat{a} \neq a^*$. Note that under $H_0$
 and Assumption~\ref{ass:consistency}, $\hat{\pi}^k(W) \to a^*$ for all $W$. Since $\hat{a} \neq a^*$, every $W$ where $\hat{\pi}^k(W) = \hat{a}^k$ also must satisfy $\hat{\pi}^k(W) \neq a^*$. Therefore
\begin{eqnarray}
P(\hat{\pi}^k(W) = \hat{a}^k) \leq P\hat{\pi}^k(W) \neq a^*) \to 0. 
\end{eqnarray}
This implies $P\hat{\pi}^k(W) \neq  \hat{a}^k) \to 1$. So the denominator of the second term, in this second case.  $\sqrt{2c_0 \P(\hat{\pi}^k(W)\neq \hat{a}^k)} \to \sqrt{2c_0} $. Therefore the second term is at most
\begin{eqnarray}
\frac{\1\{\hat{a}^k \neq a^*\} O_P( n^{1/4})}{\sqrt{2c_0 } + o_p(1)} = \1\{\hat{a}^k \neq a^*\} O_P( n^{1/4}).
\end{eqnarray}
Then note that unconditionally
\begin{equation}
P(\1\{\hat{a}^k \neq a^*\} n^{1/4} > \epsilon) \leq P\1\{\hat{a}^k \neq a^*\}) = o_p(n^{-1/2}) \to 0,
\end{equation}
since for any large enough $n_k$, $\1\{\hat{a}^k \neq a^*\} n^{1/4} > \epsilon$ if $\1\{\hat{a}^k \neq a^*\}$, and by Assumption~\ref{assump:fast_best_arm_learner} the latter result holds. 

This shows that $\1\{\hat{a}^k \neq a^*\} n^{1/4} = o_P(1)$ and therefore $\frac{\1\{\hat{a}^k \neq a^*\} O_P(n^{1/4})}{\sqrt{2c_0 \P(\hat{\pi}^k(W)\neq \hat{a}^k)}} = o_p(1).$

% The rest of the proof for the first term goes through as before.

Let $C_1$ be some constant such that $\frac{1}{n_k}\sum_{i=1}^{n_k}h_{\hat{\pi}^k,\hat{a}^k,r,p}(w_i)^2 = O_p(1) \leq C$.
 
Therefore, substituting back into Equation~\ref{eqn:tnk_2}, 
\begin{eqnarray}
    T_{n,k}&\leq& 
    \frac{\sqrt{n_k} \1\{\hat{a}^k \neq a^*\}\sqrt{P(\hat{\pi}^k(W)\neq \hat{a}^k)} O_p(1) }{\sqrt{2c_0 \P(\hat{\pi}^k(W)\neq \hat{a}^k)}+o_P(1)} + o_P(1) \\
    &\leq& \frac{C_1 \sqrt{n_k} \1\{\hat{a}^k \neq a^*\}  \sqrt{P(\hat{\pi}^k(W) \neq \hat{a}^k)}}{\sqrt{2c_0 \P(\hat{\pi}^k(W)\neq \hat{a}^k)}}\\
    &\leq& \frac{C}{2c_0}\sqrt{n_k}\1\{\hat{a}^k\neq a^*\} \label{eqn:conv_per_n}
%    &\overset{p}{\to}& \frac{C}{2c_0}\sqrt{n_k}\1\{\hat{a}^k\neq a^*\}.
\end{eqnarray}
 Let $m_{\mathcal{D}_{-k}}:=\frac{C}{2c_0}\sqrt{n_k}\1\{\hat{a}^k\neq a^*\}$. \\Now we show that unconditionally $T_{n,k}=o_P(1)$. Fix an arbitrary \(\epsilon' > 0\). We will show \(\Pr(|T_{n,k}| > 2\epsilon') \to 0\). First, note that
\[
\Pr\bigl(|T_{n,k}|>2\epsilon'\bigr)
\;=\;
\Pr\bigl(|T_{n,k} - m_{\mathcal{D}_{-k}} + m_{\mathcal{D}_{-k}}|>2\epsilon'\bigr)
\;\le\;
\Pr\bigl(|T_{n,k} - m_{\mathcal{D}_{-k}}|>\epsilon'\bigr)
\;+\;
\Pr\bigl(|m_{\mathcal{D}_{-k}}|>\epsilon'\bigr).
\]
Define
\[
g(\mathcal{D}_{-k})
\;=\;
\Pr\bigl(|T_{n,k} - m_{\mathcal{D}_{-k}}| > \epsilon' \,\big|\;\mathcal{D}_{-k}\bigr).
\]
By the law of total probability,
\[
\Pr\bigl(|T_{n,k} - m_{\mathcal{D}_{-k}}|>\epsilon'\bigr)
=
\mathbb{E}\bigl[g(\mathcal{D}_{-k})\bigr].
\]
Since \(T_{n,k} \mid \mathcal{D}_{-k}\to_p m_{\mathcal{D}_{-k}}\) by equation \eqref{eqn:conv_per_n}, for each fixed \(\mathcal{D}_{-k}\) we have
\[
g(\mathcal{D}_{-k}) \;=\; \Pr\bigl(|T_{n,k} - m_{\mathcal{D}_{-k}}| > \epsilon' \mid \mathcal{D}_{-k}\bigr)
\;\longrightarrow\;0.
\]
Moreover, \(0\le g(\mathcal{D}_{-k})\le1\), so by the Dominated Convergence Theorem,
\[
\mathbb{E}\bigl[g(\mathcal{D}_{-k})\bigr]
\;\longrightarrow\;
0.
\]
As for the second term, by Markov’s inequality,
\[
\Pr\bigl(|m_{\mathcal{D}_{-k}}|>\epsilon'\bigr)
\;\le\;
\frac{\mathbb{E}\bigl[\,|m_{\mathcal{D}_{-k}}|\,\bigr]}{\epsilon'}
\;\longrightarrow\;
0,
\]
since $\E[m_{\mathcal{D}_{-k}}]=\E[\sqrt{n_k}\1\{\hat{a}^k\neq a^*\}]=o(1)$ by Assumption~\ref{assump:fast_best_arm_learner}. Combining all the above gives
\[
\Pr\bigl(|T_{n,k}|>2\epsilon'\bigr)
\;\le\;
\Pr\bigl(|T_{n,k} - m_{\mathcal{D}_{-k}}|>\epsilon'\bigr)
\;+\;
\Pr\bigl(|m_{\mathcal{D}_{-k}}|>\epsilon'\bigr)
\;\longrightarrow\;0.
\]
Hence \(T_{n,k} \xrightarrow{p} 0\) unconditionally.

\end{proof}

\begin{proposition}\label{prop:avg_h_conv}
Under Assumptions~\ref{assump:SUTVA}--\ref{ass:consistency}, for each $k$, 
\begin{equation}
\frac{1}{n_k} \sum_{i\in\II_k} h_{\hat{\pi}^k,\hat{a}^k,\hat{r}^k,\hat{p}^k}(w_i)-\frac{1}{n_k} \sum_{i\in\II_k}  h_{\hat{\pi}^k,\hat{a}^k,r,p}(w_i) = o_P(n_k^{-1/2}).
\end{equation}
\end{proposition}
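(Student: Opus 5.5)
We prove Proposition~\ref{prop:avg_h_conv} with the standard doubly-robust decomposition used for cross-fitted AIPW estimators. Conditional on $D_{-k}$, the objects $\hat{\pi}^k,\hat{a}^k,\hat{r}^k,\hat{p}^k$ are fixed, so the summands over $i\in\II_k$ are i.i.d. We write $h_{\pi,b,r',p'}=g_{\pi,r',p'}-g_{b,r',p'}$, where
\[
g_{\pi,r',p'}(w)=r'(x,\pi(x))+\frac{\1\{\pi(x)=a\}}{p'(a|x)}\bigl(y-r'(x,\pi(x))\bigr).
\]
This reduces the proposition to showing, for a fixed deterministic rule $\pi$ (either $\hat{\pi}^k$ or the constant rule $\hat{a}^k$), that $\frac{1}{n_k}\sum_{i\in\II_k}\bigl(g_{\pi,\hat{r}^k,\hat{p}^k}(w_i)-g_{\pi,r,p}(w_i)\bigr)=o_P(n_k^{-1/2})$.

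\textbf{Step 1 (algebraic expansion).} Write $\Delta_r(x,a)=\hat{r}^k(x,a)-r(x,a)$, $\Delta_p(a|x)=1/\hat{p}^k(a|x)-1/p(a|x)$, and $y=r(x,a)+\epsilon$. Then, with $a_\pi=\pi(x)$,
\[
g_{\pi,\hat{r},\hat{p}}-g_{\pi,r,p}
=\Delta_r(x,a_\pi)\Bigl(1-\tfrac{\1\{a=a_\pi\}}{p(a|x)}\Bigr)
+\1\{a=a_\pi\}\,\Delta_p(a|x)\,\epsilon
-\1\{a=a_\pi\}\,\Delta_p(a|x)\,\Delta_r(x,a).
\]
We treat the three terms separately.

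\textbf{Step 2 (mean-zero terms).} The first term has conditional mean zero given $x$ and $D_{-k}$, because $\E[\1\{a=a_\pi\}/p(a|x)\mid x]=1$. This uses unconfoundedness (Assumption~\ref{assump:unconfoundedness}) and the fact that $\pi$ is fixed given $D_{-k}$. The second term has mean zero because $\E[\epsilon\mid x,a]=0$.

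Both are therefore averages of i.i.d. mean-zero variables conditional on $D_{-k}$. Their conditional variances are bounded as follows:
\begin{itemize}
\item the first by $(1+1/\eta)^2\,\E[\Delta_r(x,a_\pi)^2]\le C\sum_a \E[\Delta_r(x,a)^2]$, using Assumption~\ref{assump:strong_overlap};
\item the second by $C\sum_a\E[\Delta_p(a|x)^2]$, using Assumption~\ref{assump:finite_var}.
\end{itemize}
By Chebyshev conditional on $D_{-k}$, each average is $O_P\bigl(n_k^{-1/2}\cdot(\text{nuisance } L_2 \text{ error})^{1/2}\bigr)$. By Assumption~\ref{assump:consistent} and the finiteness of $\A$, this is $o_P(n_k^{-1/2})$. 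We pass from conditional to unconditional statements with the same dominated-convergence argument used at the end of Lemma~\ref{lem:Tnk_conv}.

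\textbf{Main obstacle.} Assumption~\ref{assump:consistent} is stated as an empirical average over fold $k$, whereas the Chebyshev bound needs the population $L_2$ norm given $D_{-k}$. I plan to argue that these coincide up to $o_P(1)$ by a conditional law of large numbers. Boundedness (Assumption~\ref{assump:bounded_r}, together with a truncation of $\hat{r}^k$ if needed) makes the errors uniformly integrable. Failing that, I would simply read Assumption~\ref{assump:consistent} as an $L_2(P)$ statement conditional on $D_{-k}$.

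\textbf{Step 3 (product term).} The third term is not mean zero. It is bounded in absolute value by $\sum_{a}\bigl|\frac{1}{n_k}\sum_{i\in\II_k}\Delta_p(a|x_i)\Delta_r(x_i,a)\bigr|$ after restricting to $a_i=a$. Here there is a second obstacle: the indicator $\1\{a_i=a\}$ does not appear in Assumption~\ref{assump:conv_nuisance_params}, which averages over all $i\in\II_k$.

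I would handle this by centering. Write $\1\{a_i=a\}=p(a|x_i)+(\1\{a_i=a\}-p(a|x_i))$. The centered part is again a mean-zero average with variance controlled by the product of the errors, so it is $o_P(n_k^{-1/2})$ as in Step~2. The remaining part is $\frac{1}{n_k}\sum_i p(a|x_i)\Delta_p\Delta_r$. Since $p\le 1-\eta$ and the sign is not controlled, I would apply Assumption~\ref{assump:conv_nuisance_params} in its natural absolute-value or weighted form to conclude that this part is $o_P(n^{-1/2})=o_P(n_k^{-1/2})$, using $K$ fixed.

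Applying Steps~1--3 to both $\hat{\pi}^k$ and the constant rule $\hat{a}^k$, and subtracting, gives the claim.
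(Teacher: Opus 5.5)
Your proposal is correct and follows the paper's proof. Your three terms are exactly the paper's $\mathbf{S1}$, $\mathbf{S3'}$ and $\mathbf{S4}$: the first two are handled as mean-zero averages via Chebyshev (Lemmas~\ref{lem:S1} and~\ref{lem:S3'}), the product term via Assumption~\ref{assump:conv_nuisance_params}, and the decomposition is repeated for the constant rule $\hat{a}^k$ (Lemma~\ref{lem:m0_hata}); the two obstacles you flag (the empirical-versus-population norm in Assumption~\ref{assump:consistent}, and the missing indicator $\mathbf{1}\{a_i=\hat{\pi}^k(x_i)\}$ relative to Assumption~\ref{assump:conv_nuisance_params}) are genuine, but the paper passes over them by citing those assumptions directly, so your strengthened readings and conditioning on $D_{-k}$ make explicit what the paper's proof implicitly assumes.
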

\begin{proof}
We have 
\begin{align}
    &h_{\hat{\pi}^k,\hat{a}^k,\hat{r}^k,\hat{p}^k}(w) - h_{\hat{\pi}^k,\hat{a}^k,r,p}(w) \\
    &=\hat{r}^k(x,\hat{\pi}^k(x)) + \frac{\mathbf{1}\{\hat{\pi}^k(x)=a\}}{\hat{p}^k(a|x)}\left(y-\hat{r}^k(x,\hat{\pi}^k(x))\right)-r(x,\hat{\pi}^k(x)) - \frac{\mathbf{1}\{\hat{\pi}^k(x)=a\}}{p(a|x)}\left(y-r(x,\hat{\pi}^k(x))\right)\\
    &\qquad-\left(\hat{r}^k(x,\hat{a}^k) + \frac{\mathbf{1}\{\hat{a}^k=a\}}{\hat{p}^k(a|x)}\left(y-\hat{r}^k(x,\hat{a}^k)\right)-r(x,\hat{a}^k) - \frac{\mathbf{1}\{\hat{a}^k=a\}}{p(a|x)}\left(y-r(x,\hat{a}^k)\right)\right).
\end{align}

Then, define 
\begin{equation}\label{eqn:m0_k}
    \hat{m}_{(0)}^k(w)=\hat{r}(x,\hat{a}^k(x)) + \frac{\mathbf{1}\{\hat{a}^k(x)=a\}}{\hat{p}^k(a|x)}(y-\hat{r}(x,\hat{a}^k(x))).
\end{equation}
\begin{equation}\label{eqn:m0_hata}
    \hat{m}_{(0)}^{k,\hat{a}}(w)=r(x,\hat{a}^k) + \frac{\mathbf{1}\{\hat{a}^k=a\}}{p(a|x)}(y-\hat{r}(x,\hat{a}^k)).
\end{equation}

We have 
\begin{align}
    &h_{\hat{\pi}^k,\hat{a}^k,\hat{r}^k,\hat{p}^k}(w) - h_{\hat{\pi}^k,\hat{a}^k,r,p}(w) \\
    &=\hat{r}^k(x,\hat{\pi}^k(x)) + \frac{\mathbf{1}\{\hat{\pi}^k(x)=a\}}{\hat{p}^k(a|x)}\left(y-\hat{r}^k(x,\hat{\pi}^k(x))\right)-r(x,\hat{\pi}^k(x)) - \frac{\mathbf{1}\{\hat{\pi}^k(x)=a\}}{p(a|x)}\left(y-r(x,\hat{\pi}^k(x))\right)\label{eqn:first_term}\\
    &\qquad-(\widehat{m}_{(0)}^k(w)-\widehat{m}_{(0)}^{k, \hat{a}}(w)) \label{eqn:prop_second_term}.
\end{align}

The first term \eqref{eqn:first_term} is equal to
\begin{align}
    &\hat{r}^k(x,\hat{\pi}^k(x)) + \frac{\mathbf{1}\{\hat{\pi}^k(x)=a\}}{\hat{p}^k(a|x)}y-\frac{\mathbf{1}\{\hat{\pi}^k(x)=a\}}{\hat{p}^k(a|x)}\hat{r}^k(x,\hat{\pi}^k(x))\\
    &\qquad-r(x,\hat{\pi}^k(x)) - \frac{\mathbf{1}\{\hat{\pi}^k(x)=a\}}{p(a|x)}y+\frac{\mathbf{1}\{\hat{\pi}^k(x)=a\}}{p(a|x)}r(x,\hat{\pi}^k(x))\\
    &\qquad+ \frac{\mathbf{1}\{\hat{\pi}^k(x)=a\}}{p(a|x)}\hat{r}^k(x,\hat{\pi}^k(x))-\frac{\mathbf{1}\{\hat{\pi}^k(x)=a\}}{p(a|x)}\hat{r}^k(x,\hat{\pi}^k(x))\\
    &\qquad+\bigsmile{1-\frac{\mathbf{1}\{\hat{\pi}^k(x)=a\}}{p(a|x)}}r(x,\hat{\pi}^k(x)) - \bigsmile{1-\frac{\mathbf{1}\{\hat{\pi}^k(x)=a\}}{p(a|x)}}r(x,\hat{\pi}^k(x))\\
    &=\bigsmile{1-\frac{\mathbf{1}\{\hat{\pi}^k(x)=a\}}{p(a|x)}}\left(\hat{r}^k(x,\hat{\pi}^k(x))-r(x,\hat{\pi}^k(x))\right)+\bigsmile{1-\frac{\mathbf{1}\{\hat{\pi}^k(x)=a\}}{p(a|x)}}r(x,\hat{\pi}^k(x))\\
    &\qquad+\frac{\mathbf{1}\{\hat{\pi}^k(x)=a\}}{p(a|x)}\hat{r}^k(x,\hat{\pi}^k(x))-\frac{\mathbf{1}\{\hat{\pi}^k(x)=a\}}{\hat{p}^k(a|x)}\hat{r}^k(x,\hat{\pi}^k(x))+\frac{\mathbf{1}\{\hat{\pi}^k(x)=a\}}{\hat{p}^k(a|x)}y\\
    &\qquad-r(x,\hat\pi^k(x)) - \frac{\mathbf{1}\{\hat\pi^k(x)=a\}}{p(a|x)}(y-r(x,\hat\pi^k(x)))\\
    &= \underbrace{\bigsmile{1-\frac{\mathbf{1}\{\hat{\pi}^k(x)=a\}}{p(a|x)}}(\hat{r}^k(x,\hat{\pi}^k(x))-r(x,\hat{\pi}^k(x))}_{\textbf{S1}} \\
    &\qquad+\bigsmile{\frac{\mathbf{1}\{\hat{\pi}^k(x)=a\}}{p(a|x)}-\frac{\mathbf{1}\{\hat{\pi}^k(x)=a\}}{\hat{p}^k(a|x)}}\bigsmile{\hat{r}(x,\hat{\pi}^k(x))-r(x,\hat{\pi}^k(x))}\\
    &\qquad-\frac{\mathbf{1}\{\hat{\pi}^k(x)=a\}}{\hat{p}^k(a|x)}r(x,\hat{\pi}^k(x)))+r(x,\hat{\pi}^k(x))+\frac{\mathbf{1}\{\hat{\pi}^k(x)=a\}}{\hat{p}^k(a|x)}y\\
    &\qquad-r(x,\hat{\pi}^k(x)) - \frac{\mathbf{1}\{\hat{\pi}^k(x)=a\}}{p(a|x)}(y-r(x,\hat{\pi}^k(x)))
\end{align}
\begin{align}
    &= \textbf{S1}+\underbrace{\bigsmile{\frac{\mathbf{1}\{\hat{\pi}^k(x)=a\}}{p(a|x)}-\frac{\mathbf{1}\{\hat{\pi}^k(x)=a\}}{\hat{p}^k(a|x)}}\bigsmile{\hat{r}(x,\hat{\pi}^k(x))-r(x,\hat{\pi}^k(x))}}_{\textbf{S4}} \\
    &\qquad+\frac{\mathbf{1}\{\hat{\pi}^k(x)=a\}}{p(a|x)}r(x,\hat{\pi}^k(x))+\frac{\mathbf{1}\{\hat{\pi}^k(x)=a\}}{\hat{p}^k(a|x)}y\\
    &\qquad-\frac{\mathbf{1}\{\hat\pi^k(x)=a\}}{p(a|x)}y-\frac{\mathbf{1}\{\hat{\pi}^k(x)=a\}}{\hat{p}^k(a|x)}r(x,\hat{\pi}^k(x))\\
    &=\textbf{S1}+\textbf{S4}+\underbrace{\bigsmile{\frac{\mathbf{1}\{\hat{\pi}^k(x)=a\}}{\hat{p}^k(a|x)}-\frac{\mathbf{1}\{\hat{\pi}^k(x)=a\}}{p(a|x)}}(y-r(x,\hat{\pi}^k(x)))}_{\textbf{S3'}} \label{eqn:prop_decomp_s1_s3_s4}
\end{align}
By Lemma~\ref{lem:S1}, we have $\sqrt{n_k}\textbf{S1}\overset{p}{\to} 0$. By Assumption~\ref{assump:conv_nuisance_params}, $\sqrt{n_k}\textbf{S4}\overset{p}{\to} 0$. By Lemma~\ref{lem:S3'}, $\sqrt{n_k}\textbf{S3'}\overset{p}{\to} 0$. %and by the exact same decomposition as the first term with $\hat{\pi}^k(x)$ replaced by $\hat{a}^k$, we can similarly show that the second term is also $o_P(n_k^{-1/2})$. 
The second term (Equation~\ref{eqn:prop_second_term}) is $o_P(n_k^{-1/2})$ by Lemma~\ref{lem:m0_hata}. 
Combining the two terms gives us the final statement. 
% **EB: have to add something like old lemma 3.5 back in, but only has to handle nusiance parameters for r/p. 
\end{proof}

\begin{lemma}
\label{lem:avg_h_and_avg_h2_with_oracle_arp_Op1}
Under Assumptions~\ref{assump:SUTVA}--\ref{ass:consistency}, for each fold $k$, the empirical average of $h$ and $h^2$ for any $\pi'$ and $a'$ given the oracle nuisance parameters $r$,$p$, is 
\begin{equation}
\frac{1}{n_k} \sum_i h_{\pi',a',r,p}(w_i) = O_p(1),
\end{equation}
and
\begin{equation}
\frac{1}{n_k} \sum_i h_{\pi',a',r,p}(w_i)^2 = O_p(1),
\end{equation}
\end{lemma}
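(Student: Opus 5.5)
The plan is to bound the oracle score pointwise by an envelope that does not depend on $(\pi',a')$, and then apply Markov's inequality. Treat $\pi'$ and $a'$ as fixed, or as learned on $D_{-k}$ and hence fixed conditional on $D_{-k}$, as is done throughout the proof of Lemma~\ref{lem:Tnk_conv}.

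First, write $y=r(x,a)+\epsilon$ and substitute into $h_{\pi',a',r,p}(w)$. This gives
\begin{equation*}
h_{\pi',a',r,p}(w)=r(x,\pi'(x))-r(x,a')+\Bigl(\frac{\1\{\pi'(x)=a\}}{p(a|x)}-\frac{\1\{a'=a\}}{p(a|x)}\Bigr)\bigl(r(x,a)-r(x,\pi'(x))\1\{\pi'(x)=a\}\bigr)+\ldots
\end{equation*}
On the event $\pi'(x)=a$, the factor $y-r(x,\pi'(x))$ equals exactly $\epsilon$, and the analogous statement holds for the $a'$ term. So the expression simplifies to
\begin{equation*}
h_{\pi',a',r,p}(w)=r(x,\pi'(x))-r(x,a')+\frac{\1\{\pi'(x)=a\}-\1\{a'=a\}}{p(a|x)}\,\epsilon .
\end{equation*}

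Next, bound $|r|\le C_1$ using Assumption~\ref{assump:bounded_r}. Here I read that assumption as a bound on $|r|$; otherwise I would add a matching lower bound, which is the implicit intent. By Assumption~\ref{assump:strong_overlap}, $1/p(a|x)\le 1/\eta$. Together these give the envelope
\begin{equation*}
|h_{\pi',a',r,p}(w)|\le 2C_1+\frac{2}{\eta}|\epsilon|,
\end{equation*}
which holds uniformly in $(\pi',a')$. Squaring and using $(u+v)^2\le 2u^2+2v^2$ gives
\begin{equation*}
h^2\le 8C_1^2+\frac{8}{\eta^2}\epsilon^2 .
\end{equation*}

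Then apply Assumption~\ref{assump:finite_var}, which gives $\E[\epsilon^2]\le C$, and Jensen's inequality, which gives $\E|\epsilon|\le\sqrt{C}$. The empirical averages of $|h|$ and $h^2$ over fold $k$ therefore have expectations bounded by the constants
\begin{equation*}
2C_1+\frac{2\sqrt C}{\eta}\qquad\text{and}\qquad 8C_1^2+\frac{8C}{\eta^2},
\end{equation*}
neither of which depends on $n_k$, $\pi'$, or $a'$. Markov's inequality then gives $O_p(1)$ for both $\frac{1}{n_k}\sum_i|h_i|$ and $\frac{1}{n_k}\sum_i h_i^2$. Since $|\frac{1}{n_k}\sum_i h_i|\le\frac{1}{n_k}\sum_i|h_i|$, both claims follow.

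The only delicate point is when $\pi'$ and $a'$ are data-dependent. The envelope removes this issue, because the bound is uniform over all $(\pi',a')$: the Markov bound holds conditionally on $D_{-k}$ with a non-random constant, and therefore also holds unconditionally. The independence of the fold-$k$ observations from $D_{-k}$ under cross-fitting is used only implicitly, through the i.i.d.\ moment of $\epsilon$. I expect no real obstacle beyond confirming that the noise moments apply within the fold, which is immediate.
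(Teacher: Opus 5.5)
Your proof is correct and takes essentially the same route as the paper: a pointwise envelope $|h_{\pi',a',r,p}(w)|\le 2C_1+2|\epsilon|/\eta$ from bounded rewards and strong overlap, which does not depend on $(\pi',a')$, followed by Markov's inequality using the finite noise variance. Your Jensen step $\E|\epsilon|\le\sqrt{C}$ is actually the correct move at the point where the paper's proof writes $\E[\epsilon_i]=0$ in place of $\E|\epsilon_i|$, and your $(u+v)^2\le 2u^2+2v^2$ bound is a cleaner substitute for the paper's rough expansion of the square.
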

\begin{proof}
%As part of this, we will first show that the average of $h^2_{\pi',a',r,p}$ is $O_p(1)$.

To show the first equation, 
by the triangle inequality, 
for any policy $\pi$ and action $a$:

%\frac{1}{n_k} \sum_i |h_{\pi',a',r,p}(w)| &\leq& \frac{2 |\epsilon_i|}{\eta} + 2 B.
\begin{eqnarray}
\frac{1}{n_k} \sum_i |h_{\pi',a',r,p}(w_i)| &\leq & 2 B  + \frac{1}{n_k} \sum_i \frac{2 |\epsilon_i|}{\eta}, \label{eq:h_bounded_with_oracle_rp}
\end{eqnarray}
where the inequality holds because $|\1(A = \pi(X))| \leq 1$, $p(a|x) \geq \eta$ (Assumption~\ref{assump:strong_overlap}), $Y(x_i,a_i)-r(x_i,a_i)= \epsilon_i$ and $|r(x, a)| \leq B$ $\forall$ $x,a$(Assumption~\ref{assump:bounded_r}).

Then using Markov's inequality, 
\begin{eqnarray}
P ( \frac{1}{n_k} \sum_i |h_{\pi',a',r,p}(w_i)| > M) &\leq& \frac{E [ \frac{1}{n_k} \sum_i |h_{\pi',a',r,p}(w_i)| ] }{M} \\
& \leq & \frac{2 B + \frac{2 E[\epsilon_i]}{\eta}}{M} \\
& = & 2 B / M ,
\end{eqnarray}
where the equality follows because $E[\epsilon_i]=0.$ Therefore $\frac{1}{n_k} \sum_i |h_{\pi',a',r,p}(w_i)| = O_p(1)$.

To prove the second equation in the Lemma, we again start from Equation~\ref{eq:h_bounded_with_oracle_rp} and square each term:
\begin{eqnarray}
\frac{1}{n_k} \sum_i h_{\pi',a',r,p}(w_i)^2 &\leq & \frac{1}{n_k} \sum_i  4 B^2 + \frac{ \epsilon_i^2}{\eta^2} + \frac{ 2 B \epsilon_i }{\eta} 
\end{eqnarray}
To bound the second term, we note that from Assumption~\ref{assump:finite_var}, $E[\epsilon^2_i] \leq C$, and therefore from Markov’s inequality
\begin{eqnarray}
P( \frac{1}{n_k} \sum  \frac{\epsilon_i^2}{\eta^2} > M) \leq  \frac{1}{\eta^2} E [  \frac{1}{n_k} \sum  \epsilon_i^2 ] {M}  \leq \frac{C}{M \eta^2} \to  \frac{1}{n_k} \sum  \frac{\epsilon_i^2}{\eta^2} = O_p(1).
\end{eqnarray}
Finally, the cross term is also $O_p(1)$:
\begin{eqnarray}
E [ \frac{B}{n_k} \sum_i | \epsilon_i| ] \leq  \frac{B}{n_k} \sum_i \sqrt{ E[ \epsilon^2_i] } \leq B \sqrt{C},
\end{eqnarray}
where the inequality uses Jensen’s and the last step uses the Assumption~\ref{assump:finite_var}. Combining this with Markov yields the cross term is $O_p(1)$, and therefore   
\begin{equation}
\frac{1}{n_k} \sum_i h_{\pi',a',r,p}(w_i)^2 = O_p(1)
\end{equation}
\end{proof}

\begin{lemma}
\label{lem:h_hat_pi_oracle_neq_astart_oh_p_1}
Under Assumptions~\ref{assump:SUTVA}--\ref{ass:consistency}, under the null hypothesis $H_0$, for each fold $k$,  given the oracle nuisance parameters $a^* $,$r$,$p$,  
\begin{equation}
\sqrt{n_k} \frac{1}{n_k} \sum_i h_{\pi,a^*,r,p}(w_i)I(\pi(x) \neq a^*) = o_p(1)
\end{equation}
\end{lemma}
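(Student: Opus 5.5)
The plan is to condition on the out-of-fold data $\mathcal{D}_{-k}$. Then $\pi=\hat{\pi}^k$ is a fixed policy and the summands $Z_i:=h_{\pi,a^*,r,p}(w_i)\1\{\pi(x_i)\neq a^*\}$, $i\in\II_k$, are i.i.d. We split
\[
\sqrt{n_k}\,\frac{1}{n_k}\sum_{i}Z_i=\sqrt{n_k}\Bigl(\frac{1}{n_k}\sum_i Z_i-\mu_k\Bigr)+\sqrt{n_k}\,\mu_k,\qquad \mu_k:=\E[Z\mid\mathcal{D}_{-k}],
\]
and show that each piece is $o_p(1)$.

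\textbf{Step 1 (identify the mean).} With oracle $r,p$, unconfoundedness (Assumption~\ref{assump:unconfoundedness}) and overlap (Assumption~\ref{assump:strong_overlap}) make the two IPW residual terms conditionally mean zero given $x$. Hence
\[
\mu_k=\E_x\bigl[(r(x,\pi(x))-r(x,a^*))\1\{\pi(x)\neq a^*\}\bigr]=V(\pi)-U(a^*).
\]
Under $H_0$ we have $V(\pi^*)=U(a^*)$, so $V(\pi)-U(a^*)=V(\pi)-V(\pi^*)\le 0$. Moreover, the constant policy $a^*$ attains the maximum value, so the uniqueness part of Assumption~\ref{ass:consistency} forces $\pi^*(x)=a^*$ for a.e.\ $x$. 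Together with $|r|\le B$ (Assumption~\ref{assump:bounded_r}), this gives $|\mu_k|\le 2B\,P(\pi(X)\neq a^*\mid\mathcal{D}_{-k})$.

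\textbf{Step 2 (centered part).} The conditional variance of $Z$ is at most $\E[Z^2\mid\mathcal{D}_{-k}]$. Since $Z$ vanishes off $\{\pi(x)\neq a^*\}$, and $h^2\le 3(4B^2+2\epsilon^2/\eta^2)$ with $\E[\epsilon^2\mid x,a]\le C$ (Assumption~\ref{assump:finite_var}), this is at most $C'\,P(\hat{\pi}^k(X)\neq a^*\mid\mathcal{D}_{-k})$. Chebyshev then gives, conditionally,
\[
\Pr\bigl(|\sqrt{n_k}(\bar Z-\mu_k)|>\epsilon\mid\mathcal{D}_{-k}\bigr)\le C'P(\hat{\pi}^k(X)\neq\pi^*(X))/\epsilon^2 .
\]
The right side tends to $0$ by Assumption~\ref{ass:consistency}, and is bounded by $1$. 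Dominated convergence, exactly as at the end of the proof of Lemma~\ref{lem:Tnk_conv}, makes this unconditional. So the centered part is $o_p(1)$.

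\textbf{Step 3 (bias part), the main obstacle.} We need $\sqrt{n_k}\,|V(\hat\pi^k)-V(\pi^*)|\to_p 0$. The sign from Step 1 immediately gives the one-sided bound $\sqrt{n_k}\mu_k\le 0$. For the two-sided claim I would use $|\mu_k|\le 2B\,P(\hat\pi^k(X)\neq a^*)$ and need $\sqrt{n_k}P(\hat\pi^k(X)\neq \pi^*(X))\to_p 0$. My first attempt is to read Assumption~\ref{ass:consistency} at the same $o(n_k^{-1/2})$ rate demanded of the best-arm learner in Assumption~\ref{assump:fast_best_arm_learner}. The rationale is that under $H_0$ the target policy is the constant $a^*$, so learning $\pi^*$ is itself a best-arm-type problem with the gap (MG). Given that rate, Markov's inequality yields $\sqrt{n_k}\mu_k=o_p(1)$, and combining with Step 2 proves the lemma.

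If that rate cannot be extracted from the stated assumptions, a weaker route still works for the lemma's use in Lemma~\ref{lem:Tnk_conv}. The nonpositive bias only helps there, since we bound $T_{n,k}$ from above and the Type-I claim is one-sided. I would flag this rate step as the point where the argument is most fragile.
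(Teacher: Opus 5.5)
\paragraph{The gap in Step 3.} Your Steps 1 and 2 are sound, but the main route of Step 3 has a genuine gap. Nothing in Assumptions~\ref{assump:SUTVA}--\ref{ass:consistency} gives a rate for the policy learner. Assumption~\ref{ass:consistency} only says $P(\hat\pi^k(X)\neq\pi^*(X))\to 0$. The gap condition (MG) and Assumption~\ref{assump:fast_best_arm_learner} concern the best-arm learner $\hat a^k$, not $\hat\pi^k$.

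The target $\pi^*\equiv a^*$ being constant under $H_0$ does not make policy learning a best-arm problem. The learner $\hat\pi^k$ is still fit over the whole class $\Pi$, and its error is governed by the pointwise margins $r(x,a^*)-\max_{a\neq a^*}r(x,a)$. A population-level gap does not control those margins; that is what margin conditions such as (MA) are for.

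What you actually need is $\sqrt{n_k}\,(V(\pi^*)-V(\hat\pi^k))\to 0$ in probability. That is exactly the low-regret Assumption~\ref{assump:consistent_regret}, which is weaker than your misclassification rate. But the paper introduces it only later for the efficiency results, and it is not among the hypotheses here. So the two-sided $o_p(1)$ cannot be obtained this way.

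\paragraph{Your fallback matches the paper.} Your fallback is the right fix, and it is essentially the paper's proof. The paper's argument opens with an inequality. It discards the direct-model term using $r(x_i,\hat\pi(x_i))-r(x_i,a^*)\le 0$ under $H_0$. It then applies Chebyshev to the remaining noise term, whose variance is at most $C\,P(\hat\pi(X)\neq a^*)/\eta^2\to 0$ by Assumptions~\ref{assump:finite_var} and~\ref{ass:consistency}. So the paper's proof also only establishes the upper-tail statement $\Pr(\sqrt{n_k}\bar Z>\epsilon)\to 0$. That is all Lemma~\ref{lem:Tnk_conv} uses, since it only bounds $T_{n,k}$ from above.

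\paragraph{What your version of the one-sided argument buys.} Written as $\sqrt{n_k}\bar Z\le\sqrt{n_k}(\bar Z-\mu_k)$ plus your Step 2, your version is somewhat cleaner in two ways.
\begin{enumerate}
\item You need only the integrated inequality $\mu_k=V(\hat\pi^k)-U(a^*)\le 0$, which holds for any $\hat\pi^k\in\Pi$. The paper's pointwise inequality needs $a^*$ to be the best action at almost every $x$, which $H_0$ guarantees only when $\Pi$ is rich enough.
\item You apply Chebyshev to the full centered sum. The paper instead compares the noise term termwise with $\frac{1}{\eta}\mathbf{1}\{\hat\pi(x_i)\neq a^*\}\epsilon_i$. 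That comparison does not hold pointwise when $\epsilon_i<0$, though the same variance bound applies directly to the original term.
\end{enumerate}
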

\begin{proof}
\begin{eqnarray}
\sqrt{n_k} \frac{1}{n_k} \sum_i h_{\pi,a^*,r,p}(w_i)I(\hat{\pi}(x) \neq a^*) &\leq& 
\sqrt{n_k} \frac{1}{n_k} \sum_i \frac{1}{p(a|x)} (\I(a_i = \hat{\pi}(x)) - \I(a_i = a^*) I(\hat{\pi}(x) \neq a^*) \epsilon_i  \nonumber \\
&\leq& 
\sqrt{n_k} \frac{1}{\eta} \frac{1}{n_k} \sum_i  I(\hat{\pi}(x) \neq a^*) \epsilon_i,
\end{eqnarray}
where the first inequality holds because $r(x_i,\hat{\pi}(x_i)) - r(x_i,a^*) \leq 0$ for any $\hat{\pi}$ under the null hypothesis $H_0$.  
Note that since $E[\epsilon]=0$, the expected value of the above is 0.
Then by the Chebyshev inequality
\begin{eqnarray}
P( |\sqrt{n_k} \frac{1}{\eta} \frac{1}{n_k} \sum_i  I(\hat{\pi}(x) \neq a^*) \epsilon_i| \geq c_1) &\leq& \frac{E[ (\sqrt{n_k} \frac{1}{\eta} \frac{1}{n_k} \sum_i  I(\hat{\pi}(x) \neq a^*) \epsilon_i)^2] }{c_1^2} \\
&=& \frac{\frac{1}{n_k} \sum_i  E[ I(\hat{\pi}(x_i) \neq a^*)\epsilon_i^2] }{\eta^2 c_1^2} \\
&=& \frac{\frac{1}{n_k} \sum_i  E[ I(\hat{\pi}(x_i) \neq a^*)] C }{\eta^2 c_1^2},\\
&=& \frac{ E[ I(\hat{\pi}(x_i) \neq a^*)] C }{\eta^2 c_1^2},\\
&=& \frac{C}{\eta^2 c_1^2} \cdot o_p(1) \\
&=& o_p(1)
\end{eqnarray}
where the first equality holds because all $\epsilon_i$ are i.i.d. and so the cross terms vanish, and the second inequality holds by Assumption~\ref{assump:finite_var} (finite variance), and the second to last equality holds by consistency of the policy learning (Assumption~\ref{ass:consistency}), which, under $H_0$, implies $\pi(x) \to a^*$ $\forall x$.

\end{proof}

\begin{lemma}\label{lem:m0_k}
Let 

\begin{equation}\label{eqn:m0_star}
    \hat{m}_{(0)}^{k,*}(w)=r(x,a^*(x)) + \frac{\mathbf{1}\{a^*(x)=a\}}{p(a|x)}(y-\hat{r}(x,a^*(x))).
\end{equation}
and consider $\hat{m}_{(0)}^k$ (as defined in  Equation~\eqref{eqn:m0_k}). Then under Assumptions \ref{assump:strong_overlap}--\ref{assump:finite_var}, we have for each $k$,  
\[\sqrt{n_k}\frac{1}{n_k}\sum_{i\in\II_{n_k}}\hat{m}_{(0)}^k(w_i)-\hat{m}_{(0)}^{k,*}(w_i)\overset{p}{\to} 0. \]
\end{lemma}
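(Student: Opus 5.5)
The plan is to split on whether the best-arm learner succeeded, i.e.\ on the events $\{\hat{a}^k = a^*\}$ and $\{\hat{a}^k\neq a^*\}$. Since $\hat{a}^k$, $\hat{r}^k$ and $\hat{p}^k$ are learned on $D_{-k}$, they are fixed when we condition on $D_{-k}$. On $\{\hat{a}^k = a^*\}$ the two summands $\hat{m}_{(0)}^k(w_i)$ and $\hat{m}_{(0)}^{k,*}(w_i)$ involve the same arm. Their difference therefore reduces to a nuisance-error difference with a fixed arm $b=a^*$:
\[
\bigl(1-\tfrac{\1\{b=a\}}{p(a|x)}\bigr)\bigl(\hat r^k(x,b)-r(x,b)\bigr)+\bigl(\tfrac{\1\{b=a\}}{\hat p^k(a|x)}-\tfrac{\1\{b=a\}}{p(a|x)}\bigr)(y-\hat r^k(x,b)).
\]
We then repeat, with $\hat\pi^k(x)$ replaced by the constant arm $b$, the algebra that yielded the S1/S4/S3$'$ decomposition in \eqref{eqn:prop_decomp_s1_s3_s4}. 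Writing $y-\hat r^k = (y-r)-(\hat r^k-r)$ gives three pieces. The first is an S1-type term with conditional mean zero given $D_{-k}$, since $\E[\1\{b=a\}/p(a|x)\mid x]=1$. The second is an S3$'$-type term multiplying $\epsilon$, also with conditional mean zero. The third is an S4-type product of the propensity and outcome errors.

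For the two mean-zero pieces we argue conditionally on $D_{-k}$ via Chebyshev. Because the summands are i.i.d.\ given $D_{-k}$, the conditional variance of $\sqrt{n_k}$ times the average equals the conditional second moment. For the S1-type piece this is bounded by $(1+1/\eta)^2\,\E[(\hat r^k(x,b)-r(x,b))^2\mid D_{-k}]$, using Assumption~\ref{assump:strong_overlap}. For the S3$'$-type piece it is bounded by $C\,\E[(1/\hat p^k-1/p)^2\mid D_{-k}]$, using Assumption~\ref{assump:finite_var}. Both vanish in probability by Assumption~\ref{assump:consistent}. This is exactly the content of Lemma~\ref{lem:S1} and Lemma~\ref{lem:S3'} with the arm held fixed, so we invoke those lemmas directly. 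To pass from the conditional statement to an unconditional one we use bounded convergence, as at the end of the proof of Lemma~\ref{lem:Tnk_conv}. The S4-type product is $o_P(n^{-1/2})$ directly by Assumption~\ref{assump:conv_nuisance_params}, since that assumption takes a maximum over arms.

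On $\{\hat{a}^k\neq a^*\}$ we do not attempt any bound on the summand. Instead we observe that for every $\epsilon>0$,
\[
\Pr\Bigl(\bigl|\sqrt{n_k}\,\tfrac{1}{n_k}\textstyle\sum_{i}(\hat m^k_{(0)}-\hat m^{k,*}_{(0)})(w_i)\bigr|\,\1\{\hat a^k\neq a^*\}>\epsilon\Bigr)\le \Pr(\hat a^k\neq a^*),
\]
and the right-hand side is $o(n_k^{-1/2})$, hence $o(1)$, by Assumption~\ref{assump:fast_best_arm_learner}. Adding the two events gives the claimed convergence in probability.

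The main obstacle is a notational mismatch in the display defining $\hat{m}_{(0)}^{k,*}$. Its first term uses the oracle $r$ while its correction term uses $\hat r$, so the difference on $\{\hat a^k=a^*\}$ is not literally the symmetric nuisance-error expression above. The plan is to read the definition consistently with $h_{\cdot,\cdot,r,p}$, so that the first term of $\hat{m}_{(0)}^{k,*}$ is paired with $y-r$; under that reading the decomposition goes through as described. Under the literal reading, one would instead carry an extra term $(1-\1\{b=a\}/p)(r-\hat r)$, which is again mean-zero and handled by the same Chebyshev argument. Either way, the key step is making sure each piece is either conditionally mean-zero with vanishing second moment or a product-rate term covered by Assumption~\ref{assump:conv_nuisance_params}.
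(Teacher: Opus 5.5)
Your argument is correct under the intended reading of $\hat m^{k,*}_{(0)}$, with oracle $r$ in both places as in $\hat m^{k,*}_{(1)}$. It takes a different route from the paper.

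The paper does not split on events. It reruns the proof of Lemma~\ref{lem:m1_k} with $\hat\pi^k$ replaced by $\hat a^k$, i.e.\ the full S1/S2/S3$'$/S3$''$/S4 decomposition. It supplies the missing low-regret input through $\mathbb{E}_x[r(x,\hat a^k)-r(x,a^*)]\le C_1\Pr(\hat a^k\neq a^*)=o(n_k^{-1/2})$, from Assumptions~\ref{assump:bounded_r} and~\ref{assump:fast_best_arm_learner}. The S2- and S3$''$-type terms, which carry the arm-selection error, are then controlled by mean and variance bounds plus Lemma~\ref{lem:sqrt_conv}.

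You instead use that $\hat a^k$ ranges over a finite set, so consistency means exact recovery with probability tending to one. On $\{\hat a^k=a^*\}$ those two terms vanish identically. What remains is exactly the content of Lemma~\ref{lem:m0_hata}; on that event $\hat m^{k,*}_{(0)}=\hat m^{k,\hat a}_{(0)}$, so you could simply cite it.

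What your route buys:
\begin{itemize}
\item It needs only $\Pr(\hat a^k\neq a^*)\to 0$, not the $n_k^{-1/2}$ rate.
\item It sidesteps a delicate step in the paper's template. All summands of the S3$''$-type average share $\hat a^k$, so their conditional mean $-(U(a^*)-U(\hat a^k))$ varies with $D_{-k}$. This can contribute unconditional variance of order $\Pr(\hat a^k\neq a^*)$, which is $o(n_k^{-1/2})$ but not necessarily the $o(n_k^{-1})$ that Lemma~\ref{lem:sqrt_conv} requires.
\end{itemize}
The paper's route, in exchange, gives one template shared with the policy term. There exact recovery $\hat\pi^k=\pi^*$ is not available and moment bounds are unavoidable.

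One correction to your closing remark. Under the literal display (oracle $r$ in the leading term, $\hat r^k$ in the correction), the extra term relative to your reading is $\frac{\mathbf{1}\{b=a\}}{p(a|x)}\bigl(\hat r^k(x,b)-r(x,b)\bigr)$, not $\bigl(1-\mathbf{1}\{b=a\}/p(a|x)\bigr)(r-\hat r^k)$. Its conditional mean given $x$ is $\hat r^k(x,b)-r(x,b)$, so it is not mean-zero. The whole difference then collapses to $(\hat r^k-r)(x,b)+\mathbf{1}\{b=a\}\bigl(1/\hat p^k(a|x)-1/p(a|x)\bigr)(y-\hat r^k(x,b))$.

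Take a known propensity and $\hat r^k=r+n^{-1/4}$. Assumptions~\ref{assump:strong_overlap}--\ref{assump:finite_var} can all hold, yet $\sqrt{n_k}$ times the average diverges. So the literal statement is false, and your reading is necessary rather than merely convenient. The term you wrote corresponds instead to putting $\hat r^k$ in both places. This does not affect your main argument.
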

\begin{proof}
Note that with Assumption~\ref{assump:bounded_r}, we have for each $k$,
\[\E_x[r(x,\hat{a}^k(x))-r(x,a^*(x))] \leq C_1 \E_x[\1\{\hat{a}^k(x)\neq a^*(x)\}]=o(n_k^{-1/2})\]
by Assumption \ref{assump:fast_best_arm_learner}. The proof then follows exactly from the proof of Lemma~\ref{lem:m1_k} with $\hat{\pi}^k(x_i)$ replaced by $\hat{a}^k$. 
\end{proof}

\begin{lemma}\label{lem:m0_hata}
Let $\hat{m}_{(0)}^k$ be defined in Equation~\eqref{eqn:m0_k} and $\hat{m}_{(0)}^{k,\hat{a}}$ be defined in \eqref{eqn:m0_hata}. Then under Assumptions \ref{assump:strong_overlap}--\ref{assump:consistent} and \ref{assump:conv_nuisance_params}--\ref{assump:finite_var}, we have for each $k$,  
\[\sqrt{n_k}\frac{1}{n_k}\sum_{i\in\II_{n_k}}\hat{m}_{(0)}^k(w_i)-\hat{m}_{(0)}^{k,\hat{a}}(w_i)\overset{p}{\to} 0. \]
\end{lemma}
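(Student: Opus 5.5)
Throughout, I read $\hat m^{k,\hat a}_{(0)}$ as the oracle AIPW term $r(x,\hat a^k)+\frac{\1\{\hat a^k=a\}}{p(a|x)}(y-r(x,\hat a^k))$, i.e.\ with $r$ rather than $\hat r$ in the residual. This is the reading under which $\hat m^k_{(0)}-\hat m^{k,\hat a}_{(0)}$ is exactly the $\hat a^k$-part of $h_{\hat\pi^k,\hat a^k,\hat r^k,\hat p^k}-h_{\hat\pi^k,\hat a^k,r,p}$ in \eqref{eqn:prop_second_term}. With $\hat r$ in the residual, the difference would contain a non-centered term $(\hat r-r)(x,\hat a^k)$ that need not be $o_P(n_k^{-1/2})$.

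\textbf{Decomposition.} I would redo the algebra leading to \eqref{eqn:prop_decomp_s1_s3_s4}, with the data-dependent action $\hat\pi^k(x)$ replaced by the constant action $\hat a^k$. Write $\Delta_r(x)=\hat r^k(x,\hat a^k)-r(x,\hat a^k)$ and $\Delta_p(x)=\frac{1}{\hat p^k(\hat a^k|x)}-\frac{1}{p(\hat a^k|x)}$. The algebra gives
\begin{align*}
\hat m^k_{(0)}(w)-\hat m^{k,\hat a}_{(0)}(w) &= \Bigl(1-\tfrac{\1\{\hat a^k=a\}}{p(a|x)}\Bigr)\Delta_r(x) -\1\{\hat a^k=a\}\,\Delta_p(x)\,\Delta_r(x) +\1\{\hat a^k=a\}\,\Delta_p(x)\,\epsilon .
\end{align*}
Here $\epsilon=y-r(x,a)$, which on the event $\{a=\hat a^k\}$ equals $y-r(x,\hat a^k)$. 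Call these three pieces $\textbf{S1}_a$, $\textbf{S4}_a$ and $\textbf{S3'}_a$.

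\textbf{Conditioning and the two centered pieces.} The key device is to condition on $\mathcal D_{-k}$ and on the covariates $\{x_i\}_{i\in\II_k}$. Given these, $\hat a^k$, $\hat r^k$ and $\hat p^k$ are fixed, and only the pairs $(a_i,\epsilon_i)$ are random and independent across $i$.
\begin{itemize}
\item For $\textbf{S1}_a$: since $E[\1\{a_i=\hat a^k\}\mid x_i]=p(\hat a^k|x_i)$, each summand has conditional mean zero. By Assumption~\ref{assump:strong_overlap} its conditional variance is at most $(1+1/\eta)^2\Delta_r(x_i)^2$. Hence $\sqrt{n_k}\cdot\frac1{n_k}\sum_i \textbf{S1}_a(w_i)$ has conditional mean zero and conditional variance at most $(1+1/\eta)^2\frac1{n_k}\sum_i\Delta_r(x_i)^2$. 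This tends to zero in probability by Assumption~\ref{assump:consistent}, applied with $a=\hat a^k$; because $\mathcal A$ is finite, applying it to every $a$ suffices.
\item For $\textbf{S3'}_a$: unconfoundedness gives $E[\epsilon_i\mid x_i,a_i]=0$, so each summand again has conditional mean zero. By Assumption~\ref{assump:finite_var} the conditional variance of the scaled sum is at most $C\frac1{n_k}\sum_i\Delta_p(x_i)^2$, which also tends to zero in probability by Assumption~\ref{assump:consistent}.
\end{itemize}
In both cases a conditional Chebyshev bound gives a conditional probability that is at most $\min\{1,\text{(vanishing variance)}/\epsilon'^2\}$. Taking expectations, exactly as with $g(\mathcal D_{-k})$ at the end of Lemma~\ref{lem:Tnk_conv}, bounded convergence yields unconditional $o_P(1)$.

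\textbf{The product term.} $\textbf{S4}_a$ is the only non-centered piece, and it is handled by the product-rate Assumption~\ref{assump:conv_nuisance_params}. The indicator $\1\{\hat a^k=a_i\}\in\{0,1\}$ restricts attention to one action. I need $\frac{1}{n_k}\sum_i \1\{a_i=\hat a^k\}\Delta_p(x_i)\Delta_r(x_i)=o_P(n^{-1/2})$, and the assumption controls the unrestricted sum at the fixed action $\hat a^k$, via the $\max_a$, at rate $o_P(n^{-1/2})=o_P(n_k^{-1/2})$ since $K$ is fixed. Two routes handle the indicator:
\begin{itemize}
\item \emph{Direct:} write $\1\{a_i=\hat a^k\}=p(\hat a^k|x_i)+\bigl(\1\{a_i=\hat a^k\}-p(\hat a^k|x_i)\bigr)$. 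The centered part is treated like $\textbf{S1}_a$. Its conditional variance is at most $\frac1{n_k}\sum_i\Delta_p^2\Delta_r^2$, which vanishes because $\Delta_p$ is bounded under overlap (given the bounded $1/\hat p^k$ restriction used in Lemma~\ref{lem:Tnk_conv}) and $\frac1{n_k}\sum_i\Delta_r^2\to_p0$. The weight $p(\hat a^k|x_i)\in[\eta,1-\eta]$ remains.
\item \emph{Cauchy--Schwarz:} if the weight cannot be absorbed directly, read Assumption~\ref{assump:conv_nuisance_params} as it is used for \textbf{S4} in Proposition~\ref{prop:avg_h_conv}, namely as already covering the indicator-weighted product.
\end{itemize}

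\textbf{Main obstacle.} I expect the only real difficulty to be this $\textbf{S4}_a$ step, because Assumption~\ref{assump:conv_nuisance_params} is stated without the indicator or propensity weight. To keep the argument parallel to Proposition~\ref{prop:avg_h_conv}, I would take the latter reading. Summing the three pieces then gives the claim.
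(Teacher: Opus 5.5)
Your proposal is correct and follows the paper's own proof. It uses the same split into $\mathbf{S1_i}+\mathbf{S4_i}+\mathbf{S3_i}$; the paper's computation likewise uses $r$ rather than $\hat r$ in the residual of $\hat m^{k,\hat a}_{(0)}$. It also uses mean-zero plus Chebyshev for $\mathbf{S1}$ and $\mathbf{S3}$, and a direct appeal to Assumption~\ref{assump:conv_nuisance_params} for the indicator-weighted product $\mathbf{S4}$, which is exactly the reading you settle on.

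Your conditional Chebyshev step with bounded convergence is a mild improvement over the paper's use of unconditional moments such as $E[(1/\hat p^k-1/p)^2]=o(1)$, since Assumption~\ref{assump:consistent} only provides these in probability. You are also right that the ``direct'' route alone does not close, because the $p(\hat a^k\mid x_i)$-weighted signed sum is not controlled by the assumption as literally stated.
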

\begin{proof}
We consider some fixed $k$. Note that 
\begin{align*}
    &\hat{m}_{(0)}^k(w_i)-\hat{m}_{(0)}^{k,\hat{a}}(w_i)\\
    &= \hat{r}^k(x_i,\hat{a}^k) + \frac{\mathbf{1}\{\hat{a}^k=a_i\}}{\hat{p}^k(a_i|x_i)}(y_i-\hat{r}^k(x_i,\hat{a}^k))-r(x_i,\hat{a}^k) - \frac{\mathbf{1}\{\hat{a}^k=a_i\}}{p(a_i|x_i)}(y_i-r(x_i,\hat{a}^k))\\
    &=  \bigsmile{1-\frac{\mathbf{1}\{\hat{a}^k=a_i\}}{p(a_i|x_i)}}(\hat{r}^k(x_i,\hat{a}^k)-r(x_i,\hat{a}^k)+\bigsmile{1-\frac{\mathbf{1}\{\hat{a}^k=a_i\}}{p(a_i|x_i)}}r(x_i,\hat{a}^k)\\
    &\qquad+\frac{\mathbf{1}\{\hat{a}^k=a_i\}}{p(a_i|x_i)}\hat{r}^k(x_i,\hat{a}^k)-\frac{\mathbf{1}\{\hat{a}^k=a_i\}}{\hat{p}^k(a_i|x_i)}\hat{r}^k(x_i,\hat{a}^k)+\frac{\mathbf{1}\{\hat{a}^k=a_i\}}{\hat{p}^k(a_i|x_i)}y_i\\
    &\qquad-r(x_i,\hat{a}^k) - \frac{\mathbf{1}\{\hat{a}^k=a_i\}}{p(a_i|x_i)}(y_i-r(x_i,\hat{a}^k))\\
    &= \underbrace{ \bigsmile{1-\frac{\mathbf{1}\{\hat{a}^k=a_i\}}{p(a_i|x_i)}}(\hat{r}^k(x_i,\hat{a}^k)-r(x_i,\hat{a}^k))}_{\mathbf{S1_i}}+\bigsmile{\frac{\mathbf{1}\{\hat{a}^k=a_i\}}{p(a_i|x_i)}-\frac{\mathbf{1}\{\hat{a}^k=a_i\}}{\hat{p}^k(a_i|x_i)}}\bigsmile{\hat{r}(x_i,\hat{a}^k)-r(x_i,\hat{a}^k)}\\
    &\qquad-\frac{\mathbf{1}\{\hat{a}^k=a_i\}}{\hat{p}^k(a_i|x_i)}r(x_i,\hat{a}^k))+r(x_i,\hat{a}^k)+\frac{\mathbf{1}\{\hat{a}^k=a_i\}}{\hat{p}^k(a_i|x_i)}y_i-r(x_i,\hat{a}^k) - \frac{\mathbf{1}\{\hat{a}^k=a_i\}}{p(a_i|x_i)}(y_i-r(x_i,\hat{a}^k))\\
    &= \mathbf{S1_i}+\underbrace{\bigsmile{\frac{\mathbf{1}\{\hat{a}^k=a_i\}}{p(a_i|x_i)}-\frac{\mathbf{1}\{\hat{a}^k=a_i\}}{\hat{p}^k(a_i|x_i)}}\bigsmile{\hat{r}(x_i,\hat{a}^k)-r(x_i,\hat{a}^k)}}_{\mathbf{S4_i}} \\
    &\qquad+\frac{\mathbf{1}\{\hat{a}^k=a_i\}}{p(a_i|x_i)}r(x_i,\hat{a}^k)+\frac{\mathbf{1}\{\hat{a}^k=a_i\}}{\hat{p}^k(a_i|x_i)}y_i-\frac{\mathbf{1}\{\hat{a}^k=a_i\}}{p(a_i|x_i)}y_i-\frac{\mathbf{1}\{\hat{a}^k=a_i\}}{\hat{p}^k(a_i|x_i)}r(x_i,\hat{a}^k)\\
    &=\mathbf{S1_i}+\mathbf{S4_i}+ \underbrace{\bigsmile{\frac{\mathbf{1}\{\hat{a}^k=a_i\}}{\hat{p}^k(a_i|x_i)}-\frac{\mathbf{1}\{\hat{a}^k=a_i\}}{p(a_i|x_i)}}(y_i-r(x_i,\hat{a}^k))}_{\mathbf{S3_i}}.
\end{align*}
By Assumption~\ref{assump:consistent} and Lemma~\ref{lem:S1}, we show that $\sqrt{n_k}\frac{1}{n_k}\sum_{i\in\II_k}\mathbf{S1_i}\overset{p}{\to}0$. For $\mathbf{S4_i}$, we have $\sqrt{n_k}\frac{1}{n_k}\sum_{i\in\II_k}\mathbf{S4_i}\overset{p}{\to}0$ by Assumption~\ref{assump:conv_nuisance_params}. For $\mathbf{S3_i}$, note that $y_i=r(x_i,a_i)+\epsilon_i$, 
\begin{align*}
    &\bigsmile{\frac{\mathbf{1}\{\hat{a}^k=a_i\}}{\hat{p}^k(a_i|x_i)}-\frac{\mathbf{1}\{\hat{a}^k=a_i\}}{p(a_i|x_i)}}(y_i-r(x_i,\hat{a}^k))\\
    &= \bigsmile{\frac{\mathbf{1}\{\hat{a}^k=a_i\}}{\hat{p}^k(a_i|x_i)}-\frac{\mathbf{1}\{\hat{a}^k=a_i\}}{p(a_i|x_i)}}(r(x_i,a_i)-r(x_i,\hat{a}^k)+\epsilon_i) \\
    &= \bigsmile{\frac{\mathbf{1}\{\hat{a}^k=a_i\}}{\hat{p}^k(a_i|x_i)}-\frac{\mathbf{1}\{\hat{a}^k=a_i\}}{p(a_i|x_i)}}\epsilon_i.
\end{align*}
where the last line follows since the expression is only nonzero if $\hat{a}^k=a_i$. First, the expectation 
\begin{align*}
    \E\bigbrak{\bigsmile{\frac{\mathbf{1}\{\hat{a}^k=a_i\}}{\hat{p}^k(a_i|x_i)}-\frac{\mathbf{1}\{\hat{a}^k=a_i\}}{p(a_i|x_i)}}\epsilon_i}=0
\end{align*}
since all nuisance parameters $\hat{p}^k$ and $\hat{a}^k$ are learned on data excluding the $k$th fold. Also, 
\begin{align*}
    &\Var\bigsmile{\sqrt{n_k}\frac{1}{n_k}\sum_{i\in\II_k}\bigsmile{\frac{\mathbf{1}\{\hat{a}^k=a_i\}}{\hat{p}^k(a_i|x_i)}-\frac{\mathbf{1}\{\hat{a}^k=a_i\}}{p(a_i|x_i)}}\epsilon_i}\\
    &= \frac{1}{n_k}\sum_{i\in\II_k}\Var\bigsmile{\bigsmile{\frac{\mathbf{1}\{\hat{a}^k=a_i\}}{\hat{p}^k(a_i|x_i)}-\frac{\mathbf{1}\{\hat{a}^k=a_i\}}{p(a_i|x_i)}}\epsilon_i}\\
    &= \frac{1}{n_k}\sum_{i\in\II_k}\E\bigbrak{\bigsmile{\bigsmile{\frac{\mathbf{1}\{\hat{a}^k=a_i\}}{\hat{p}^k(a_i|x_i)}-\frac{\mathbf{1}\{\hat{a}^k=a_i\}}{p(a_i|x_i)}}\epsilon_i}^2}\\
    &= \frac{1}{n_k}\sum_{i\in\II_k}\E\bigbrak{\mathbf{1}\{\hat{a}^k=a_i\}\bigsmile{\bigsmile{\frac{1}{\hat{p}^k(a_i|x_i)}-\frac{1}{p(a_i|x_i)}}^2\epsilon_i^2}}\\
    &\leq \frac{C}{n_k}\sum_{i\in\II_k}\E\bigbrak{\bigsmile{\frac{1}{\hat{p}^k(a_i|x_i)}-\frac{1}{p(a_i|x_i)}}^2}.
\end{align*}
By Assumption~\ref{assump:consistent}, $\E\bigbrak{\bigsmile{\frac{1}{\hat{p}^k(a_i|x_i)}-\frac{1}{p(a_i|x_i)}}^2}=o(1)$. By Chebyshev inequality we have that $\sqrt{n_k}\frac{1}{n_k}\sum_{i\in\II_k}\mathbf{S3_i}\overset{p}{\to}0$. Combining all three terms gives the statement. 
\end{proof}

\subsection{Efficiency with known nuisance parameters}
\label{sec:efficiency_known}
Now we show this procedure is efficient under conditions. We start with some common assumptions made by usual causal inference literatures. 

Also, for notational convenience we define the following terms. 
{
\small
\begin{align}
    \Gamma_i^* &:=r(x_i,\pi^*(x_i)) - r(x_i,a^*_{all}) + \frac{\mathbf{1}\{\pi^*(x_i)=a_i,a^*_{all}\ne \pi^*(x_i)\}}{p(a_i|x_i)}(y_i-r(x_i,\pi^*(x_i)))\nonumber\\
    &\qquad- \frac{\mathbf{1}\{a^*_{all}=a_i,a^*_{all}\ne \pi^*(x_i)\}}{p(a_i|x_i)}(y_i-r(x_i,a^*_{all}))\label{eqn:Gamma_i_star_1}\\
    &= r(x_i,\pi^*(x_i)) - r(x_i,a^*_{all}) + \frac{\mathbf{1}\{\pi^*(x_i)=a_i\}}{p(a_i|x_i)}(y_i-r(x_i,\pi^*(x_i))) - \frac{\mathbf{1}\{a^*_{all}=a_i\}}{p(a_i|x_i)}(y_i-r(x_i,a^*_{all}))\label{eqn:Gamma_i_star_2}\\
    \hat{\Gamma}_i^k &:= \hat{r}^{k}(x_i,\hat{\pi}^k(x_i)) - \hat{r}^{k}(x_i,\hat{a}^k) + \frac{\mathbf{1}\{\hat{\pi}^k(x_i)=a_i,\hat{a}^k\ne \hat{\pi}^k(x_i)\}}{p(a_i|x_i)}(y_i-\hat{r}^{k}(x_i,\hat{\pi}^k(x_i)))\nonumber\\
    &\qquad- \frac{\mathbf{1}\{\hat{a}^k=a_i,\hat{a}^k\ne \hat{\pi}^k(x_i)\}}{p(a_i|x_i)}(y_i-\hat{r}^{k}(x_i,\hat{a}^k_{all})).
\end{align}
}
Also, we define the oracle estimator $\hat{\psi}^{AIPW}_*=\frac{1}{n}\sum_{i=1}^n \Gamma_i^*$. 

We know that if $\hat{\pi}(x_i)$ is never equal to $\hat{a}_{all}$, then our AIPW estimator reduces to the same AIPW estimator for estimating treatment effect, which by Theorem 3.4 of \cite{wager2024causal} would be efficient. Now we consider the case where $\hat{\pi}(x_i)=\hat{a}_{all}$ for some $i$. We first state the following proposition that shows our oracled version of the estimator is still unbiased even if $\pi^*(x_i)=a^*_{all}$ for some $i$.
\begin{proposition}\label{prop:IPW_bias}
Under Assumptions \ref{assump:unconfoundedness}, $\hat{\psi}_*^{AIPW}$ is unbiased, i.e. $\E[\hat{\psi}_*^{AIPW}]=\psi$.
\end{proposition}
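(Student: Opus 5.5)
Since $\hat{\psi}^{AIPW}_*=\frac{1}{n}\sum_{i=1}^n\Gamma_i^*$ is an average of i.i.d. terms, linearity of expectation reduces the claim to showing $\E[\Gamma_i^*]=\psi$ for a single draw $w_i=(x_i,a_i,y_i)$. I will work with the second form of $\Gamma_i^*$ in \eqref{eqn:Gamma_i_star_2}. First I check that the two forms agree. On the event $\pi^*(x_i)=a^*_{all}$, the two inverse-propensity correction terms in \eqref{eqn:Gamma_i_star_2} are identical and cancel, and the extra indicators $\1\{a^*_{all}\neq\pi^*(x_i)\}$ in \eqref{eqn:Gamma_i_star_1} kill them in the same way. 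So the case $\pi^*(x_i)=a^*_{all}$, which is the new feature compared with treatment effect estimation, is handled automatically and needs no separate treatment.

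The key step is to condition on $x_i$ and use the tower property. Fix $x$ and any deterministic action $b$; here $b$ is either $\pi^*(x)$ or $a^*_{all}$, both of which are fixed, non-random functions of $x$ since they are oracle quantities. I will show that
\[
\E\Big[\tfrac{\1\{a_i=b\}}{p(a_i|x_i)}\big(y_i-r(x_i,b)\big)\,\Big|\,x_i=x\Big]=0 .
\]
To see this, condition further on $a_i$. By SUTVA (Assumption~\ref{assump:SUTVA}) and unconfoundedness (Assumption~\ref{assump:unconfoundedness}), $\E[y_i\mid x_i=x,a_i=b]=r(x,b)$. On the event $a_i=b$ the residual $y_i-r(x_i,b)$ therefore has conditional mean zero, and the indicator kills every other value of $a_i$. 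Strong overlap (Assumption~\ref{assump:strong_overlap}) guarantees that $p(b|x)>0$, so the weight is finite and the expectation is well defined; bounded rewards and finite noise variance give integrability. Hence both correction terms have zero conditional mean.

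It follows that $\E[\Gamma_i^*\mid x_i]=r(x_i,\pi^*(x_i))-r(x_i,a^*_{all})$. Taking the expectation over $x_i\sim p$ gives $V(\pi^*)-U(a^*_{all})=\psi$, matching definition \eqref{eqn:objective}. Averaging over $i$ then yields $\E[\hat{\psi}^{AIPW}_*]=\psi$.

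The only real obstacle is the bookkeeping in the first step. There I must confirm that the joint-indicator form \eqref{eqn:Gamma_i_star_1} and the plain form \eqref{eqn:Gamma_i_star_2} coincide pointwise, so that the conditional-mean argument applies even when $\pi^*(x_i)=a^*_{all}$. I will do this by splitting into the two cases $\pi^*(x_i)=a^*_{all}$ and $\pi^*(x_i)\neq a^*_{all}$. Beyond that, one only has to be careful that $\pi^*$ and $a^*_{all}$ do not depend on the sample, which holds because they are population quantities; this is why unbiasedness holds exactly here, unlike for the estimated $\hat{\pi}^k$ and $\hat{a}^k$.
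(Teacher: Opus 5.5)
Your proof is correct and follows essentially the same route as the paper. Both condition on $x_i$ and use $\mathbb{E}[y_i\mid x_i,a_i]=r(x_i,a_i)$ together with $\sum_a p(a|x)\,\mathbf{1}\{a=b\}/p(a|x)=1$ to get $\mathbb{E}[\Gamma_i^*\mid x_i]=r(x_i,\pi^*(x_i))-r(x_i,a^*_{all})$, then average over $x$. The only cosmetic difference is the grouping: the paper writes $\Gamma_i^*$ as mean-zero augmentation terms $r(x_i,b)\bigl(1-\mathbf{1}\{b=a_i\}/p(a_i|x_i)\bigr)$ plus pure IPW terms $\mathbf{1}\{b=a_i\}y_i/p(a_i|x_i)$, whereas you pair $r(x_i,b)$ with a mean-zero residual correction, and you also explicitly check that the two forms of $\Gamma_i^*$ coincide, which the paper only asserts.
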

\begin{proof}
Note that 
\begin{align*}
    \hat{\psi}^{AIPW}_*&=\frac{1}{n}\sum_{i=1}^n \Gamma_i^* \\
    &= \frac{1}{n}\sum_{i=1}^n r(x_i,\pi^*(x_i)) - r(x_i,a^*_{all}) + \frac{\mathbf{1}\{\pi^*(x_i)=a_i,a^*_{all}\ne \pi^*(x_i)\}}{p(a_i|x_i)}(y_i-r(x_i,\pi^*(x_i)))\\
    &\qquad- \frac{\mathbf{1}\{a^*_{all}=a_i,a^*_{all}\ne \pi^*(x_i)\}}{p(a_i|x_i)}(y_i-r(x_i,a^*_{all}))\\
    &=\frac{1}{n_k}\sum_{i=1}^{n_k} r(x_i,\pi^*(x_i)) - r(x_i,a_{all}^*) + \frac{\mathbf{1}\{\pi^*(x_i)=a_i\}}{p(a_i|x_i)}(y_i-r(x_i,\pi^*(x_i))) \\
    &\qquad - \frac{\mathbf{1}\{a_{all}^*=a_i\}}{p(a_i|x_i)}(y_i-r(x_i,a_{all}^*)),
\end{align*}
therefore in what follows we focus on the latter estimator. Note that for each $i$,  
\begin{align*}
    \E\bigbrak{r(x_i,\pi^*(x_i))\bigsmile{1-\frac{\mathbf{1}\{\pi^*(x_i)=a_i \}}{p(a_i|x_i)}}} &= \E_{x_i}\left[r(x_i,\pi^*(x_i))\E_{a_i}\bigbrak{1-\frac{\mathbf{1}\{\pi^*(x_i)=a_i\}}{p(a_i|x_i)}\Big|x_i}\right]
\end{align*}
where the second term 
\begin{align*}
    \E_{a_i}\bigbrak{1-\frac{\mathbf{1}\{\pi^*(x_i)=a_i\}}{p(a_i|x_i)}\Big|x_i} &= 1 - \sum_{a'\in\A} p(a'|x_i) \frac{\mathbf{1}\{\pi^*(x_i)=a'\}}{p(a'|x_i)}=0.
\end{align*}
Similarly, we have for each $i$, 
\begin{align*}
    \E\bigbrak{r(x_i,a_{all}^*)\bigsmile{1-\frac{\mathbf{1}\{a_{all}^*=a_i \}}{p(a_i|x_i)}}} &= \E_{x_i}\left[r(x_i,a_{all}^*)\E_{a_i}\bigbrak{1-\frac{\mathbf{1}\{a_{all}^*=a_i \}}{p(a_i|x_i)}\Big|x_i}\right]=0.
\end{align*}
Then we have 
\begin{align*}
    \E[\hat{\psi}^{AIPW}_*] &= \frac{1}{n_k}\sum_{i=1}^{n_k}\E\bigbrak{\frac{\mathbf{1}\{\pi^*(x_i)=a_i\}}{p(a_i|x_i)}y_i- \frac{\mathbf{1}\{a_{all}^*=a_i\}}{p(a_i|x_i)} y_i} \\
    &= \E_X[r(X,\pi^*(X)) - r(X, a_{all}^*)] = V(\pi^*) - \max_a \sum_{x}p(x) r(x,a) = \psi.
\end{align*}
\end{proof}
We can also compute the variance $\Var(\hat{\psi}_*^{AIPW})$, the variance of the oracle estimator. We have by independence
\begin{align*}
    \Var(\hat{\psi}_*^{AIPW}) &= \Var\bigsmile{\frac{1}{n}\sum_{i=1}^n \Gamma_i^*} = \frac{1}{n}\sum_{i=1}^n \Var(\Gamma_i^*).
\end{align*}
The following proposition shows the asymptotic normality of the oracle estimator and characterizes the variance for each $\Gamma_i^*$. 
\begin{proposition}\label{prop:asymp_normal_oracle}
Under Assumption~\ref{assump:unconfoundedness}, we have 
\[\sqrt{n}(\hat{\psi}_*^{AIPW}-\psi)\Rightarrow N(0, \Var(\Gamma_1^*)) \] where for each $i$, 
\begin{align*}
    \Var(\Gamma_i^*)
    &= \Var\Big(r(x_i,\pi^*(x_i)) - r(x_i,a_{all}^*) \\
    &\qquad+ \frac{\mathbf{1}\{\pi^*(x_i)=a_i,a_{all}^*\ne \pi^*(x_i)\}}{p(a_i|x_i)}(y_i-r(x_i,\pi^*(x_i)))\\
    &\qquad- \frac{\mathbf{1}\{a_{all}^*=a_i,a_{all}^*\ne \pi^*(x_i)\}}{p(a_i|x_i)}(y_i-r(x_i,a_{all}^*)) \Big) \\
    &= \Var(r(x_i,\pi^*(x_i) - r(x_i,a_{all}^*))) \\
    &\qquad+ \E\bigbrak{\bigsmile{\frac{\mathbf{1}\{\pi^*(x_i)=a_i,a_{all}^*\ne \pi^*(x_i)\}}{p(a_i|x_i)}(y_i-r(x_i,\pi^*(x_i)))}^2}\\
    &\qquad+ \E\bigbrak{\bigsmile{\frac{\mathbf{1}\{a_{all}^*=a_i,a_{all}^*\ne \pi^*(x_i)\}}{p(a_i|x_i)}(y_i-r(x_i,a_{all}^*))}^2}. 
\end{align*}
\end{proposition}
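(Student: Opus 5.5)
The oracle estimator is an average of i.i.d.\ terms, so the plan is a classical central limit theorem followed by a direct variance computation. Since $\pi^*$, $a^*_{all}$, $r$ and $p$ are fixed (non-random, not learned from data), the summands $\Gamma_i^*$ in $\hat{\psi}_*^{AIPW}=\frac{1}{n}\sum_i\Gamma_i^*$ are i.i.d.\ functions of $w_i=(x_i,a_i,y_i)$. Proposition~\ref{prop:IPW_bias} already gives $\E[\Gamma_i^*]=\psi$. It then suffices to check that $\Var(\Gamma_1^*)<\infty$ and apply the Lindeberg--L\'evy CLT, which yields $\sqrt{n}(\hat{\psi}_*^{AIPW}-\psi)\Rightarrow N(0,\Var(\Gamma_1^*))$.

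Finiteness of the variance comes from the same bounds used in Lemma~\ref{lem:avg_h_and_avg_h2_with_oracle_arp_Op1}. Bounded rewards (Assumption~\ref{assump:bounded_r}) control the $r$ terms. Strong overlap $p(a|x)\ge\eta$ (Assumption~\ref{assump:strong_overlap}) bounds the inverse weights by $1/\eta$. Finite noise variance (Assumption~\ref{assump:finite_var}) then gives $\E[(\Gamma_1^*)^2]\le c(B^2+C/\eta^2)<\infty$. This shows the proposition really uses Assumptions~\ref{assump:strong_overlap}, \ref{assump:bounded_r} and \ref{assump:finite_var} in addition to unconfoundedness; I would note that these are in force.

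For the variance formula, I will work from the form \eqref{eqn:Gamma_i_star_1} with the indicators restricted to $a^*_{all}\neq\pi^*(x_i)$, and write $\Gamma_i^*=D_i+U_i-L_i$. Here $D_i=r(x_i,\pi^*(x_i))-r(x_i,a^*_{all})$, and $U_i$, $L_i$ are the two weighted residual terms. The computation then proceeds in three steps.
\begin{enumerate}
\item Using $y_i-r(x_i,a_i)=\epsilon_i$, each of $U_i,L_i$ is nonzero only when $a_i$ equals the corresponding action, so the residual is exactly $\epsilon_i$.
\item By unconfoundedness (Assumption~\ref{assump:unconfoundedness}), $\E[\epsilon_i\mid x_i,a_i]=0$. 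Hence $\E[U_i\mid x_i,a_i]=\E[L_i\mid x_i,a_i]=0$, which kills the cross terms $\Cov(D_i,U_i)$ and $\Cov(D_i,L_i)$, and also gives $\E[U_i]=\E[L_i]=0$.
\item The product $U_iL_i$ contains $\mathbf{1}\{\pi^*(x_i)=a_i\}\mathbf{1}\{a^*_{all}=a_i\}\mathbf{1}\{a^*_{all}\neq\pi^*(x_i)\}=0$, so $\E[U_iL_i]=0$. This is exactly where the restricted-indicator form is convenient.
\end{enumerate}
Together these give $\Var(\Gamma_i^*)=\Var(D_i)+\E[U_i^2]+\E[L_i^2]$, which is the stated expression.

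The main subtlety is step~3 together with confirming that the two forms \eqref{eqn:Gamma_i_star_1} and \eqref{eqn:Gamma_i_star_2} agree pointwise. When $\pi^*(x_i)=a^*_{all}$, the two weighted terms in \eqref{eqn:Gamma_i_star_2} cancel exactly, so dropping them changes nothing. Once this identity is recorded, the disjoint-support argument makes the variance decomposition immediate. Everything else is routine conditioning on $(x_i,a_i)$.
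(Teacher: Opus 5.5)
Your proposal is correct and follows essentially the same route as the paper: a CLT for the i.i.d.\ average $\frac{1}{n}\sum_i\Gamma_i^*$, then a variance decomposition in which the two weighted residual terms are conditionally mean zero and have zero cross term because $\mathbf{1}\{\pi^*(x_i)=a_i\}\mathbf{1}\{a^*_{all}=a_i\}\mathbf{1}\{a^*_{all}\neq\pi^*(x_i)\}\equiv 0$; the only difference is that the paper organizes this through the law of total variance conditioning on $x_i$, whereas you expand the covariances directly. Your explicit check that $\Var(\Gamma_1^*)<\infty$, using overlap, bounded rewards and finite noise variance, fills a step the paper leaves implicit; one small correction is that $\E[\epsilon_i\mid x_i,a_i]=0$ holds by the definition $r(x,a)=\E[y\mid x,a]$, not by unconfoundedness.
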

\begin{proof}
Since $\hat{\psi}_*^{AIPW}$ is a sum of $n$ i.i.d. terms, so by CLT we have the first part of the statement above. As for the variance, by law of total variance, 
\begin{align*}
    \Var(\Gamma_i^*) &= \E[\Var(\Gamma_i^* | x_i)] + \Var(\E[\Gamma_i^* | x_i]).
\end{align*}
Since $\Gamma_i^*$ is an unbiased estimator, we have $\E[\Gamma_i^* | x_i]=r(x_i,\pi^*(x_i)) - r(x_i,a_{all}^*)$. Then the second term $\Var(\E[\Gamma_i^* | x_i])=\Var(r(x_i,\pi^*(x_i)) - r(x_i,a_{all}^*))$. The first term becomes
{
\small
\begin{align*}
    &\E[\Var(\Gamma_i^* | x_i)]\\
    &= \E_{x_i}\left[\Var_{a_i,y_i}\left(r(x_i,\pi^*(x_i)) - r(x_i,a_{all}^*)+\frac{\mathbf{1}\{\pi^*(x_i)=a_i,a_{all}^*\ne \pi^*(x_i)\}}{p(a_i|x_i)}(y_i-r(x_i,a_i))\right.\right. \\
    &\qquad- \left.\left.\frac{\mathbf{1}\{a_{all}^*=a_i,a_{all}^*\ne \pi^*(x_i)\}}{p(a_i|x_i)}(y_i-r(x_i,a_i))\Big | x_i\right)\right]\\
    &= \E_{x_i}\left[\Var_{a_i,y_i}\left(\frac{\mathbf{1}\{\pi^*(x_i)=a_i,a_{all}^*\ne \pi^*(x_i)\}}{p(a_i|x_i)}(y_i-r(x_i,a_i))-\frac{\mathbf{1}\{a_{all}^*=a_i,a_{all}^*\ne \pi^*(x_i)\}}{p(a_i|x_i)}(y_i-r(x_i,a_i))\Big | x_i\right)\right]\tag{the first two terms are constants given $x_i$}\\
    &= \E_{x_i}\bigbrak{\E_{a_i,y_i}\left[\bigsmile{\frac{\mathbf{1}\{\pi^*(x_i)=a_i,a_{all}^*\ne \pi^*(x_i)\}}{p(a_i|x_i)}(y_i-r(x_i,a_i)) - \frac{\mathbf{1}\{a_{all}^*=a_i,a_{all}^*\ne \pi^*(x_i)\}}{p(a_i|x_i)}(y_i-r(x_i,a_i))\Big | x_i}^2\right]}\tag{the term has mean zero due to $\E[y_i|x_i,a_i]=r(x_i,a_i)$}\\
    &= \E_{x_i}\Big[\E_{a_i,y_i}\Big[\Big(\frac{\mathbf{1}\{\pi^*(x_i)=a_i,a_{all}^*\ne \pi^*(x_i)\}}{p(a_i|x_i)}(y_i-r(x_i,a_i))\Big)^2+ \Big(\frac{\mathbf{1}\{a_{all}^*=a_i,a_{all}^*\ne \pi^*(x_i)\}}{p(a_i|x_i)}(y_i-r(x_i,a_i))\Big)^2 \\
    &\qquad- 2\Big(\frac{\mathbf{1}\{\pi^*(x_i)=a_i,a_{all}^*\ne \pi^*(x_i)\}}{p(a_i|x_i)}(y_i-r(x_i,a_i))\Big)\Big(\frac{\mathbf{1}\{a_{all}^*=a_i,a_{all}^*\ne \pi^*(x_i)\}}{p(a_i|x_i)}(y_i-r(x_i,a_i))\Big) \Big | x_i\Big]\Big]\tag{algebra}\\
    &= \E\bigbrak{\bigsmile{\frac{\mathbf{1}\{\pi^*(x_i)=a_i,a_{all}^*\ne \pi^*(x_i)\}}{p(a_i|x_i)}(y_i-r(x_i,a_i))}^2}+ \E\bigbrak{\bigsmile{\frac{\mathbf{1}\{a_{all}^*=a_i,a_{all}^*\ne \pi^*(x_i)\}}{p(a_i|x_i)}(y_i-r(x_i,a_i))}^2}\\
    &\quad- 2\E_{x_i}\bigbrak{\E_{a_i,y_i}\bigbrak{\Big(\frac{\mathbf{1}\{\pi^*(x_i)=a_i,a_{all}^*\ne \pi^*(x_i)\}}{p(a_i|x_i)}(y_i-r(x_i,a_i))\Big)\Big(\frac{\mathbf{1}\{a_{all}^*=a_i,a_{all}^*\ne \pi^*(x_i)\}}{p(a_i|x_i)}(y_i-r(x_i,a_i))\Big) \Big| x_i}}.
\end{align*}
}
Note that the third term is equal to 
{
\small
\[- 2\E_{x_i}\bigbrak{\E_{a_i}\left[\frac{\mathbf{1}\{\pi^*(x_i)=a_i,a_{all}^*\ne \pi^*(x_i)\}\mathbf{1}\{a_{all}^*=a_i,a_{all}^*\ne \pi^*(x_i)\}}{p(a_i|x_i)^2}\Big| x_i\right]\E_{y_i}\bigbrak{\Big(y_i-r(x_i,\pi^*(x_i))\Big)\Big(y_i-r(x_i,a_{all}^*)\Big) \Big| x_i}}\]
}
by unconfoundedness (Assumption~\ref{assump:unconfoundedness}). Then, note that $\E_{a_i}[\mathbf{1}\{\pi^*(x_i)=a_i,a_{all}^*\ne \pi^*(x_i)\}\mathbf{1}\{a_{all}^*=a_i,a_{all}^*\ne \pi^*(x_i)\} | x_i]=0$, the third term is zero, so we have proved our statement. 
\end{proof}

\subsection{Asymptotic Normality for $\hat{\psi}^{AIPW}$ with estimated propensity score}
\label{sec:efficiency_estimated}
In what follows we show that 
\[\sqrt{n}(\hat{\psi}^{AIPW}-\hat{\psi}_*^{AIPW})\overset{p}{\to} 0\]
with estimated propensity score. In order to show this convergence, we need an additional assumption.  

\begin{assumption}[Low-regret policy learner]
For each $k$, $\E_x[r(x,\hat{\pi}^k(x))-r(x,\pi^*(x))]=o(n^{-1/2})$.\label{assump:consistent_regret}
\end{assumption}
%Assumption~\ref{ass:consistency} requires the policy learner is consistent, which can be satisfied with reasonable learners. Whereas,
Assumption~\ref{assump:consistent_regret} requires that the estimated policy and the estimated best arm converges in a super-parametric rate. Fortunately, these fast-rate conditions have been well-studied in a variety of literatures \cite{luedtke2020performance,zhao2012estimating,kitagawa2018should,qian2011performance,farahmand2011action,chambaz2017targeted}. In particular, 
% \cite{luedtke2020performance} gives a fast rate given that we use ERM learner for $\hat{\pi}$ and our data is sampled from a fixed distribution with certain conditions. Also, 
Assumption~\ref{assump:consistent_regret} can be achieved via a plug-in estimator under certain margin conditions \cite{luedtke2020performance,kitagawa2018should,qian2011performance}, for example, from \cite{qian2011performance},
\begin{enumerate}
    \item[(MA)] There exists $\alpha>0$ and $C<\infty$ such that for all $\epsilon>0$,  
    \[\mathbf{P}\left(\max _{a \in \mathcal{A}} r(X, a)-\max _{a \in \mathcal{A} \backslash a^*} r(X, a) \leq \epsilon\right) \leq C \epsilon^\alpha.\]
\end{enumerate}
Or, when action is binary, we assume the mass of the treatment effect at zero is controlled \cite{luedtke2020performance,kitagawa2018should}: 
\begin{enumerate}
    \item[(MA')] Define $\tau(x):=r(x,1)-r(x,0)$. There exists $\alpha>0$ and $C<\infty$ such that for all $\epsilon>0$,
    \[\mathbf{P}\left(|\tau(X)| \leq \epsilon\right) \leq C \epsilon^\alpha.\]
\end{enumerate}

\begin{lemma}\label{lem:oracle_convergence}
Under Assumptions \ref{assump:SUTVA}--\ref{assump:finite_var} and \ref{ass:consistency}--\ref{assump:consistent_regret}, we have \[\sqrt{n}(\hat{\psi}^{AIPW}-\hat{\psi}_*^{AIPW})\overset{p}{\to} 0.\]
\end{lemma}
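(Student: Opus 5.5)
Since $\hat{\psi}^{AIPW}=\frac{1}{K}\sum_k \hat{\psi}^{AIPW,k}$, $\hat{\psi}_*^{AIPW}=\frac{1}{K}\sum_k \frac{1}{n_k}\sum_{i\in\II_k}\Gamma_i^*$, and $K$ is fixed with $n_k=n/K$, it suffices to show for each fold $k$ that
\[
\sqrt{n_k}\,\frac{1}{n_k}\sum_{i\in\II_k}\bigl(h_{\hat{\pi}^k,\hat{a}^k,\hat{r}^k,\hat{p}^k}(w_i)-\Gamma_i^*\bigr)\overset{p}{\to}0 .
\]
We write $\Gamma_i^*=h_{\pi^*,a^*,r,p}(w_i)$ using the second form \eqref{eqn:Gamma_i_star_2}, and decompose the difference into three pieces:
\begin{align*}
&\bigl(h_{\hat{\pi}^k,\hat{a}^k,\hat{r}^k,\hat{p}^k}-h_{\hat{\pi}^k,\hat{a}^k,r,p}\bigr)
+\bigl(h_{\hat{\pi}^k,\hat{a}^k,r,p}-h_{\hat{\pi}^k,a^*,r,p}\bigr)
+\bigl(h_{\hat{\pi}^k,a^*,r,p}-h_{\pi^*,a^*,r,p}\bigr).
\end{align*}
Proposition~\ref{prop:avg_h_conv} (nuisance replacement, relying on Lemmas~\ref{lem:S1}, \ref{lem:S3'}, \ref{lem:m0_hata} and Assumption~\ref{assump:conv_nuisance_params}) already shows that the first piece, averaged over fold $k$, is $o_P(n_k^{-1/2})$. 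The remaining two pieces involve only oracle $r,p$ and differ in the plugged-in policy or arm.

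\textbf{Arm piece.} The second piece is nonzero only on the event $\{\hat{a}^k\neq a^*\}$. On that event I bound $\sqrt{n_k}$ times the fold average crudely by $\sqrt{n_k}\,\1\{\hat{a}^k\neq a^*\}\cdot O_P(1)$, using Lemma~\ref{lem:avg_h_and_avg_h2_with_oracle_arp_Op1} (the average of $|h|$ with oracle nuisances is $O_P(1)$). Assumption~\ref{assump:fast_best_arm_learner} gives $\E[\sqrt{n_k}\,\1\{\hat{a}^k\neq a^*\}]=\sqrt{n_k}\,o(n_k^{-1/2})\to0$, so Markov's inequality finishes this piece, exactly as in the tail of Lemma~\ref{lem:Tnk_conv}. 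Alternatively, one can simply note that $\P(\hat{a}^k\neq a^*)\to0$, so the piece is identically zero with probability tending to one.

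\textbf{Policy piece.} Writing $\Delta_i:=h_{\hat{\pi}^k,\cdot,r,p}(w_i)-h_{\pi^*,\cdot,r,p}(w_i)$, this term splits into a conditional mean part and a centered part:
\[
\Delta_i=\underbrace{r(x_i,\hat{\pi}^k(x_i))-r(x_i,\pi^*(x_i))}_{\text{mean part}}
+\underbrace{\Bigl(\tfrac{\1\{\hat{\pi}^k(x_i)=a_i\}-\1\{\pi^*(x_i)=a_i\}}{p(a_i|x_i)}\Bigr)\epsilon_i}_{\text{noise part}},
\]
where I used $y_i-r(x_i,a_i)=\epsilon_i$ on the indicator events. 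Conditional on $D_{-k}$, $\hat{\pi}^k$ is fixed.

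The noise part has mean zero, and its conditional variance is at most $\frac{2C}{\eta^2}\P(\hat{\pi}^k(X)\neq\pi^*(X))$ by Assumptions~\ref{assump:strong_overlap} and~\ref{assump:finite_var}. This tends to $0$ by Assumption~\ref{ass:consistency}, so Chebyshev's inequality plus dominated convergence (as in Lemma~\ref{lem:Tnk_conv}) gives that $\sqrt{n_k}$ times its average is $o_P(1)$.

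For the mean part, I center it. Its fold average is $\E_x[r(x,\hat{\pi}^k(x))-r(x,\pi^*(x))]$ plus a centered empirical-process term. The expectation is $o(n^{-1/2})$ by Assumption~\ref{assump:consistent_regret}. The centered term has variance at most $4C_1^2\,\P(\hat{\pi}^k\neq\pi^*)/n_k$ by Assumption~\ref{assump:bounded_r}, so it is also $o_P(n_k^{-1/2})$.

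\textbf{Main obstacle.} I expect the delicate step to be the regret term. Assumption~\ref{assump:consistent_regret} controls $V(\hat{\pi}^k)-V(\pi^*)$ conditionally on $D_{-k}$ only as a random quantity, and its sign is nonpositive. I will interpret it as holding in probability (or in expectation) and pass it through the conditioning argument with dominated convergence. Another subtlety is the convention $\1\{a^*\neq\pi^*(x)\}$ in $\Gamma^*_i$: where $\hat{\pi}^k(x)=\hat{a}^k$ the score is zero by construction, so the two forms of $\Gamma^*$ must be reconciled. This is handled by the equivalence \eqref{eqn:Gamma_i_star_1}$=$\eqref{eqn:Gamma_i_star_2}. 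Summing the three pieces over the $K$ folds yields the lemma.
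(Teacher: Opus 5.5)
Your proposal is correct and follows the paper's proof term for term, except in how it handles the arm.

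The paper works per fold. It splits the fold into the policy-value part $\hat m^k_{(1)}-\hat m^{k,*}_{(1)}$ (Lemma~\ref{lem:m1_k}) and the arm-value part $\hat m^k_{(0)}-\hat m^{k,*}_{(0)}$ (Lemma~\ref{lem:m0_k}), and writes the former as $\mathbf{S1}+\mathbf{S2}+\mathbf{S3'}+\mathbf{S3''}+\mathbf{S4}$. This matches your pieces as follows.
\begin{itemize}
\item $\mathbf{S1}+\mathbf{S3'}+\mathbf{S4}$ is exactly your nuisance-replacement piece at $\hat\pi^k$. You import it from Proposition~\ref{prop:avg_h_conv} rather than re-deriving it.
\item $\mathbf{S2}+\mathbf{S3''}$ is algebraically identical to your policy piece, only grouped differently. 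The paper makes $\mathbf{S2}$ mean-zero and places the regret $V(\hat\pi^k)-V(\pi^*)$ in the mean of $\mathbf{S3''}$. You place the regret in your ``mean part'' and keep the $\epsilon$-term mean-zero. Both then finish with Assumption~\ref{assump:consistent_regret} plus Chebyshev.
\end{itemize}

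The genuinely different step is the arm. The paper treats $\hat a^k$ exactly like $\hat\pi^k$: it derives $\E_x[r(x,\hat a^k)-r(x,a^*)]=o(n_k^{-1/2})$ from Assumption~\ref{assump:fast_best_arm_learner} and reruns the Chebyshev argument. Your alternative observation is simpler and more general. With oracle $r,p$, the arm-swap term vanishes identically on $\{\hat a^k=a^*\}$, so $\P(\hat a^k\neq a^*)\to 0$ alone gives $o_P(1)$. For this efficiency lemma, then, only consistency of the best-arm learner is needed. The $o(n_k^{-1/2})$ rate is really used only in the Type-I argument of Lemma~\ref{lem:Tnk_conv}, where one divides by a possibly degenerate $\hat\sigma_k$.

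Your explicit conditioning on $D_{-k}$ when invoking Assumption~\ref{assump:consistent_regret} is also more careful than the paper. The paper treats that regret, which depends on $D_{-k}$ and is therefore random, as a deterministic $o(n^{-1/2})$ sequence.

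One small bookkeeping point: Proposition~\ref{prop:avg_h_conv} is stated under Assumptions 1--10, which includes Assumption~\ref{assump:imp_obs}. That assumption is not among this lemma's hypotheses. Its proof never uses it, so your citation is harmless.
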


\begin{proof}
Note that we can write $\hat{\psi}_*^{AIPW}=\sum_{k=1}^6 \frac{|\mathcal{I}_k|}{n}\hat{\psi}^{AIPW,k}_*$ where 
\[\hat{\psi}^{AIPW,k}_*=\frac{1}{|\mathcal{I}_k|}\sum_{i\in \mathcal{I}_k} \Lambda_i.\]
Further, for each $k$, we can decompose $\hat{\psi}^{AIPW,k}$ as 
\begin{align}
    \hat{\psi}^{AIPW,k} &= \frac{1}{n_k}\sum_{i\in\II_k} \hat{r}^{k}(x_i,\hat{\pi}^k(x_i)) + \frac{\mathbf{1}\{\hat{\pi}^k(x_i)=a_i\}}{\hat{p}^k(a_i|x_i)}(y_i-\hat{r}^{k}(x_i,\hat{\pi}^k(x_i)))\\
    &\qquad- \hat{r}^{k}(x_i,\hat{a}^k)- \frac{\mathbf{1}\{\hat{a}^k=a_i \}}{\hat{p}^k(a_i|x_i)}(y_i-\hat{r}^{k}(x_i,\hat{a}^k))\\
    &=\frac{1}{n_k}\sum_{i\in\II_k}\hat{m}_{(1)}^{k}(w_i)-\hat{m}_{(0)}^{k}(w_i)
\end{align}
where for $w_i=(x_i,a_i,y_i)$, 
\begin{equation}\label{eqn:m1_k}
\hat{m}_{(1)}^k(w_i)=\hat{r}^k(x_i,\hat{\pi}^k(x_i)) + \frac{\mathbf{1}\{\hat{\pi}^k(x_i)=a_i\}}{\hat{p}^k(a_i|x_i)}(y_i-\hat{r}^k(x_i,\hat{\pi}^k(x_i))).
\end{equation}
Analogously we define 
\begin{equation}\label{eqn:m1_star}
\hat{m}_{(1)}^{k,*}(w_i)=r(x_i,\pi^*(x_i)) + \frac{\mathbf{1}\{\pi^*(x_i)=a_i\}}{p(a_i|x_i)}(y_i-r(x_i,\pi^*(x_i)))
\end{equation}
Recall the definition of $\hat{m}_{(0)}^k$ in equation~\eqref{eqn:m0_k} and the definition of $\hat{m}_{(0)}^{k,*}$ in equation~\eqref{eqn:m0_star}, it is sufficient to show for each $k$, $\sqrt{n_k}\frac{1}{n_k}\sum_{i\in\II_k}\left(\hat{m}_{(1)}^k(w_i)-\hat{m}_{(1)}^{k,*}(w_i)\right)\overset{p}{\to} 0$ and $\sqrt{n_k}\frac{1}{n_k}\sum_{i\in\II_k}\left(\hat{m}_{(0)}^k(w_i)-\hat{m}_{(0)}^{k,*}(w_i)\right)\overset{p}{\to} 0$, which we show in Lemmas~\ref{lem:m1_k} and \ref{lem:m0_k} respectively. Repeat the same argument for each $k$ gives the final result. 
\end{proof}
\begin{lemma}\label{lem:m1_k}
Define $\hat{m}_{(1)}^{k}$ as in equation~\eqref{eqn:m1_k} and $\hat{m}_{(1)}^{k,*}$ as in equation~\eqref{eqn:m1_star}, under Assumptions~\ref{assump:conv_nuisance_params},~\ref{assump:consistent},~\ref{assump:consistent_regret},~\ref{ass:consistency}, \ref{assump:bounded_r}, and \ref{assump:finite_var},  for each $k$, \[\sqrt{n_k}\frac{1}{n_k}\sum_{i\in\II_k}\left(\hat{m}_{(1)}^k(w_i)-\hat{m}_{(1)}^{k,*}(w_i)\right)\overset{p}{\to} 0.\]
\end{lemma}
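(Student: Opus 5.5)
We split the difference $\hat{m}_{(1)}^k(w_i)-\hat{m}_{(1)}^{k,*}(w_i)$ by inserting the intermediate quantity in which the learned policy $\hat{\pi}^k$ is kept but the nuisance parameters are oracle:
\[
\tilde{m}_{(1)}^{k}(w_i):=r(x_i,\hat{\pi}^k(x_i))+\frac{\mathbf{1}\{\hat{\pi}^k(x_i)=a_i\}}{p(a_i|x_i)}\bigl(y_i-r(x_i,\hat{\pi}^k(x_i))\bigr).
\]
This gives two pieces, $\hat{m}_{(1)}^k-\tilde{m}_{(1)}^{k}$ and $\tilde{m}_{(1)}^{k}-\hat{m}_{(1)}^{k,*}$. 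It is enough to show each has fold average $o_P(n_k^{-1/2})$. Throughout, we condition on $\mathcal{D}_{-k}$, so $\hat{\pi}^k,\hat{r}^k,\hat{p}^k$ are fixed and the $w_i$, $i\in\II_k$, are i.i.d. We then remove the conditioning with the dominated-convergence argument already used in Lemma~\ref{lem:Tnk_conv}.

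\textbf{The nuisance piece.} The first piece is exactly the first term \eqref{eqn:first_term} in the proof of Proposition~\ref{prop:avg_h_conv}. We reuse that decomposition into $\mathbf{S1}+\mathbf{S4}+\mathbf{S3'}$.
\begin{itemize}
\item $\mathbf{S4}$ is $o_P(n^{-1/2})$ by Assumption~\ref{assump:conv_nuisance_params}, after taking the max over $a$ to cover $a=\hat{\pi}^k(x_i)$.
\item $\mathbf{S1}$ has conditional mean zero because $\E[\mathbf{1}\{\pi(x)=a\}/p(a|x)\mid x]=1$. Its conditional variance is bounded by $\eta^{-2}$ times the average of $(\hat{r}^k-r)^2$, which is $o_P(1)$ by Assumption~\ref{assump:consistent}. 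Chebyshev then gives $\sqrt{n_k}\,\overline{\mathbf{S1}}\to_p 0$ (this is Lemma~\ref{lem:S1}).
\item $\mathbf{S3'}$ reduces to $\bigl(1/\hat{p}^k-1/p\bigr)\mathbf{1}\{\hat{\pi}^k(x_i)=a_i\}\,\epsilon_i$. It has mean zero and variance $\le C\cdot o(1)$ by Assumptions~\ref{assump:consistent} and~\ref{assump:finite_var}, exactly as for $\mathbf{S3_i}$ in Lemma~\ref{lem:m0_hata}.
\end{itemize}

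\textbf{The policy piece.} Write $\tilde{m}_{(1)}^{k}-\hat{m}_{(1)}^{k,*}=D_i+E_i$, where
\[
D_i=r(x_i,\hat{\pi}^k(x_i))-r(x_i,\pi^*(x_i)),\qquad
E_i=\Bigl(\frac{\mathbf{1}\{\hat{\pi}^k(x_i)=a_i\}}{p(a_i|x_i)}-\frac{\mathbf{1}\{\pi^*(x_i)=a_i\}}{p(a_i|x_i)}\Bigr)\epsilon_i'.
\]
Here $\epsilon_i'=y_i-r(x_i,a_i)$ on the relevant indicator events.

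For $D_i$, split into its conditional mean and a centred fluctuation:
\[
\sqrt{n_k}\,\overline{D}=\sqrt{n_k}\,\E_x[D]+\sqrt{n_k}\,(\overline{D}-\E_x[D]).
\]
The mean term is $o(1)$ by Assumption~\ref{assump:consistent_regret}, since $n_k=n/K$. The fluctuation has variance at most $\E[D^2]\le 4C_1^2\,\P(\hat{\pi}^k(X)\neq\pi^*(X))$, using Assumption~\ref{assump:bounded_r}, and this tends to $0$ by Assumption~\ref{ass:consistency}. Chebyshev then makes the fluctuation $o_P(1)$.

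For $E_i$, the term is nonzero only when $\hat{\pi}^k(x_i)\neq\pi^*(x_i)$. It has conditional mean zero given $x_i$ and $a_i$, and its second moment is at most $2\eta^{-2}C\,\P(\hat{\pi}^k(X)\neq\pi^*(X))\to 0$. Chebyshev again gives $\sqrt{n_k}\,\overline{E}\to_p 0$.

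\textbf{Main obstacle.} The delicate step is the mean of $D_i$. Consistency of $\hat{\pi}^k$ alone only makes the centred part vanish; the bias $\sqrt{n_k}\,\E_x[D]$ needs the super-root-$n$ regret rate of Assumption~\ref{assump:consistent_regret}. There are two further points to check:
\begin{itemize}
\item That assumption is stated unconditionally in expectation, while we need it conditionally on $\mathcal{D}_{-k}$. We handle this with Markov's inequality on $\sqrt{n_k}\,\E_x[D]\ge 0$, which is nonnegative up to sign by optimality of $\pi^*$ in $\Pi$.
\item Assumption~\ref{assump:bounded_r} gives only an upper bound on $r$. We will assume $|r|\le C_1$, as is implicitly done in Lemma~\ref{lem:avg_h_and_avg_h2_with_oracle_arp_Op1}.
\end{itemize}
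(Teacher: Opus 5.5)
Your proof is correct and follows essentially the paper's argument. Your nuisance piece is the paper's $\mathbf{S1}+\mathbf{S4}+\mathbf{S3'}$, and your $D_i+E_i$ is a regrouping of the paper's $\mathbf{S2}+\mathbf{S3''}$ (the paper keeps the regret bias $\E_x[D_i]$ in the mean of $\mathbf{S3''}$ and makes $\mathbf{S2}$ mean-zero, while you isolate the bias in $D_i$ and leave $E_i$ as pure noise), with the same regret-rate, policy-consistency and Chebyshev ingredients; your explicit conditioning on $\mathcal{D}_{-k}$, the Markov step on the nonnegative regret, and the two-sided bound $|r|\le C_1$ make precise points the paper leaves implicit.
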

\begin{proof}
We consider some fixed $k$. Note that 
\begin{align*}
    &\hat{m}_{(1)}^k(w_i)-\hat{m}_{(1)}^{k,*}(w_i)\\
    &= \hat{r}^k(x_i,\hat{\pi}^k(x_i)) + \frac{\mathbf{1}\{\hat{\pi}^k(x_i)=a_i\}}{\hat{p}^k(a_i|x_i)}(y_i-\hat{r}^k(x_i,\hat{\pi}^k(x_i)))\\
    &\qquad-r(x_i,\pi^*(x_i)) - \frac{\mathbf{1}\{\pi^*(x_i)=a_i\}}{p(a_i|x_i)}(y_i-r(x_i,\pi^*(x_i)))\\
    &=  \bigsmile{1-\frac{\mathbf{1}\{\hat{\pi}^k(x_i)=a_i\}}{p(a_i|x_i)}}(\hat{r}^k(x_i,\hat{\pi}^k(x_i))-r(x_i,\hat{\pi}^k(x_i))+\bigsmile{1-\frac{\mathbf{1}\{\hat{\pi}^k(x_i)=a_i\}}{p(a_i|x_i)}}r(x_i,\hat{\pi}^k(x_i))\\
    &\qquad+\frac{\mathbf{1}\{\hat{\pi}^k(x_i)=a_i\}}{p(a_i|x_i)}\hat{r}^k(x_i,\hat{\pi}^k(x_i))-\frac{\mathbf{1}\{\hat{\pi}^k(x_i)=a_i\}}{\hat{p}^k(a_i|x_i)}\hat{r}^k(x_i,\hat{\pi}^k(x_i))+\frac{\mathbf{1}\{\hat{\pi}^k(x_i)=a_i\}}{\hat{p}^k(a_i|x_i)}y_i\\
    &\qquad-r(x_i,\pi^*(x_i)) - \frac{\mathbf{1}\{\pi^*(x_i)=a_i\}}{p(a_i|x_i)}(y_i-r(x_i,\pi^*(x_i)))\\
    &= \underbrace{ \bigsmile{1-\frac{\mathbf{1}\{\hat{\pi}^k(x_i)=a_i\}}{p(a_i|x_i)}}(\hat{r}^k(x_i,\hat{\pi}^k(x_i))-r(x_i,\hat{\pi}^k(x_i))}_{\mathbf{S1_i}} \\
    &\qquad+\bigsmile{\frac{\mathbf{1}\{\hat{\pi}^k(x_i)=a_i\}}{p(a_i|x_i)}-\frac{\mathbf{1}\{\hat{\pi}^k(x_i)=a_i\}}{\hat{p}^k(a_i|x_i)}}\bigsmile{\hat{r}(x_i,\hat{\pi}^k(x_i))-r(x_i,\hat{\pi}^k(x_i))}\\
    &\qquad-\frac{\mathbf{1}\{\hat{\pi}^k(x_i)=a_i\}}{\hat{p}^k(a_i|x_i)}r(x_i,\hat{\pi}^k(x_i)))+r(x_i,\hat{\pi}^k(x_i))+\frac{\mathbf{1}\{\hat{\pi}^k(x_i)=a_i\}}{\hat{p}^k(a_i|x_i)}y_i\\
    &\qquad-r(x_i,\pi^*(x_i)) - \frac{\mathbf{1}\{\pi^*(x_i)=a_i\}}{p(a_i|x_i)}(y_i-r(x_i,\pi^*(x_i)))\\
    &= \mathbf{S1_i}+\underbrace{\bigsmile{\frac{\mathbf{1}\{\hat{\pi}^k(x_i)=a_i\}}{p(a_i|x_i)}-\frac{\mathbf{1}\{\hat{\pi}^k(x_i)=a_i\}}{\hat{p}^k(a_i|x_i)}}\bigsmile{\hat{r}(x_i,\hat{\pi}^k(x_i))-r(x_i,\hat{\pi}^k(x_i))}}_{\mathbf{S4_i}} \\
    &\qquad+ \underbrace{\bigsmile{1-\frac{\mathbf{1}\{\pi^*(x_i)=a_i\}}{p(a_i|x_i)}}(r(x_i,\hat{\pi}^k(x_i))-r(x_i,\pi^*(x_i)))}_{\mathbf{S2_i}}\\
    &\qquad+\frac{\mathbf{1}\{\pi^*(x_i)=a_i\}}{p(a_i|x_i)}r(x_i,\hat{\pi}^k(x_i))+\frac{\mathbf{1}\{\hat{\pi}^k(x_i)=a_i\}}{\hat{p}^k(a_i|x_i)}y_i\\
    &\qquad-\frac{\mathbf{1}\{\pi^*(x_i)=a_i\}}{p(a_i|x_i)}y_i-\frac{\mathbf{1}\{\hat{\pi}^k(x_i)=a_i\}}{\hat{p}^k(a_i|x_i)}r(x_i,\hat{\pi}^k(x_i))\\
    &=\mathbf{S1_i}+\mathbf{S4_i}+\mathbf{S2_i}+ \underbrace{\bigsmile{\frac{\mathbf{1}\{\hat{\pi}^k(x_i)=a_i\}}{\hat{p}^k(a_i|x_i)}-\frac{\mathbf{1}\{\pi^*(x_i)=a_i\}}{p(a_i|x_i)}}(y_i-r(x_i,\hat{\pi}^k(x_i)))}_{\mathbf{S3_i}}.
\end{align*}
We then show $\mathbf{S1_i}$, $\mathbf{S2_i}$, and $\mathbf{S3_i}$ separately in Lemmas~\ref{lem:S1}, \ref{lem:S2}, \ref{lem:S3}, respectively. For $\mathbf{S4_i}$, we have $\sqrt{n_k}\frac{1}{n_k}\sum_{i\in\II_k}\mathbf{S4_i}\overset{p}{\to}0$ by Assumption~\ref{assump:conv_nuisance_params}. 
Combining all four terms gives $\sqrt{n_k}(\hat{m}_{(1)}^k-\hat{m}_{(1)}^{k,*})\overset{p}{\to} 0$. 
\end{proof}

\begin{lemma}[Convergence of \textbf{S1}]\label{lem:S1}
For each $k$, we have by Assumption~\ref{assump:consistent},
\[\sqrt{n_k}\frac{1}{n_k}\sum_{i\in\II_k} \bigsmile{1-\frac{\mathbf{1}\{\hat{\pi}^k(x_i)=a_i\}}{p(a_i|x_i)}}(\hat{r}^k(x_i,\hat{\pi}^k(x_i))-r(x_i,\hat{\pi}^k(x_i))\overset{p}{\to} 0.\]
\end{lemma}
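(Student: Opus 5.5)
The plan is to condition on the out-of-fold data $D_{-k}$ and use the cross-fitting structure. Given $D_{-k}$, the objects $\hat{\pi}^k$ and $\hat{r}^k$ are fixed functions, and the fold-$k$ observations $w_i=(x_i,a_i,y_i)$, $i\in\II_k$, are i.i.d. For brevity write
\[
Z_i:=\bigsmile{1-\frac{\mathbf{1}\{\hat{\pi}^k(x_i)=a_i\}}{p(a_i|x_i)}}\Delta_i,\qquad \Delta_i:=\hat{r}^k(x_i,\hat{\pi}^k(x_i))-r(x_i,\hat{\pi}^k(x_i)).
\]
The statement concerns $\sqrt{n_k}\,\bar Z$, where $\bar Z=\frac{1}{n_k}\sum_{i\in\II_k}Z_i$. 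Because $\Delta_i$ depends on $x_i$ only, the first step is to compute the conditional mean. By iterated expectations over $a_i$ given $x_i$,
\[
\E\bigbrak{1-\frac{\mathbf{1}\{\hat{\pi}^k(x_i)=a_i\}}{p(a_i|x_i)}\Big| x_i, D_{-k}} = 1-\sum_{a'}p(a'|x_i)\frac{\mathbf{1}\{\hat{\pi}^k(x_i)=a'\}}{p(a'|x_i)} = 0.
\]
This is the same computation used in Proposition~\ref{prop:IPW_bias}. It gives $\E[Z_i\mid D_{-k}]=0$, so $\sqrt{n_k}\bar Z$ is a normalized sum of conditionally i.i.d. mean-zero terms.

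The second step is a conditional variance bound. Since the $Z_i$ are conditionally independent with mean zero,
\[
\Var\bigsmile{\sqrt{n_k}\bar Z\mid D_{-k}}=\E[Z_1^2\mid D_{-k}]\le \bigsmile{1+\tfrac{1}{\eta}}^2\,\E_x\bigbrak{\bigsmile{\hat{r}^k(x,\hat{\pi}^k(x))-r(x,\hat{\pi}^k(x))}^2\mid D_{-k}}.
\]
The constant comes from strong overlap (Assumption~\ref{assump:strong_overlap}), which gives $|1-\mathbf{1}\{\cdot\}/p|\le 1+1/\eta$. Bounding the pointwise error at the selected action by the sum over actions yields
\[
\E_x\bigbrak{(\hat r^k-r)^2(x,\hat\pi^k(x))}\le \sum_{a\in\A}\E_x\bigbrak{(\hat{r}^k(x,a)-r(x,a))^2}.
\]
This sum is finite because $\A$ is finite. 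Conditional Chebyshev then gives, for any $\epsilon>0$,
\[
\Pr\bigl(|\sqrt{n_k}\bar Z|>\epsilon\mid D_{-k}\bigr)\le \frac{(1+1/\eta)^2}{\epsilon^2}\sum_{a\in\A}\E_x\bigbrak{(\hat{r}^k(x,a)-r(x,a))^2}.
\]

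The third step, which I expect to be the main obstacle, is to turn this into an unconditional statement using Assumption~\ref{assump:consistent}. That assumption is stated in terms of the in-sample average $\frac{1}{n_k}\sum_i(\hat r^k(x_i,a)-r(x_i,a))^2$ over fold $k$, not the population $L_2$ error. I would bridge the two in one of two ways. The first option is to read the assumption as $L_2(p)$ consistency of $\hat r^k$; this is the usual cross-fitting convention. The second option is a cleaner route that avoids conversion entirely. I would run a conditional Chebyshev argument given $D_{-k}$ and the covariates $\{x_i\}_{i\in\II_k}$. Given these, the $Z_i$ remain independent and mean zero, since the mean-zero property holds conditionally on each $x_i$. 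Their conditional variance sum is then bounded by $(1+1/\eta)^2\sum_{a}\frac{1}{n_k}\sum_i(\hat r^k(x_i,a)-r(x_i,a))^2$, which is exactly the quantity controlled by the assumption and is $o_P(1)$. The final step is to pass from the conditional bound to an unconditional one. For this I would truncate: intersect with the event that the variance proxy is at most $\delta$, whose probability tends to one. Then bound $\Pr(|\sqrt{n_k}\bar Z|>\epsilon)\le C\delta/\epsilon^2+o(1)$ and let $\delta\to0$. This mirrors the dominated-convergence argument in Lemma~\ref{lem:Tnk_conv}. The second route is the one I would commit to, since it uses Assumption~\ref{assump:consistent} exactly as stated.
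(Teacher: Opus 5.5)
Your proof is correct, and it follows a slightly different route from the paper's. Both arguments rest on the same fact: the weight $1-\mathbf{1}\{\hat{\pi}^k(x)=a\}/p(a|x)$ has conditional mean zero given $x$ and the out-of-fold data. This removes the mean of each summand and every cross term.

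\textbf{The paper's route.} It works unconditionally. It shows the mean is exactly zero and bounds the variance of the fold average by $\frac{1}{n_k\eta^2}\mathbb{E}\bigl[(\hat r^k(x,\hat\pi^k(x))-r(x,\hat\pi^k(x)))^2\bigr]$. It then asserts this is $o(1/n_k)$ by Assumption~\ref{assump:consistent} and finishes with the generic Chebyshev Lemma~\ref{lem:sqrt_conv}. That assertion quietly upgrades the assumption in two ways:
\begin{itemize}
\item it moves from convergence in probability of an in-fold empirical average to convergence of an expectation, which needs uniform integrability (for example, bounded $\hat r^k$) that is never assumed;
\item it applies a per-action assumption directly at the data-dependent action $\hat\pi^k(x_i)$.
\end{itemize}
The paper also justifies its variance formula by ``independence'' of the summands. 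They are not independent unconditionally, since they share $\hat r^k$ and $\hat\pi^k$. They are only uncorrelated, by the conditional-mean-zero argument.

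\textbf{Your route.} You condition additionally on the fold-$k$ covariates, which makes the cross-term cancellation explicit. The variance proxy then becomes $V_n=C\sum_{a}\frac{1}{n_k}\sum_{i}(\hat r^k(x_i,a)-r(x_i,a))^2$. This is $o_P(1)$ by the assumption exactly as stated, and summing over the finite action set handles the data-dependent action. Your truncation step converts the conditional Chebyshev bound into an unconditional one with no moment convergence needed. Equivalently, you could apply bounded convergence to $\min\{1,V_n/\epsilon^2\}$.

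\textbf{Trade-off.} Your argument is a bit longer but fully rigorous under the stated assumptions. The paper's is shorter and reuses Lemma~\ref{lem:sqrt_conv}, and it is fine under the usual convention of $L_2$-consistent or bounded nuisance estimates.
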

\begin{proof}
Note that for each $i$, 
\begin{align}
    &\E\bigbrak{\bigsmile{1-\frac{\mathbf{1}\{\hat{\pi}^k(x_i)=a_i\}}{p(a_i|x_i)}}(\hat{r}^k(x_i,\hat{\pi}^k(x_i))-r(x_i,\hat{\pi}^k(x_i))} \\
    &= \E_{\{x_i\}_{i\in\II_k},\II_{-k}}\bigbrak{\E_{\{a_i\}_{i\in\II_k}}\bigbrak{\bigsmile{1-\frac{\mathbf{1}\{\hat{\pi}^k(x_i)=a_i\}}{p(a_i|x_i)}}(\hat{r}^k(x_i,\hat{\pi}^k(x_i))-r(x_i,\hat{\pi}^k(x_i))\Big| \II_{-k}, \{x_i\}}}\\
    &= \E_{\{x_i\}_{i\in\II_k},\II_{-k}}\bigbrak{(\hat{r}^k(x_i,\hat{\pi}^k(x_i))-r(x_i,\hat{\pi}^k(x_i))\E_{\{a_i\}_{i\in\II_k}}\bigbrak{\bigsmile{1-\frac{\mathbf{1}\{\hat{\pi}^k(x_i)=a_i\}}{p(a_i|x_i)}}\Big| \II_{-k}, \{x_i\}}}\\
    &= \E\bigbrak{(\hat{r}^k(x_i,\hat{\pi}^k(x_i))-r(x_i,\hat{\pi}^k(x_i)) \bigsmile{\sum_{a} p(a|x_i)\bigsmile{1-\frac{\mathbf{1}\{\hat{\pi}^k(x_i)=a\}}{p(a|x_i)}}}}=0
\end{align}
regardless of the reward model estimate. Then, the variance 
\begin{align}
    &\Var\bigbrak{\frac{1}{n_k}\sum_{i\in \II_k}\bigsmile{1-\frac{\mathbf{1}\{\hat{\pi}^k(x_i)=a_i\}}{p(a_i|x_i)}}(\hat{r}^k(x_i,\hat{\pi}^k(x_i))-r(x_i,\hat{\pi}^k(x_i))} \\
    &= \frac{1}{n_k^2}\sum_{i\in \II_k} \Var\bigbrak{\bigsmile{1-\frac{\mathbf{1}\{\hat{\pi}^k(x_i)=a_i\}}{p(a_i|x_i)}}(\hat{r}^k(x_i,\hat{\pi}^k(x_i))-r(x_i,\hat{\pi}^k(x_i))}\tag{since $\hat{r}^k$ and $\hat{\pi}^k$ are learned independently from $\II_k$}\\
     &= \frac{1}{n_k^2}\sum_{i\in \II_k} \E\bigbrak{\bigsmile{\bigsmile{1-\frac{\mathbf{1}\{\hat{\pi}^k(x_i)=a_i\}}{p(a_i|x_i)}}(\hat{r}^k(x_i,\hat{\pi}^k(x_i))-r(x_i,\hat{\pi}^k(x_i))}^2}\tag{because the expectation is zero as shown above}\\
    &= \frac{1}{n_k^2}\sum_{i\in \II_k}\E_{\{x_i\}_{i\in\II_k},\II_{-k}}\bigbrak{(\hat{r}^k(x_i,\hat{\pi}^k(x_i))-r(x_i,\hat{\pi}^k(x_i)))^2\E_{\{a_i\}_{i\in\II_k}}\bigbrak{\bigsmile{1-\frac{\mathbf{1}\{\hat{\pi}^k(x_i)=a_i\}}{p(a_i|x_i)}}^2\Big| \II_{-k}, \{x_i\}}}\\
    &\leq \frac{1}{n_k^2\eta^2}\sum_{i\in \II_k}\E\bigbrak{(\hat{r}^k(x_i,\hat{\pi}^k(x_i))-r(x_i,\hat{\pi}^k(x_i)))^2}=o(1/n_k)
\end{align}
by Assumption~\ref{assump:consistent}. Then, by Chebyshev inequality, i.e. Lemma~\ref{lem:sqrt_conv}, we get the statement above. 
% for any $\epsilon>0$,
% \begin{align*}
%     &\P\left(\frac{1}{\sqrt{n_k}}\sum_{i\in\II_k} \bigsmile{1-\frac{\mathbf{1}\{\hat{\pi}^k(x_i)=a_i\}}{p(a_i|x_i)}}(\hat{r}^k(x_i,\hat{\pi}^k(x_i))-r(x_i,\hat{\pi}^k(x_i))>\epsilon\right)\\
%     &\leq \frac{\Var\bigbrak{\frac{1}{\sqrt{n_k}}\sum_{i\in \II_k}\bigsmile{1-\frac{\mathbf{1}\{\hat{\pi}^k(x_i)=a_i\}}{p(a_i|x_i)}}(\hat{r}^k(x_i,\hat{\pi}^k(x_i))-r(x_i,\hat{\pi}^k(x_i))} }{\epsilon^2} \overset{p}{\to} 0
% \end{align*}
% since the variance is $o(1)$. 
\end{proof}
\begin{lemma}[Convergence of \textbf{S2}]\label{lem:S2}
For each $k$, we have by Assumption~\ref{ass:consistency},
\[\sqrt{n_k}\frac{1}{n_k}\sum_{i\in\II_k}\bigsmile{1-\frac{\mathbf{1}\{\pi^*(x_i)=a_i\}}{p(a_i|x_i)}}(r(x_i,\hat{\pi}^k(x_i))-r(x_i,\pi^*(x_i)))\overset{p}{\to} 0.\]
\end{lemma}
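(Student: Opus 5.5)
I would follow the same template as Lemma~\ref{lem:S1}: condition on the out-of-fold data $\II_{-k}$, show that each summand has conditional mean zero, bound the conditional variance by something that vanishes, and then conclude with Chebyshev (Lemma~\ref{lem:sqrt_conv}).

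Write $\Delta^k(x):=r(x,\hat{\pi}^k(x))-r(x,\pi^*(x))$ and
\[
\mathbf{S2_i}=\Bigl(1-\frac{\mathbf{1}\{\pi^*(x_i)=a_i\}}{p(a_i|x_i)}\Bigr)\Delta^k(x_i).
\]
Conditional on $\II_{-k}$, the function $\hat{\pi}^k$ is fixed. Hence $\Delta^k(x_i)$ is a deterministic function of $x_i$, and the summands $\mathbf{S2_i}$, $i\in\II_k$, are i.i.d.

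\textbf{Step 1: conditional mean zero.} Condition further on $x_i$ and use $\E_{a_i}\bigl[1-\mathbf{1}\{\pi^*(x_i)=a_i\}/p(a_i|x_i)\mid x_i\bigr]=1-\sum_a p(a|x_i)\mathbf{1}\{\pi^*(x_i)=a\}/p(a|x_i)=0$. This is exactly the computation in Lemma~\ref{lem:S1} and in Proposition~\ref{prop:IPW_bias}. It gives $\E[\mathbf{S2_i}\mid \II_{-k}]=0$.

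\textbf{Step 2: variance bound.} By independence across $i$ given $\II_{-k}$,
\[
\Var\Bigl(\tfrac{1}{\sqrt{n_k}}\textstyle\sum_{i\in\II_k}\mathbf{S2_i}\,\Big|\,\II_{-k}\Bigr)=\E[\mathbf{S2_1}^2\mid\II_{-k}]\le \Bigl(1+\tfrac{1}{\eta}\Bigr)^2\E_x\bigl[\Delta^k(x)^2\bigr],
\]
using strong overlap (Assumption~\ref{assump:strong_overlap}). By bounded rewards (Assumption~\ref{assump:bounded_r}), $|\Delta^k(x)|\le 2C_1\mathbf{1}\{\hat{\pi}^k(x)\neq\pi^*(x)\}$. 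The bound is therefore at most $4C_1^2(1+1/\eta)^2\,P(\hat{\pi}^k(X)\neq\pi^*(X)\mid\II_{-k})$, and this tends to $0$ by Assumption~\ref{ass:consistency}.

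\textbf{Step 3: conclude.} Apply Chebyshev conditionally on $\II_{-k}$. The conditional probability of the event $\{|\tfrac{1}{\sqrt{n_k}}\sum_{i\in\II_k}\mathbf{S2_i}|>\epsilon\}$ is bounded by $\min\bigl(1,\,c\,P(\hat{\pi}^k\neq\pi^*\mid\II_{-k})/\epsilon^2\bigr)$. This bound is itself bounded by $1$ and tends to $0$, either in probability or in expectation depending on how Assumption~\ref{ass:consistency} is read. Bounded convergence then gives unconditional convergence to $0$, in the same way as the dominated-convergence argument at the end of Lemma~\ref{lem:Tnk_conv}.

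\textbf{Main obstacle.} The delicate point is the interpretation of Assumption~\ref{ass:consistency}. It can be read as concerning the marginal probability $P(\hat{\pi}^k(X)\neq\pi^*(X))$ over both $X$ and the training data, or as concerning the random conditional misclassification rate. In either reading, $P(\hat{\pi}^k\neq\pi^*\mid\II_{-k})$ is a $[0,1]$-valued variable whose expectation tends to $0$, so it tends to $0$ in probability, and the bounded-convergence step goes through.

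Note that no rate is needed here. The $\sqrt{n_k}$ scaling is absorbed because $\mathbf{S2}$ is a centered sum whose standard deviation is only $O(1)\times\sqrt{\text{misclassification rate}}$. This is why Lemma~\ref{lem:S2} needs only Assumption~\ref{ass:consistency}, and the regret condition (Assumption~\ref{assump:consistent_regret}) will be needed elsewhere, for the non-centered bias term.
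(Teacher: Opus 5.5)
Your proposal is correct and is essentially the paper's argument. Like you, the paper shows the summands have mean zero via $\sum_a p(a|x)\mathbf{1}\{\pi^*(x)=a\}/p(a|x)=1$, bounds the variance of the average by $\frac{1}{n_k^2\eta^2}\sum_{i}\E[(r(x_i,\hat{\pi}^k(x_i))-r(x_i,\pi^*(x_i)))^2]=o(1/n_k)$ using Assumption~\ref{ass:consistency}, and concludes with Chebyshev (Lemma~\ref{lem:sqrt_conv}); the only difference is that you condition on the out-of-fold data and finish with bounded convergence, whereas the paper computes the unconditional mean and variance directly, which is slightly shorter when Assumption~\ref{ass:consistency} is read as a marginal probability over both $X$ and the training data.
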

\begin{proof}
Similar to the proof above, we also have 
\[\E\bigbrak{\frac{1}{\sqrt{n_k}}\sum_{i\in\II_k} \bigsmile{1-\frac{\mathbf{1}\{\pi^*(x_i)=a_i\}}{p(a_i|x_i)}}(r(x_i,\hat{\pi}^k(x_i))-r(x_i,\pi^*(x_i)))}=0\] and 
\begin{align}
    &\Var\bigbrak{\frac{1}{n_k}\sum_{i\in\II_k} \bigsmile{1-\frac{\mathbf{1}\{\pi^*(x_i)=a_i\}}{p(a_i|x_i)}}(r(x_i,\hat{\pi}^k(x_i))-r(x_i,\pi^*(x_i)))} \\
    &= \frac{1}{n_k^2}\sum_{i\in \II_k} \E\bigbrak{\bigsmile{\bigsmile{1-\frac{\mathbf{1}\{\pi^*(x_i)=a_i\}}{p(a_i|x_i)}}(r(x_i,\hat{\pi}^k(x_i))-r(x_i,\pi^*(x_i)))}^2}\\
    &\leq \frac{1}{n_k^2\eta^2}\sum_{i\in \II_k}\E\bigbrak{(r(x_i,\hat{\pi}^k(x_i))-r(x_i,\pi^*(x_i)))^2}=o(1/n_k)
\end{align}
by Assumption~\ref{ass:consistency}. Then by Lemma~\ref{lem:sqrt_conv} we get the statement. 
\end{proof}
\begin{lemma}[Convergence of \textbf{S3}]\label{lem:S3}
For each $k$, we have by Assumption~\ref{assump:consistent_regret},~\ref{ass:consistency}, \ref{assump:bounded_r},~\ref{assump:conv_nuisance_params} and \ref{assump:finite_var}, 
\[\sqrt{n_k}\frac{1}{n_k}\sum_{i\in \II_k}\bigsmile{\frac{\mathbf{1}\{\hat{\pi}^k(x_i)=a_i\}}{\hat{p}^k(a_i|x_i)}-\frac{\mathbf{1}\{\pi^*(x_i)=a_i\}}{p(a_i|x_i)}}(y_i-r(x_i,\hat{\pi}^k(x_i)))\overset{p}{\to}0.\]
\end{lemma}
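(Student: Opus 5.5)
The plan is to split the \textbf{S3} summand into a propensity-error piece and a policy-mismatch piece, and handle each with a conditional mean-zero plus variance argument. Write $y_i=r(x_i,a_i)+\epsilon_i$. Both indicators force $a_i$ to equal the indicated action, so $y_i-r(x_i,\hat\pi^k(x_i))$ simplifies on each piece. Adding and subtracting $\mathbf{1}\{\hat\pi^k(x_i)=a_i\}/p(a_i|x_i)$ gives
\begin{align*}
\mathbf{S3_i}&=\underbrace{\mathbf{1}\{\hat\pi^k(x_i)=a_i\}\Bigl(\frac{1}{\hat p^k(a_i|x_i)}-\frac{1}{p(a_i|x_i)}\Bigr)\epsilon_i}_{\mathbf{T1_i}}
+\underbrace{\frac{\mathbf{1}\{\hat\pi^k(x_i)=a_i\}-\mathbf{1}\{\pi^*(x_i)=a_i\}}{p(a_i|x_i)}\bigl(y_i-r(x_i,\hat\pi^k(x_i))\bigr)}_{\mathbf{T2_i}} .
\end{align*}
In $\mathbf{T1_i}$ the factor $y_i-r(x_i,\hat\pi^k(x_i))$ equals $\epsilon_i$ on the event $\hat\pi^k(x_i)=a_i$. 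The term $\mathbf{T1_i}$ is handled exactly as the $\mathbf{S3_i}$ term in the proof of Lemma~\ref{lem:m0_hata}, with $\hat a^k$ replaced by $\hat\pi^k(x_i)$. Condition on $\mathcal D_{-k}$ and on the $x_i$. Because $\hat p^k$ and $\hat\pi^k$ are learned off-fold and $\E[\epsilon_i|x_i,a_i]=0$, the summands are conditionally mean zero and independent. The variance of $\sqrt{n_k}$ times their average is then bounded by $C\,\E\bigl[(1/\hat p^k-1/p)^2\bigr]$ using Assumption~\ref{assump:finite_var}. This bound is $o(1)$ by Assumption~\ref{assump:consistent}, and Chebyshev (Lemma~\ref{lem:sqrt_conv}) gives $o_P(1)$.

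For $\mathbf{T2_i}$, the numerator vanishes unless $\hat\pi^k(x_i)\neq\pi^*(x_i)$. On that event, further split $y_i-r(x_i,\hat\pi^k(x_i))=\epsilon_i+\bigl(r(x_i,a_i)-r(x_i,\hat\pi^k(x_i))\bigr)$. The $\epsilon_i$ part is again conditionally mean zero given $(\mathcal D_{-k},x_i,a_i)$. Its variance after $\sqrt{n_k}$ scaling is at most $\frac{2C}{\eta^2}P\bigl(\hat\pi^k(X)\neq\pi^*(X)\bigr)$, which tends to $0$ by Assumption~\ref{ass:consistency}. Chebyshev plus dominated convergence over $\mathcal D_{-k}$, as at the end of Lemma~\ref{lem:Tnk_conv}, then gives $o_P(1)$. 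The remaining part is nonzero only when $a_i=\pi^*(x_i)\neq\hat\pi^k(x_i)$, where it equals
\[
\frac{\mathbf{1}\{\pi^*(x_i)=a_i\}}{p(a_i|x_i)}\bigl(r(x_i,\pi^*(x_i))-r(x_i,\hat\pi^k(x_i))\bigr)
\]
up to sign. This is not mean zero, so I would center it. Its conditional mean given $\mathcal D_{-k}$ is exactly $\E_x\bigl[r(x,\pi^*(x))-r(x,\hat\pi^k(x))\bigr]$, which is the regret of $\hat\pi^k$. The centered fluctuation has variance of order $\frac{4C_1^2}{\eta^2}P(\hat\pi^k\neq\pi^*)/n_k=o(1/n_k)$ by Assumptions~\ref{assump:bounded_r} and~\ref{ass:consistency}, so after $\sqrt{n_k}$ scaling it is $o_P(1)$.

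The main obstacle is the mean of this last piece. After scaling it is $\sqrt{n_k}$ times the regret, so it is precisely what Assumption~\ref{assump:consistent_regret} must control, and it is the only place the fast-rate policy-learning assumption enters. I would invoke that assumption to conclude that $\sqrt{n_k}\,\E_x[r(x,\pi^*(x))-r(x,\hat\pi^k(x))]\to0$, treating the $o(n^{-1/2})$ bound as holding in expectation over $\mathcal D_{-k}$ (or in probability, with Markov). Note that Assumption~\ref{assump:conv_nuisance_params} is not actually needed for this piece. The subtlety I expect is justifying that the regret rate passes through the conditioning on $\mathcal D_{-k}$. Summing the three $o_P(1)$ contributions then proves the lemma.
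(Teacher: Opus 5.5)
Your proposal is correct and follows essentially the paper's route: the paper likewise splits $\mathbf{S3}$ into the propensity-error piece $\mathbf{S3'}$ (your $\mathbf{T1}$, handled by a mean-zero plus Chebyshev argument) and the policy-mismatch piece $\mathbf{S3''}$ (your $\mathbf{T2}$). For $\mathbf{S3''}$ it computes the mean to be exactly minus the regret, which Assumption~\ref{assump:consistent_regret} makes $o(n_k^{-1/2})$, and bounds the variance by a multiple of $P(\hat\pi^k(X)\neq\pi^*(X))/n_k$ via Assumption~\ref{ass:consistency}, before invoking Lemma~\ref{lem:sqrt_conv}. Your further separation of $\mathbf{T2}$ into a noise part and a centered regret part, your conditioning on $\mathcal{D}_{-k}$ with dominated convergence, and your citation of Assumption~\ref{assump:consistent} (the paper cites Assumption~\ref{assump:conv_nuisance_params}) for the $\mathbf{T1}$ variance bound are all slightly more careful than the paper's version.
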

\begin{proof}
Note that for each $i$, 
\begin{align*}
    &\bigsmile{\frac{\mathbf{1}\{\hat{\pi}^k(x_i)=a_i\}}{\hat{p}^k(a_i|x_i)}-\frac{\mathbf{1}\{\pi^*(x_i)=a_i\}}{p(a_i|x_i)}}(y_i-r(x_i,\hat{\pi}^k(x_i)))\\
    &=\underbrace{\bigsmile{\frac{\mathbf{1}\{\hat{\pi}^k(x_i)=a_i\}}{\hat{p}^k(a_i|x_i)}-\frac{\mathbf{1}\{\hat{\pi}^k(x_i)=a_i\}}{p(a_i|x_i)}}(y_i-r(x_i,\hat{\pi}^k(x_i)))}_{\mathbf{S3'_i}}\\
    &\qquad+\underbrace{\bigsmile{\frac{\mathbf{1}\{\hat{\pi}^k(x_i)=a_i\}}{p(a_i|x_i)}-\frac{\mathbf{1}\{\pi^*(x_i)=a_i\}}{p(a_i|x_i)}}(y_i-r(x_i,\hat{\pi}^k(x_i)))}_{\mathbf{S3''_i}}.
\end{align*}
We prove these two terms separately in Lemmas~\ref{lem:S3'} and \ref{lem:S3''} below. Combining both lemmas gives the statement. 
\end{proof}

\begin{lemma}[Convergence of $\mathbf{S3'}$]\label{lem:S3'}
For each $k$, we have by Assumption~\ref{assump:conv_nuisance_params} and~\ref{assump:finite_var}
\[\sqrt{n_k}\frac{1}{n_k}\sum_{i\in\II_k}\bigsmile{\frac{\mathbf{1}\{\hat{\pi}^k(x_i)=a_i\}}{\hat{p}^k(a_i|x_i)}-\frac{\mathbf{1}\{\hat{\pi}^k(x_i)=a_i\}}{p(a_i|x_i)}}(y_i-r(x_i,\hat{\pi}^k(x_i)))\overset{p}{\to}0.\]
\end{lemma}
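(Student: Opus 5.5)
We prove Lemma~\ref{lem:S3'} with the same mean-zero-plus-Chebyshev template used for Lemma~\ref{lem:S1} and in the $\mathbf{S3_i}$ part of Lemma~\ref{lem:m0_hata}. Write $y_i=r(x_i,a_i)+\epsilon_i$. On the event $\hat{\pi}^k(x_i)=a_i$ we have $r(x_i,a_i)=r(x_i,\hat{\pi}^k(x_i))$, so each summand collapses to
\[
\mathbf{S3'_i}=\mathbf{1}\{\hat{\pi}^k(x_i)=a_i\}\bigsmile{\frac{1}{\hat{p}^k(a_i|x_i)}-\frac{1}{p(a_i|x_i)}}\epsilon_i .
\]
The systematic reward-difference part disappears, leaving only a noise term multiplied by a propensity error.

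Next we condition on the out-of-fold data $\mathcal{D}_{-k}$. This makes $\hat{\pi}^k$ and $\hat{p}^k$ fixed functions, and the summands $\mathbf{S3'_i}$, $i\in\II_k$, become i.i.d. By unconfoundedness (Assumption~\ref{assump:unconfoundedness}) and $\E[\epsilon_i|x_i,a_i]=0$, each summand has conditional mean zero given $(x_i,a_i)$ and $\mathcal{D}_{-k}$, so it is mean zero. The summands are also uncorrelated across $i$. Hence
\[
\Var\Bigl(\tfrac{1}{\sqrt{n_k}}\textstyle\sum_{i\in\II_k}\mathbf{S3'_i}\,\Big|\,\mathcal{D}_{-k}\Bigr)=\E\Bigl[\mathbf{1}\{\hat{\pi}^k(x)=a\}\bigsmile{\tfrac{1}{\hat{p}^k(a|x)}-\tfrac{1}{p(a|x)}}^2\epsilon^2\,\Big|\,\mathcal{D}_{-k}\Bigr]\le C\,\E_x\Bigl[\max_a\bigsmile{\tfrac{1}{\hat{p}^k(a|x)}-\tfrac{1}{p(a|x)}}^2\Big|\,\mathcal{D}_{-k}\Bigr].
\]
The inequality uses the finite conditional noise variance from Assumption~\ref{assump:finite_var}. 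Dropping the indicator and taking the maximum over the finitely many actions is harmless.

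The remaining task is to show that this right-hand side is $o_P(1)$, and this is the main obstacle. Assumption~\ref{assump:consistent} controls only the empirical average $\frac{1}{n_k}\sum_{i}(1/\hat{p}^k-1/p)^2$ over fold points, not the population $L_2$ norm. We would handle this by applying the conditional law of large numbers. Given $\mathcal{D}_{-k}$, the empirical average is an i.i.d. mean of the function $(1/\hat{p}^k-1/p)^2$, which is bounded if, as in the proof of Lemma~\ref{lem:Tnk_conv}, we restrict to $\hat{p}^k$ with $1/\hat{p}^k\le c$ (or clip $\hat{p}^k$ at $\eta$, using Assumption~\ref{assump:strong_overlap}). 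Its population counterpart therefore differs from the empirical average by $O_P(n_k^{-1/2})$, and so the population $L_2$ error is also $o_P(1)$. Alternatively one could simply interpret Assumption~\ref{assump:consistent} in $L_2(p)$, as is already done implicitly in Lemma~\ref{lem:m0_hata}.

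With the conditional variance equal to $V_n=o_P(1)$, conditional Chebyshev gives
\[
\P\Bigl(\bigl|\tfrac{1}{\sqrt{n_k}}\textstyle\sum_{i\in\II_k}\mathbf{S3'_i}\bigr|>\epsilon\,\Big|\,\mathcal{D}_{-k}\Bigr)\le\min(1,V_n/\epsilon^2).
\]
The right-hand side is bounded by one and tends to zero in probability, so bounded convergence transfers the result to the unconditional statement. This is the same argument as in the end of the proof of Lemma~\ref{lem:Tnk_conv}, or equivalently Lemma~\ref{lem:sqrt_conv}. It gives $\sqrt{n_k}\frac{1}{n_k}\sum_{i\in\II_k}\mathbf{S3'_i}\overset{p}{\to}0$.

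Assumption~\ref{assump:conv_nuisance_params} is not actually needed for this term, because the product-rate condition enters only through $\mathbf{S4}$.
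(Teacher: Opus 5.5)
Your proof is correct and follows the paper's skeleton. Collapse each summand to $\mathbf{1}\{\hat\pi^k(x_i)=a_i\}\bigl(1/\hat p^k(a_i|x_i)-1/p(a_i|x_i)\bigr)\epsilon_i$, note that it is mean zero, bound its second moment by $C$ times the squared inverse-propensity error via Assumption~\ref{assump:finite_var} (you and the paper both read this as a bound on $\E[\epsilon^2\mid x,a]$), and apply Chebyshev. Where you differ is in justifying the last step, and your version is the sounder one.

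The paper argues unconditionally through Lemma~\ref{lem:sqrt_conv}. That needs $\max_a\E_x[(1/\hat p^k(a|x)-1/p(a|x))^2]=o(1)$ as a deterministic statement, which the paper credits to Assumption~\ref{assump:conv_nuisance_params}. But that product-rate condition says nothing about the propensity error alone: it holds trivially whenever $\hat r^k=r$, however bad $\hat p^k$ is.

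You instead condition on $\mathcal{D}_{-k}$. You get the population error as $o_P(1)$ from Assumption~\ref{assump:consistent}, via a conditional law of large numbers under bounded inverse propensities (the same boundedness the paper imposes in Lemma~\ref{lem:Tnk_conv}). You then close with conditional Chebyshev plus bounded convergence. So only convergence in probability is ever required, which is what the consistency assumption actually provides.

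Your remark that Assumption~\ref{assump:conv_nuisance_params} is unnecessary is therefore right; it is in fact insufficient where the paper invokes it. The analogous $\mathbf{S3_i}$ term in Lemma~\ref{lem:m0_hata} is handled with Assumption~\ref{assump:consistent}, consistent with your approach.

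A minor further point: your bound by $\E_x[\max_a(\cdot)^2]$ is valid. The paper's $\max_a\E_x[(\cdot)^2]$ can undercount by a factor of up to $|\A|$, which is harmless only because $\A$ is finite.
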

\begin{proof}
Since $y_i=r(x_i,a_i)+\epsilon_i$ for each $i$, 
\begin{align*}
    &\bigsmile{\frac{\mathbf{1}\{\hat{\pi}^k(x_i)=a_i\}}{\hat{p}^k(a_i|x_i)}-\frac{\mathbf{1}\{\hat{\pi}^k(x_i)=a_i\}}{p(a_i|x_i)}}(y_i-r(x_i,\hat{\pi}^k(x_i)))\\
    &= \bigsmile{\frac{\mathbf{1}\{\hat{\pi}^k(x_i)=a_i\}}{\hat{p}^k(a_i|x_i)}-\frac{\mathbf{1}\{\hat{\pi}^k(x_i)=a_i\}}{p(a_i|x_i)}}(r(x_i,a_i)+\epsilon_i-r(x_i,\hat{\pi}^k(x_i)))\\
    &= \mathbf{1}\{\hat{\pi}^k(x_i)=a_i\}\bigsmile{\frac{1}{\hat{p}^k(\hat{\pi}^k(x_i)|x_i)}-\frac{1}{p(\hat{\pi}^k(x_i)|x_i)}}(r(x_i,\hat{\pi}^k(x_i))+\epsilon_i-r(x_i,\hat{\pi}^k(x_i)))\tag{it is only nonzero when $\hat{\pi}^k(x_i)=a_i$}\\
    &= \mathbf{1}\{\hat{\pi}^k(x_i)=a_i\}\bigsmile{\frac{1}{\hat{p}^k(\hat{\pi}^k(x_i)|x_i)}-\frac{1}{p(\hat{\pi}^k(x_i)|x_i)}}\epsilon_i.
\end{align*}
Then $\E[\mathbf{S3'_i}]=0$ since $\epsilon_i$ is mean-zero and independent of $x_i,a_i$. Also, 
\begin{align*}
    \Var\left(\frac{1}{n_k}\sum_{i\in\II_k}\mathbf{S3'_i}\right) &= \frac{1}{n_k}\E\bigbrak{\bigsmile{\mathbf{1}\{\hat{\pi}^k(x_i)=a_i\}\bigsmile{\frac{1}{\hat{p}^k(\hat{\pi}^k(x_i)|x_i)}-\frac{1}{p(\hat{\pi}^k(x_i)|x_i)}}\epsilon_i}^2}\\
    &= \frac{1}{n_k}\E\bigbrak{\mathbf{1}\{\hat{\pi}^k(x_i)=a_i\}\bigsmile{\frac{1}{\hat{p}^k(\hat{\pi}^k(x_i)|x_i)}-\frac{1}{p(\hat{\pi}^k(x_i)|x_i)}}^2\epsilon_i^2}\\
    &\leq \frac{C}{n_k}\E_{x_i}\bigbrak{\bigsmile{\frac{1}{\hat{p}^k(\hat{\pi}^k(x_i)|x_i)}-\frac{1}{p(\hat{\pi}^k(x_i)|x_i)}}^2}\tag{by Assumption~\ref{assump:finite_var}}\\
    &\leq\frac{C}{n_k} \max_a\E_{x_i}\bigbrak{\bigsmile{\frac{1}{\hat{p}^k(a|x_i)}-\frac{1}{p(a|x_i)}}^2}
\end{align*}
which is $o(1/n_k)$ by Assumption~\ref{assump:conv_nuisance_params}. Then, by Chebyshev inequality, i.e. Lemma~\ref{lem:sqrt_conv}, we get the statement above. 
\end{proof}

\begin{lemma}[Convergence of $\mathbf{S3''}$]\label{lem:S3''}
For each $k$, we have by Assumption~\ref{assump:consistent_regret},~\ref{ass:consistency}, \ref{assump:bounded_r}, and \ref{assump:finite_var}, 
\[\sqrt{n_k}\frac{1}{n_k}\sum_{i\in\II_k}\bigsmile{\frac{\mathbf{1}\{\hat{\pi}^k(x_i)=a_i\}}{p(a_i|x_i)}-\frac{\mathbf{1}\{\pi^*(x_i)=a_i\}}{p(a_i|x_i)}}(y_i-r(x_i,\hat{\pi}^k(x_i)))\overset{p}{\to}0.\]
\end{lemma}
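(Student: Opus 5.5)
The term $\mathbf{S3''_i}$ is nonzero only when $a_i\in\{\hat{\pi}^k(x_i),\pi^*(x_i)\}$ and $\hat{\pi}^k(x_i)\neq\pi^*(x_i)$. The plan is to split it into a mean-zero noise part and a bias part, and treat each as in Lemmas~\ref{lem:S1}--\ref{lem:S3'}. Conditioning on $\II_{-k}$ makes $\hat{\pi}^k$ fixed, so the summands are i.i.d. Substituting $y_i=r(x_i,a_i)+\epsilon_i$ gives
\[
\mathbf{S3''_i}=\underbrace{\frac{\mathbf{1}\{\hat{\pi}^k(x_i)=a_i\}-\mathbf{1}\{\pi^*(x_i)=a_i\}}{p(a_i|x_i)}\,\epsilon_i}_{\mathbf{N_i}}+\underbrace{\frac{\mathbf{1}\{\hat{\pi}^k(x_i)=a_i\}-\mathbf{1}\{\pi^*(x_i)=a_i\}}{p(a_i|x_i)}\bigl(r(x_i,a_i)-r(x_i,\hat{\pi}^k(x_i))\bigr)}_{\mathbf{B_i}}.
\]
In $\mathbf{B_i}$, the part carrying $\mathbf{1}\{\hat{\pi}^k(x_i)=a_i\}$ vanishes because $r(x_i,a_i)=r(x_i,\hat{\pi}^k(x_i))$ there. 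Hence
\[
\mathbf{B_i}=-\frac{\mathbf{1}\{\pi^*(x_i)=a_i\}}{p(a_i|x_i)}\bigl(r(x_i,\pi^*(x_i))-r(x_i,\hat{\pi}^k(x_i))\bigr),
\]
which is still zero unless $\hat{\pi}^k(x_i)\ne\pi^*(x_i)$.

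For the noise part, $\E[\mathbf{N_i}\mid x_i,a_i,\II_{-k}]=0$ by unconfoundedness and the mean-zero noise. The variance of $\frac{1}{\sqrt{n_k}}\sum_i\mathbf{N_i}$ is at most $\frac{2C}{\eta^2}\,\P(\hat{\pi}^k(X)\ne\pi^*(X))$, using Assumptions~\ref{assump:strong_overlap} and \ref{assump:finite_var}. This probability tends to $0$ by Assumption~\ref{ass:consistency}. Chebyshev (Lemma~\ref{lem:sqrt_conv}) then gives $\frac{1}{\sqrt{n_k}}\sum_i\mathbf{N_i}\overset{p}{\to}0$ conditionally on $\II_{-k}$, and dominated convergence of the conditional probabilities, as in Lemma~\ref{lem:Tnk_conv}, removes the conditioning.

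For the bias part, I center $\mathbf{B_i}$. Integrating over $a_i\mid x_i$ gives
\[
\E[\mathbf{B_i}\mid x_i,\II_{-k}]=-\bigl(r(x_i,\pi^*(x_i))-r(x_i,\hat{\pi}^k(x_i))\bigr).
\]
Write $\frac{1}{\sqrt{n_k}}\sum_i\mathbf{B_i}$ as a centered sum plus $-\sqrt{n_k}\,\E_x[r(x,\pi^*(x))-r(x,\hat{\pi}^k(x))]$. The centered sum has variance at most $\frac{4C_1^2}{\eta^2}\P(\hat{\pi}^k(X)\neq\pi^*(X))\to0$ by Assumptions~\ref{assump:bounded_r} and \ref{ass:consistency}, so it is $o_P(1)$ by Chebyshev. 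The remaining deterministic (given $\II_{-k}$) term is $\sqrt{n_k}$ times the regret of $\hat{\pi}^k$, which is $o(1)$ by Assumption~\ref{assump:consistent_regret} since $n_k=n/K$.

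The main obstacle is this bias term. It is the only place where the super-root-$n$ regret rate is genuinely needed: consistency alone controls the fluctuations but not the $\sqrt{n_k}$-inflated mean shift. One also has to check that Assumption~\ref{assump:consistent_regret} is stated in expectation over the training folds, or else use a version holding in probability and a conditional-then-unconditional argument as in Lemma~\ref{lem:Tnk_conv}. Summing the two parts gives the claim.
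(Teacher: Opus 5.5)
Your proposal is correct and follows essentially the paper's route. The mean of the summand is $-\E_x[r(x,\pi^*(x))-r(x,\hat{\pi}^k(x))]$, which Assumption~\ref{assump:consistent_regret} makes $o(n_k^{-1/2})$. The second moment is bounded by a constant times $\P(\hat{\pi}^k(X)\neq\pi^*(X))\to 0$, using overlap, bounded rewards and finite noise variance, and Chebyshev (Lemma~\ref{lem:sqrt_conv}) finishes. Your separate treatment of the noise and bias parts, with explicit conditioning on $\II_{-k}$, is only an organizational difference; it is somewhat more explicit about the training-fold dependence than the paper's direct mean/variance computation.
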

\begin{proof}
Similar to the proof above, for each $i$, 
\begin{align*}
    &\bigsmile{\frac{\mathbf{1}\{\hat{\pi}^k(x_i)=a_i\}}{p(a_i|x_i)}-\frac{\mathbf{1}\{\pi^*(x_i)=a_i\}}{p(a_i|x_i)}}(y_i-r(x_i,\hat{\pi}^k(x_i)))\\
    &= \bigsmile{\frac{\mathbf{1}\{\hat{\pi}^k(x_i)=a_i\}}{p(a_i|x_i)}-\frac{\mathbf{1}\{\pi^*(x_i)=a_i\}}{p(a_i|x_i)}}(r(x_i,a_i)+\epsilon_i-r(x_i,\hat{\pi}^k(x_i))).
\end{align*}
Therefore, 
\begin{align}
    &\E\left[\frac{1}{n_k}\sum_{i\in\II_k}\bigsmile{\frac{\mathbf{1}\{\hat{\pi}^k(x_i)=a_i\}}{p(a_i|x_i)}-\frac{\mathbf{1}\{\pi^*(x_i)=a_i\}}{p(a_i|x_i)}}(r(x_i,a_i)+\epsilon_i-r(x_i,\hat{\pi}^k(x_i)))\right]\\
    &= \frac{1}{n_k}\sum_{i\in\II_k}\E_{\{(x_i,a_i)\}_{i\in\II_k}}\bigbrak{\bigsmile{\frac{\mathbf{1}\{\hat{\pi}^k(x_i)=a_i\}}{p(a_i|x_i)}-\frac{\mathbf{1}\{\pi^*(x_i)=a_i\}}{p(a_i|x_i)}}(r(x_i,a_i)-r(x_i,\hat{\pi}^k(x_i)))}\tag{$\E[\epsilon_i|x_i,a_i]=0$ for each $i$}\\
    &= \frac{1}{n_k}\sum_{i\in\II_k}\E_{\{x_i\}_{i\in\II_k}}\bigbrak{\sum_{a}p(a|x_i)\bigsmile{\frac{\mathbf{1}\{\hat{\pi}^k(x_i)=a\}}{p(a|x_i)}-\frac{\mathbf{1}\{\pi^*(x_i)=a\}}{p(a|x_i)}}(r(x_i,a)-r(x_i,\hat{\pi}^k(x_i)))}\\
    &= \frac{1}{n_k}\sum_{i\in\II_k}\E_{\{x_i\}_{i\in\II_k}}\bigbrak{\sum_{a}(\mathbf{1}\{\hat{\pi}^k(x_i)=a\}-\mathbf{1}\{\pi^*(x_i)=a\})(r(x_i,a)-r(x_i,\hat{\pi}^k(x_i)))}\\
    &= -\E_{x}\bigbrak{r(x,\pi^*(x))-r(x,\hat{\pi}^k(x))}=o(n_k^{-1/2})\label{eqn:small_expectation}
\end{align}
by Assumption~\ref{assump:consistent_regret}. Also, 
\begin{align}
    &\Var\bigbrak{\frac{1}{n_k}\sum_{i\in\II_k}\bigsmile{\frac{\mathbf{1}\{\hat{\pi}^k(x_i)=a_i\}}{p(a_i|x_i)}-\frac{\mathbf{1}\{\pi^*(x_i)=a_i\}}{p(a_i|x_i)}}(r(x_i,a_i)+\epsilon_i-r(x_i,\hat{\pi}^k(x_i)))}\\
    &=\frac{1}{n_k^2}\sum_{i\in\II_k}\Var\left[\bigsmile{\frac{\mathbf{1}\{\hat{\pi}^k(x_i)=a_i\}}{p(a_i|x_i)}-\frac{\mathbf{1}\{\pi^*(x_i)=a_i\}}{p(a_i|x_i)}}(r(x_i,a_i)+\epsilon_i-r(x_i,\hat{\pi}^k(x_i)))\right]\\
    &= \frac{1}{n_k^2}\sum_{i\in\II_k}\E\left[\bigsmile{\bigsmile{\frac{\mathbf{1}\{\hat{\pi}^k(x_i)=a_i\}}{p(a_i|x_i)}-\frac{\mathbf{1}\{\pi^*(x_i)=a_i\}}{p(a_i|x_i)}}(r(x_i,a_i)+\epsilon_i-r(x_i,\hat{\pi}^k(x_i)))}^2\right]+o(n_k^{-1})\tag{from equation~\eqref{eqn:small_expectation}}\\
    &= \frac{1}{n_k^2}\sum_{i\in\II_k}\E\left[\bigsmile{\frac{\mathbf{1}\{\hat{\pi}^k(x_i)=a_i\}}{p(a_i|x_i)}-\frac{\mathbf{1}\{\pi^*(x_i)=a_i\}}{p(a_i|x_i)}}^2\bigsmile{(r(x_i,a_i)+\epsilon_i-r(x_i,\hat{\pi}^k(x_i)))}^2\right]+o(n_k^{-1})\\
    &= \frac{1}{n_k^2}\sum_{i\in\II_k}\E\left[\bigsmile{\frac{\mathbf{1}\{\hat{\pi}^k(x_i)=a_i\}}{p(a_i|x_i)}-\frac{\mathbf{1}\{\pi^*(x_i)=a_i\}}{p(a_i|x_i)}}^2\bigsmile{(r(x_i,a_i)-r(x_i,\hat{\pi}^k(x_i)))}^2\right]\tag{cross-term has expectation zero}\\
    &\qquad+\frac{1}{n_k^2}\sum_{i\in\II_k}\E\bigbrak{\bigsmile{\frac{\mathbf{1}\{\hat{\pi}^k(x_i)=a_i\}}{p(a_i|x_i)}-\frac{\mathbf{1}\{\pi^*(x_i)=a_i\}}{p(a_i|x_i)}}^2\epsilon_i^2}+o(n_k^{-1}).\label{eqn：third_term_decomp}
\end{align}
Note that for each $i$, 
\begin{align*}
    &\E\bigbrak{\bigsmile{\frac{\mathbf{1}\{\hat{\pi}^k(x_i)=a_i\}}{p(a_i|x_i)}-\frac{\mathbf{1}\{\pi^*(x_i)=a_i\}}{p(a_i|x_i)}}^2}\\
    &= \E_{x_i}\bigbrak{\E_{a_i}\bigbrak{\bigsmile{\frac{\mathbf{1}\{\hat{\pi}^k(x_i)=a_i\}}{p(a_i|x_i)}-\frac{\mathbf{1}\{\pi^*(x_i)=a_i\}}{p(a_i|x_i)}}^2}\Big|x_i}\\
    &= \E_{x_i}\bigbrak{\sum_{a}p(a|x_i)\bigsmile{\frac{\mathbf{1}\{\hat{\pi}^k(x_i)=a\}}{p(a|x_i)}-\frac{\mathbf{1}\{\pi^*(x_i)=a\}}{p(a|x_i)}}^2}\\
    &= \E_{x_i}\bigbrak{\sum_{a}\frac{1}{p(a|x_i)}\bigsmile{\mathbf{1}\{\hat{\pi}^k(x_i)=a\}-\mathbf{1}\{\pi^*(x_i)=a\}}^2}\\
    &= \E_{x_i}\bigbrak{\sum_{a}\frac{1}{p(a|x_i)}(\mathbf{1}\{\hat{\pi}^k(x_i)=a,\pi^*(x_i)\neq a\}+\mathbf{1}\{\hat{\pi}^k(x_i)\neq a, \pi^*(x_i)=a\})}\\
 %   &= \E_{x_i}\bigbrak{\bigsmile{\frac{1}{p(\hat{\pi}^k(x_i)|x_i)}+\frac{1}{p(\pi^*(x_i)|x_i)}}\mathbf{1}\{\hat{\pi}^k(x_i)\neq \pi^*(x_i)\}}\\
    &\leq \frac{2}{\eta}\E_{x_i}\bigbrak{\mathbf{1}\{\hat{\pi}^k(x_i)\neq \pi^*(x_i)\}}
\end{align*}
By Assumption~\ref{ass:consistency}, this expectation is $o(1)$, so the first term in equation~\eqref{eqn：third_term_decomp} is upper bounded by $\frac{2C_1^2}{\eta n_k}\E_{x_i}\bigbrak{\mathbf{1}\{\hat{\pi}^k(x_i)\neq \pi^*(x_i)\}}$ by bounded reward functions \ref{assump:bounded_r}, the second term is upper bounded by $\frac{2C}{\eta n_k}\E_{x_i}\bigbrak{\mathbf{1}\{\hat{\pi}^k(x_i)\neq \pi^*(x_i)\}}$ by bounded variance \ref{assump:finite_var}, so each term is $o(1)$. Then by Chebyshev inequality, a.k.a. Lemma~\ref{lem:sqrt_conv}, we have the statement. 
\end{proof}

\subsection{Efficiency of $\hat{\psi}^{AIPW}$}\label{sec:efficiency}
It is left to show that $\Var(\Gamma_1^*)$ is the minimum possible variance. To show this, we show that $\hat{\psi}^{AIPW}_*$ is an efficient estimator of $\psi$. In this section, we show efficiency in the nonparametric setting. Following the idea in \cite{luedtke2016statistical}, for any distribution $P$, 
\[\psi(P)=V(\pi_P^*) - \max_a \sum_{x}p(x) r(x,a)=\sum_{x} p(x) \left(\sum_{y} y p(y|a=\pi_P^*(x),x) - \sum_{y} y p(y|a=a_P^*,x)\right).\]

We first show that the $\hat{\psi}_*^{AIPW}$ is a gradient for $\psi(P)$. It is left to show that this is the canonical gradient and pathwise differentiability for $\psi(P)$, which together will give us efficiency. 
\begin{proposition}\label{prop:canon_grad}
If $\psi(P)$ is defined above, then a gradient is defined as 
\begin{align*}
D(P)(x,a,y) &= \mu(x,\pi_P^*(x)) - \mu(x,a_P^*) + \frac{\mathbf{1}\{\pi_P^*(x)=a,a_P^*\ne \pi_P^*\}}{p(a|x)}(y-\mu(x,\pi_P^*(x)))\\
&\qquad- \frac{\mathbf{1}\{a_P^*=a,a_P^*\ne \pi_P^*\}}{p(a|x)}(y-\mu(x,a_P^*)).
\end{align*}
% And an asymptotically efficient estimator is defined as \[\psi_n = \frac{1}{n}\sum_{i=1}^n \frac{\I\{a_i=\pi_P^*(x_i)\}}{p(a_i|x_i)}z_i+\bigsmile{1-\frac{\I\{a_i=\pi_P^*(x_i)\}}{p(a_i|x_i)}}s(a_i,x_i).\] 
\end{proposition}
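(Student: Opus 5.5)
To show that $D(P)$ is a gradient, I would verify pathwise differentiability directly, in the style of \cite{luedtke2016statistical}. Take a regular one-dimensional submodel $\{P_t\}$ through $P$ with score
\[
g(x,a,y)=g_X(x)+g_{A}(a\mid x)+g_Y(y\mid a,x),
\]
where each component has conditional mean zero. The target is
\[
\frac{d}{dt}\psi(P_t)\big|_{t=0}=\E_P[D(P)(X,A,Y)\,g(X,A,Y)].
\]
The first step handles the nuisance arguments $\pi_P^*$ and $a_P^*$. Under the uniqueness conditions (the margin/gap conditions (MG) and Assumption~\ref{ass:consistency}), $a^*_{P_t}=a^*_P$ for all small $|t|$. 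For the policy argument I use an envelope argument: $\pi\mapsto V_t(\pi)$ is maximized at $\pi^*_{P_t}$, so
\[
\frac{d}{dt}\max_\pi V_t(\pi)\big|_{t=0}=\frac{d}{dt}V_t(\pi_P^*)\big|_{t=0}.
\]
To justify this, bound $0\le V_t(\pi_{P_t}^*)-V_t(\pi_P^*)$ by $V_t(\pi_{P_t}^*)-V_0(\pi_{P_t}^*)-(V_t(\pi_P^*)-V_0(\pi_P^*))$, which is $o(t)$ because the $t$-derivative of $V_t(\pi)$ is uniform over $\pi$ given bounded rewards. It then suffices to differentiate
\[
t\mapsto \sum_x p_t(x)\bigl(\mu_t(x,\pi_P^*(x))-\mu_t(x,a_P^*)\bigr)
\]
with $\pi^*_P$ and $a^*_P$ held fixed.

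The second step is the standard computation for a fixed treatment-specific mean functional. Differentiating $p_t(x)$ produces $\E[(\mu(X,\pi^*(X))-\mu(X,a^*))g_X(X)]$, and this equals the covariance of the first two terms of $D$ with $g_X$, since $\psi$ is subtracted off by mean-zero centering. Differentiating $\mu_t(x,b)=\E_t[Y\mid x,b]$ produces
\[
\E\bigl[(Y-\mu(X,b))\,g_Y(Y\mid X,b)\mid X\bigr],
\]
and reweighting by $\1\{A=b\}/p(A\mid X)$ rewrites this as $\E[\frac{\1\{A=b\}}{p(A|X)}(Y-\mu(X,b))g]$. This holds because the $g_X$ and $g_A$ components are orthogonal to the conditionally mean-zero residual. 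Applying this with $b=\pi^*(x)$ and with $b=a^*$ and subtracting gives the gradient, without the extra indicator $a^*\ne\pi^*(x)$.

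The last step reconciles the indicators. On the set $\{\pi^*(x)=a^*\}$, the two residual terms are identical and cancel. So inserting $\1\{a_P^*\ne\pi_P^*(x)\}$ in both terms leaves $D$ pointwise unchanged, exactly as in the equality between \eqref{eqn:Gamma_i_star_1} and \eqref{eqn:Gamma_i_star_2}. Finally, $D$ has mean zero by Proposition~\ref{prop:IPW_bias} and finite variance by Assumptions~\ref{assump:strong_overlap}, \ref{assump:bounded_r} and \ref{assump:finite_var}, so it lies in $L_2^0(P)$.

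I expect the main obstacle to be the envelope step: making rigorous that the argmax over $\Pi$ contributes no first-order term. On the set where $\pi^*$ is not unique, the map $t\mapsto\psi(P_t)$ could have a kink. I would handle this first by invoking the uniqueness in Assumption~\ref{ass:consistency}, stated as almost-everywhere uniqueness of $\pi^*(x)$, together with the sandwich bound above, which only needs bounded $\mu_t$ differentiable in $t$ uniformly in $(x,a)$.
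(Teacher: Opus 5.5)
\textbf{Gap: the envelope step for $\pi^*$.} This is the one place where you go beyond the paper. The paper's proof of this proposition holds $\pi_P^*,a_P^*$ fixed, with a footnote deferring their stability. It then obtains that stability in Proposition~\ref{prop:path_diff} from an explicit separation condition over $\Pi$.

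Your claim that the sandwich bound is $o(t)$ ``because the $t$-derivative of $V_t(\pi)$ is uniform over $\pi$'' does not follow. Write $\pi_t=\pi^*_{P_t}$, $\pi_0=\pi^*_P$, and $V_t(\pi)-V_0(\pi)=t\,\dot V(\pi)+r_t(\pi)$ with $\sup_\pi|r_t(\pi)|=o(t)$. Then your bound reads
\[
0\le V_t(\pi_t)-V_t(\pi_0)\le t\bigl\{\dot V(\pi_t)-\dot V(\pi_0)\bigr\}+o(t),
\]
which is only $O(t)$. Uniformity controls the remainder but not the leading term. To get $o(t)$ you must show $\dot V(\pi_t)\to\dot V(\pi_0)$. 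Since $\dot V(\pi)=\E[\phi(X,\pi(X))]$ for a bounded $\phi$, this requires $\Pr(\pi_t(X)\ne\pi_0(X))\to 0$.

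The same sandwich does give $0\le V_0(\pi_0)-V_0(\pi_t)=O(t)$, so $\pi_t$ is a near-maximizer of $V_0$. However, a.e.\ uniqueness of $\pi^*$ does not turn near-maximizers into near-agreement for a general class $\Pi$. Suppose $\Pi$ contains $\pi_n$ with $V_0(\pi_n)\uparrow V_0(\pi_0)$ and $\dot V(\pi_n)-\dot V(\pi_0)\ge c>0$ along some score. Then $\liminf_{t\downarrow 0}\{\max_\pi V_t(\pi)-\max_\pi V_0(\pi)\}/t\ge \dot V(\pi_0)+c$, so $\psi$ has a kink and $D$ is not a gradient along that path.

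What is missing is a well-separated maximizer. Two ways to supply it:
\begin{itemize}
\item Use the paper's condition $V(\pi_0^*)-V(\pi)\ge\epsilon_0$ for all $\pi\ne\pi_0^*$. Exactly as you argued for $a^*$ under (MG), this gives $\pi_t=\pi_0$ for small $|t|$, and the envelope argument becomes unnecessary.
\item For the unrestricted policy class, use $V_0(\pi_0)-V_0(\pi)\ge\E[\Delta(X)\mathbf{1}\{\pi(X)\ne\pi_0(X)\}]$ with $\Delta(X)=\mu(X,\pi_0(X))-\max_{a\ne\pi_0(X)}\mu(X,a)>0$ a.s. This yields $\Pr(\pi_t\ne\pi_0)\le\Pr(\Delta(X)<\eta)+O(t)/\eta$ for every $\eta>0$, hence $\Pr(\pi_t\ne\pi_0)\to 0$.
\end{itemize}

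\textbf{The rest of your argument.} Your derivative computation for fixed $(\pi_P^*,a_P^*)$ is correct. You differentiate $p_t(x)$ and $\mu_t$ and then reweight by $\mathbf{1}\{A=b\}/p(A|X)$. The paper instead uses an IPW rewrite and then adds and subtracts the projection onto $L_2^0(P_{A|X})$; your route is a valid and somewhat shorter alternative. Your indicator reconciliation is also right.

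One small correction: Proposition~\ref{prop:IPW_bias} gives $\E[D(P)]=\psi(P)$, not $0$. So the element of $L_2^0(P)$ is $D(P)-\psi(P)$; the paper's display also omits this centering. This does not affect $\frac{d}{dt}\psi(P_t)=\E[Dg]$, because $\E[g]=0$.
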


\begin{proof}
Define $\{P_\epsilon:\epsilon\in \mathcal{N}\subset\R\}$ be a collection of quadratic mean differentiable submodels in $\M$ that passes through $P_0$ at $\epsilon=0$, where $\mathcal{N}$ is a neighborhood of zero. We consider $P_\epsilon=(1+\epsilon h)P$ where $h\in\mathcal{T}_\M(P)$ is some function in the tangent space of $\M$ at $P$. Given that we are in the nonparametric setting, $\mathcal{T}_\M(P)=L_2^0(P)$. Then $D(P_0)$ is a gradient of $\psi$ at $P_0$ if
\[\left.\frac{\partial}{\partial \epsilon} \psi\left(P_{\epsilon}\right)\right|_{\epsilon=0}=\E_{P_0}[D(P_0)(x,a,y) h(x,a,y)].\]
Since $L_2^0(P)=L_2^0(P_{X}) \otimes L_2^0(P_{A \mid X})\otimes L_2^0(P_{Y\mid A,X})$, we break the function $h$ in three components, i.e. 
%we assume $\{P_\epsilon : \epsilon\in \mathcal{N}\}$ has the following score at zero:
\begin{align*}
    h(x,a,y)=S_{Y|A,X}(y|a,x) + S_{A|X}(a|x) + S_{X}(x).
\end{align*}
Then the conditional densities satisfy 
\begin{align*}
    p_\epsilon(y|a,x)&= \{1+\epsilon S_{Y|A,X}(y|a,x)\} p(y|a,x),\\
    p_\epsilon(a|x)&= \{1+\epsilon S_{A|X}(a|x)\} p(a|x),\\
    p_\epsilon(x)&= \{1+\epsilon S_{X}(x)\} p(x).
\end{align*}
Then\footnote{We need to show stability of $\pi_P^*$ and $a_P^*$, i.e. the difference between $\pi_{P_\epsilon}^*$ and $\pi_{P_0}^*$ is negligible, and the same for $a^*$. For now we assume we have this, and same for $a^*$.}
\begin{align*}
    \psi(P_\epsilon) &= \sum_{x} p_\epsilon(x) \left(\sum_{y} y p_\epsilon(y|a=\pi_P^*(x),x) - \sum_{y} y p_\epsilon(y|a=a_P^*,x)\right) \\
    &= \sum_x(1+\epsilon S_X(x)) p(x)\Big(\sum_y y\cdot (1+\epsilon S_{Y|A,X}(y|a,x))p(y|a=\pi^*(x),x)\\
    &\qquad-\sum_y y\cdot (1+\epsilon S_{Y|A,X}(y|a,x))p(y|a=a^*,x)\Big)\\
    &= \sum_x(1+\epsilon S_X(x)) p(x)\Big(\sum_y y\frac{\1\{a=\pi^*(x)\}}{p(a|x)}\cdot (1+\epsilon S_{Y|A,X}(y|a,x))p(y|a,x)p(a|x)\\
    &\qquad-\sum_y y\frac{\1\{a=a^*\}}{p(a|x)}\cdot (1+\epsilon S_{Y|A,X}(y|a,x))p(y|a,x)p(a|x)\Big)\\
    &= \sum_x\Big(\sum_y y\frac{\1\{a=\pi^*(x)\}-\1\{a=a^*\}}{p(a|x)}\cdot (1+\epsilon S_{Y|A,X}(y|a,x))(1+\epsilon S_X(x)) p(y|a,x)p(a|x)p(x)\Big).
\end{align*}
Therefore, 
\begin{align*}
    \left.\frac{\partial\psi(P_\epsilon)}{\partial\epsilon}\right|_{\epsilon=0} &= \sum_{x,y}  y\Big(\frac{\1\{a=\pi^*(x)\}}{p(a|x)} - \frac{\1\{a=a^*\}}{p(a|x)}\Big) (S_{Y|A,X}(y|a,x)+S_X(x))p(y|a,x)p(a|x)p(x)\\
    &= \sum_{x,y} y\Big(\frac{\1\{a=\pi^*(x)\}}{p(a|x)} - \frac{\1\{a=a^*\}}{p(a|x)}\Big)(S_X(x)+S_{A|X}(a|x)+S_{Y|A,X}(y|a,x)) p(y|a,x)p(a|x)p(x)\\
    &\qquad-\sum_{x,y} y\Big(\frac{\1\{a=\pi^*(x)\}}{p(a|x)} - \frac{\1\{a=a^*\}}{p(a|x)}\Big) S_{A|X}(a|x)p(y|a,x)p(a|x)p(x).
\end{align*}
To simplify the second term, we let $g(x,a,y):=y\Big(\frac{\1\{a=\pi^*(x)\}}{p(a|x)} - \frac{\1\{a=a^*\}}{p(a|x)}\Big)$. Then the projection of $g$ onto the space $L_2^0(P_{A|X})$ as follows:
\begin{align*}
    \Pi_{L_2^0(P_{A|X})(P_0)}(g)(x,a) &= \E\bigbrak{\left.g(X,A,Y)\right|A=a,X=x}-\E\bigbrak{\left.g(X,A,Y)\right|X=x}\\
    &= \Big(\frac{\1\{a=\pi^*(x)\}}{p(a|x)} - \frac{\1\{a=a^*\}}{p(a|x)}\Big)\E[Y|A=a,X=x]-(\mu(\pi^*(x),x)-\mu(a^*,x))\\
    &= \frac{\1\{a=\pi^*(x),\pi^*(x)\ne a^*\} - \1\{a=a^*,\pi^*(x)\ne a^*\}}{p(a|x)} \mu(a,x)-(\mu(\pi^*(x),x)-\mu(a^*,x)).
\end{align*}
Note that since \[L_2^0(P)=L_2^0(P_{X}) \otimes L_2^0(P_{A \mid X})\otimes L_2^0(P_{Y\mid A,X}),\]
we have 
\begin{align*}
    &\sum_{x,a,y}\Pi_{L_2^0(P_{A|X})}g(x,a,y) S_X(x) p(x,a,y)=0;\\
    &\sum_{x,a,y}\Pi_{L_2^0(P_{A|X})}g(x,a,y) S_{Y|A,X}(y|a,x) p(x,a,y)=0.
\end{align*}
Therefore, adding them up gives 
\begin{align*}
    &\left.\frac{\partial\psi(P_\epsilon)}{\partial\epsilon}\right|_{\epsilon=0} \\
    &= \sum_{x,a,y} \left(\frac{\1\{a=\pi^*(x),\pi^*(x)\ne a^*\} - \1\{a=a^*,\pi^*(x)\ne a^*\}}{p(a|x)} (y-\mu(a,x)) + \mu(\pi^*(x),x)-\mu(a^*,x)\right)\\
    &\qquad(S_X(x)+S_{A|X}(a|x)+S_{Y|A,X}(y|a,x)) p(x,a,y).
\end{align*}
Therefore, $D(P)(x,a,y)$ is a gradient. 
% Since $P\in\M$ is nonparametric, $D(\pi_P^*,P)(O)$ is the canonical gradient. %It follows that $\psi_n$ is asymptotically efficient by Theorem 1.2. 
\end{proof}
\begin{proposition}
The projection is
\[\Pi_{L_2^0(P_{A|X})(P_0)}g: (x,a)\mapsto \E\bigbrak{\left.g(X,A,Y)\right|A=a,X=x}-\E\bigbrak{\left.g(X,A,Y)\right|X=x}.\]
\end{proposition}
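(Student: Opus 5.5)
The plan is to verify directly that the proposed map is the orthogonal projection in $L_2(P_0)$ onto the subspace
\[
L_2^0(P_{A|X})=\{f(x,a): \E[f(X,A)\mid X]=0,\ \E[f(X,A)^2]<\infty\}.
\]
Write $\Pi g(x,a):=\E[g\mid A=a,X=x]-\E[g\mid X=x]$. The characterization of an orthogonal projection onto a closed subspace $\mathcal{S}$ has two parts: (i) $\Pi g\in\mathcal{S}$, and (ii) $g-\Pi g\perp \mathcal{S}$. Uniqueness of the projection then gives the claim.

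\textbf{Preliminaries.} Throughout I take $g\in L_2(P_0)$. For the specific $g$ used in Proposition~\ref{prop:canon_grad}, this follows from Assumption~\ref{assump:strong_overlap} and Assumption~\ref{assump:finite_var} together with bounded rewards. Conditional expectations then lie in $L_2$ by Jensen, so $\Pi g\in L_2(P_0)$. I would also note that $\mathcal{S}$ is closed, since it is the kernel of the bounded operator $f\mapsto \E[f\mid X]$ restricted to functions of $(x,a)$.

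\textbf{Step (i).} $\Pi g$ depends only on $(x,a)$. By the tower property,
\[
\E[\Pi g(X,A)\mid X]=\E\bigl[\E[g\mid A,X]\mid X\bigr]-\E[g\mid X]=0,
\]
so $\Pi g$ belongs to the subspace.

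\textbf{Step (ii).} Fix any $f\in\mathcal{S}$ and compute $\E[(g-\Pi g)f]$ in two parts:
\begin{itemize}
\item[(a)] $\E\bigl[(g-\E[g\mid A,X])f(X,A)\bigr]=0$. Condition on $(A,X)$: $f$ is measurable with respect to $(A,X)$, and $g-\E[g\mid A,X]$ has conditional mean zero.
\item[(b)] $\E\bigl[\E[g\mid X]\,f(X,A)\bigr]=\E\bigl[\E[g\mid X]\,\E[f\mid X]\bigr]=0$, using $f\in\mathcal{S}$.
\end{itemize}
Since $g-\Pi g=(g-\E[g\mid A,X])+\E[g\mid X]$, summing (a) and (b) gives $\E[(g-\Pi g)f]=0$, i.e.\ orthogonality.

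\textbf{Possible obstacle.} The only delicate point is whether ``the space'' is meant inside $L_2^0(P)$ (mean-zero functions) rather than $L_2(P)$. This changes nothing, because $\E[\Pi g]=0$ automatically and $\mathcal{S}\subset L_2^0(P)$.

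I would close by remarking that the orthogonality of the three score subspaces invoked earlier is exactly Step (ii) applied to functions of $x$ alone and to functions with $\E[\cdot\mid A,X]=0$.
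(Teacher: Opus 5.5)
Your proof is correct, but it differs from the paper's in the orthogonality step. Both arguments establish membership the same way, with the tower property giving $\E[\Pi g\mid X]=0$.

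For orthogonality, the paper shows that $\Pi g$ itself is orthogonal to $L_2^0(P_{Y|A,X})$: it conditions on $(A,X)$ and uses $\E[h\mid A,X]=0$. You instead show that the residual $g-\Pi g$ is orthogonal to $L_2^0(P_{A|X})$. Your condition is the one that actually characterizes an orthogonal projection. The paper's two checks do not by themselves single out $\Pi g$: the zero function, or any $f(x,a)$ with $\E[f\mid X]=0$, also lies in $L_2^0(P_{A|X})$ and is orthogonal to $L_2^0(P_{Y|A,X})$.

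To conclude, the paper is therefore implicitly relying on two facts it does not write out. The first is the orthogonal decomposition $L_2^0(P)=L_2^0(P_X)\oplus L_2^0(P_{A|X})\oplus L_2^0(P_{Y|A,X})$. The second is the splitting $g-\E[g]=(\E[g\mid X]-\E[g])+\Pi g+(g-\E[g\mid A,X])$, with one piece in each summand.

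Your route is self-contained. It delivers exactly the identity $\E[g\,S_{A|X}]=\E[\Pi g\,S_{A|X}]$ that the canonical-gradient computation in Proposition~\ref{prop:canon_grad} needs when it replaces $g$ by its projection.

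The paper's check supplies the orthogonality of $\Pi g$ to the $Y\mid A,X$ (and $X$) scores, which is used in the next step of that computation. As your closing remark notes, this also follows from your Step (ii). Applying it to any $h\in L_2^0(P_{Y|A,X})$ or $h\in L_2^0(P_X)$ gives $\Pi h=0$, hence $h\perp L_2^0(P_{A|X})\ni\Pi g$.
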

\begin{proof}
To show that it is the projection, we need to show that it is in the tangent space and orthogonality. First, 
\begin{align*}
    \E_{A,Y}[\Pi_{L_2^0(P_{A|X})(P_0)}g | X=x]=0
\end{align*}
so $\Pi_{L_2^0(P_{A|X})(P_0)}g\in L_2^0(P_{A|X})$. Second, for any $h\in L_2^0(P_{Y|A,X})$, we have
\begin{align*}
    &\E_{X,A,Y}[\Pi_{L_2^0(P_{A|X})(P_0)}g h] \\
    &= \E_{X,A}[\E_{Y}[\Pi_{L_2^0(P_{A|X})(P_0)}g(X,A) h(X,A,Y)| A,X]] \\
    &= \E_{X,A}[\Pi_{L_2^0(P_{A|X})(P_0)}g(X,A)\E_{Y}[h(X,A,Y)| A,X]]\tag{$g$ is only a function of $X,A$}\\
    &= 0
\end{align*}
since $\E_{Y}[h(X,A,Y)| A,X]=0$ by definition of the tangent space $L^2(P_{Y|A,X})$. 
\end{proof}
Now we show a proposition about the pathwise differentiability of $\psi$ that ensures the above derived gradient is valid. 
\begin{proposition}\label{prop:path_diff}
Assume the following: 
\begin{enumerate}
    \item the optimal personalized policy $\pi_0^*$ and overall best action $a_0^*$ is unique, and further assume that there exists some $\epsilon_0>0$ such that for any $a\ne a_0^*$ and $\pi\neq \pi_0^*$, $V(\pi_0^*)-V(\pi)\geq \epsilon_0$ and $\E_0[\E_0[Y|A=a_0^*,X]]-\E_0[\E_0[Y|A=a,X]]\geq \epsilon_0$.\label{assump:minimum_gap}
\end{enumerate}
Then, $\psi$ is pathwise differentiable with gradient stated in Proposition~\ref{prop:canon_grad}. 
\end{proposition}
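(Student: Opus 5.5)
The plan is to reduce pathwise differentiability of $\psi$ to that of a functional with \emph{fixed} decision rules, for which the computation in Proposition~\ref{prop:canon_grad} already applies. Define, for fixed $\pi\in\Pi$ and $b\in\A$, the map
\[
\psi_{\pi,b}(P):=\E_P\bigl[\mu_P(X,\pi(X))-\mu_P(X,b)\bigr].
\]
Then $\psi(P)=\max_{\pi}V_P(\pi)-\max_a U_P(a)$, with $V_P(\pi)$ and $U_P(a)$ the value and utility under $P$. For fixed $(\pi,b)$, $\psi_{\pi,b}$ is a difference of two standard treatment-specific means under a deterministic rule. It is therefore pathwise differentiable along any quadratic-mean-differentiable submodel $P_\epsilon=(1+\epsilon h)P_0$, and its gradient is exactly the $D(P_0)$ obtained in the proof of Proposition~\ref{prop:canon_grad} with $(\pi_P^*,a_P^*)$ replaced by $(\pi,b)$. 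The remaining task is to justify the step flagged in the footnote: that along $P_\epsilon$ the maximizers $\pi^*_{P_\epsilon}$ and $a^*_{P_\epsilon}$ can be replaced by $\pi_0^*$ and $a_0^*$ up to $o(\epsilon)$.

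The key steps, in order, are as follows.
\begin{enumerate}
\item Uniform perturbation bound. Using boundedness of $r$ (Assumption~\ref{assump:bounded_r}), the finite second moment of $Y$ (Assumption~\ref{assump:finite_var}), and $h\in L_2^0(P_0)$, I would show
\[
\sup_{\pi\in\Pi}\bigl|V_{P_\epsilon}(\pi)-V_{P_0}(\pi)\bigr|=O(\epsilon),
\qquad
\sup_{a\in\A}\bigl|U_{P_\epsilon}(a)-U_{P_0}(a)\bigr|=O(\epsilon).
\]
Each difference equals $\epsilon\,\E_{P_0}[g_\pi h]$, where $g_\pi=Y\1\{A=\pi(X)\}/p(A|X)$, plus a second-order term. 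Strong overlap (Assumption~\ref{assump:strong_overlap}) gives $\|g_\pi\|_{L_2}\le C/\eta$ uniformly in $\pi$, so Cauchy--Schwarz makes the bound uniform.
\item Stability of the maximizers. By the gap condition in item~\ref{assump:minimum_gap}, $V_{P_0}(\pi_0^*)-V_{P_0}(\pi)\ge\epsilon_0$ for every $\pi\ne\pi_0^*$. Once $|\epsilon|$ is small enough that the $O(\epsilon)$ perturbation is below $\epsilon_0/2$, we get $\pi^*_{P_\epsilon}=\pi_0^*$ exactly. The same argument gives $a^*_{P_\epsilon}=a_0^*$.
\item Conclusion. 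In a neighborhood of $\epsilon=0$ we then have $\psi(P_\epsilon)=\psi_{\pi_0^*,a_0^*}(P_\epsilon)$. Differentiating at $0$ therefore reproduces the computation in Proposition~\ref{prop:canon_grad}, giving $\partial_\epsilon\psi(P_\epsilon)|_{0}=\E_{P_0}[D(P_0)h]$ for every score $h$. The map $h\mapsto\E_{P_0}[D(P_0)h]$ is linear and bounded on $L_2^0(P_0)$, since $D(P_0)\in L_2(P_0)$ by overlap, bounded reward and finite variance. This is precisely pathwise differentiability with the stated gradient.
\end{enumerate}

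I expect the main obstacle to be the first step, specifically obtaining the perturbation bound uniformly over a possibly infinite class $\Pi$. Scores $h$ need not be bounded, so $(1+\epsilon h)P_0$ may fail to be a probability measure. I would first restrict to bounded $h$, which is dense in $L_2^0$, where the expansion is exact and the uniform bound follows from Cauchy--Schwarz as above. I would then handle general QMD submodels either through the usual $L_2$ remainder of QMD paths or by the standard density-and-closure argument for gradients.

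A subtler issue, which the gap assumption in item~\ref{assump:minimum_gap} is exactly designed to defuse, is that without a uniform gap the argmax over $\Pi$ could jump. In that case $\psi$ would behave like a max of linear functionals and would only be directionally differentiable. I would note this explicitly when invoking the gap.
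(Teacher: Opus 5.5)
Your proposal is correct and is essentially the paper's argument: the paper also writes $\psi(P_\epsilon)-\psi(P_0)$ as $\psi(P_\epsilon)-\psi_{\pi_0^*,a_0^*}(P_\epsilon)$ plus the fixed-rule increment, bounds the perturbations of $V_{P_\epsilon}(\pi)$ and $U_{P_\epsilon}(a)$ by $C\epsilon$, and uses the $\epsilon_0$-gap to get $\pi^*_{P_\epsilon}=\pi_0^*$ and $a^*_{P_\epsilon}=a_0^*$ for small $\epsilon$, so the first piece vanishes identically. Your overlap/Cauchy--Schwarz argument for uniformity over $\Pi$ and your remark on unbounded scores refine the paper's blanket assumption that $Y$, $S_{Y|A,X}$ and $S_X$ are bounded; one small correction is that the first-order term of $V_{P_\epsilon}(\pi)-V_{P_0}(\pi)$ is $\epsilon\,\E_{P_0}[g_\pi(S_X+S_{Y|A,X})]$ rather than $\epsilon\,\E_{P_0}[g_\pi h]$, because the value does not depend on the propensity, and this leaves your uniform bound intact.
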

\begin{proof}
Recall that
\begin{align*}
    \psi(P)&=V(\pi_P^*) - \max_a \sum_{x}p(x) r(x,a)=\sum_{x} p(x) \left(\sum_{y} y p(y|a=\pi_P^*(x),x) - \sum_{y} y p(y|a=a_P^*,x)\right)\\
    &= \sum_{x} p(x) \left(\sum_{y} y\1\{a=\pi_P^*(x)\} p(y|a,x) - \sum_{y} y\1\{a=a_P^*\} p(y|a,x)\right).
\end{align*}
Then 
\begin{align*}
    \psi(P_\epsilon)-\psi(P_0) &= \psi(P_\epsilon)-\Psi_{\pi_0^*,a_0^*}(P_\epsilon)+\Psi_{\pi_0^*,a_0^*}(P_\epsilon)-\Psi(P_0)\\
    &= \sum_{x} p_\epsilon(x) \left(\sum_{y} y(\1\{a=\pi_{P_\epsilon}^*(x)\} -\1\{a=a_{P_\epsilon}^*\}) p_\epsilon(y|a,x)\right)\\
    &\quad- \sum_{x} p_\epsilon(x) \left(\sum_{y} y(\1\{a=\pi_0^*(x)\} -\1\{a=a_0^*\}) p_\epsilon(y|a,x)\right)+\Psi_{\pi_0^*,a_0^*}(P_\epsilon)-\Psi(P_0)\\
    &= \sum_{x} p_\epsilon(x) \left(\sum_{y} y(\1\{a=\pi_{P_\epsilon}^*(x)\}-\1\{a=\pi_0^*(x)\}-(\1\{a=a_{P_\epsilon}^*\}-\1\{a=a_0^*\})) p_\epsilon(y|a,x)\right)\\
    &\quad+\Psi_{\pi_0^*,a_0^*}(P_\epsilon)-\Psi(P_0)
\end{align*}
It is known that for a fixed $\pi_0^*$ and $a_0^*$, $\psi(P)$ is pathwise differentiable with gradient $D(\pi,P_0)$. We shall now show that the first term is $o(\epsilon)$. We first decompose the first term as 
\begin{equation}
    \sum_{x,y} y(\1\{a=\pi_{P_\epsilon}^*(x)\}-\1\{a=\pi_0^*(x)\}) p_\epsilon(y|a,x)p_\epsilon(x)-\sum_{x,y} y(\1\{a=a_{P_\epsilon}^*\}-\1\{a=a_0^*\})p_\epsilon(y|a,x)p_\epsilon(x).\label{eqn:path_diff_decomp}
\end{equation}
We would like to show each term is $o(\epsilon)$. For any $a$, 
\begin{align*}
    \E_\epsilon[\E_\epsilon\bigbrak{Y|A=a,X}] - \E_0[\E_0\bigbrak{Y|A=a,X}] &= \int y d(P_\epsilon(y|a,x)P_\epsilon(x)-P_0(y|a,x)P_0(x)).
\end{align*}
If $p_\epsilon(y|a,x)=(1+\epsilon S_Y(y|a,x))p_0(y|a,x)$ and $p_\epsilon(x)=(1+\epsilon S_X(x))p_0(x)$, then $p_\epsilon(y|a,x)p_\epsilon(x)-p_0(y|a,x)p_0(x)=\epsilon (S_Y(y|a,x)+S_0(x))p_0(y|a,x)p_0(x)+o(\epsilon)$, so 
\begin{align*}
    \E_\epsilon[\E_\epsilon\bigbrak{Y|A=a,X}] - \E_0[\E_0\bigbrak{Y|A=a,X}] &= \epsilon\int y (S_Y(y|a,x)+S_X(x))p_0(y|a,x)p_0(x) + o(\epsilon) \leq C\epsilon
\end{align*}
for some constant $C$ given $Y$, $S_Y$, and $S_X$ are bounded. Note that $\pi_P^* = \arg\max_{\pi\in\Pi} \E_P[\E_P\bigbrak{Y|A=\pi_P(X),X}]$ and $a_P^* = \arg\max_{a\in\A} \E_P[\E_P\bigbrak{Y|A=a,X}]$, if Assumption~\ref{assump:minimum_gap} is true, consider any $\epsilon\leq \frac{\epsilon_0}{4C}$, then for any $a\ne a_0^*$, we have 
\begin{align*}
    &\E_\epsilon\E_\epsilon\bigbrak{Y|A=a_0^*,X} - \E_\epsilon\E_\epsilon\bigbrak{Y|A=a,X}\\
    &= \E_\epsilon\E_\epsilon\bigbrak{Y|A=a_0^*,X}-\E_0\E_0\bigbrak{Y|A=a_0^*,X}+\E_0\E_0\bigbrak{Y|A=a_0^*,X}- \E_0\E_0\bigbrak{Y|A=a,X}\\
    &\quad+\E_0\E_0\bigbrak{Y|A=a,X} - \E_\epsilon\E_\epsilon\bigbrak{Y|A=a,X}\\
    &\geq \epsilon_0 - 2\epsilon>0,
\end{align*}
so $a_{P_\epsilon}^*=a_0^*$, and so the second term in equation~\eqref{eqn:path_diff_decomp} is 0. Similarly, for any $\pi$, 
\begin{align*}
    \E_\epsilon[\E_\epsilon\bigbrak{Y|A=\pi(X),X}] - \E_0[\E_0\bigbrak{Y|A=\pi(X),X}] &= \int y\1\{a=\pi(x)\} d(P_\epsilon(y|a,x)P_\epsilon(x)-P_0(y|a,x)P_0(x)).
\end{align*}
By similar reasoning, 
\begin{align*}
    \E_\epsilon[\E_\epsilon\bigbrak{Y|A=\pi(X),X}] - \E_0[\E_0\bigbrak{Y|A=\pi(X),X}] &= \epsilon\int y\1\{a=\pi(x)\} (S_Y(y|a,x)+S_X(x))p_0(y|a,x) + o(\epsilon) \leq C\epsilon
\end{align*}
for some constant $C$ since the indicator is bounded by 1. Then, for any $\pi\neq \pi_0^*$, 
\begin{align*}
    &\E_\epsilon\E_\epsilon\bigbrak{Y|A=\pi_0^*(X),X} - \E_\epsilon\E_\epsilon\bigbrak{Y|A=\pi(X),X}\\
    &= \E_\epsilon\E_\epsilon\bigbrak{Y|A=\pi_0^*(X),X}-\E_0\E_0\bigbrak{Y|A=\pi_0^*(X),X}+\E_0\E_0\bigbrak{Y|A=\pi_0^*(X),X}- \E_0\E_0\bigbrak{Y|A=\pi(X),X}\\
    &\quad+\E_0\E_0\bigbrak{Y|A=\pi(X),X} - \E_\epsilon\E_\epsilon\bigbrak{Y|A=\pi(X),X}\\
    &\geq \epsilon_0 - 2\epsilon>0,
\end{align*}
the first term is also 0, i.e. $\pi_0^*=\pi_{P_\epsilon}^*$. Therefore, $\psi(P_\epsilon)-\psi(P_0)=\psi_{\pi_0^*,a_0^*}(P_\epsilon)-\psi(P_0)$ for $\epsilon\leq \frac{\epsilon_0}{4C}$ and we get pathwise differentiability. 
\end{proof}

\subsection{Analysis of Repeated Re-ordering and Splits}
% Zhaoqi: need to take another look here.
Algorithm~\ref{alg:kpt} that repeatedly performs Algorithm~\ref{alg:single_kpt} to produce a more robust estimate. We now show that this procedure still produces an efficient estimator and thus an efficient test. This procedure is similar to the procedure described in Section 3.4 of \cite{chernozhukov2018double}, where they provide Corollary 3.3 that this approach still ensures their prior theorems hold. 

\begin{corollary}%{Corollary 3.18}}. 
Under assumptions 1--8 and 10--11, the estimator $\tilde{\psi}$ in Algorithm 1  %~\ref{alg:kpt} 
is efficient for $\psi$. 
\end{corollary}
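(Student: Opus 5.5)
The plan is to reduce the repeated-split estimator to a finite average of single-split estimators, each of which is asymptotically equivalent to the same oracle estimator $\hat{\psi}^{AIPW}_*$. Fix the number of repetitions $S$. For each repetition $s$, Algorithm~\ref{alg:kpt} applies Algorithm~\ref{alg:single_kpt} to a re-ordered copy $D_s$ of the data. The re-ordering is independent of the data, so conditional on the permutation $D_s$ is again an i.i.d.\ sample from $\nu$. All assumptions used in Lemma~\ref{lem:oracle_convergence} are per-fold statements about nuisance estimators trained on data disjoint from the evaluation fold, so they hold for every $s$. Hence Lemma~\ref{lem:oracle_convergence}, applied conditionally on the permutation and then unconditionally by dominated convergence (as in the end of the proof of Lemma~\ref{lem:Tnk_conv}), gives for each $s$
\[
\sqrt{n}\bigl(\hat{\psi}_s-\hat{\psi}^{AIPW}_*\bigr)\overset{p}{\to}0 .
\]
The key observation is that the oracle estimator $\hat{\psi}^{AIPW}_*=\frac{1}{n}\sum_{i=1}^n\Gamma_i^*$ is permutation invariant. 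Its summands depend only on the true $\pi^*,a^*_{all},r,p$, so it is the same random variable for every $s$.

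Next I average over the splits. Since $S$ is fixed, the triangle inequality gives
\[
\sqrt{n}\bigl(\tilde{\psi}-\hat{\psi}^{AIPW}_*\bigr)=\frac{1}{S}\sum_{s=1}^S\sqrt{n}\bigl(\hat{\psi}_s-\hat{\psi}^{AIPW}_*\bigr)\overset{p}{\to}0 .
\]
By Proposition~\ref{prop:asymp_normal_oracle} and Slutsky's lemma, $\sqrt{n}(\tilde{\psi}-\psi)\Rightarrow N(0,\Var(\Gamma_1^*))$. I will also check that the variance estimate $\tilde{\sigma}^2$ is consistent for $\Var(\Gamma_1^*)$. Each $\hat{\sigma}_s^2$ is consistent by the law-of-large-numbers argument in Lemma~\ref{lem:Tnk_conv}, with the nuisance replacement controlled exactly as in Proposition~\ref{prop:avg_h_conv}. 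The between-split term satisfies $(\hat{\psi}_s-\tilde{\psi})^2=o_P(n^{-1})$ because all $\hat{\psi}_s$ are within $o_P(n^{-1/2})$ of the common oracle. This establishes validity of the resulting test, not only of the estimator.

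Efficiency then follows as in Section~\ref{sec:efficiency}. Proposition~\ref{prop:canon_grad} shows that $\Gamma^*$ equals the gradient $D(P)$. In the nonparametric model the tangent space is all of $L_2^0(P)$, so this gradient is canonical. Pathwise differentiability comes from Proposition~\ref{prop:path_diff}, whose uniqueness/gap condition is implied by the assumed uniqueness of $\pi^*$ and (MG). An estimator that is asymptotically linear with influence function equal to the canonical gradient is regular and efficient, so $\tilde{\psi}$ attains the semiparametric bound $\Var(\Gamma_1^*)$.

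The main obstacle is the first step: passing from the conditional statement for a fixed permutation to the unconditional one, uniformly enough to average. Different splits reuse the same observations, so the $\hat{\psi}_s$ are dependent. My approach avoids needing any independence by comparing every split to the single, split-free oracle, which only requires a union bound over finitely many $s$. I would state explicitly that $S$ is fixed as $n\to\infty$, following Corollary 3.3 of \cite{chernozhukov2018double}, and note that letting $S$ grow would need a uniform version of Lemma~\ref{lem:oracle_convergence}.
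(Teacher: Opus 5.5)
Your handling of the repeated splits matches the paper's proof. By Lemma~\ref{lem:oracle_convergence} each $\hat{\psi}_s$ is within $o_P(n^{-1/2})$ of the split-free oracle $\hat{\psi}^{AIPW}_*$. A fixed-$S$ average preserves this without any independence across splits. Proposition~\ref{prop:asymp_normal_oracle} and Slutsky then give the $N(0,\mathrm{Var}(\Gamma_1^*))$ limit, and the between-split part of $\tilde{\sigma}^2$ is $o_P(n^{-1})$. Your remarks on permutation invariance of the oracle and on asymptotic linearity with influence function $\Gamma^*-\psi$ state explicitly what the paper uses tacitly.

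One small correction in that part: consistency of $\hat{\sigma}_s^2$ needs the $L_2$ statement $\frac{1}{n_k}\sum_{i\in\mathcal{I}_k}\bigl(h_{\hat{\pi}^k,\hat{a}^k,\hat{r}^k,\hat{p}^k}(w_i)-\Gamma_i^*\bigr)^2\overset{p}{\to}0$. The first-moment bound of Proposition~\ref{prop:avg_h_conv} is not enough for this. The $L_2$ statement does follow from Assumptions~\ref{assump:consistent}, \ref{assump:fast_best_arm_learner}, \ref{ass:consistency} and \ref{assump:finite_var} together with boundedness of $1/\hat{p}^k$.

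The step that fails is your claim that the hypothesis of Proposition~\ref{prop:path_diff} follows from the listed assumptions.

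\textbf{Actions.} Here the claim is harmless. With finitely many actions, the uniqueness of $a^*$ presupposed by Assumption~\ref{assump:fast_best_arm_learner} already gives a positive gap, which is all (MG) says. Note, though, that (MG) is not itself one of Assumptions 1--8, 10--11; the paper offers it only as a sufficient condition.

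\textbf{Policies.} The proposition also requires a uniform value gap $V(\pi_0^*)-V(\pi)\geq\epsilon_0$ for every $\pi\in\Pi$ with $\pi\neq\pi_0^*$. Assumption~\ref{ass:consistency} only gives uniqueness of $\pi^*(x)$ for almost every $x$. Suppose $\Pi$ contains policies that differ from $\pi_0^*$ on sets of arbitrarily small probability, as threshold or tree policies on continuous covariates do. Then $V(\pi_0^*)-V(\pi)$ is at most a constant times $\Pr(\pi(X)\neq\pi_0^*(X))$, so its infimum over such $\pi$ is zero and no $\epsilon_0$ exists.

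\textbf{How to repair it.} You can carry the gap condition of Proposition~\ref{prop:path_diff} as an explicit extra hypothesis. Alternatively, prove pathwise differentiability differently: show that the policy-switching term in that proof is $o(\epsilon)$ rather than exactly zero. Using that $\pi^*_{P_\epsilon}$ maximizes $V_{P_\epsilon}$ and $\pi_0^*$ maximizes $V_{P_0}$, that term is bounded by $\epsilon$ times a constant multiple of the $P_0$-probability that $\pi^*_{P_\epsilon}(X)\neq\pi_0^*(X)$, plus $O(\epsilon^2)$. That probability tends to zero under a margin-type identifiability condition, which is weaker than a uniform gap.

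The paper's own proof avoids your claim only by stopping at the $N(0,\mathrm{Var}(\Gamma_1^*))$ limit. It leaves the identification of $\mathrm{Var}(\Gamma_1^*)$ as the efficiency bound to Section~\ref{sec:efficiency}, which rests on that same extra condition.
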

\begin{proof}
From Lemma~\ref{lem:oracle_convergence}, the estimated personalization effect for a particular split $s$ satisfies
\begin{equation}
\sqrt{n}(\hat{\psi}_{s}-\hat{\psi}_*) =o_P(1).
\end{equation}
Then, 
\begin{align*}
    \sqrt{n}(\tilde{\psi}-\psi)&=\frac{1}{S} \sum_{s=1}^S \sqrt{n}(\widehat{\psi}_s-\psi)\\
    &= \frac{1}{S} \sum_{s=1}^S \left(\sqrt{n}(\widehat{\psi}_s-\widehat{\psi}_*)+\sqrt{n}(\widehat{\psi}_*-\psi)\right)\\
    &= o_P(1)+\sqrt{n}(\widehat{\psi}_*-\psi)\Rightarrow N(0,\Var(\Gamma_1^*))
\end{align*}
where the last line follows from Proposition~\ref{prop:asymp_normal_oracle} and Slutsky's Theorem. 

We next show that the multi-split variance estimator $\tilde{\sigma}^{2}=\frac{1}{S}\sum_{s=1}^S\left(\hat{\sigma}_s^2+(\hat{\psi}_s-\hat{\psi})^2\right)$ is consistent for the true unknown variance $\sigma^2$. 
By Lemma~\ref{lem:oracle_convergence} and Proposition~\ref{prop:asymp_normal_oracle}, 
we know the variance for the estimated personalization effect from a single split $\hat{\psi}^s$ converges to the true variance $\sigma^2$. Earlier in this proof we showed that for fixed $S$, 
$||\tilde{\psi} - \hat{\psi}^s|| = o_p(n^{-1/2})$. 
This implies that (as $S$ is fixed) the additional across-split term in $\tilde{\sigma}$ is $o_p(n^{-1})$ and therefore $\tilde{\sigma}$ is consistent for $\sigma^2$. 

Therefore the multi-split estimate $\tilde{\psi}$ is asymptotically equivalent to a single split estimate, and has the same normal limit and variance.

\end{proof}
\subsection{Technical lemmas}
In this section, we present a few technical lemmas used to show convergence. 
\begin{proposition}\label{prop:bound_hat_psi} Let $\hat{\sigma}^2$ be defined in Algorithm~\ref{alg:single_kpt} and $\hat{\sigma}_k^2$ denote the variance for the $k$th fold, then
\[\hat{\sigma}^2\geq \frac{1}{n} \sum_{k=1}^K n_k \hat{\sigma}^2_k.\]
\end{proposition}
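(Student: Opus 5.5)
This is the law of total variance applied to the pooled scores, with the fold index as the grouping variable. Interpret $\hat{\sigma}^2$ in Algorithm~\ref{alg:single_kpt} as the empirical variance of the pooled per-observation scores: writing $h_i := h_{\hat{\pi}^k,\hat{a}^k,\hat{r}^k,\hat{p}^k}(w_i)$ for $i\in\II_k$, set
\[
\hat{\sigma}^2=\frac{1}{n}\sum_{k=1}^K\sum_{i\in\II_k}(h_i-\bar h)^2,\qquad \bar h=\frac{1}{n}\sum_{i=1}^n h_i .
\]
This is the quantity that enters $z_n=\sqrt{n}\hat\psi/\hat\sigma$. Correspondingly, $\hat{\sigma}_k^2=\frac{1}{n_k}\sum_{i\in\II_k}(h_i-\bar h_k)^2$ with $\bar h_k=\frac{1}{n_k}\sum_{i\in\II_k}h_i=\hat\psi^{AIPW,k}$, matching the definition used in the proof of Lemma~\ref{lem:Tnk_conv}. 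The claim is then a purely algebraic, deterministic inequality, and needs none of the Assumptions.

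The key step is the within/between decomposition. For each fold, add and subtract $\bar h_k$:
\[
\sum_{i\in\II_k}(h_i-\bar h)^2=\sum_{i\in\II_k}(h_i-\bar h_k)^2+2(\bar h_k-\bar h)\sum_{i\in\II_k}(h_i-\bar h_k)+n_k(\bar h_k-\bar h)^2 .
\]
The cross term vanishes because $\sum_{i\in\II_k}(h_i-\bar h_k)=0$. Summing over $k$ and dividing by $n$ gives
\[
\hat\sigma^2=\frac{1}{n}\sum_{k=1}^K n_k\hat\sigma_k^2+\frac{1}{n}\sum_{k=1}^K n_k(\bar h_k-\bar h)^2 .
\]
Since the second term is nonnegative, dropping it yields the stated inequality.

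The only real obstacle is pinning down the convention for $\hat\sigma^2$, which Algorithm~\ref{alg:single_kpt} writes loosely as $\Var(\hat\psi)$. If the $1/(n-1)$ normalisation is used throughout instead, the same decomposition gives $\hat\sigma^2\ge \frac{1}{n-1}\sum_k(n_k-1)\hat\sigma_k^2$. This differs from the stated bound only by a factor $1+O(K/n)$, which is immaterial for how the result is used in \eqref{eqn:var}.

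Since $n_k=n/K$, the conclusion reads $\hat\sigma^2\ge\frac1K\sum_k\hat\sigma_k^2$. Combined with concavity of the square root (Jensen), this gives $\hat\sigma\ge\frac1K\sum_k\hat\sigma_k$, which is exactly the form invoked in the proof of Theorem~\ref{thm:valid_type1}. I would state this consequence explicitly at the end.
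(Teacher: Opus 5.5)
Your proposal is correct and follows the paper's own proof: the paper also adds and subtracts the fold mean $\bar{\Psi}_k$, kills the cross term using $n_k\bar{\Psi}_k=\sum_j h_{kj}$, and drops the nonnegative between-fold term $\frac{1}{n}\sum_k n_k(\bar{\Psi}_k-\bar{\Psi})^2$, with the same $1/n$ and $1/n_k$ normalisations you adopt. Your closing Jensen remark ($\hat\sigma\ge\frac{1}{K}\sum_k\hat\sigma_k$) is a useful addition: the paper's proof stops at the squared inequality, but it is this standard-deviation form that is actually invoked in \eqref{eqn:var}.
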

\begin{proof}
\begin{eqnarray}
\bar{\Psi}_k & \equiv& \frac{1}{n_k} \sum_{j=1}^{n_k} h_{kj} \\
\bar{\Psi} & \equiv & \frac{1}{n} \sum_{i=1}^{n} h_{i} \\
\hat{\sigma}^2_k & \equiv & \frac{1}{n_k} \sum_{j=1}^{n_k} (h_{jk} - \bar{\Psi}_k)^2 \\
\hat{\sigma}^2 & \equiv & \frac{1}{n} \sum_{i=1}^{n} (h_{i} - \bar{\Psi})^2 = \frac{1}{n} \sum_{k=1}^K \sum_{j=1}^{n_k} (h_{kj} - \bar{\Psi})^2 = \frac{1}{n} \sum_{k=1}^K \sum_{j=1}^{n_k} (h_{kj} - \bar{\Psi}_k + \bar{\Psi}_k - \bar{\Psi})^2 \\
&=& \frac{1}{n} \sum_{k=1}^K \sum_{j=1}^{n_k} (h_{kj} - \bar{\Psi}_k)^2  + (\bar{\Psi}_k - \bar{\Psi})^2 + 2 (h_{kj} - \bar{\Psi}_k)(\bar{\Psi}_k - \bar{\Psi})\\
&=& \frac{1}{n} \sum_{k=1}^K \sum_{j=1}^{n_k} (h_{kj} - \bar{\Psi}_k)^2  + \frac{1}{n} \sum_{k=1}^K n_k (\bar{\Psi}_k - \bar{\Psi})^2 + \frac{1}{n} \sum_{k=1}^K \sum_{j=1}^{n_k} 2 (h_{kj} - \bar{\Psi}_k)(\bar{\Psi}_k - \bar{\Psi}) \nonumber \\
&=& \frac{1}{n} \sum_{k=1}^K \sum_{j=1}^{n_k} (h_{kj} - \bar{\Psi}_k)^2  + \frac{1}{n} \sum_{k=1}^K n_k (\bar{\Psi}_k - \bar{\Psi})^2 + \frac{1}{n} \sum_{k=1}^K 2 (\bar{\Psi}_k - \bar{\Psi}) \sum_{j=1}^{n_k} (h_{kj} - \bar{\Psi}_k) \nonumber \\
&=& \frac{1}{n} \sum_{k=1}^K \sum_{j=1}^{n_k} (h_{kj} - \bar{\Psi}_k)^2  + \frac{1}{n} \sum_{k=1}^K n_k (\bar{\Psi}_k - \bar{\Psi})^2 \label{eqn:mean_cancel}\\
&=& \frac{1}{n} \sum_{k=1}^K n_k \frac{1}{n_k} \sum_{j=1}^{n_k} (h_{kj} - \bar{\Psi}_k)^2  + \frac{1}{n} \sum_{k=1}^K n_k (\bar{\Psi}_k - \bar{\Psi})^2 \\
&=& \frac{1}{n} \sum_{k=1}^K n_k \hat{\sigma}^2_k  + \frac{1}{n} \sum_{k=1}^K n_k (\bar{\Psi}_k - \bar{\Psi})^2 \\
&\geq &  \frac{1}{n} \sum_{k=1}^K n_k \hat{\sigma}^2_k
\end{eqnarray}
where Equation~\ref{eqn:mean_cancel} holds because for each fold, $n_k \bar{\Psi}_k = \sum_{j=1}^{n_k} h_{kj}.$ 

\end{proof}

\begin{lemma}\label{lem:sqrt_conv}
For some random variable $S_n$, if $\E[S_n]=o(n^{-1/2})$ and $\Var(S_n)=o(n^{-1})$, then $\sqrt{n}S_n\overset{p}{\to}0$. 
\end{lemma}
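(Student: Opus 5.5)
The plan is to run a Chebyshev-type argument on the recentred and rescaled variable $\sqrt{n}S_n$. Write
\[
\sqrt{n}S_n=\sqrt{n}\,\E[S_n]+\sqrt{n}\,(S_n-\E[S_n]).
\]
The first summand is deterministic. By the hypothesis $\E[S_n]=o(n^{-1/2})$ we have $\sqrt{n}\,\E[S_n]\to 0$, so for any fixed $\epsilon>0$ there is $N_0$ with $|\sqrt{n}\,\E[S_n]|<\epsilon/2$ for all $n\ge N_0$.

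For the second summand, apply Chebyshev's inequality to the mean-zero variable $\sqrt{n}(S_n-\E[S_n])$, whose variance is $n\Var(S_n)$:
\[
\Pr\bigl(|\sqrt{n}(S_n-\E[S_n])|>\epsilon/2\bigr)\le \frac{4n\Var(S_n)}{\epsilon^2}.
\]
The hypothesis $\Var(S_n)=o(n^{-1})$ gives $n\Var(S_n)\to 0$, so the right-hand side tends to zero.

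To combine the two pieces, note that for $n\ge N_0$ the event $\{|\sqrt{n}S_n|>\epsilon\}$ is contained in $\{|\sqrt{n}(S_n-\E[S_n])|>\epsilon/2\}$, by the triangle inequality. Hence $\Pr(|\sqrt{n}S_n|>\epsilon)\to 0$ for every $\epsilon>0$, which is exactly $\sqrt{n}S_n\overset{p}{\to}0$.

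There is no genuine obstacle here. The only point needing care is the reading of the hypotheses when the lemma is invoked in the preceding proofs. There the moments are often taken conditionally on the out-of-fold data $\mathcal{D}_{-k}$, or unconditionally after averaging over $\mathcal{D}_{-k}$. I would state the lemma for the unconditional mean and variance, and add the remark that if the conditional versions hold with $o(\cdot)$ bounds that are uniform, or that hold in probability, then the same Chebyshev bound applies conditionally. In that case the dominated convergence argument already used in Lemma~\ref{lem:Tnk_conv} transfers the conclusion to the unconditional probability.
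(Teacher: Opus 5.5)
Your proof is correct and follows the paper's own argument: split off the deterministic term $\sqrt{n}\,\E[S_n]\to 0$, then apply Chebyshev to the centred term with the bound $4n\Var(S_n)/\epsilon^2\to 0$. Your version with absolute values, which gives the two-sided event needed for convergence in probability, is slightly tighter than the paper's one-sided write-up, and the remark on conditional moments is not needed for the lemma as stated.
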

\begin{proof}
For some $\epsilon>0$, since $\{\sqrt{n}S_n>\epsilon\} \subseteq \{\sqrt{n}(S_n-\E[S_n])>\frac{\epsilon}{2}\}\cup \{\sqrt{n}\E[S_n]>\epsilon/2\}$, we have 
\begin{align*}
    \P(\sqrt{n}S_n>\epsilon)&\leq \P(\sqrt{n}(S_n-\E[S_n])>\epsilon/2)+\P(\sqrt{n}\E[S_n]>\epsilon/2).
\end{align*}
By Chebyshev inequality,
\[\P(\sqrt{n}(S_n-\E[S_n])>\epsilon/2)\leq \frac{4n\Var(S_n)}{\epsilon^2}\to 0,\]
and $\P(\sqrt{n}\E[S_n]>\epsilon/2)\to 0$ since $\E[S_n]=o(n^{-1/2})$. Therefore we have $\sqrt{n}S_n\overset{p}{\to}0$. 
\end{proof}

\begin{algorithm}[!htb]
\caption{PAPD-Norm-for-Personalization-Baseline}
\begin{algorithmic}[1]\label{alg:papd_norm_personalization}
\REQUIRE Number of folds $K$, Dataset $D$
\STATE Partition $D$ into $D_1,D_2,\cdots,D_K$ folds.
\FOR{$k = 1,2,\cdots,K$}
    \STATE $\hat{\pi}^k \leftarrow$ PersonalizedPolicyLearning$(D_{-k})$ 
    \STATE $\hat{a}^k \leftarrow $ BestInterventionLearning$(D_{-k})$  
%    \STATE $\mu_y  =\frac{1}{|\mathcal{D}_k|}\sum_{i \in \mathcal{D}_k} y_i$
    \STATE Compute PAPD estimate, centering $y$
    \begin{equation}
\hat{\psi}_{PAPD}^{k} =\frac{1}{|\mathcal{D}_k|}\sum_{i \in \mathcal{D}_k}  \frac{(\mathbf{1}\{\hat{\pi}^k(x_i)=a_i\}-\mathbf{1}\{\hat{a}^{k}=a_i\})}{p(a_i|x_i)}(y_i-\mu_y)
\end{equation}
\ENDFOR
\STATE Compute $\hat{\psi}_{PAPD}=\frac{1}{K}\sum_{k=1}^K \hat{\psi}^{k}$, variance $\hat{\sigma}_{PAPD}^2:=\Var(\hat{\psi})$
\ENSURE $\hat{\psi}_{PAPD}$, $\hat{\sigma}_{PAPD}^2$
\end{algorithmic}
\end{algorithm}

% EB: I don't think we can have this here-- it is post the Science revision
%\input{two_fold_polished}

\section{Assessing Reliability, Type I and Power Using Simulations}
\label{sec:simulation_reliability}

In real or semi-synthetic datasets, the true personalization effect is unknown. Therefore, to assess our estimator's performance in a setting where we can precisely calculate the true personalization effect,  we conducted some experiments in fully simulated settings. 

We consider two simulations, both with binary treatments (2 interventions, a treatment and control), and 20 discrete contexts. Contexts are sampled from a uniform distribution, and actions are also sampled uniformly at random. We now define the outcome model and personalization effects for the two separate simulations.
In all cases reward outcomes $r(x,a)$ are generated with additive Gaussian noise (\(\sigma=0.2\)).
\begin{enumerate}
\item Threshold-Large (Figure~\ref{fig:first}). In this simulation context $x_0$ has a treatment effect of (\(r(x_0, 1) - r(x_0, 0) = \Delta\)), and in all other contexts $x \neq x_0$,  the treatment effect is (\(r(x, 1) - r(x, 0) = -\alpha \Delta\)) (implying the control yields a higher outcome than treatment) where $\alpha=0.1$. $\Delta$ is a  constant that we will vary in our experiments. As there is a uniform probability over contexts, the value of the best personalized policy, providing all with treatment, and providing all with control is:
\begin{itemize}
\item Expected utility of providing all with control: $\frac{19}{20}\alpha \Delta = \frac{19}{200}\Delta$
\item Expected utility of providing all with treatment: $\frac{1}{20} \Delta$
\item Expected utility of providing each context with their best treatment/control: $\frac{10}{200}\Delta +\frac{19}{200}\Delta$
\end{itemize}
The personalization effect is $0.05\Delta$. In this scenario we have one context which highly benefits from personalization, and many contexts for which both actions yield similar, but not identical,  reward outcomes. 
\item Threshold-small (Figure~\ref{fig:second}). This case is designed to consider when there are a large number of contexts that benefit from personalization, but with a smaller magnitude. Here 30\% of the contexts (6 contexts) have \(r(x, 1) - r(x, 0) = \Delta\) and the remaining 70\% of contexts have \(r(x, 1) - r(x, 0) = -\alpha\Delta\) with $\alpha=0.5$.  Given that there is a uniform probability over contexts, the value of the best personalized policy, providing all with treatment, and providing all with control is:
\begin{itemize}
\item Expected utility of providing all with control: $\frac{7}{10}\alpha \Delta = \frac{7}{20}\Delta$
\item Expected utility of providing all with treatment: $\frac{3}{10} \Delta$
\item Expected utility of providing each context with their best treatment/control: $\frac{3}{10}\Delta +\frac{7}{20}\Delta$
\end{itemize}
The best single intervention is the control intervention. The personalization effect is $0.3 \Delta$.

%We also consider large number of features with a smaller reward gap. We assume a set of features $\X^{treat}$ favoring treatment and another set of features $\X^{control}$ favoring control. In our simulation, $\X^{treat}$ contains 30\% of the features  and $\X^{control}$ contains the remaining 70\% of the features. For $x\in\X^{treat}$, the reward gap is \(r(x, 1) - r(x, 0) = \Delta\), and for the rest $x'\in\X^{control}$, the reward gap \(r(x', 0) - r(x', 1) = \alpha\Delta\) with $\alpha=0.5$. Given that there is a uniform probability over contexts, the best single intervention is to provide the control action, which yields an expected utility of $0.12\Delta$ over providing all individuals with treatment. The personalized policy can do even better, and the true personalization effects is $0.72\Delta$, so the true utility of personalization is $0.6\Delta$. We choose $\alpha$ in this way so the overall best choice is still giving control to everyone, but there is a personalization effect for giving treatment to $\X^{treat}$. We call this \textsf{threshold-small} instance given there are multiple features with small gaps favoring treatment. 

\end{enumerate}

We consider these two cases because we think they reflect interesting different realistic settings. For example, a precision oncology drug may greatly help only mutation‐positive patients but harm the rest over a control, whereas a broad wellness program might yield modest gains for most participants yet slightly worsen outcomes for a few. Studying both extremes examines how well our test identifies personalization when benefits are either concentrated or widely distributed.

%We create a binary treatment simulation with 20 discrete contexts and outcomes that range from $(0,1)$, we vary the true personalization effect $\psi$ from 0 to 0.05. Rewards are generated with Gaussian noise (\(\sigma=0.2\)).

\begin{figure}[htbp]
  \centering
  % First sub‑figure
  \begin{minipage}[b]{0.48\textwidth}
    \centering
    \includegraphics[width=\textwidth]{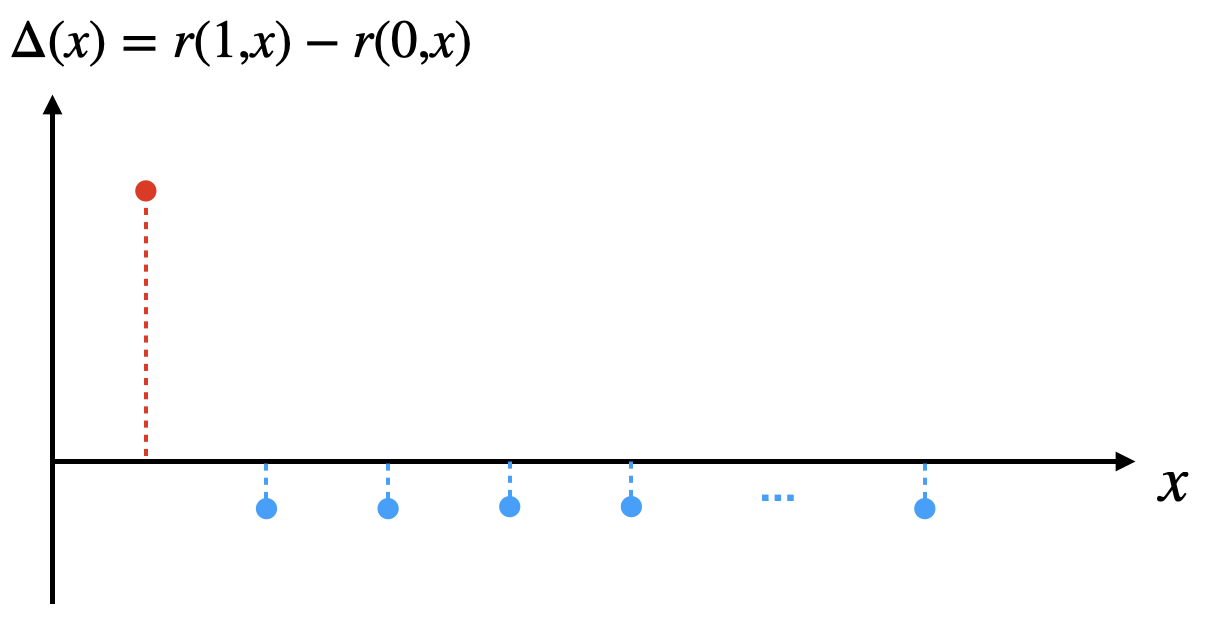}%threshold_large.png}
    \caption{Threshold-large instance}
    \label{fig:first}
  \end{minipage}
  \quad % horizontal separation
  % Second sub‑figure
  \begin{minipage}[b]{0.48\textwidth}
    \centering
    \includegraphics[width=\textwidth]{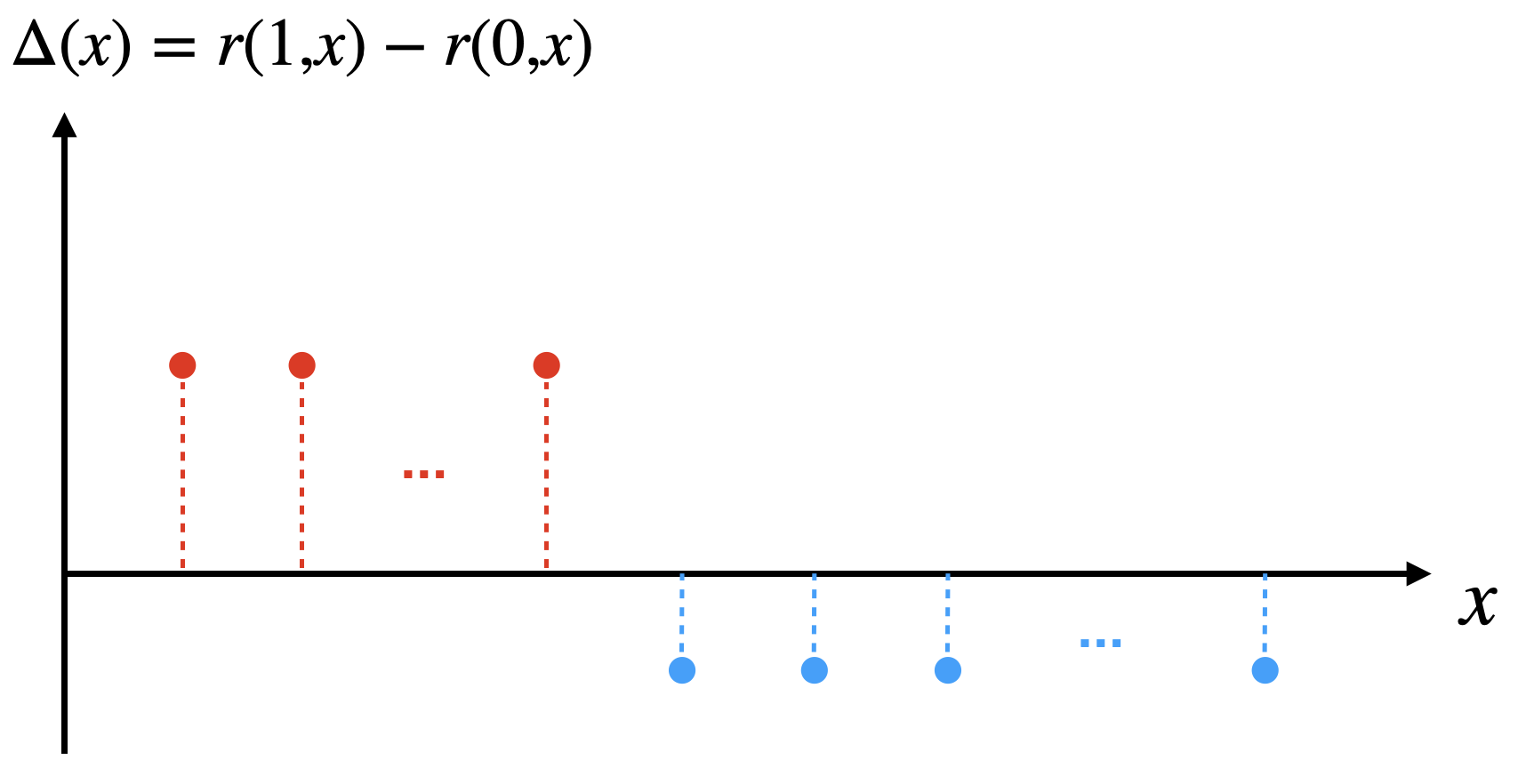}%threshold_small.png}
    \caption{Threshold-small instance}
    \label{fig:second}
  \end{minipage}
\end{figure}

We consider a range of $\Delta$ values, which allow us to effectively vary the size of the personalization effect. For each instance (fixed $\Delta$),  we sample 100 datasets of size $N=3000$, and report a $p$ value at the end using Algorithm~\ref{alg:kpt}. We use a random forest regressor for reward learning, a doubly robust policy forest learner from the econml package for policy learning, and learn the single-best action by selecting the action with the maximum reward estimate (aka we use a random forest regressor to build a reward model only on the data used for best action/policy learning, and then take the action with the maximum expected value over the contexts in this set).  We compare our results with the baselines TrainEval, SRP~\cite{shi2020sparse}\footnote{Note because the context is a scalar/single dimension, we use their test.QTI function since the sparse projection aspect is not needed.}, and PAPD~\cite{imai2023experimental}. We adapt PAPD to be estimate the personalization effect, by using the best intervention learner as one policy learner. We also center the outcome variable to reduce estimator variance, as suggested in the authors' empirical section. Algorithm~\ref{alg:papd_norm_personalization} shows the pseudocode for PAPD for the personalization context. We use 3 folds for PAPD. We choose this to enable PAPD to benefit from multiple folds and using more data for policy learning. As we discuss in the main text, it is an interesting direction for future work to explore how best to tune the size of the partitions used for policy learning, effect estimation, and (in the case of our KPT) the reward and propensity modeling. 

%Figure~\ref{fig:s3_new} and \ref{fig:test_efficient_large} display the $p$-values, the percent of times the test returns significance among 100 datasets (i.e. significance rate), and the percentage of contexts for which the estimated personalized policy matches the true best policy (averaged over the 100 datasets, splits and folds) over true personalization effect. %This last part is informative because it helps to highlight that our method can often reliably detect a personalization effect even before the best personalized policy is identified for all contexts (for example, when the personalization effect is $0.03$ in the small threshold domain, 80\% of contexts ..***. 

Figure~\ref{fig:s3_new} and \ref{fig:test_efficient_large} show the $p$-values versus the true personalization effect, and the percent of times the test returns significance among 100 datasets (i.e. significance rate) vs the true personalization effect. Note that the baselines-- SRP, PAPD and TrainEval-- only use a single split of the data, and so we report results for a single split of the data. It is possible results for these methods could change slightly under a different seed (we show their variance to the seed in the main text).

Both SRP and our KPT have $\leq 5\% $  type I error, adequately rejecting the test of evidence of personalization benefit when none is present, but TrainEval and PAPD falsely reject the null hypothesis in over 10\% of the datasets when the personalization effect is zero. Our K-fold Personalization Test performs similarly compared to SRP and PAPD when there is a non-zero personalization effect, though SRP slightly outperforms (has higher detection rates) for two  personalization effect sizes, likely due to its specialized policy learning procedure for binary treatments, which could be incorporated into KPT. Overall these results help demonstrate the reliability and efficiency of KPT. 

Our method relies on learning a number of nuisance parameters, including learning the personalized policy, which is a challenging and active area of research.   We define policy accuracy as the proportion of contexts where the learned policy matches the optimal personalized policy for that context, averaged over the datasets, splits and folds: $\frac{1}{SKN_D}\sum_{N_D} \sum_{S}\sum_{K} \frac{1}{|D_{dsk}|} \sum_{D_{dsk}} I(\hat{\pi}_k(x_i) = \pi^*(x_i))$, where $N_D$ is the number of datasets (here, 3000). Results are shown in the third column of Figure~\ref{fig:s3_new} and Figure~\ref{fig:test_efficient_large}. Although our theoretical efficiency results require fast policy learning, these results show our K-fold Personalization Test can still work well even in the regime when policy learning is not perfect. For example, Figure~\ref{fig:s3_new} shows at $\psi=0.03$, policy learning accuracy 
is under $90\%$ but KPT still detects a personalized effect in almost all of the datasets. PAPD has a slightly higher policy learning accuracy than KPT, but this is likely due to the fact that they use more data for policy learning, yet KPT still has a similar performance in terms of $p$-values and significance rates. 

% shows how often the best overall intervention changes across folds and splits, and what percent of the contexts have at least two different personalized policy interventions assigned to them across folds and splits, in our synthetic example. 

%%%%%%%%%%%%%%%%%%%%%%%%%
% practical efficiency (though don't always expect to hold) 
% sensititivity to policy learning? 
% show if learn worse reward / 
% look at MSE of propensity and reward models? and still do quite well?
% 
% did our CI cover as desired? 
% reliability
% stability? 
% type 1 error
% could look at for these
% discuss policy learnining in conclusion. once have it, what to do
% stability 
% real-world

\begin{figure}[!htb]
    \centering
    \includegraphics[width=0.8\linewidth]{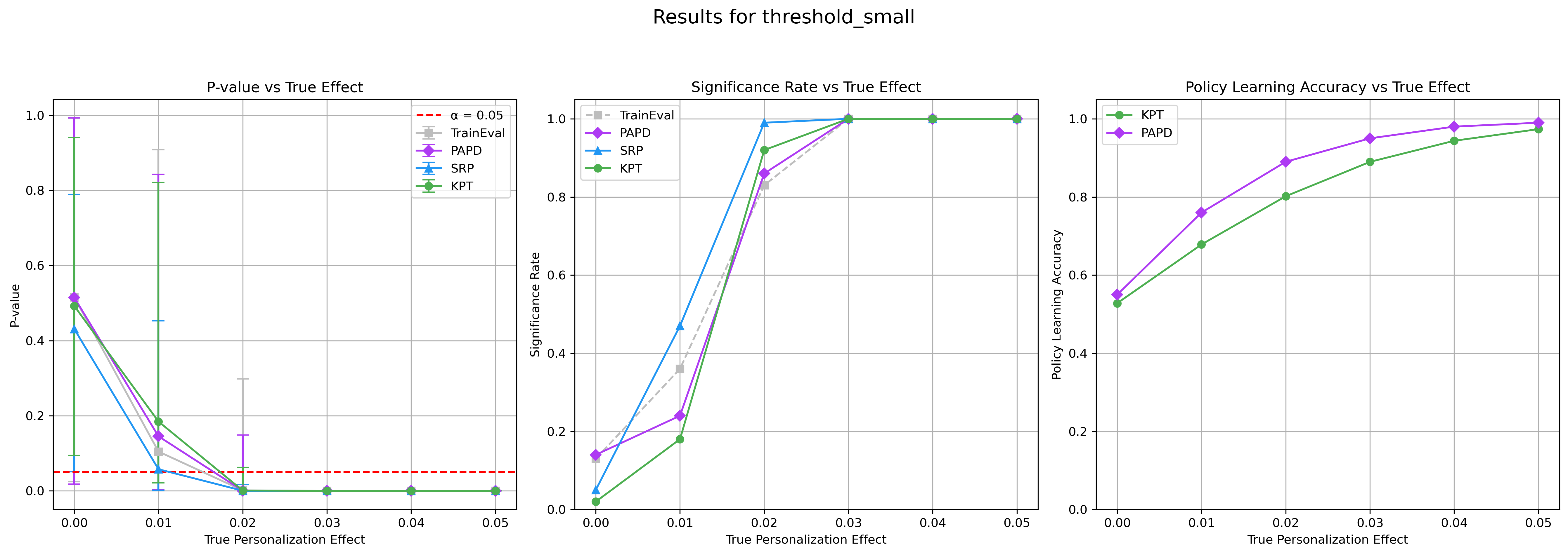}
    \caption{$p$-value, significance rate, and policy accuracy results for Threshold-small instance. Results averaged over 100 sampled datasets, each of size $N=3000$. Our K-Fold Personalization Test uses $S=100$. Intervals in the ``P-value vs True effect" (left-most subplot) are shown with 5\% and 95\% empirical percentiles. Our test is efficient and offers a theoretical benefit. In addition, when the treatment effect is zero, TrainEval often falsely reject the null hypothesis but our approach does not.}
    \label{fig:s3_new}
\end{figure}

\begin{figure}[!htb]
    \centering
\includegraphics[width=0.8\linewidth]{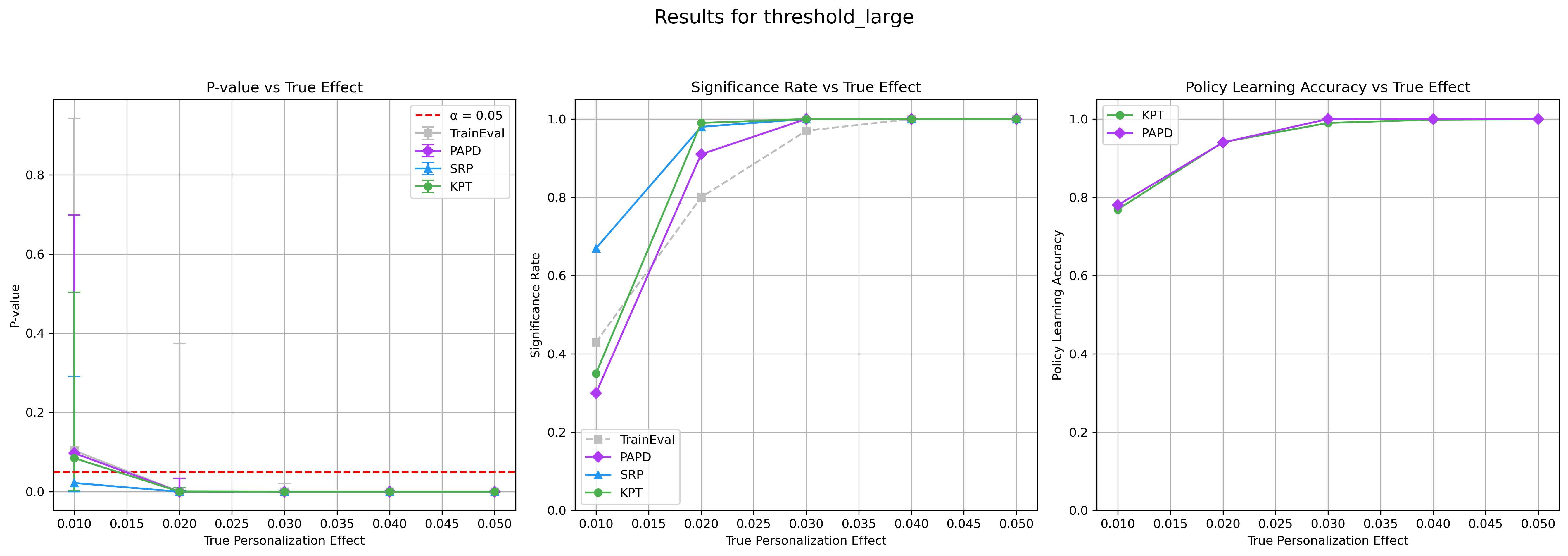}
\caption{$p$-value, significance rate, and policy accuracy results for Threshold-large instance.  Results averaged over 100 sampled datasets, each of size $N=3000$. Intervals in the ``P-value vs True effect" (left-most subplot) are shown with 5\% and 95\% empirical percentiles. Our K-Fold Personalization Test uses $S=100$.}
    \label{fig:test_efficient_large}
\end{figure}

\paragraph{CI coverage.} We plot the coverage of confidence interval, that is, the percentage of confidence intervals covering the true value over 100 different datasets, for our approach KPT on both simulation instances. Here we do not have confidence intervals for the CI coverage since KPT only outputs one confidence interval per dataset. We show the results in Figure~\ref{fig:CI_coverage_sim}. We can see that for both Threshold-small and Threshold-large instances our approach achieves valid coverages over all sizes of true effect ranging from 0 to 0.1. 
\begin{figure}[!htb]
    \centering
    \includegraphics[width=0.5\linewidth]{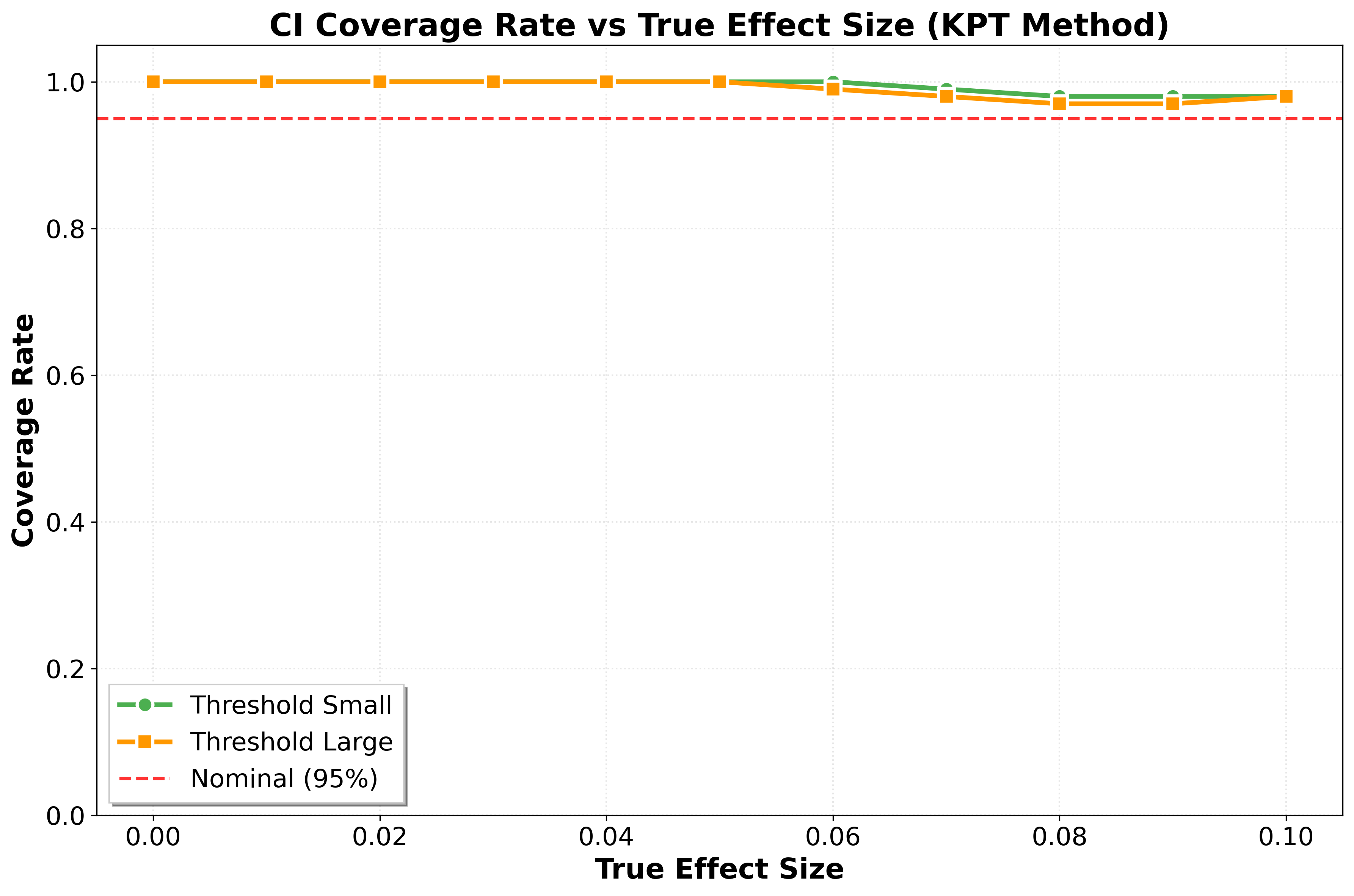}
    \caption{CI coverage for simulation instances}
    \label{fig:CI_coverage_sim}
\end{figure}

\paragraph{Policy stability.}
Here we compute policy stability as the percentage of the contexts $X$ whose learned optimal policies is always the same, across repeated splits and folds: $1- \frac{1}{|X|}\sum_x I(\exists s,s' \; \exists k,k' \hat{\pi}_{sk}(x) \neq \hat{\pi}_{s'k'}(x))$. We also compute the single-best action stability, which is the percentage of runs that it returns the same single best intervention, where for each run, the algorithm returns an overall-best action via the majority vote. 

Figure~\ref{fig:policy_stability_sim} shows the results. Here the error bar indicates the standard deviation across different runs. We can see that as true effect sizes increases, both policy learning and single-best action learning are more stable. Unsurprisingly, in many cases it is much easier to learn the best-arm than the personalized policy.
\begin{figure}[!htb]
    \centering
    \includegraphics[width=0.48\linewidth]{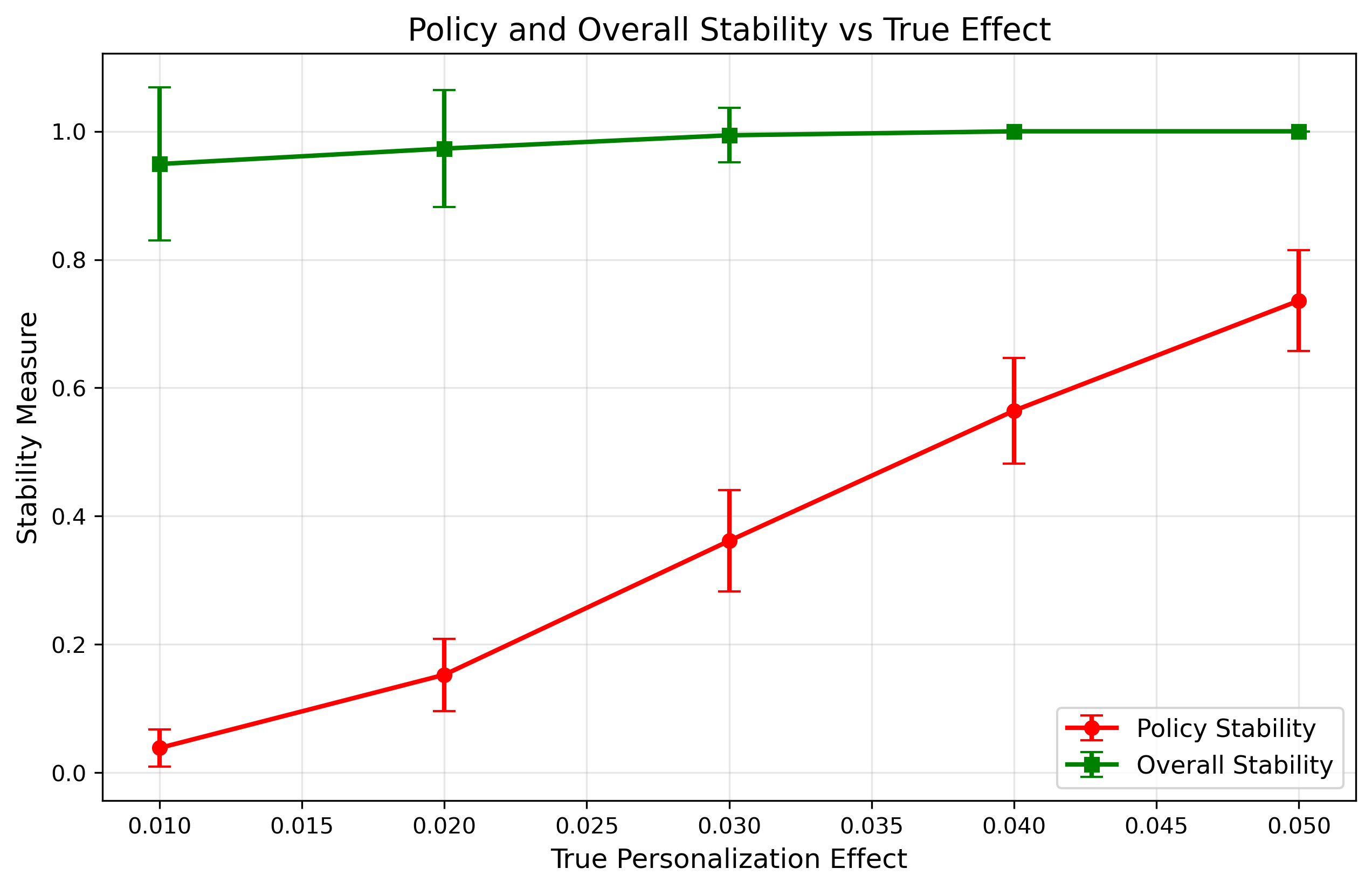}
    \includegraphics[width=0.48\linewidth]{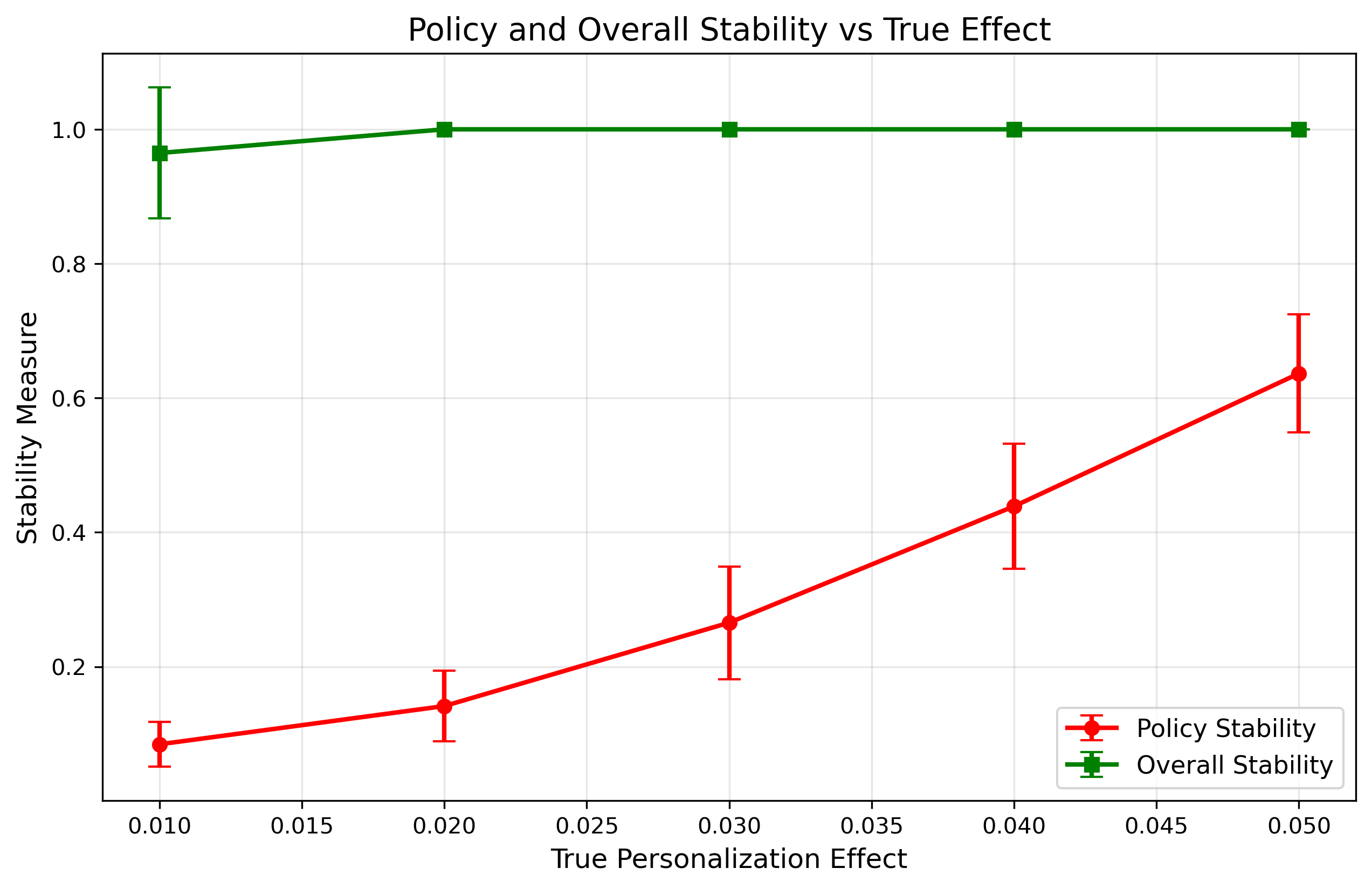}
    \caption{Policy stability for simulation instances. (left) Threshold small (right) Threshold large.}
    \label{fig:policy_stability_sim}
\end{figure}
\paragraph{Single-best action stability.} Here we plot the single-best action stability versus the action gap, where the action gap is computed as the utility of the best action versus the second best action. Figure~\ref{fig:action_gap} shows the results for both simulation instances. We can see that the single-best action is stable even if the action gap is as small as 0.007.

\begin{figure}[!htb]
    \centering
    \includegraphics[width=0.48\linewidth]{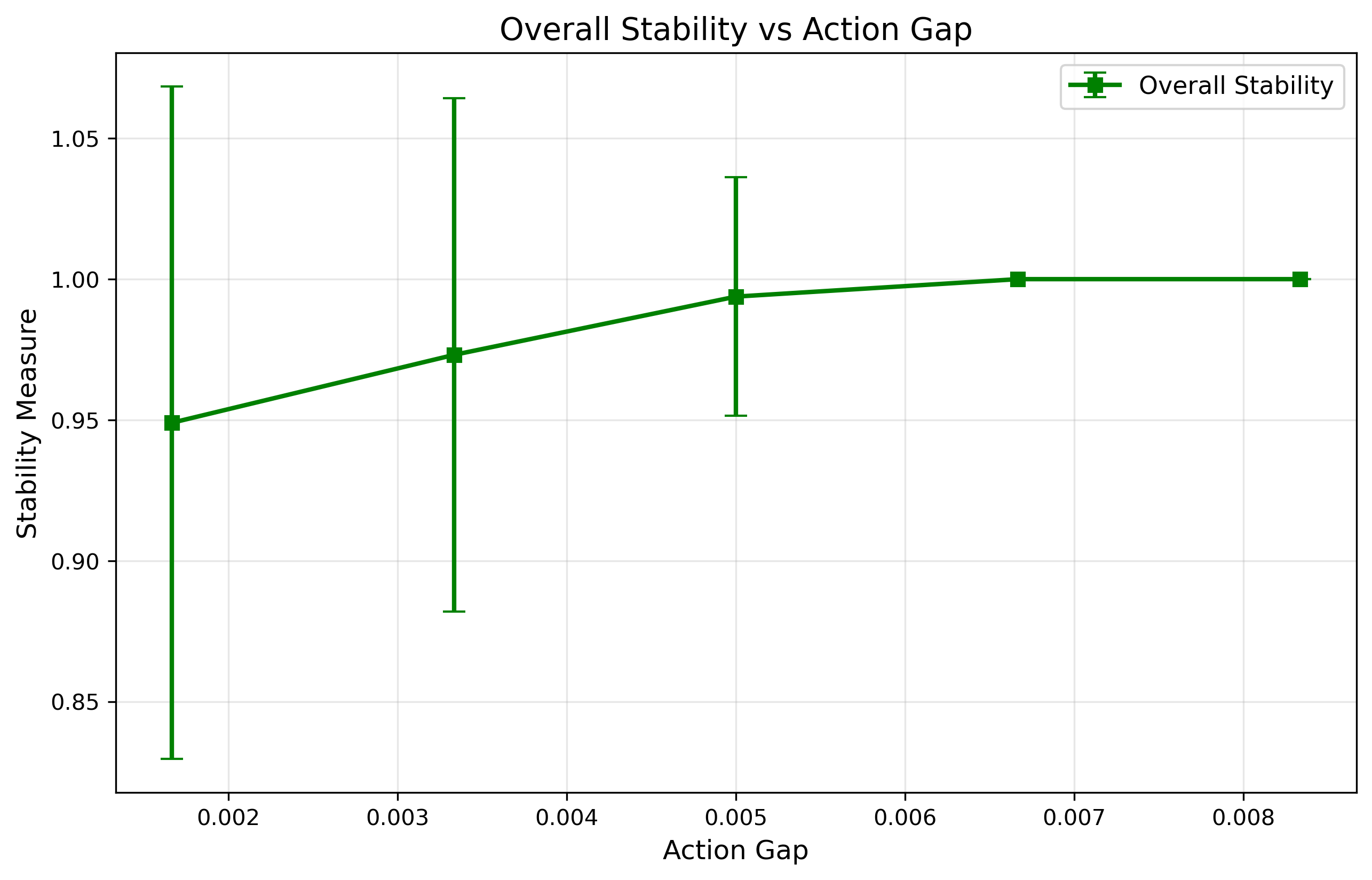}
    \includegraphics[width=0.48\linewidth]{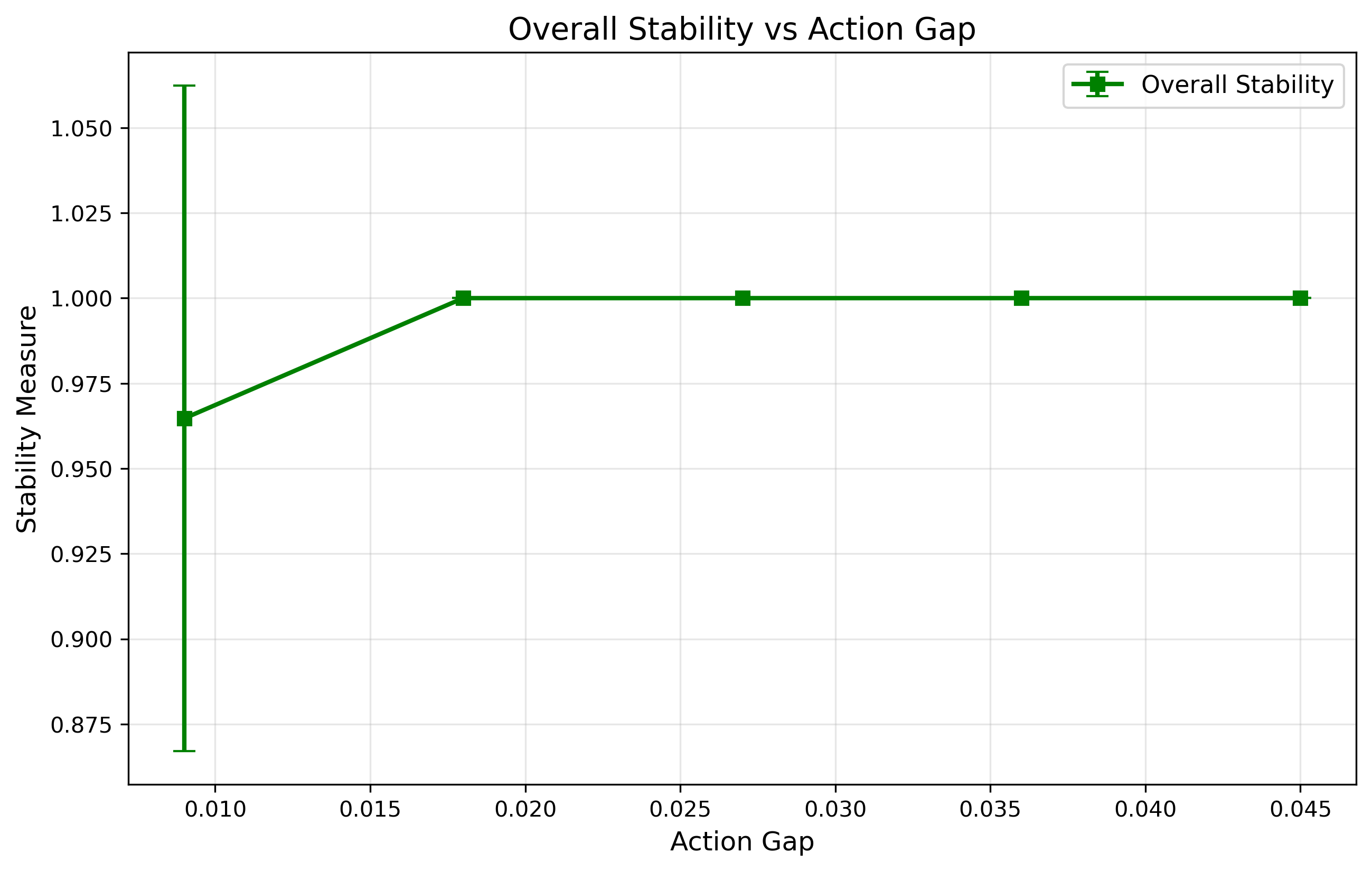}
    \caption{Action stability for simulation instance (left) Threshold small (right) Threshold large}
    \label{fig:action_gap}
\end{figure}

\paragraph{Cumulative Distribution Function for $p$-values.}
We plot the CDF for $p$-values for the simulation instance across different true effect sizes. Figure~\ref{fig:cdf_p_sim} shows the result. We can see that both TrainEval and SRP have very wide range of $p$-values. Also, when the true effect is zero, some $p$-values for the baselines fall below $\alpha=0.05$, which indicates false rejection. 
\begin{figure}[!htb]
    \centering
    \includegraphics[width=0.48\linewidth]{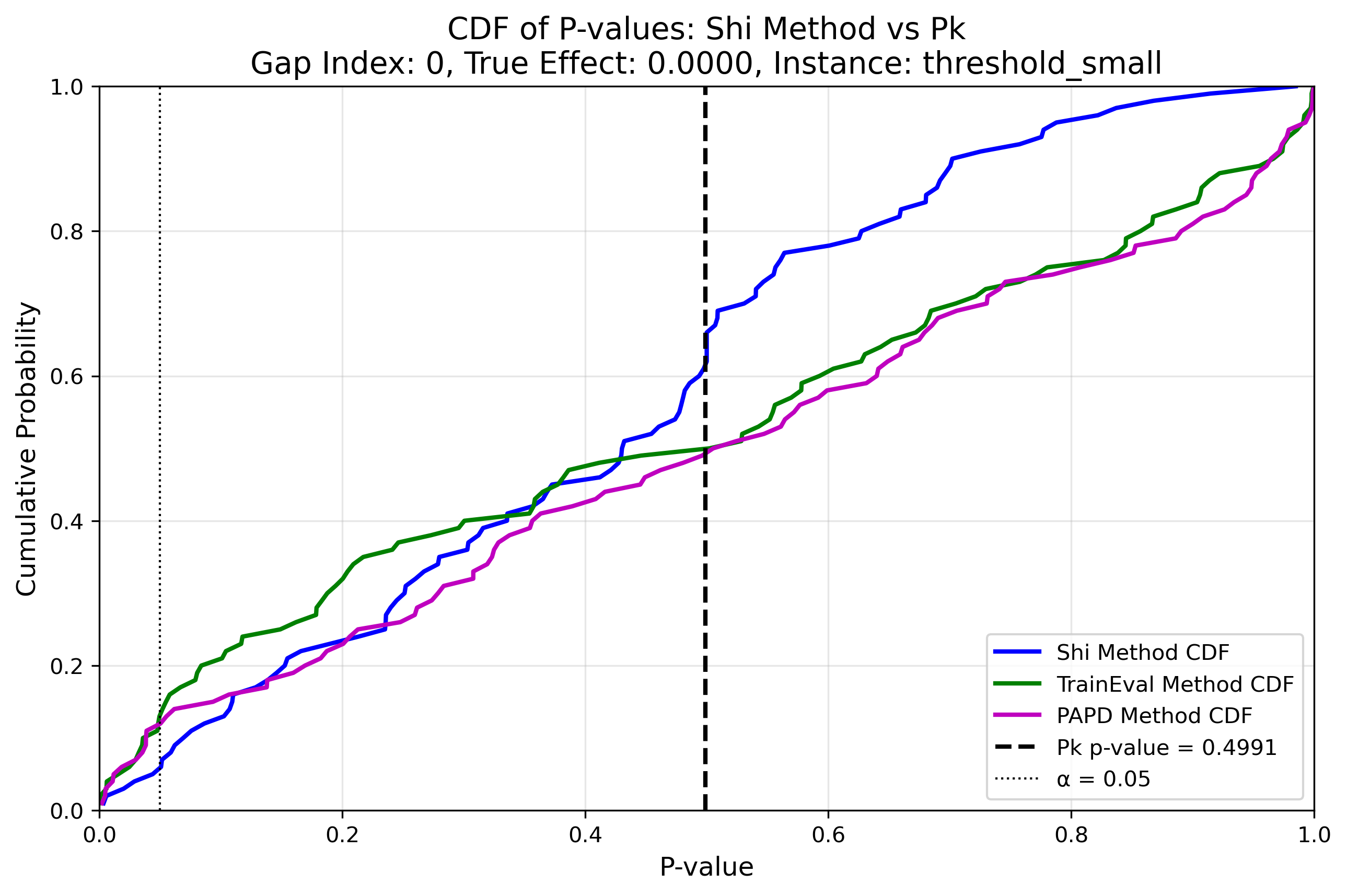}
    \includegraphics[width=0.48\linewidth]{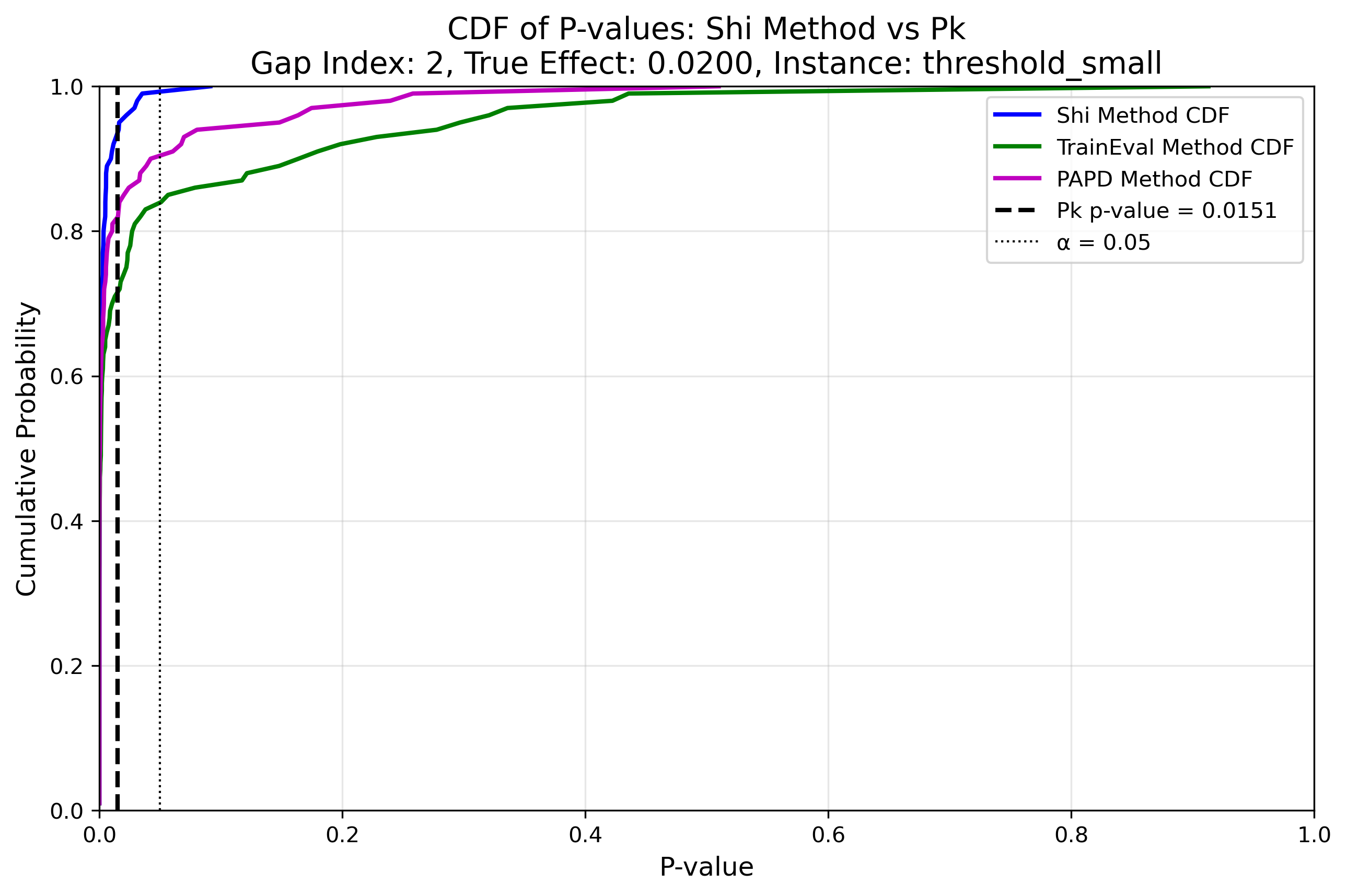}
    \caption{CDF for $p$-values for the simulation instance}
    \label{fig:cdf_p_sim}
\end{figure}

\subsection{Experimental Methods for Real-world Datasets}

For all datasets, we run 100 repeated splits for all methods. We use common policy classes-- trees, forest and linear models with thresholding. There are many methods for learning personalized decision policies, and these policy classes are flexible and common methods available in standard packages, like sklearn and econml. Our focus in the paper was in demonstrating a general methodology for estimating the size of personalization effects, rather than definitively assessing whether personalization effects are every present in a particular domain. For this reason, we did not exhaustively optimize over policy classes and hyperparameters in each domain, and instead aimed to pick a common method using for personalized policy learning. We explore the impact of the chosen policy class in a subsequent section. The learner and data processing scripts we implement in this section use standard packages in Python, in particular, scikit-learn (\url{https://scikit-learn.org/}), NumPy (\url{https://numpy.org/}), and pandas (\url{https://pandas.pydata.org/}). All packages are publicly available. %We include the hyperparameters used for each setting inside of the provided code. 

While we use tree/forest based methods, which are quite interpretable and flexible, in three domains, in the joke (recommendation) domain we used the argmax of a linear outcome model class. This was done for two reasons. First, prior work~\cite{kong2020sublinear} had used such a policy class for this domain. Second, we found in initial experiments that, likely due to the fairly large covariate dimension $(D=90)$, learning with decision trees was quite computationally slow. As our K-fold personalization estimate requires learning $K$-personalized decision policies across $S$ random partitions of the data, for computational reasons we focused on linear disjoint classes. 

% For three of the datasets (job, health and education) we use the econml doubly robust policy tree method ("DRPolicyTree") for learning a personalized policy, with the following hyperparameters:  
%                 max\_depth=2, 
%                 min\_impurity\_decrease=0.01, 
%                 honest=True, 
%                 min\_samples\_leaf=5,
%                 and a known propensity model, using the known probabilities of each action in the respective domain. 

For the comparison with PAPD~\cite{imai2023experimental} we use the specialization of PAPD to compute the personalization effect as shown in Algorithm~\ref{alg:papd_norm_personalization}-- see the prior section for additional details.

Below we describe additional processing and modeling details for each specific dataset. 

\paragraph{Semi-Synthetic Job Corps dataset.} We use the source data from a large randomized trial evaluating the impact of the Job Corps program~\cite{schochet2008does} on participants age 18-24. The data are available publicly online at \url{https://www.openicpsr.org/openicpsr/project/113269/version/V1/view}. The original data are in sas, and we converted the data to csv. We merged the impact and key\_vars sas datafiles to create a single dataframe of covariates, treatment (or control) status and outcomes.  We used 12 covariate features from the dataset: `FEMALE',  `HASCHLD', `AFTER\_ZT', `ARRST\_GR', `EDUC\_GR', `AGEGROUP', `INPERS', `IN57', `NONRES', `TYPEAREA', `RACE\_ALL',  `APP\_QTR'. As outcome variable we use 'EARNY4' which is the self reported weekly wages 4 years after program enrollment. We restricted the data to individuals that had all covariates, treatment status and outcome variables present (no missing data). This yielded a final dataset of 10214 participants of age 18-24 randomly assigned to treatment or control. All covariates and outcome values for the 10214 individuals are unmodified except for the outcomes for participants aged 18-19 at enrollment, where we 
\begin{itemize}
\item Add a synthetic positive wage offset of +$\$30/week$ in the control group
\item Add -$\$5/week$ (a negative offset) to the wages reported for those in the Job Corps group. 
\end{itemize}
We consider a personalized intervention policy class that can make different decisions for the three age brackets analyzed in prior work~\cite{schochet2008does}. For our KPT and the baseline TrainEval, we therefore use `AGEGROUP' as the input feature for the personalized policy class. We use the public econml package (\url{https://github.com/py-why/EconML}) DRPolicyTree learner for the personalized policy class, with parameters                 max\_depth=2, 
                min\_impurity\_decrease=0.01, 
                honest =True, 
                min\_samples\_leaf=5,
 and using known propensity model weights (since the original data were an RCT). A policy tree is a flexible and popular policy class similar to a decision tree. We did not seek to carefully tune these parameters and instead chose them because honest trees have been noted to have beneficial properties, and a minimum impurity decrease threshold, a limited depth and a minimum number of data points per leaf is likely to reduce the potential for learning a decision policy that is overfit to the training data with poor generalization performance.  Both our KPT and baseline TrainEval use a reward outcome model and we use the above 12 features and intervention (JobCorps or control) to predict reported wages. We use the random forest regressor from the sklearn package for reward outcome modeling.  For learning the single best intervention ("best-arm" learning),  we estimate the empirical outcome of each intervention using the known propensity weights (which here come from a RCT) and select the action with the highest estimated outcome. We use an identical personalized policy class, best single intervention policy learner, and reward outcome model for Nefazodone. 
 
 %use a doubly-robust learner from the econml package to estimate the ATE across all observations, and output treatment if the estimated ATE is positive, and control otherwise. %For learning a personalized decision policy, we use the econml package doubly robust "DRPolicyTree" with the following hyperparameters:  
 %max\_depth=2,   min\_impurity\_decrease=0.01,   honest=True,     min\_samples\_leaf=5,    and a known propensity model, using the known probabilities of each action. 
For PAPD~\cite{imai2023experimental} we use the same policy learning and best intervention learning algorithms as used in KPT and TrainEval. 
SRP\cite{shi2020sparse} does not support using separate features for policy learning and reward learning, so we use `AGEGROUP' for both policy learning and reward learning. Since there is only a single feature available, the sparse random projection aspect of their method is not relevant, and we use their test.QTI function (which implements the other aspects of their method, but not the sparse random projection designed to handle high dimensional dimension) from the publicly available package from \cite{shi2020sparse} for their method. We slightly extend their code to do 100 runs across 100 random partitions of the original input dataset, and include this in our released code. 

\paragraph{Nafazodone.}
The Nefazodone-CBASP clinical trial\cite{keller2000comparison} randomized 681 patients with nonpsychotic chronic major depressive to either receive the medication Nefazodone (a serotonin antagonist and reuptake inhibitor), the Cognitive Behavioral-Analysis System of Psychotherapy (CBASP), or both, for 12 weeks. A large number of potential baseline covariates were available, including alcohol and drug abuse status, and items in the Structured Clinical Interview for DSM Disorders. The Hamilton Rating Scale for Depression (HRS-D) was used to assess baseline status and status at week 12 and final follow up. Following previous analyses of individualized policy learning with this data\cite{shi2020sparse,zhao2012estimating}, we included 647 patients (216 Nefazodone, 220 CBASP, 211 combined) with complete records on 50 baseline covariates. 

The personalized policy class, best single intervention policy learner, and reward outcome model for Nefazodone are the same as those used in the semi-synthetic Job Corps domain. Specifically, for the reward outcome model ($\hat{r}(x,a)$) used in our KPT and baseline TrainEval, we use all covariates and use the random forest regressor from the public sklearn package.  For the personalized policy class and training, we use the public econml package (\url{https://github.com/py-why/EconML}) DRPolicyTree learner  with parameters                 max\_depth=2, 
                min\_impurity\_decrease=0.01, 
                honest=True, 
                min\_samples\_leaf=5, 
 and using known propensity model weights (since the original data were an RCT). For learning the best single intervention (best arm learning) on a particular data subset $\tilde{D}$, we estimate the empirical outcome of each intervention using the known propensity weights (which here come from a RCT) and select the action with the highest estimated outcome.  We use the same policy learning and best intervention learning algorithms for PAPD~\cite{imai2023experimental} as used in KPT and TrainEval. 

To implement the SRP method\cite{shi2020sparse}, we use the RPCV (randomized projection) and other functions from their public software released as part of their paper. This code assumes that the dataset size is divisible by 3 in order to perform the dataset splitting. We therefore use all 216 Nefazodone patients, and randomly sample 210 (of 211) patients that received the combined therapy, and 216 (of the 220) patients that received the CBASP therapy.
This means that our version of the SRP baseline uses 5 less data points (642 instead of 647 patients) compared to the cohort used in the original cited SRP results\cite{shi2020sparse}; however, we expect these to have a minor effect, and the original paper\cite{shi2020sparse} also does not find a positive personalization effect in this dataset with the "full" 647 patients.
We also modified their code to use different dataset re-ordering and random partitions and ran it 100 times, to look at sensitivity to how the data are split (as their method does rely on splitting the data).  We include our modified code for their method for this dataset in our released code.

\paragraph{MOOC Course Completion with Behavioral Interventions.}
We use a publicly available dataset (\url{https://osf.io/9bacu/overview}) from a large randomized controlled trial over 247 online courses across 2.5 years \cite{kizilcec2020scaling} . The outcome is course completion rate. We focus on the wave 1 and 2 data, which contains 199,517 participants with 20 covariates. We select 6 covariates `intent\_assess', `hours', `crs\_finish', `educ', `is\_fluent', and `HDI4'. There are 6 possible interventions in this dataset. For our KPT and the baseline TrainEval, we use the same reward outcome model class and single best intervention learning procedure. Specifically, we use all covariates and use the random forest regressor from the sklearn package for reward outcome modeling. For best single intervention (best-arm) learning, 
 we estimate the empirical outcome of each intervention using the known propensity weights (known from the RCT data) and select the action with the highest estimated outcome. For policy learning we use all covariates and the doubly robust policy tree learner from the econml package "DRPolicyForest" with hyperparameters n\_estimators=100 and a known propensity model, using the known probabilities of each action.  We explore sensitivity to the personalized policy class in a later section.

As in the other domains, for PAPD~\cite{imai2023experimental} we use the same policy learning and best intervention learning algorithms as used in KPT and TrainEval. We did not run SRP since there are 6 possible interventions and their approach is designed for binary interventions\footnote{Shi et al.\cite{shi2020sparse} did apply their approach to a scenario with three interventions, the Nefazodone dataset, but had to make multiple assumptions: they assumed the best single intervention was known, and create two different personalized policy class: one with the best single intervention plus one action, and one with the best single intervention plus the other action. If the best action is unknown, one could construct 15 pairs of policies and try to look for effects across these and use union bounding, but even here it would not address if there is a personalized policy that requires 3 or more interventions. }.

\paragraph{Joke Recommendation dataset.} We use the public and well studied Jester datasets (\url{https://eigentaste.berkeley.edu/dataset/}) from \cite{goldberg2001eigentaste}. Specifically we use Jester Dataset 1 
which contains 73,421 user ratings (from -10.00 to +10.00) for 100 jokes collected between April 1999 - May 2003. We follow the preprocessing methods in \cite{kong2020sublinear}. We first select 10 highest-rated jokes and keep the 48447 users who have ratings for all these jokes. We then keep the 10 highest-rated jokes as potential interventions, and transform user ratings for the remaining 90 jokes into a 100-dimensional covariate feature vector, to mimic a typical recommendation system where a system would leverage prior user ratings to inform new content recommendations. We then select a subset of these ratings by randomly assigning a joke for each user and only observing the rating for that joke, to mimic a randomized control trial with bandit feedback. We normalize scores to be between 0 and 5.0.

 For our KPT and the baseline TrainEval, for the reward outcome modeling we use the sklearn package lasso  model (with $\alpha=0.001$) to learn one outcome model per action ($\tilde{r}_a$). For policy learning, we first learn a lasso outcome model (with $\alpha=0.001$) per action ($\tilde{r}_a$) and the personalized policy is constructed by choosing the action that maximizes the predicted reward  (for the personalized policy). The best overall intervention is constructed using the same policy lasso outcome models using   $\tilde{a} = \arg\max \mu(a)$, $\mu(a) = \frac{1}{|D_s|}\sum_{i=1}^{D_s} \tilde{r}_a(x_i)$. SRP\cite{shi2020sparse} is designed for binary interventions and is not applicable to this setting for the same reasons detailed about for the MOOC Course Completion with Behavior Interventions dataset. 

For PAPD~\cite{imai2023experimental} we continue to use the same policy learning and best intervention learning algorithms as used in KPT and TrainEval. We note that in this domain in early explorations we did not center the outcome variable $Y$ in the PAPD estimator-- doing so significantly reduced the confidence interval size, and here, and for all settings, we center the outcome variable.

\subsection{Policy and Best Intervention Learning Stability}\label{sec:policy_stab}

We also compute the policy stability and single-best action stability for real data. Recall policy stability is the proportion of contexts for which the personalized policy learned is always the same, across all splits and folds: $1- \frac{1}{|X|}\sum_x I(\exists s,s' \; \exists k,k' \hat{\pi}_{sk}(x) \neq \hat{\pi}_{s'k'}(x))$.
%Since here the true best intervention is unknown, the 
Single-best action stability is computed as the percentage of the occurrence of the most frequently output best intervention across splits and folds. The personalized policy learning displays the proportion of contexts' that have at least two different interventions across different personalized policies (across folds and splits). The result is shown in Table~\ref{tab:policy_stability}. The number inside the parentheses denotes the number of actions for the dataset. We can see that for the joke dataset, policy learning is quite unstable, likely since there are a large number of covariates. 
%both policy learning and single-best action learning is highly stable, likely due to the large datasize and the likely presence of significant difference in what jokes are found funny for different individuals, and that some jokes may be more funny on average across the population.
For education, policy learning is very unstable, and the most frequent single-best action only occurs 25.8\% of the time. As there are four different actions, this indicates that the predicted single-best action is very close to uniform across repeated splits (compared to a perfectly uniform of 25\%). For jobcorps, the policy learning is again highly unstable, and the predicted single-best action is also quite unstable (58.2\% with two actions). We observe that though our theoretical guarantees for Type I error assume that best interventions can be easily learned, in our semi-synthetic Job Corps example, our method is still able to detect the benefit of a personalized policy. In the Nefazodone dataset  policy learning is unstable, but the same single-best action is learned 91\% of the time. This represents a setting where learning the best personalized policy is hard, while learning the single-best action is relatively easy. % and there is not a significant benefit of personalization as shown before in Figure~\ref{fig:ci_cdf}. 
\begin{table}[!htb]
    \centering
    \begin{tabular}{|c|c|c|}
        \hline Dataset (Number of Possible Interventions) & Policy & Best-arm \\
       \hline  Education (4)  & $6.5\times 10^{-5}$ & 0.258 \\
\hline Jobcorps (2)   & 0.009 & 0.582 \\
\hline Nefazodone (3) & 0.020 & 0.907 \\
\hline Joke (10)     & 0.0279 & 0.668 \\ % used to be 1 for policy
        \hline
    \end{tabular}
    \caption{Personalized policy and best intervention (arm) stability across domains. Personalized policy stability is the percent of contexts (in our dataset) for which the learned personalized policies always select the same intervention: $1- \frac{1}{|X|}\sum_x I(\exists s,s' \; \exists k,k' \hat{\pi}_{sk}(x) \neq \hat{\pi}_{s'k'}(x))$. Best intervention/arm stability is the intervention that is most frequently learned as the single best intervention. Nefazodone (which has 3 possible interventions) and Joke datasets (which has 10 possible jokes) have a high consistency of the best overall action, but Jobcorps (with 2 interventions) and Education (with 4 possible actions/interventions) do not. }
    \label{tab:policy_stability}
\end{table}

\subsection{The Impact of Personalized Policy Class}\label{sec:policy_class_choice}

The personalization estimate depends on an input choice of a policy class, and as we discuss, lack of a significant effect for one policy class does not suggest that there is never an impact of personalization in this domain. For example, it could be that using a simple policy class is insufficient to identify important interactions that, when taken into account, allow the policy class to meaningfully identify that some subgroups benefit significantly from different interventions. 

Here we explore how our K-fold Personalization estimates vary across different choices of the personalized policy class for the four domains: semi-synthetic Jobcorps, Nefazodone, MOOC Education and Joke Recommendation. We consider multiple possible policy classes, including a thresholded linear policy, a DR Forest (from the econml package), a DR Policy Tree (from the econml package). For the Joke Recommendation dataset, we consider using ridge, linear and lasso regression. Figures~\ref{fig:jobcorps_multiclass},~\ref{fig:nefazodone_multiclass},~\ref{fig:education_multiclass}, and ~\ref{fig:joke_multiclass} display the impact of the choice of policy class on our PKT method's estimation of the personalization effect. 

%Nefazodone dataset and the semi-synthetic Jobcorps dataset. 
%We plot the confidence intervals with $z$-statistics  for Jobcorps and Nefazodone datasets, and compare them with other baselines in Figure~\ref{fig:alltypes_jobcorps} and \ref{fig:alltypes_nefazodone}. 

Unsurprizingly, we observe the choice of policy class does have an impact on the confidence intervals and $z$-statistics. Often this effect happens to be fairly minor in the domains and policy classes explored. Hoewver it is worth noting that in the case of Nefazodone, the DR forest policy class does find evidence of a significant impact of personalization, estimating a very small effect in contrast to the DR tree policy class that we used for Job and  Nefazodone in our main results. We note that the actual personalization effects estimated by the two policy classes are quite similar: ~0.1 for DR Policy Tree (but not significant) and ~0.5 for DR Forest Policy. This highlights that beyond simply assessing if the significance test passes or fails, it will likely be of use to consider the size of the personalization effect for stakeholders to decide whether or not personalization will be useful in their setting given other tradeoffs. Indeed, this was one of our key motivations behind showing our results both with confidence intervals and $z$-statistics.  

Note that policy class choice also will impact baselines like TrainEval. In Figure~\ref{fig:education_multiclass_traineval} we see that empirical cdf of statistics from TrainEval varies based on the underlying policy class in the MOOC Education domain. In the case of using the DR PolicyTree policy learning from econml, in a number of cases the best personalized policy learned did assign all individuals the same intervention as the one learned as the best overall intervention, which results in the jump at 0 on the x-axis.

\begin{figure}[!htb]
    \centering
    \includegraphics[width=0.8\linewidth]{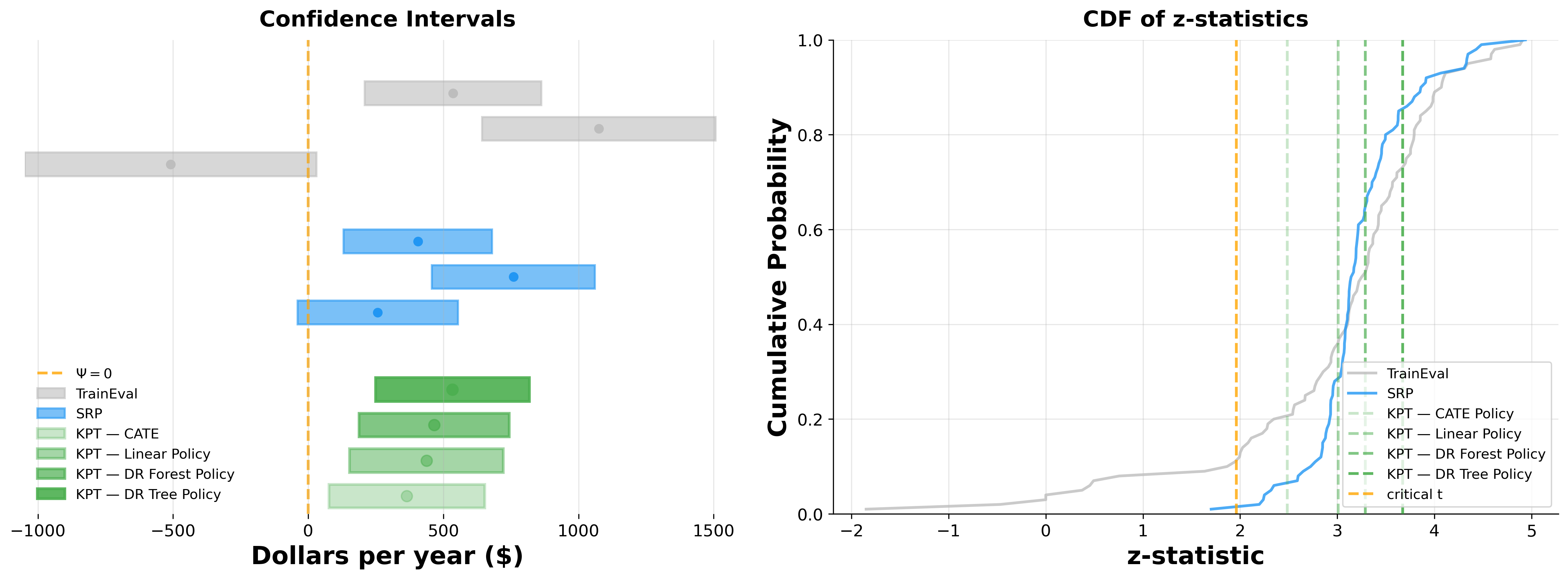}
    \caption{Jobcorps.   Examining the Impact of Policy Class by Trying Multiple Policy Classes.}
    \label{fig:jobcorps_multiclass}
\end{figure}

\begin{figure}[!htb]
    \centering
    \includegraphics[width=0.8\linewidth]{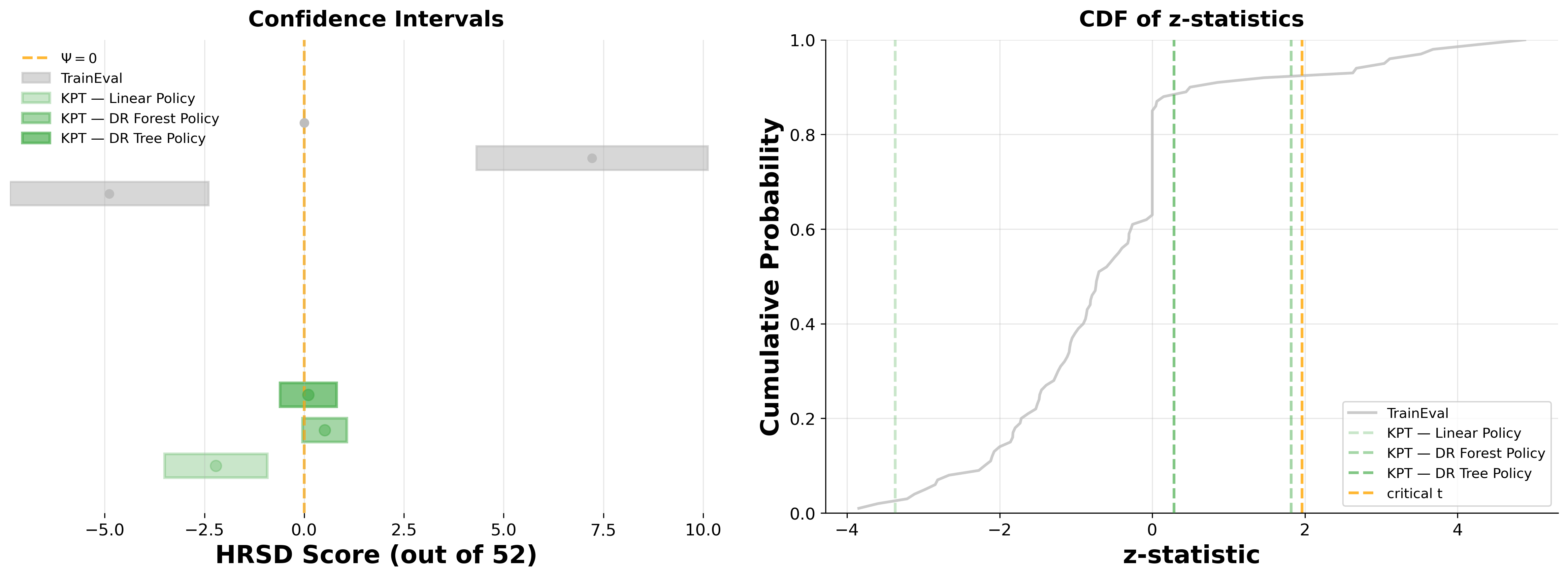}
    \caption{Nefazodone.   Examining the Impact of Policy Class by Trying Multiple Policy Classes.}
    \label{fig:nefazodone_multiclass}
\end{figure}

\begin{figure}[!htb]
    \centering
    \includegraphics[width=0.8\linewidth]{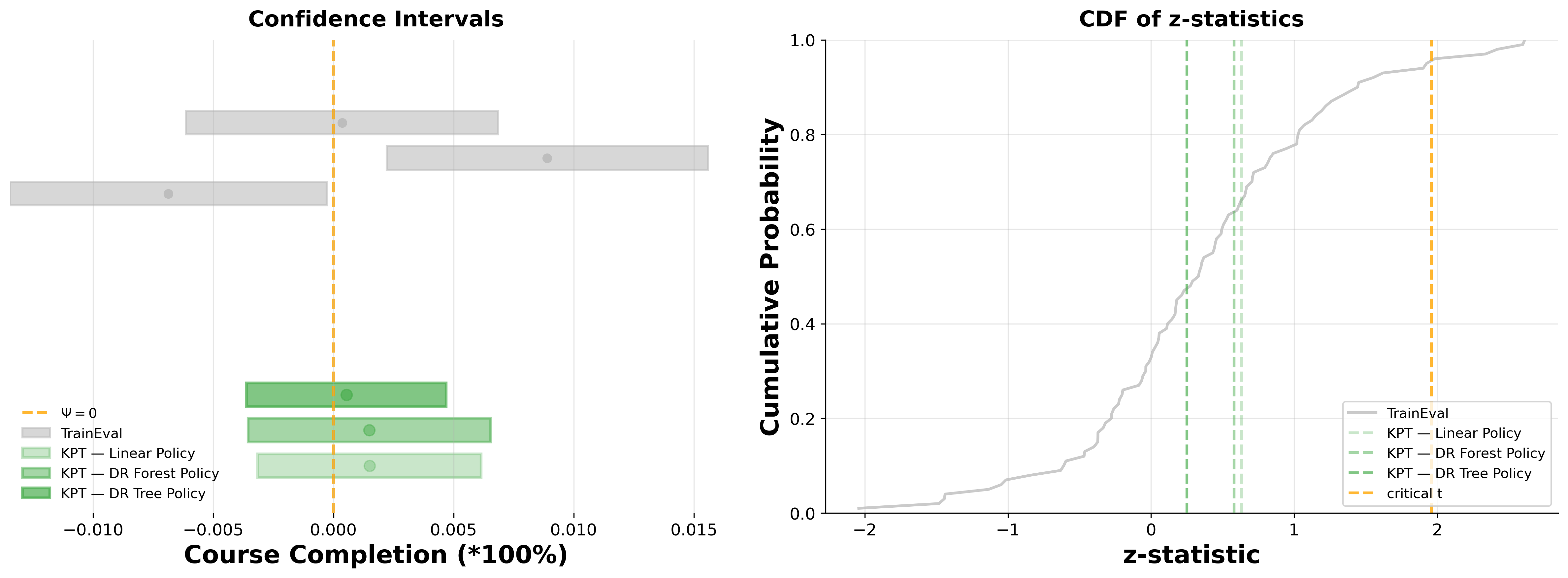}
    \caption{MOOC Education. Examining the Impact of Policy Class by Trying Multiple Policy Classes.}
    \label{fig:education_multiclass}
\end{figure}

\begin{figure}[!htb]
    \centering
    \includegraphics[width=0.8\linewidth]{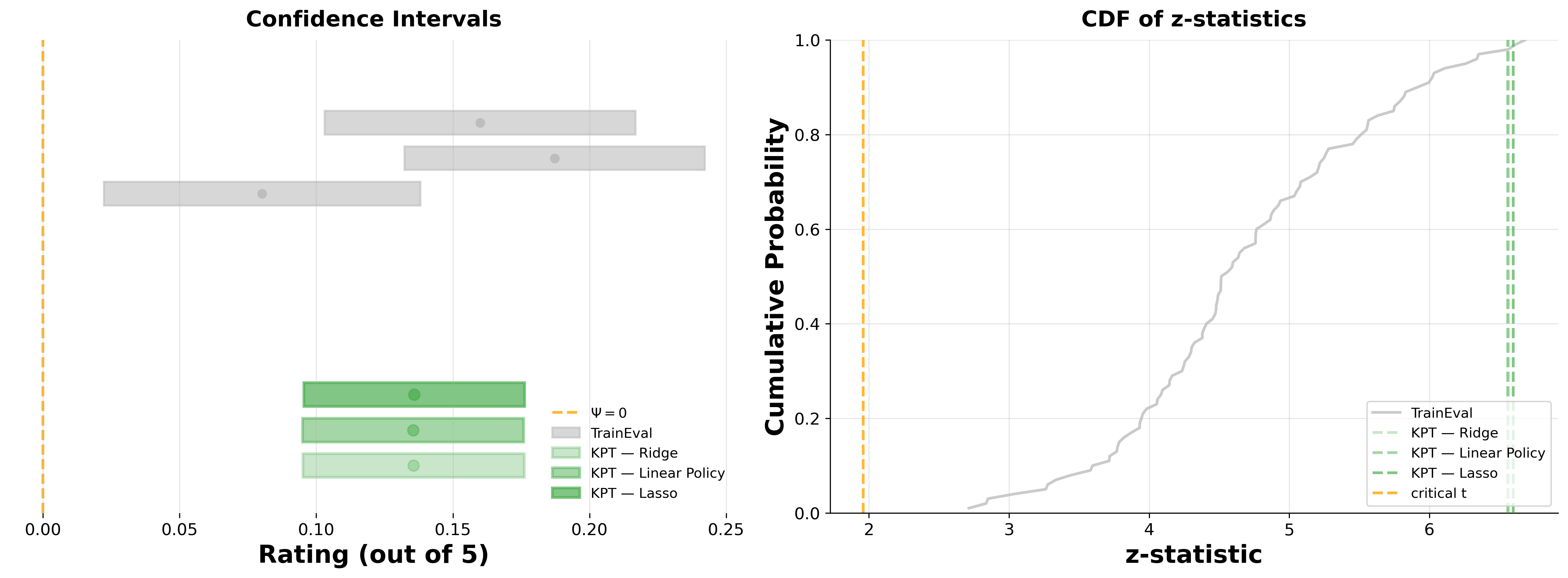}
    \caption{Joke Recommendation.  Examining the Impact of Policy Class by Trying Multiple Policy Classes.}
    \label{fig:joke_multiclass}
\end{figure}

\begin{figure}[!htb]
    \centering
    \includegraphics[width=0.8\linewidth]{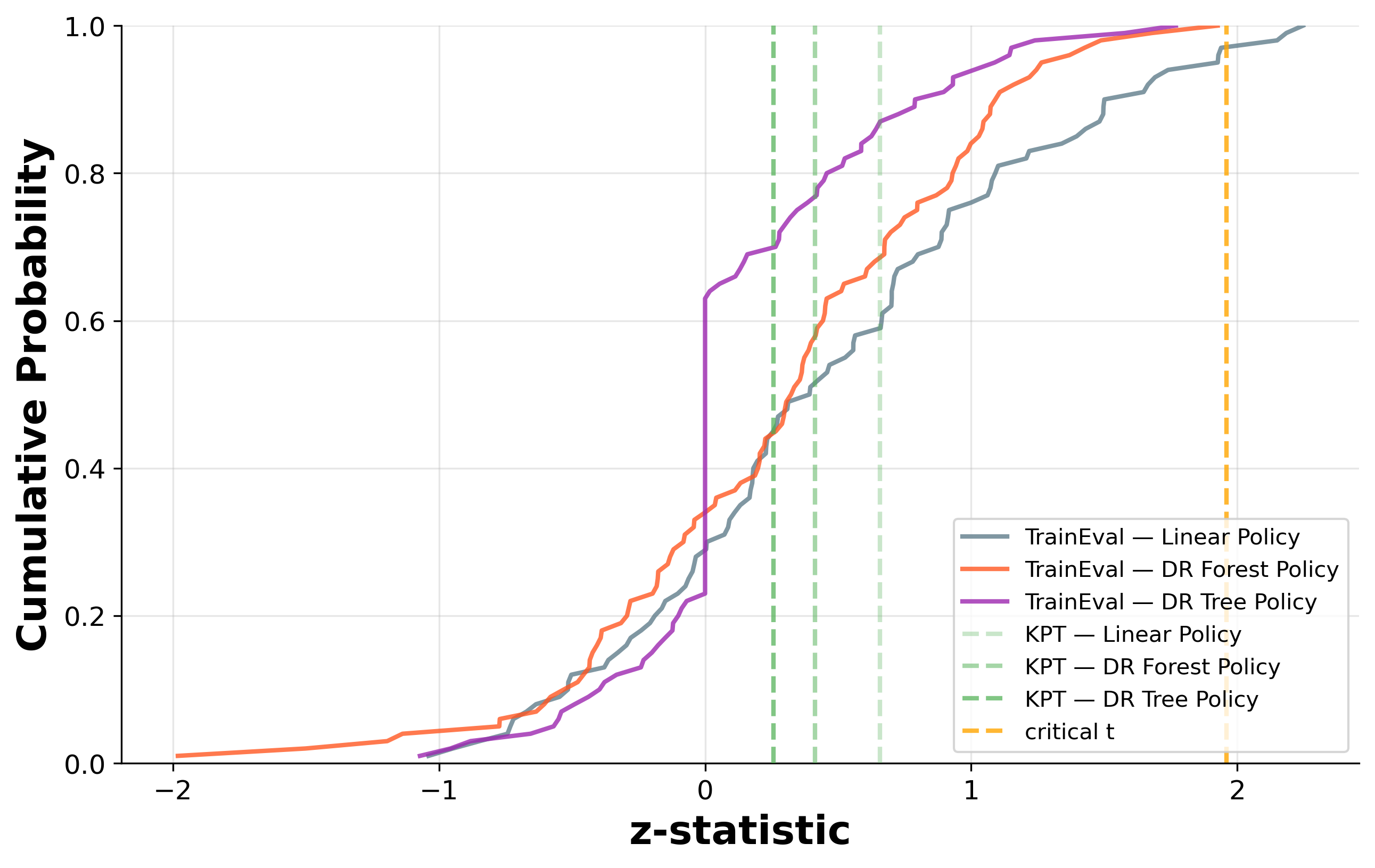}
    \caption{MOOC Education. Examining the Impact of Policy Class by Trying Multiple Policy Classes  for TrainEval}
\label{fig:education_multiclass_traineval}
\end{figure}

We also consider how the choice of the class for the personalized policy may impact the learned policy stability, and explore this for the  Job Corps dataset. Specifically, we consider the CATE policy, which learns the CATE using a doubly robust learner, and then allocates  treatment to an individual with a particular context if their estimated CATE is greater than zero, and selecting control otherwise. We also consider personalized policy classes using a linear policy, DR forest policy and DR tree policy. We define policy stability as the variability of the learned policy across repeated random data partitions, with higher values corresponding to greater sensitivity to sampling fluctuations. 

Figure~\ref{fig:stability_type} shows the results. The results indicate that DR Forest attains higher stability (0.268) relative to the other approaches. Note that the DR Tree stability is slightly different (0.014 vs 0.009 reported in Table~\ref{tab:policy_stability}) because we used a different random seed for these experiments. The overall personalization effect in both cases with our KPT method with DR Tree is nearly identical (\$10.1 vs \$10.26).  
\begin{figure}[!htb]
    \centering
    \includegraphics[width=0.5\linewidth]{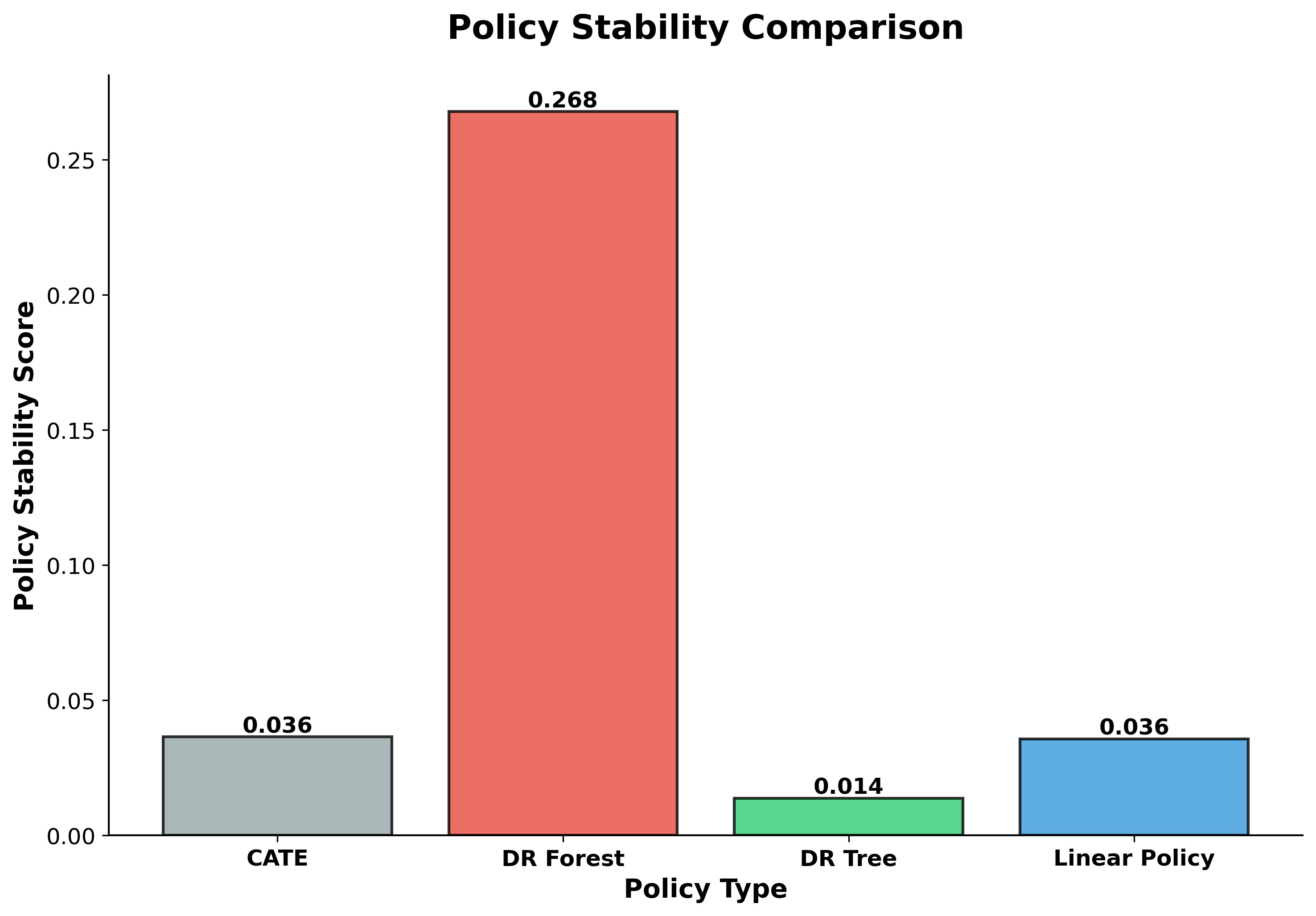}
    \caption{Policy stability for different policy types on Jobcorps dataset}
    \label{fig:stability_type}
\end{figure}

It is an interesting future direction to consider how to automatically select the policy class that results in the best policy learner given a domain: a simple first step could be to do model selection over the policy class, though this would also increase computational complexity. A stakeholder may also want to consider how stable policy learning is across splits and folds for a particular policy class in deciding, if personalization has a positive effect, which policy class to use.

\subsection{Comparison of Different $p$-value Aggregation Strategies}

In our K-fold personalization estimate, we compute a single statistic, confidence interval and $p$-value across many random partitions, and prove the resulting statistic is also asymptotically efficient. This relates to some prior work on algorithm comparison, where, for example, machine learning researchers may wish to compare the expected future performance of different classification algorithms by using a fixed existing dataset to both train each algorithm and assess its performance~\cite{dietterich1998approximate}. In such settings a prior proposal has been to compute a p-value for each of a set of $J$ splits (such as $J=100$), and then aggregate across the $J$ p-values, and we might employ a similar strategy in our setting for estimating the personalization effect. 

However, we note that aggregating across these $J$ values is not trivial, because the values are not independent -- each arises from the same underlying dataset, but randomly partitioned in different sets. This violates the standard assumption of independence that is often used in classic approaches for aggregating $p$-values like Fisher's method. Indeed, prior work on algorithm comparison has pointed out that repeated train-test splits yield dependence between splits and, without careful consideration, can inflate Type I error~\cite{dietterich1998approximate}. In contrast, our approach asymptotically achieves zero Type I error (See Corollary 3.2), so we do not suffer from inflated Type I error.

To aggregate across algorithm outputs across multiple train-test splits, Dietterich's 5x2cv Paired t-test~\cite{dietterich1998approximate} and Alpaydin's Combined 5x2cv F-test construct a single statistic using multiple splits and folds \cite{raschka2018model}. Dietterich’s 5x2cv Paired t-test constructs a ratio where the numerator uses an estimate (of the gain of one algorithm over another) from only one split in the numerator, and the denominator is the square root of the average of the within fold variances, across splits. In contrast, our estimate uses results from all splits in both our numerator and denominator of the test statistic. Perhaps unsurprisingly, this means empirically Dietterich’s numerator can be quite unstable depending on which split it is computed on, as shown empirically by Alpaydin. In contrast, Alpaydin’s Combined 5x2cv F-test statistic does average across all folds and splits, similar to our’s approach, but his estimator also assumes independence across splits and folds.

Neither estimator is proven to have a sampling distribution that coincides with any standard parametric distribution. Indeed, 
Dietterich and Alpaydin both state that their statistic is only approximately distributed as the relevant distribution, since the quantities violate the independence assumptions needed for their stated direct argument to hold. In contrast, our estimator accounts for both the within-fold variability and between-fold variability, and our test does not rely on treating fold estimates as independent units. In particular, we prove that, under some assumptions, our test statistic has an asymptotically normal statistic, in contrast with the Alpaydin’s combined test which so far lacks such guarantees.

The Cauchy combination test~\cite{liu2020cauchy} is a more recent method explicitly designed for aggregating many p-values where the p-values may have dependency structure, using the following statistic:
\begin{equation}
t = \sum_i w_i \tan(\pi( 0.5-p_i)),
\end{equation}
where $w_i$ are weights, bounded between $(0,1)$ that sum to 1, and $p_i$ is the $i$-th p-value. Those authors show a Cauchy distribution well approximates the tail distribution of their proposed statistic. However the Cauchy combination test is not designed for aggregating p-values that arise from multiple partitions of the same data, which is our setting. For this setting, where each p-value is computed using a variable split from the same data, 
others~\cite{meinshausen2009p} proposed an alternate aggregation method that computes a capped quantile:
\begin{equation}
p_{agg} = \min(1,quantile_{\gamma}(\frac{p_1}{\gamma},
\frac{p_2}{\gamma},\ldots)),
\end{equation}
where $0<\gamma<1$ and can be selected adaptively. Very recent work~\cite{gasparin2025combining} has further analyzed and advanced this setting of exchangeable P-values. Note that none of the methods guarantee efficiency in our setting. 

To explore the empirical use of such methods, we consider a variant of our methods and the baselines, in which we compute one p-value per split, and then aggregate the p-values using either the Cauchy combination test (which can handle dependent p-values)\cite{liu2020cauchy} or the capped quantile (designed specifically for repeated data splitting)\cite{meinshausen2009p}. The results are in Table~\ref{tab:test_results}.

\begin{table}[ht]
\centering
\begin{tabular}{l l l l l}
\hline
Dataset & Method & p-value & Cauchy p-value & Multisplit p-value \\
\hline
JobCorps & KPT & $1.33 \times 10^{-4}$ & $2.57 \times 10^{-10}$ & $1.66 \times 10^{-4}$ \\
JobCorps & PAPD &  & $7.68 \times 10^{-11}$ & $1.37 \times 10^{-6}$ \\
JobCorps & TrainEval &  & 1.0000 & $3.76 \times 10^{-6}$ \\
JobCorps & SRP &  & $2.61 \times 10^{-5}$ & 0.0016 \\
Nefazodone & KPT & 0.4156 & 1.0000 & 1.0000 \\
Nefazodone & PAPD &  & 0.0039 & 1.0000 \\
Nefazodone & TrainEval &  & 1.0000 & 1.0000 \\
Nefazodone & SRP &  & 0.9266 & 1.0000 \\
Education & KPT & 0.2473 & 0.1225 & 0.4852 \\
Education & PAPD &  & 0.0266 & 0.3680 \\
Education & TrainEval &  & 0.0085 & 0.6365 \\
Joke & KPT & $3.77 \times 10^{-11}$ & $1.72 \times 10^{-15}$ & $6.69 \times 10^{-11}$ \\
Joke & PAPD &  & $7.22 \times 10^{-16}$ & $1.95 \times 10^{-12}$ \\
Joke & TrainEval &  & $1.67 \times 10^{-16}$ & $1.42 \times 10^{-10}$  \\
\hline
\end{tabular}
\caption{Hypothesis testing results across datasets and methods}
\label{tab:test_results}
\end{table}

We can see in Table~\ref{tab:test_results} that using these p-value aggregation methods across splits generally yield similar findings to our KPT approach. However, the Cauchy p-value aggregation method sometimes gives surprising results and is prone to outliers. 
For example, in the Nefazodone dataset, the Cauchy aggregate p-value is low when using the 100 PAPD estimates, even though our KPT method, the multi-split $p$-value aggregation methods, TrainEval and SRP are all suggesting there is no significant effect. When looking at the $p$-values for these cases, we observe that in Nefazodone for PAPD, p-values per split, a few $p$-values are very tiny ($<10^{-3}$). A similar phenomenon occurs in Education, where our KPT estimate and the multi-split $p$-value aggregation methods suggest no significant benefit of personalization, but for both PAPD and TrainEval a very small percentage of splits had a small p-value, and the resulting Cauchy combination p-value is small. In our Job Corps domain, the Cauchy $p$-value for TrainEval is 1 even though there is evidence of significant effect from other results. In Job Corps, there exists a split where the test reports a really large $p$-value while other $p$-values are small: for example, this can occur if on this particular set of folds, the personalized policy matches the single best action for all contexts. The Cauchy combination sums $\tan(\pi(0.5- p_i))$ which is $ \approx \tan(\pi/2)$ if $p_i \approx 0$ and $ \approx \tan(-\pi/2)$ if $p_i \approx 1$. In both cases, tan explodes to infinity. Therefore the Cauchy aggregated $p$-value will tend to report significance if any of the $p$-values are very small and report insignificance if any test fails with a $p$-value of approximately 1, which are not desirable when the variation is due to repeated splitting. The capped quantile method for repeated splitting~\cite{meinshausen2009p} and our KPT aggregation method instead are more suitable for the setting when the multiple values are caused by variation in the splits.
%is very sensitive to these small outliers, which causes these small $p$-values. } 

Perhaps even more importantly, the Cauchy combination test and capped quantile method are techniques for aggregating p-values, and they do not provide a confidence interval over an estimated effect size. In contrast, our interest is in both creating a statistical test (and p-value) as well as a tight confidence set over the proposed personalized effect. 

\subsection{Heterogeneous Treatment Effect Testing and Personalization Effect Testing}
\label{sec:hte_and_personalization}

As noted above, heterogeneous treatment effects are a necessary, but not sufficient, condition for personalization effects to exist. Heterogeneous treatment effect estimation is itself an open area of research, and most methods focus on settings where  there are binary intervention spaces (aka treatment/control). In such settings, or more broadly, if there are applicable heterogeneous treatment effects for the problem setting, a stakeholder could follow a two stage process, of (1) first evaluating if a test for HTE returns positive evidence of HTE, and then (2) if positive, use our proposed KPT to estimate the effects of personalization. We now show illustrate a setting where this two stage process could be used to learn no personalization effect is likely to occur, as well as to illustrate where HTE is insufficient to predict personalization effect.

Specifically, we tested for the presence of HTE in the Nefazodone depression dataset using two state-of-the-art approaches for HTE testing~\cite{imai2025statistical,yadlowsky2025evaluating} which can be used with binary treatment intervention spaces (We used the public grf package version 2.4.0 and evalITR package version 1.0.0, and R version 4.3.3). Our results are shown in Table \ref{tab:HTE_nefazodone}. Both tests do not show evidence of heterogeneous treatment effects, which also immediately implies there would be no personalization benefits. This matches our personalization results, prior work where a personalized policy always assigned all individuals to the combination therapy~\cite{zhao2012estimating},  and prior related work which found no evidence of a personalization effect~\cite{shi2020sparse}. 

\begin{table}

\caption{Heterogeneous Treatment Effect (HTE) Tests on the Nefazodone Dataset. RATE~\cite{yadlowsky2025evaluating}
\label{tab:HTE_nefazodone} and Imai-Li HTE~\cite{imai2025statistical} are two methods for testing for heterogeneous treatment effects.}
\centering
\begin{tabular}[t]{l|l|c|c|c|c|c}
\hline
Comparison & Method & Statistic & Estimate & Std. Err. & z & p-value\\
\hline
Combo vs Nefazodone & Imai--Li HTE & 1.106 & --- & --- & --- & 0.954\\
 & RATE & --- & -0.299 & 0.898 & -0.333 & 0.630\\
Combo vs Therapy & Imai--Li HTE & 3.376 & --- & --- & --- & 0.642\\
 & RATE & --- & -0.477 & 0.862 & -0.553 & 0.710\\
\hline
\end{tabular}
\end{table}

Next we provide an illustrative simulated setting where there exists  treatment effect heterogeneity but no personalization benefit, as well as a setting where there is no HTE.  The simulation settings are described as follows. We have dataset $(X,A,Y)$ where $X$ is in a set of $20$ discrete states and $A\in \{0,1\}$  with probability $1/2$ for each. In the first setting we have a setting with no hetereogenous treatment effects, SimNoHTE, where $Y=0.1 X +A+ \epsilon$ with $\epsilon\sim N(0,\sigma^2)$ with $\sigma=0.2$. In a second setting we induce HTE but no personalization, SimHTE-NP, $Y=0.1 X +(1+0.5X)\cdot A+ \epsilon$ with $\epsilon\sim N(0,0.2^2)$, so the best optimal action is always $1$. 
We use \cite{imai2025statistical} and \cite{yadlowsky2025evaluating} as state-of-the-art baselines for HTE tests. We run both of the baselines for 20 runs. The results are shown in Table \ref{tab:HTE_sim}. 

% \begin{table}[tbh]
% \centering
% \caption{Heterogeneous Treatment Effect (HTE) Tests on the Illustrative Simulations. In SimHTE-NP there are HTE but no personalization effect. In SimNoHTE there are no hetereogenous treatment effects. See text for simulation details.}
% \begin{tabular}{|l|c|c|} \hline
% Method & SimHTE-NP (\%) & SimNoHTE (\%) \\ \hline 
% Imai--Li HTE \cite{imai2025statistical} & 31\% & 5\% \\ \hline 
% RATE \cite{yadlowsky2025evaluating}  & 100\% & 10\% \\ \hline
% \end{tabular}
% \label{tab:HTE_sim}
% \end{table}

\begin{table}[ht]
\centering
\caption{ Heterogeneous Treatment Effect (HTE) Tests on the Illustrative Simulations. In SimHTE-NP there are HTE but no personalization effect. In SimNoHTE there are no hetereogenous treatment effects. See text for simulation details.}
\label{tab:HTE_sim}
\begin{tabular}{|l|c|c|}
\hline
Method & SimHTE-NP (\%) & SimNoHTE (\%) \\
\hline
Imai--Li HTE \cite{imai2025statistical} & 25\% & 5\% \\
RATE \cite{yadlowsky2025evaluating} & 100\% & 10\% \\
\hline
\end{tabular}
\end{table}

We can see that the RATE method~\cite{yadlowsky2025evaluating} correctly  detects HTE 100\% of the time for the HTE but no personalization simulation, and no HTE 90\% of the time for the no HTE simulation setting. The Imai-Li HTE test \cite{imai2025statistical} is more conservative and only reports significance 25\% of the 20 runs, and reports significance 5\% of the time when there is no HTE. However, reporting significance for HTE is not sufficient for having personalization, as shown in these simulation instances above. In both of the simulated data generation process settings, we run our KPT method for 100 repeated splits and our KPT method correctly does not detect any personalization effect, returning a reported test statistic of 0.\footnote{We did this for 20 runs, and the data generation process was the same, but the selected seed was accidentally not identical to the seed used for dataset generation for the results reported for RATE and Imai and Li's HTE method. We expect this to have minimal impact-- the primary goal of this section is to illustrate that HTE is necessary but not sufficient for personalization effects to be non-zero, and that methods to detect HTE are not sufficient for this latter task.}

\subsection{Impact of Choice of Number of Splits $S$}
\label{sec:vary_number_splits}
Our theory holds for any fixed number of splits $S$, and the repeated permutation approach is similar to the one in double/debiased machine learning (DML)~\cite{chernozhukov2018double}, which shows results for a fixed single choice of the number of splits $S$. However, in practice there may be a tradeoff between computational cost and statistical stability. To explore this, here we consider the impact on the choice of $S$ on the semi-synthetic Jobcorps and Joke settings.

Specifically, we run the full KPT procedure with $S=200$, and then compute the aggregated personalization estimates and confidence intervals using only the first 5, 10, 25, 50, 75, 100, 150 or all 200 splits. The results are shown in Figures~\ref{fig:jobcorps_varyS} and \ref{fig:joke_varyS}. 
\begin{figure}[!htb]
    \centering
    \includegraphics[width=\linewidth]{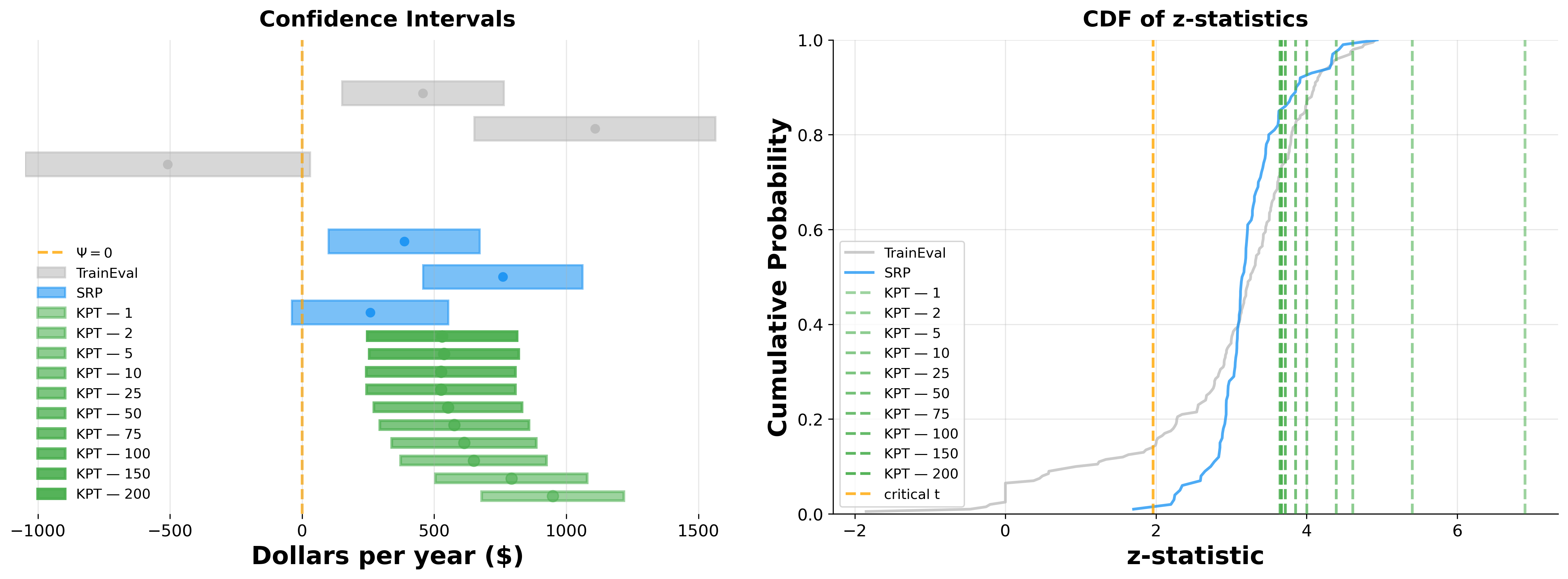}
    \caption{Job Corps dataset results for different choices of $S$}
    \label{fig:jobcorps_varyS}
\end{figure}
\begin{figure}[!htb]
    \centering
    \includegraphics[width=\linewidth]{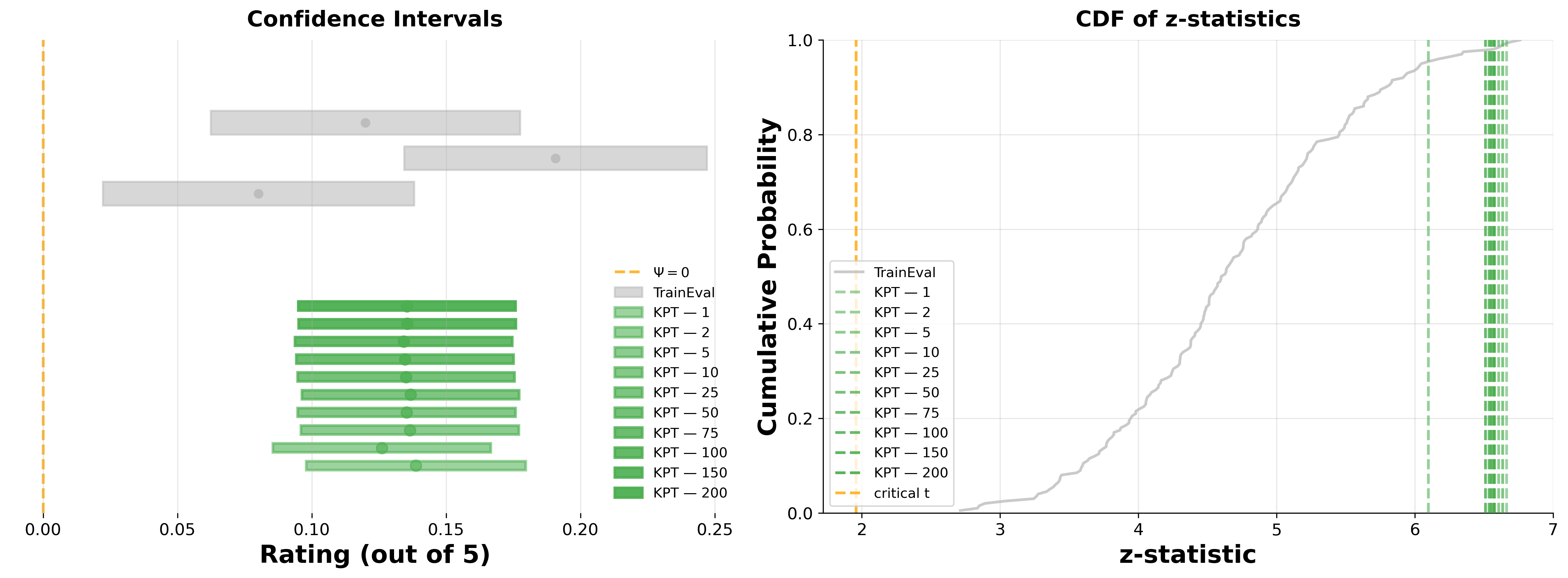}
    \caption{Joke dataset results for different choices of $S$}
    \label{fig:joke_varyS}
\end{figure}

This result does not show how much our estimate varies across repeated runs of a fixed number of splits (e.g. 5 runs vs 5 runs), but it does show that for $S=75$ or larger for Job Corps and $S=5$ or larger for Joke,  the estimates and confidence intervals stabilize. We do not view this as practical guidance for selecting $S$, as this analysis does suggest that the selected value may depend on the domain. However, it does suggest that relatively small choices for $S$ (like $S=10$) may already yield a CI that largely overlaps with the one obtained by much larger choices (e.g. $S=200$) of $S$.

\subsection{Exploration of Impact of Size of Dataset $N$}
\label{sec:vary_size_dataset}
Finite-sample behavior and the relationship between sample size and detection power is an interesting area. In Figures~\ref{fig:s3_new} and ~\ref{fig:test_efficient_large} we vary the personalization effect size while holding sample size fixed-- in doing so, these figures help characterize the regimes in which the test has sufficient power. 

Here we also sample subsets of the Joke dataset at varying sizes from $N=500$ to the full dataset $N=48445$ and recompute the personalization estimate and associated confidence intervals and test statistics. The results are shown in Figure \ref{fig:joke_varyN}. 
\begin{figure}[!htb]
    \centering
    \includegraphics[width=\linewidth]{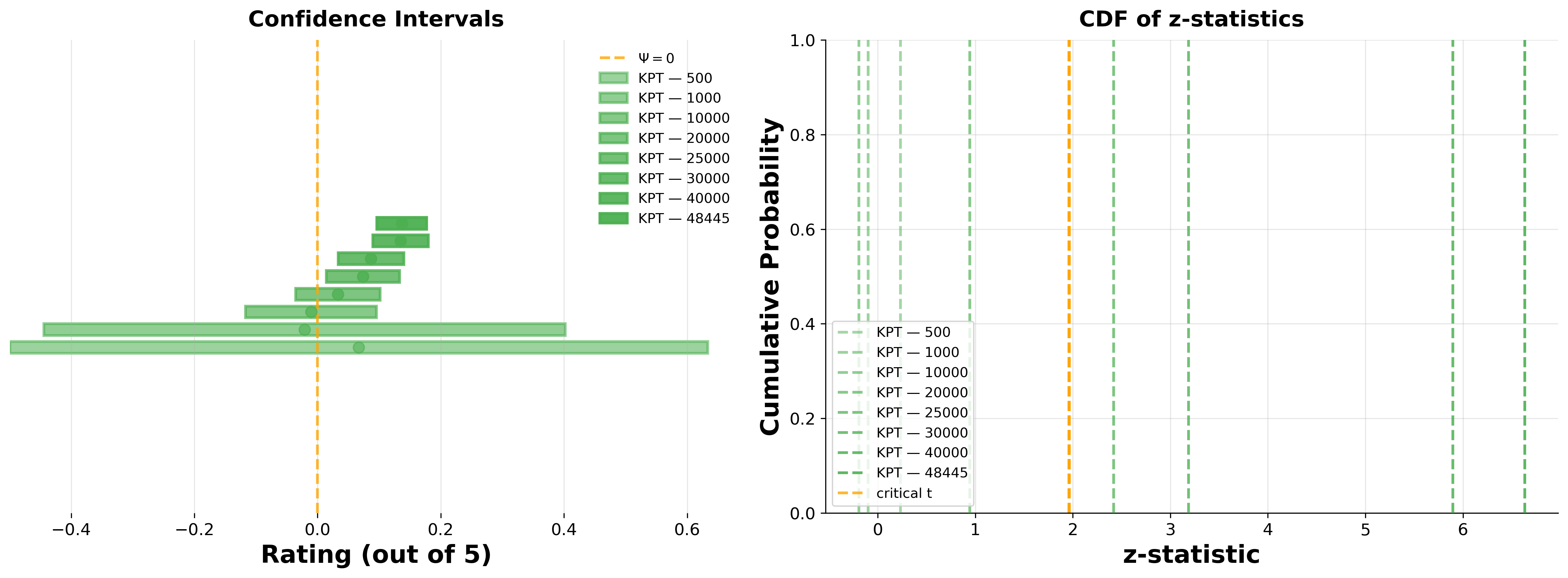}
    \caption{Joke results for varying sample sizes}
    \label{fig:joke_varyN}
\end{figure}
We find for the particular subsets selected, our approach shows a significant personalization effect for $N=25000$ and above. 

In general the needed sample size is impacted by the data generation process and the choice of policy class and other nuisance modeling classes and representational choices (e.g. reward outcome model class and the representation of features). This is consistent with insights from the PAC learning and bandit literature, where sample complexity depends on both the reward gap and the richness of the function class \cite{fiez2019sequential,li2022instance,zanette2021design}.
With function approximation or in observational data, the needed dataset size is even more complex to compute, as the effective sample size is influenced by nuisance parameters, such as the propensities, and their expressive capacity. Deriving a formal procedure for and analysis of experimental design including sample size and power calculations is a very interesting direction for future work, beyond the scope of the current paper.

% **Add figures

\subsection{Non-Unique Optimal Actions and Decision Policies}
\label{sec:null}
Our theory for Type I error relies on Assumption~\ref{assump:fast_best_arm_learner} (fast best arm learning) and Assumption~\ref{ass:consistency} and our semi-parametric efficiency result relies on Assumption~\ref{assump:fast_best_arm_learner},~\ref{ass:consistency} and~\ref{assump:consistent_regret}. These assumptions can fail when the best overall intervention is not unique. In particular, Assumption~\ref{assump:fast_best_arm_learner} may not hold in settings where there is no unique best intervention that maximizes the average utility in the population, such as in the binary intervention case where there is zero average treatment effect. Similarly, if there is a substantial proportion of the population whose expected outcomes are identical for multiple interventions, there will not be a unique optimal personalized policy, and consistent policy learning and fast policy learning will generally not be possible. Such a scenario could arise if, for example, in the binary treatment case if there is a significant part of the population that are non-responders to the treatment (and their conditional average treatment effect is zero).

We first consider the binary intervention setting where there is zero average treatment effect (and Assumption~\ref{assump:fast_best_arm_learner} is violated). There are two main cases of interest here: (Case 1) there is also zero personalization effect, and we wish to ensure that we do not erroneously detect a personalization effect, and (Case 2) there is a positive personalization effect, and we care about power of our procedure to detect this when the average treatment effect is zero and there is not a unique best overall arm. We begin with pragmatic procedures and empirical findings, followed by a discussion on theoretical analysis.
\\
\\
\noindent{\textbf{Case 1: Zero ATE, Zero Personalization Effect}}\\
We refer to this setting as the ``strong null" setting, where there is no unique best overall action and no personalization effect, so $H_0$ holds. First we note that in the binary action case, this implies that $r(x,a_1) = r(x,a_2)$ $\forall$ $x \in \X$, since if there existed a $x$ with positive probability $p(x) > 0$ where one intervention had a higher reward, there would be a positive personalization effect. Therefore, in this setting, there is also zero heterogeneous treatment effects. Therefore to guard against false personalization effects in the binary treatment case under this strong null setting, a sufficient process would be to first use an HTE method with Type I guarantee, such as the results for the RATE approach~\cite{yadlowsky2025evaluating}. 
% provides valid asymptotic Type I guarantees for detecting HTE. 
In the strong-null binary intervention setting, such a test should find no evidence of HTE with desired Type I guarantees, and hence no evidence of personalization.
% Such an HTE test will return no evidence of HTE (with desired Type I guarantees) in this strong null case binary intervention case, and therefore no personalization effect can be present.

In addition, empirically, we find our KPT estimator still seems to perform well in the strong null regime. In the Threshold-small simulation with true effect size equal to 0, the test has a valid Type-I error (less than 5\%), as shown in Figure~\ref{fig:s3_new}. In addition, in the real dataset case of MOOC Education where there was previously reported a null effect of any individual intervention being dominant~\cite{kizilcec2020scaling}, we also did not reject the null hypothesis of zero personalization. This result is consistent with prior related findings~\cite{kizilcec2020scaling} rather than producing spurious detections. Therefore, we expect that our approach is still likely to be valid in practice.
\\
\\
\noindent{\textbf{Case 2: Zero ATE, Positive Personalization Effect}}\\
We examine the performance of our KPT experimentally in this setting through a simulation, where the data generation process is defined for the average treatment effect to be zero but there is a positive personalization effect. The setup is similar to the Threshold-small simulation case, but where
\begin{itemize}
\item 50\% of contexts (10 contexts) have $r(x,1)-r(x,0) = \Delta$
\item 50\% of contexts (10 contexts) have $r(x,1)-r(x,0) = -\Delta$
\item  The ATE is 0
\item  The personalization effect is $\Delta$
\end{itemize}
As in the other simulation settings, contexts are sampled from a uniform distribution, actions are also sampled uniformly at random, and reward outcomes are generated with additive Gaussian noise ($\sigma=0.2$). 

For the data and training we use the same setup as our other simulation results: 
\begin{itemize}
\item For a fixed instance (fixed $\Delta$) we sample 100 datasets of size $N=3000$
\item We use a random forest regressor for reward learning, a doubly robust policy forest learner from the econml package for policy learning, and learn the single-best action by selecting the action with the maximum reward estimate (aka we use a random forest regressor to build a reward model only on the data used for best action/policy learning, and then take the action with the maximum expected value over the contexts in this set).
\end{itemize}

We consider two values of $\Delta=0.05$ and $\Delta= 0.2$. Our KPT finds a significant positive impact of personalization 95\% of the datasets for the smaller gap, and in 100\% of the cases for the larger gap (see Table~\ref{tab:zero_ate_pos_personalization}). This suggests that though our the theoretical analysis we conduct in this paper requires fast best arm learning, which is violated in some cases such as when the ATE is zero, that empirically, as we might expect, our method can still perform well and detect personalization effects in such settings. 

% \begin{table}[h!]
% \centering
% \begin{tabular}{|c|c|c|}
% \hline
% $\Delta$ & KPT &  TrainEval \\
% \hline
% 0.05 & 95/100 & 93/100 \\
% 0.2 & 100/100 & 100/100\\
% \hline
% \end{tabular}
% \caption{Number of datasets (out of 100) where $p < 0.05$ for different values of $\Delta$ for the simulation where ATE$=0$ and the personalization effect is $\Delta$.}
% \label{tab:zero_ate_pos_personalization}
% \end{table}

\begin{table}[ht]
\centering
\begin{tabular}{| c | c | c |}
\hline
Delta & KPT \# Datasets with $p < 0.05$ & TrainEval \# Datasets with $p < 0.05$ \\
\hline
0.05 & 95/100 & 93/100 \\ \hline 
0.2 & 100/100 & 100/100 \\ \hline 
\hline
\end{tabular}
\caption{Number of datasets (out of 100) where $p < 0.05$ for different values of $\Delta$ for the simulation where ATE=0 and the personalization effect is $\Delta$.}
\label{tab:zero_ate_pos_personalization}
\end{table}

\noindent{\textbf{Theoretical analysis in the case of non-unique overall best interventions and non-unique optimal personalized policy (Zero Gap)}}

When the best overall action or best personalized policy is not unique, we are faced with a nonregular inference problem.  This regime is known to be difficult in asymptotic and semiparametric statistics
% : when the optimal action is not uniquely separated, the target functional becomes nonregular, plug-in learners can be unstable, 
and standard efficiency arguments may fail due to the instability of plug-in estimators. This type of phenomenon is discussed in the literature on semiparametric efficiency and irregular problems; see, for example, Chapters 25–26 of ~\cite{van2000asymptotic}  and~\cite{bickel1993efficient}. The gap assumption ensures regularity and enables sharp asymptotic guarantees.

%% Start here, discuss smoothing
One way to address this challenge is through smoothing, such as by using softmax to define a probabilistic decision policy. Very recent  work~\cite{zhang2026replicable} shows that such approaches can, for example, stabilize the learning procedure of bandit algorithms to ensure replicable asymptotics. 

Another relevant line of work is subagging (subsample aggregation) in which many random subsamples of the data are used and learned on, and an estimate is computed by averaging across the results. Prior work~\cite{shi2020breaking} considers subagging where a personalized policy is learned on one random subsample of the data, and its value is evaluated on the remainder, the whole process is repeated many times, and those estimates are combined to compute an estimate and confidence intervals over the optimal value function. They prove this approach can handle when the optimal action is not unique for many contexts, and yields valid confidence intervals. 

Subagging improves stability by averaging learners trained on subsamples of size strictly smaller than $n$, which can mitigate the instability that arises when large subsets of the context space have non-unique best actions. In this sense, it plays a role similar to smoothing: it regularizes the learner and can restore well-behaved asymptotics even when the optimal action is not uniquely identified.

However, a key limitation of both smoothing and subagging approaches is that they typically rely on effective sample sizes strictly smaller than $n$ (either through explicit subsampling or through implicit smoothing/regularization). As a result, we are not aware of methods that would preserve semiparametric efficiency when a positive personalization effect exists. Intuitively, by stabilizing the decision boundary, these methods trade off variance for bias or reduced sensitivity, which slows down the rate at which true personalization benefits can be detected. 

Our method takes a complementary stance. By imposing the gap condition (MG), we operate in a regime where the problem is regular, which allows us to achieve fast rates and semiparametric efficiency when personalization benefits are present. In contrast, smoothing and subagging methods prioritize robustness when there exists no unique best overall intervention or a substantial part of the population with no unique best intervention, but may lose power. It is an interesting direction for future work to explore alternate assumptions or approaches that may offer similarly strong guarantees to our approach under weaker assumptions like lack of outcome gaps.

\end{document}